\newtheorem{theorem}{Theorem}
\newtheorem{lemma}{Lemma}
\newtheorem{definition}{Definition}
\newtheorem{example}{Example}
\newtheorem{corollary}{Corollary}
\newtheorem{observation}{Observation}
\newcommand{\assumed} [1] {\mathsf{assume}~#1}
\newcommand{\assume} [2] {\mathsf{assume}~#1=#2}
\newcommand{\binopdef}     \oplus % {\mathit{binop}}
\newcommand{\unopdef}      \ominus % {\mathit{unop}}
\newcommand{\infr} [3] [] {\infer[\textsc{#1}]{#3}{#2}}
\newcommand{\observed} [2] {\mathsf{observe}(\mathsf{Dist}(#1) = #2)}
\newcommand{\NDE} [3] {\langle \cN#1 , \cD#2, \cE#3 \rangle}
\DeclareRobustCommand*\cal{\@fontswitch\relax\mathcal}
\def\cA{{\cal A}}
\def\cB{{\cal B}}
\def\cC{{\cal C}}
\def\cD{{\cal D}}
\def\cE{{\cal E}}
\def\cF{{\cal F}}
\def\cN{{\cal N}}
\def\cP{{\cal P}}
\def\cR{{\cal R}}
\def\cS{{\cal S}}
\def\cT{{\cal T}}
\def\cV{{\cal V}}
\def\cX{{\cal X}}
\def\cY{{\cal Y}}
\newcommand{\ndi} [3] {(#1:#2) \# #3}
\newcommand{\tup} [1] {\langle #1 \rangle}
\newcommand{\infra} [2] {\infr{\begin{array}{c} #1 \end{array} }{ \begin{array}{c} #2  \end{array} } }
    \newcommand{\infral} [3] {\infer[\textsc{#3}]{\begin{array}{c} #2 \end{array} }{ \begin{array}{c} #1  \end{array} } }
\newcommand{\Nid} [1] {id#1 \leftarrow \mathsf{Fresh~ID}}
\newcommand{\ID} {\mathsf{ID}}
\newcommand{\SI} [2] {\sigma_v#1, \sigma_{id}#2 \vdash}
\newcommand{\dom} {\mathsf{dom}~}
\newcommand{\Nst} [1] {x#1 \leftarrow \mathsf{Fresh~variable~name}}
\newcommand{\Comment} [1] {}
\newcommand{\dref} [1] {}
\newcommand{\SAC} {\cS \vdash}
\begin{document}

%% Title information
\title[Compositional Inference Metaprogramming]{Compositional Inference Metaprogramming with Convergence Guarantees}         %% [Short Title] is optional;
                                        %% when present, will be used in
                                        %% header instead of Full Title.
%\titlenote{with title note}             %% \titlenote is optional;
                                        %% can be repeated if necessary;
                                        %% contents suppressed with 'anonymous'
%\subtitle{Subtitle}                     %% \subtitle is optional
%\subtitlenote{with subtitle note}       %% \subtitlenote is optional;
                                        %% can be repeated if necessary;
                                        %% contents suppressed with 'anonymous'

%% Author information
%% Contents and number of authors suppressed with 'anonymous'.
%% Each author should be introduced by \author, followed by
%% \authornote (optional), \orcid (optional), \affiliation, and
%% \email.
%% An author may have multiple affiliations and/or emails; repeat the
%% appropriate command.
%% Many elements are not rendered, but should be provided for metadata
%% extraction tools.

%% Author with single affiliation.
\author{Shivam Handa}
%\authornote{with author1 note}          %% \authornote is optional;
                                        %% can be repeated if necessary
\orcid{nnnn-nnnn-nnnn-nnnn}             %% \orcid is optional
\affiliation{
 % \position{Position1}
 % \department{Department1}              %% \department is recommended
  \institution{Massachusetts Institute of Technology}            %% \institution is required
%  \streetaddress{Street1 Address1}
%  \city{City1}
%  \state{State1}
%  \postcode{Post-Code1}
%  \country{USA}                    %% \country is recommended
}
\email{shivam@mit.edu}          %% \email is recommended

%% Author with two affiliations and emails.
%\email{rinard@csail.mit.edu}         %% \email is recommended
%\affiliation{
%  \position{Position2b}
%  \department{Department2b}             %% \department is recommended
%  \institution{Institution2b}           %% \institution is required
%  \streetaddress{Street3b Address2b}
%  \city{City2b}
%  \state{State2b}
%  \postcode{Post-Code2b}
%  \country{Country2b}                   %% \country is recommended
%}
%\email{first2.last2@inst2b.org}         %% \email is recommended

\author{Vikash Mansinghka}
%\authornote{with author2 note}          %% \authornote is optional;
                                        %% can be repeated if necessary
%\orcid{nnnn-nnnn-nnnn-nnnn}             %% \orcid is optional
\affiliation{
%  \position{Position2a}
%  \department{Department2a}             %% \department is recommended
  \institution{Massachusetts Institute of Technology}           %% \institution is required
%  \streetaddress{Street2a Address2a}
%  \city{City2a}
%  \state{State2a}
%  \postcode{Post-Code2a}
%  \country{USA}                   %% \country is recommended
}
\author{Martin Rinard}
%\authornote{with author2 note}          %% \authornote is optional;
                                        %% can be repeated if necessary
%\orcid{nnnn-nnnn-nnnn-nnnn}             %% \orcid is optional
\affiliation{
%  \position{Position2a}
%  \department{Department2a}             %% \department is recommended
  \institution{Massachusetts Institute of Technology}           %% \institution is required
%  \streetaddress{Street2a Address2a}
%  \city{City2a}
%  \state{State2a}
%  \postcode{Post-Code2a}
%  \country{USA}                   %% \country is recommended
}

%% Abstract
%% Note: \begin{abstract}...\end{abstract} environment must come
%% before \maketitle command
\begin{abstract}
    % $Log: abstract.tex,v $
% Revision 1.1  93/05/14  14:56:25  starflt
% Initial revision
% 
% Revision 1.1  90/05/04  10:41:01  lwvanels
% Initial revision
% 
%
%% The text of your abstract and nothing else (other than comments) goes here.
%% It will be single-spaced and the rest of the text that is supposed to go on
%% the abstract page will be generated by the abstractpage environment.  This
%% file should be \input (not \include 'd) from cover.tex.
Inference metaprogramming enables effective probabilistic programming by 
supporting the decomposition of executions
of probabilistic programs into subproblems and the deployment
of hybrid probabilistic inference algorithms that apply different
probabilistic inference algorithms to different subproblems.
We introduce the concept of independent subproblem inference
(as opposed to entangled subproblem inference in which the 
subproblem inference algorithm operates over the full program trace)
and present a mathematical framework for studying convergence
properties of hybrid inference algorithms that apply different 
Markov-Chain Monte Carlo algorithms to different parts of the
inference problem. We then use this formalism to prove asymptotic convergence results
for probablistic programs with inference metaprogramming. 
To the best of our knowledge this is the first asymptotic convergence
result for hybrid probabilistic inference algorithms defined by 
(subproblem-based) inference metaprogramming. 

    %Text of abstract \ldots.
\end{abstract}

%% 2012 ACM Computing Classification System (CSS) concepts
%% Generate at 'http://dl.acm.org/ccs/ccs.cfm'.
\begin{CCSXML}
<ccs2012>
<concept>
<concept_id>10011007.10011006.10011008</concept_id>
<concept_desc>Software and its engineering~General programming languages</concept_desc>
<concept_significance>500</concept_significance>
</concept>
<concept>
<concept_id>10003456.10003457.10003521.10003525</concept_id>
<concept_desc>Social and professional topics~History of programming languages</concept_desc>
<concept_significance>300</concept_significance>
</concept>
</ccs2012>
\end{CCSXML}

%\ccsdesc[500]{Software and its engineering~General programming languages}
%\ccsdesc[300]{Social and professional topics~History of programming languages}
%% End of generated code

%% Keywords
%% comma separated list
\keywords{Probabilistic Programming, Metaprogramming, Inference Algorithms}  %% \keywords are mandatory in final camera-ready submission

%% \maketitle
%% Note: \maketitle command must come after title commands, author
%% commands, abstract environment, Computing Classification System
%% environment and commands, and keywords command.
\maketitle

%% This is an example first chapter.  You should put chapter/appendix that you
%% write into a separate file, and add a line \include{yourfilename} to
%% main.tex, where `yourfilename.tex' is the name of the chapter/appendix file.
%% You can process specific files by typing their names in at the 
%% \files=
%% prompt when you run the file main.tex through LaTeX.
\section{Introduction}

Probabilistic modeling and inference are now mainstream 
approaches deployed in many areas of computing and data analysis~\cite{russel2003artificial, thrun2005probabilistic,
 liu2008monte, murphy2012machine, gelman2014bayesian, forsyth2002computer}.
To better support these computations, researchers have developed
probabilistic programming languages, which include  constructs that directly 
support probabilistic modeling and inference within the language
itself~\cite{milch2007blog, goodman2008church, goodman2014design,
    mansinghka2014venture, gordon2014tabular, tristan2014augur,
    gordon2014probabilistic, tolpin2015probabilistic, carpenter2016stan,goodman2017pyro}. 
    Probabilistic inference strategies provide the
probabilistic reasoning required to implement these constructs.

It is well known that no one probabilistic inference strategy is
appropriate for all probabilistic inference and modeling tasks~\cite{russel2003artificial,mansinghka2018probabilistic}.
Indeed, effective inference often involves breaking an
inference problem down into subproblems, then applying different
inference strategies to different subproblems as appropriate~\cite{russel2003artificial,mansinghka2018probabilistic}.
Applying this approach to probabilistic programs, specifically by
specifying subtask decompositions and inference strategies to apply
to each subtask, is called {\em inference metaprogramming}. 
Inference metaprogramming has been shown
to dramatically improve the execution time and
accuracy of probabilistic programs (in comparison with monolithic
inference strategies that apply a single inference strategy to the
entire program)~\cite{mansinghka2018probabilistic}. 

\subsection{Probabilistic Inference and Convergence} 

Executions of probabilistic programs typically use {\em probabilistic inference} to generate 
samples from the underlying probability distribution that the program defines~\cite{mansinghka2018probabilistic}.
Many probabilistic inference algorithms are {\em iterative}, i.e., they perform multiple steps that bring
samples closer to the specified probability distribution.  A standard correctness property of such algorithms is 
{\em asymptotic convergence}, i.e.,  a guarantee that, in the limit as the number of iterations increases, 
the resulting sample will be drawn from the defined posterior distribution. Markov-Chain
Monte-Carlo (MCMC) algorithms (which include  Metropolis-Hastings~\cite{chib1995understanding}
and Gibbs sampling~\cite{meyn2012markov}) comprise a widely-used~\cite{milch2007blog, goodman2008church, goodman2014design,
    mansinghka2014venture} class of probabilistic
inference algorithms that often come with asymptotic convergence
guarantees~\cite{geyer1998markov}. 

Using inference metaprogramming to decompose and solve
inference problems into subprograms produces new hybrid probabilistic 
inference algorithms.  
Whether or not these new hybrid inference algorithms (as implemented in the
inference metaprogramming language) also asymptotically converge is often a question of interest
(because it directly relates to the compositional soundness of the
inference metaprogram).

Over the last several decades the field has developed many iterative probabilistic 
inference algorithms~\cite{meyn2012markov} and proved convergence results for these algorithms~\cite{berti2008trivial}.
Many of these algorithms compose inference steps applied to different parts of the problem and would therefore
seem to be a promising candidate for proving convergence properties of probabilistic programs
with inference metaprogramming.
Unfortunately, these algorithms, and their associated convergence proofs, 
have several onerous restrictions. Specifically, they model the state of the 
system as a product space over a fixed set of random choices and work with 
policies whose selection of random choices to resample does not depend on the 
state of the system. The basic mathematical framework (and associated convergence proofs) 
is therefore not applicable to probabilistic programming with inference metaprogramming
--- in this new setting, the random choices in the subproblems (as defined by the random variables that
the program defines and samples) may change over time, are potentially unbounded (e.g., 
Open Universe Probabilistic Models~\cite{milch2010extending, wu2016swift}), 
and may depend on the state of the system as realized in the current values of the random choices. 
For example, our framework supports programs that sample a stochastic choice, then 
compute the set of stochastic choices to include in a subproblem as a function
of the sampled stochastic choice. 

\subsection{Our Result}

We present the first asymptotic convergence result for hybrid probabilistic inference algorithms
defined by inference metaprogramming. Given a probabilistic program with posterior distribution $\pi$,
we show that the hybrid algorithms applied to that program are $\pi$-irreducible, aperiodic, 
and have $\pi$ as their stationary distribution. This result stands on two foundational new results:
\begin{itemize}
\item {\bf Independent Subproblems:} We consider executions of probabilistic programs 
that produce {\em program traces}~\cite{ mansinghka2018probabilistic, wingate2011lightweight}. 
These traces record the random choices made during the execution. With inference metaprogramming, the subproblem
inference algorithms must operate only over the random choices in the subproblem. Previous formulations
of subproblem inference, however, define subproblem inference as operating over the entire trace~\cite{mansinghka2018probabilistic}.
This approach entangles the subproblem with the full program trace and complicates the analysis of the
interaction between subproblems and inference metaprogramming. 

We instead formalize subproblem inference using a new technique that extracts each subproblem from the original 
program trace into its own independent trace.  Inference is then performed over the full extracted
trace, with the newly generated trace then stitched back into the original trace
to complete the subproblem inference. By detangling the subproblem from the full
trace, this approach delivers the clean separation of subproblems and subproblem inference 
required to state and prove the new asymptotic convergence result. 

\item {\bf Mathematical Framework:} We present a new and more general mathematical framework 
for studying the composition of probabilistic inference algorithms applied to subproblems. 
A key aspect of this framework is that it supports state-dependent embeddings between program trace spaces
defined by different probabilistic programs. The framework therefore enables us to model
subproblem inference by embedding the original program trace into the space of program
traces defined by the detangled subproblem, moving the embedded trace within this new space of
program traces, then injecting the new trace back into the original trace space. 

We note that this mathematical framework is not specifically tied to probabilistic programming. 
It supports compositional asymptotic convergence results for a range of probabilistic 
inference algorithms that operate over general probability spaces (even uncomputable ones)
by mapping subspaces of the original space into new isolated probability spaces, 
iteratively applying MCMC inference algorithms to the isolated space, then mapping the 
results back into the original space. An important property is that
the applied inference algorithms, the isolated probability spaces, and the mappings may
all be state-dependent.

\end{itemize}

Building on these results, we prove a new asymptotic convergence result for inference metaprograms
that apply asymptotically converging MCMC algorithms to appropriately defined subproblems. This 
result identifies two key restrictions on the subproblem selection strategies that the 
inference metaprogram uses to identify subproblems. These restrictions
guarantee asymptotic convergence for inference metaprograms that 
apply a large class of asymptotically converging MCMC algorithms
to the specified subproblems:

\begin{itemize}
\item {\bf Reversibility:} 
The subproblem selection strategy must be {\em reversible}, i.e., 
        given $n$ traces $t_1, t_2 \ldots t_n$ such that $t_i$ can be transformed 
        into trace $t_{i+1}$ by modifying parts of the trace $t_i$ selected by the subproblem selection strategy,
        then it must be possible to transform trace $t_n$ into trace $t_1$ by modifying
        parts of the trace $t_n$ selected by the subproblem selection strategy.
        Intuitively, it must be possible to reverse any changes that can be made by countably applying 
        the same subproblem selection strategy to a starting trace. 

\item {\bf Connectivity:} 
The combination of all of the subproblem selection strategies in the 
inference metaprogram must connect the entire probability space. Given two
traces $t$ and $t'$, we say that the subproblem selection strategies connect 
$t$ and $t'$ if it is possible to transform $t$ into $t'$ by modifying
the parts of $t$ selected by one of the subproblem selection strategies. 

The subproblem selection strategies connect the probability space
if there do not exist sets of traces $U$ and $V$ such that 1) the probability of $U \cup V$
sums to one and 2) there does not exist $t \in U$ and $t' \in V$ such that the 
subproblem selection strategies connect $t$ and $t'$. 

\end{itemize}

Conceptually, these two restrictions together ensure that the hybrid inference algorithm
defined by the inference metaprogram does not become stuck in a subset of the positive
probability space and therefore unable to sample some positive probability set. 

Effective probabilistic programming requires subproblem identification and 
hybrid probabilistic inference algorithms applied to the
identified subproblems. The results in this paper
enable the sound and complete decomposition of 
otherwise intractable probabilistic inference problems 
into tractable subproblems solved by different inference algorithms. 
It also characterizes properties that entail asymptotic convergence of these resulting hybrid
probabilistic inference algorithms. 

\Comment{
\subsection{Subproblems and Traces}

When a probabilistic program executes, it produces a sample 
in the form of a {\em program trace}. Probabilistic inference
algorithms for probabilisitic programs operate by changing stochastic choices in these traces to produce new traces. 
In this context, subproblems are subtraces and subproblem inference
algorithms operate on these subtraces. The current state of the art 
defines subproblem inference as operating over the full program
trace even though the inference algorithm should change only the subproblem~\cite{mansinghka2018probabilistic}.
This definition entangles the subproblem with the full program trace and
complicates the implementation and analysis of inference metaprograms.

\noindent{\bf Independent Subproblems:}
We formalize subproblem inference using a new technique that extracts each subproblem from the original 
program trace into its own independent trace.  Inference is then performed over the full extracted
trace, with the newly generated trace then stitched injected back into the original trace
to complete the subproblem inference. By detangling the subproblem from the full
trace, this approach simplifies the analysis of the inference metaprograms
by providing a standardized interface to inference algorithms independent of subproblems 
and enforcing the isolation of the inference algorithm within the target subproblem separately. 
It also enables us to analyze recursive application of inference metaprogramming to the extracted subtraces. 

Successfully detangling the subproblem requires extracting a legal subtrace that
an inference algorithm can successfully process. The first challenge is that the
subtrace must be a valid trace of some probabilistic program, i.e., the subtrace must
include all dependences required for the computation to be well defined and 
all deterministic computations must be correct within the trace. The second
challenge is that the subtrace must not contain any stochastic choice outside
the subproblem. To overcome these challenges, We present a new technique that
appropriately converts outside stochastic choices into observe statements,
then appropriately updates the extracted subtrace to 
reflect these changes.  The dual stitching operation, which must reincorporate the newly inferred
subtrace back into the original trace, then reverses the extraction while
preserving the changes from the inference algorithm. 

We present the technique in the context of a core probabilistic programming
language based on the lambda calculus. We define an extraction operation
that, given a subproblem defined over a current execution trace, 
extracts a corresponding subtrace. We also define a stitching algorithm
that, given an inference result from the execution of the extracted
subprogram, updates the execution trace to reflect the subproblem
inference. 

\noindent{\bf Soundness and Completeness:} 
We present new soundness and completeness results for the extraction
and stitching operations. The soundness result states that if 
the sequence subproblem extraction, inference over the extract
subproblem trace, then stitching produces a new trace $t$, 
the direct subproblem inference applied to the subproblem entangled
with the original trace can also produce the new trace $t$. 
The completeness result states that if direct subproblem inference applied 
to the subproblem entangled with the original trace can produce a new
trace $t$, the the sequence subproblem extraction, inference over the extract
subproblem trace, then stitching can also produce the new trace $t$. 

\subsection{Asymptotic Convergence}

Probabilistic programs produce traces as samples from a distribution. 
Many probabilistic inference algorithms (such as Metropolis-Hastings~\cite{chib1995understanding}
and Gibbs sampling~\cite{meyn2012markov}) take a sample as input and produce a new
sample as output, with the new sample serving as input to the next
iteration of the algorithm. A standard correctness property of 
such algorithms is {\em asymptotic convergence} --- a guarantee that, in the limit
as the number of iterations increases, the resulting sample will
be drawn from the defined posterior distribution. Markov-Chain
Monte-Carlo (MCMC) algorithms (which include both Metropolis-Hastings
and Gibbs sampling) comprise a widely-used class of probabilistic
inference algorithms that often come with asymptotic convergence
guarantees. 

Using inference metaprogramming to decompose and solve
inference problems into subprograms produces new hybrid probabilistic 
inference algorithms. Whether or not these new hybrid inference algorithms (as implemented in the
inference metaprogramming language) also asymptotically converge is often a question of interest
(because it directly relates to the compositional soundness of the
inference metaprogram).

Over the last several decades the field has developed many probabilistic 
inference algorithms that operate iteratively on subsets of random choices~\cite{meyn2012markov} 
and proved convergence results for these inference algorithms~\cite{berti2008trivial}. 
These inference algorithms, and their associated convergence proofs, 
have several restrictions. Specifically, they model the state of the 
system as a product space over a fixed set of random choices and work with 
policies whose selection of random choices to resample does not depend on the 
state of the system. The basic mathematical framework (and associated convergence proofs) 
is therefore not applicable to probabilistic programming with inference metaprogramming 
- in this new setting, the number of random choices may vary over time and is
potentially unbounded and the subproblem selection strategies may depend on
the state of the system as realized in the current values of random choices. 

One can view these Gibbs-Like inference algorithms on a higher level 
as a stochastic combination of a finite number of constrained inference processes.
Given a tuple of values of all stochastic choices, Gibbs-Like algorithms first pick 
a subset of stochastic choices which will remain constant and conversely pick the 
stochastic choices which can change. Given an input tuple, they define
the space of output tuples an inference algorithm can produce. For example
given $n$ stochastic choices and an input tuple 
representing values of these $n$ stochastic choices, 
the gibbs algorithm picks a stochastic choice, let say the $i^{th}$ choice,
it is going to change in the current step. By picking the $i^{th}$ choice it restricts
the space of output tuples which can be produced after the current step takes place.
The output tuple is then picked using an inference algorithm on this restricted space. For example
the gibbs algorithm picks the output tuple using a independence sampling of the conditional 
distribution. 

Out framework tackles these limitations by first acknowledging that
probabilistic programs cannot be viewed as a finite set of stochastic choices.
This is due to the fact that probabilistic programs allow control flow to dictate the 
existence of stochastic choices. Loops restrict us from bounding the number of stochastic
choices sampled during an execution. The concept of trace allows 
us to capture stochastic choices sampled during execution of a probabilistic program,
hence allowing us to define and capture the probability space created by a 
probabilistic program and formalize the link between stochastic choices 
sampled and probability of the given trace. 
XX

We present a more general framework which tackles the limitations of 
probabilistic inference algorithms using the concept of Class functions. 
Within our framework probability space of a probabilistic program
are represented using a set of traces. The semantics of our language
assigns each trace with a probability density value which takes into
account the varying random choices each trace takes. Class function
provide us a mechanism of taking a trace and mapping it to an 
tuple. The first component of tuple abstractly represents parts of the trace 
which cannot be changed, whereas the second component represents parts of trace
which can be changed by an inference algorithm. This abstract mapping to 
tuple of abstract objects allows the Class function to pick arbitrary random
choices within the trace. The Class function also ensures that any change made
by the inference algorithm results a valid trace of the original program. For example
if a random choice changed by the inference algorithm changes control flow
in the original trace, the class functions ensures

Using our mathematical framework,
we present a new asymptotic convergence result for inference
metaprograms that apply asymptotically converging MCMC algorithms
to appropriately defined subproblems. This result identifies a key
restriction on the subproblem selection strategies that the 
inference metaprogram uses to identify subproblems. This restriction
guarantees asymptotic convergence for inference metaprograms that 
apply a large class of asymptotically converging MCMC algorithms
to the specified subproblems. This restriction requires:

\begin{itemize}
\item {\bf Reversibility:} 
The subproblem selection strategy must be {\em reversible}, i.e., given any $n$ traces
that can be transformed into a new trace $t'$ by applying the subproblem selection
strategy to $t$, then applying the specified inference algorithm to the resulting 
subprogram to obtain the new trace $t'$, it must also be possible to apply the 
subproblem selection strategy to $t'$, then apply the inference algorithm to obtain the original trace $t$. 

\item {\bf Connectivity:} 
The combination of all of the subproblem selection strategies in the 
inference metaprogram must connect the entire sample space, i.e., given
any trace $t$, it must be possible to reach any other trace $t'$ in the
sample space by repeatedly applying subproblem selection selection strategies. 

\end{itemize}

\subsection{Contributions} 

We claim the following contributions:
\begin{itemize}
\item {\bf Independent Subproblems:} We present the first formulation of subproblem extraction
for probabilistic programs. This formulation involves dual subproblem extraction
and stitching operations. We state the first soundness and completeness properties
that the combination of the extraction and stitching operations must satisfy 
and prove that my formulation satisfies these properties (Theorems \ref{thm:sound} and \ref{thm:complete} ).

\item{\bf Stochastic Alternating Class Kernels:} 
    Using standard measure theoretic concepts, we provide 
     a mathematical framework to analyze subproblem based hybrid inference algorithms
        using Stochastic Alternating Class Kernels. 
        We prove that, under certain key conditions, Stochastic Alternating Class Kernels 
        reduce to an asymptotically converging MCMC algorithm.

\item {\bf Asymptotic Convergence:} Using Stochastic Alternating Class Kernels, we present the first asymptotic convergence
result for hybrid probabilistic inference algorithms applied to subproblems 
        in probabilistic programming languages (Theorem \ref{thm:convsub}). This result characterizes
subproblem selection strategies that guarantee asymptotic convergence for
inference metaprograms that apply asymptotically converging MCMC algorithms
to suproblems. 
\end{itemize}
}

%% Acknowledgments
\Comment{
\begin{acks}                            %% acks environment is optional
                                        %% contents suppressed with 'anonymous'
  %% Commands \grantsponsor{<sponsorID>}{<name>}{<url>} and
  %% \grantnum[<url>]{<sponsorID>}{<number>} should be used to
  %% acknowledge financial support and will be used by metadata
  %% extraction tools.
  This material is based upon work supported by the
  \grantsponsor{GS100000001}{National Science
    Foundation}{http://dx.doi.org/10.13039/100000001} under Grant
  No.~\grantnum{GS100000001}{nnnnnnn} and Grant
  No.~\grantnum{GS100000001}{mmmmmmm}.  Any opinions, findings, and
  conclusions or recommendations expressed in this material are those
  of the author and do not necessarily reflect the views of the
  National Science Foundation.
\end{acks}
}

\section{Language and Execution Model}

Our treatment of subproblem selection, extraction, and stitching works with a core 
probabilistic programming language (Figure~\ref{fig:language}) based on the lambda calculus.
A program in this language is a sequence of $\mathsf{assume}$ and
$\mathsf{observe}$ statements. Expressions are derived from the untyped lambda calculus augmented 
with the $\mathsf{Dist}(e)$ expression, which allows the program
to sample from a distribution $\mathsf{Dist}$ given parameter $e$. 

The core language supports computable distributions over computable expressions (including computable reals).
We believe it is straightforward to generalize the language to include more general probability spaces
(e.g., probability spaces including uncomputable reals) at the cost of a larger formalism. 
The mathematical framework we use to prove convergence (Section~\ref{sec:stoch}) works over 
general probability spaces including probability spaces with uncomputable objects. 

\begin{figure}[h]
\[
\begin{array}{rcl}
    e_v \in E_v &:=& x 
    ~\vert~\lambda.x~e_v
    ~\vert~ (e_v~e'_v)\\
    
    e,e_1, e_2 \in E &:=& x
    ~\vert~ \lambda.x~e
    ~\vert~ \mathsf{Dist}(e)
    ~\vert~ (e_1~e_2)\\
    
    s \in S &:=& \assume{x}{e}
    ~\vert~ \observed{e}{e_v}\\
    
    p \in P &:=& \emptyset
    ~\vert~ s;p
\end{array}
\]
\caption{Probabilistic Lambda Calculus}
    \label{fig:language}
\end{figure}

$\mathsf{Dist}(e)$ can be seen as a set of probabilistic lambda calculus expressions 
$\{e_d \vert e_d \in \mathsf{Dist}(e) \subseteq E_v \}$. Based on the parameter 
expression $e$, $\mathsf{Dist}(e)$ makes a stochastic choice and returns 
an expression $e_v \in \mathsf{Dist}(e)$. We define:
\[\mathsf{Dist}(e)[x/y] = \mathsf{Dist'}(e[x/y])
= \{e_d[x/y] \vert e_d \in \mathsf{Dist}(e[x/y]) \}\]
\[\mathsf{FreeVariables}(\mathsf{Dist}(e)) =  
\bigcup\limits_{e_d \in \mathsf{Dist}(e) \cup \{e\}}\mathsf{FreeVariables}(e_d) \]

\Comment{
Because of the nondeterminism associated with stochastic choices,
the execution strategy matters for the semantics of the language. 
We use call by value as the execution strategy and forbid the reduction
of expressions within a lambda.  Figure~\ref{fig:execcalc} presents the
execution strategy. 
{\begin{figure}
\[
    \begin{array}{c}
    \begin{array}{cc}
        \infra{}
        {x \rightarrow x}
            &
        \infra{}
        {\lambda.x~e \rightarrow \lambda.x~e}
    \end{array}
        \\
        \\
        \begin{array}{ccc}
        \infra{e \rightarrow e'\\
        e_v \in \mathsf{Dist}(e')\\
        e_v \rightarrow e'_v}
        {\mathsf{Dist}(e) \rightarrow e'_v}
        &
        \infra{e_1 \rightarrow e'_1\\
        e_2 \rightarrow e'_2\\
        e_1 \neq \lambda.x~e}
        {(e_1~e_2) \rightarrow (e'_1~e'_2)}
        &
         \infra{e_1 \rightarrow \lambda.x~e\\
        e_2 \rightarrow e'_2\\
         e[e'_2/x] \rightarrow e'}
        {(e_1~e_2) \rightarrow e'}
        \\
            \\
        \infra{}
        {\emptyset \rightarrow \emptyset}
        &
        \infra{e \rightarrow e'\\
        p[e'/x] \rightarrow p'}
        {\assume{x}{e};p \rightarrow p'}
        &
        \infra{
            e \rightarrow e' ~~~~
        p \rightarrow p'
        }
        {\observed{e}{e_v};p \rightarrow \\ 
        \observed{e'}{e_v};p'}
    \end{array}
    \end{array}
\]
    \caption{Execution Strategy for Probabilistic Programs}
    \label{fig:execcalc}
\end{figure}}}
%\subsection{Traces}

\noindent{\bf Traces:} 
When a program executes, it produces an {\em execution trace} (Figure~\ref{fig:traces}). 
This trace records the executed
sequence of assume and observe commands, including the value of
each evaluated (sub)expression. It also assigns a unique 
identifier to each evaluated (sub)expression and stochastic choice. These identifiers
will be later used to construct a dependence graph used to define the subproblem given
a set of stochastic choices in the subproblem. 

\begin{figure}[h]
\[
\begin{array}{rcl}
    v \in V &:=&  x
    ~\vert~ \tup{\lambda.x~e, \sigma_v, \sigma_{id}}
    ~\vert~ (v_1~v_2)\\
    aa \in aA &:=& \perp~
      \vert~ x=ae\\
      
    ae \in aE &:=& \ndi{x }{x}{id}
    ~\vert~ \ndi{x(id')}{v}{id}\\
    &\vert& \ndi{\lambda.x~e}{v}{id}
    %&\vert& \ndi{(ae_1~ae_2)}{v}{id}\\
    ~\vert~ \ndi{(ae_1~ae_2)aa}{v}{id}\\
    &\vert& \ndi{\mathsf{Dist}(ae\#id') = ae'}{v}{id}\\
    
as \in aS &:=& \assume{x}{ae}
~\vert~ \observed{ae\#id}{e_v}\\

t \in T &:=& \emptyset
~\vert~ as;t
\end{array}
\]
\caption{Traces}
    \label{fig:traces}
\end{figure}

We define the execution, including the generation of valid traces $t$, with the 
transition relation $\Rightarrow_s \subseteq \Sigma_v \times \Sigma_{id} \times 
P \rightarrow T$ (Figure~\ref{fig:validtraces}). Conceptually, 
the transition relation executes program $p$ 
% (in accordance to my execution strategy defined in Figure~\ref{fig:execcalc})
under the environment $\sigma_v, \sigma_{id}$ to obtain a trace $t$, where 
$\sigma_v : Vars \rightarrow V$ and $\sigma_{id} : Vars \rightarrow ID$. 
$\sigma_v$ is  a map from variable name to its corresponding assigned value, 
whereas $\sigma_{id}$ gives the $id$ of the expression which assigned this value
to that variable. Because of the nondeterminism associated with stochastic choices,
the execution strategy matters for the semantics of the language. 
We use call by value as the execution strategy and forbid the execution
of expressions within a lambda. 

Given a program $p$, we define the set of all valid traces which can be obtained 
by executing $p$ as $T_p = \mathsf{Traces}(p)$.
\[t \in \mathsf{Traces}(p) \iff \emptyset, \emptyset \vdash p \Rightarrow_s t\]

\begin{figure}
    \begin{subfigure}{\textwidth}
    \[
        \begin{array}{c}
            \infral{
                \Nid{} ~~~~~
                y \notin \dom \sigma_v\\
            }
            {\SI{}{} y \Rightarrow_s y, id, \ndi{y}{y}{id} }{}
                \\
                \\
            \infral{
                \Nid{} ~~~~~
                x \in \dom \sigma_v\\
            }
            {\SI{}{} x \Rightarrow_s \sigma_v(x), id, \ndi{x(\sigma_{id}(x))}{\sigma_v(x)}{id}}
            {}
            \\
            \\
            \infral{
                \Nid{}\\
                \sigma'_v = \mathsf{RestrictKeys}(\sigma_v, \mathsf{FreeVariables}(\lambda.x~e))\\
                \sigma'_{id} = \mathsf{RestrictKeys}(\sigma_{id}, \mathsf{FreeVariables}(\lambda.x~e))\\
                v = \tup{ \lambda.x~e, \sigma'_v, \sigma'_{id} }\\
            }
            {\SI{}{} \lambda.x~e \Rightarrow_s v, id,  
            \ndi{\lambda.x~e}{v}{id}}{}
            \\
            \\
            \infral{
                \Nid{} ~~~~~
                \Nid{'}\\
                \SI{}{} e \Rightarrow_s v, id_e, ae\\
            e'_v \in \mathsf{Dist}(v) ~~~~~
            \SI{}{} e'_v \Rightarrow_s v, id_v, ae_v\\
            }
            {\SI{}{} \mathsf{Dist}(e) \Rightarrow_s  v, id, 
            \ndi{\mathsf{Dist}(ae\#id') = ae_v}{v}{id} }
            {}
            \\
            \\
            \infral{
                \Nid{} ~~~~
                \Nst{}\\
                \SI{}{} e_1 \Rightarrow_s \tup{\lambda.y~e, \sigma'_v,
                \sigma'_{id}}, id_1, ae_1 \\
                \SI{}{} e_2 \Rightarrow_s v', id_2, ae_2\\
                \SI{'[x \rightarrow v']}{'[x \rightarrow id_2]} 
                e[x/y] \Rightarrow_s v, id_e, ae_e\\
            }
            {\SI{}{} (e_1~e_2) \Rightarrow_s v, id, \ndi{(ae_1~ae_2)x=ae_e}{v}{id} }{}
            \\
            \\
            \infral{
                \Nid{}\\
                \SI{}{} e_1 \Rightarrow_s v_1, id_1, ae_1 ~~~~~
                \SI{}{} e_2 \Rightarrow_s v_2, id_2, ae_2\\
                v_1 \neq \tup{\lambda.x~e, \sigma'_v, \sigma'_{id}} ~~~~
                v = (v_1~v_2)\\
            }
            {\SI{}{} (e_1~e_2) \Rightarrow_s v, id, \ndi{(ae_1~ae_2)\perp}{v}{id}}{}
                    \end{array}
    \]
        \caption{Executing expressions, $\Rightarrow_s \subseteq \Sigma_v \times \Sigma_{id} \times E 
        \rightarrow V \times ID \times aE$}
\end{subfigure}
\begin{subfigure}{\textwidth}
    \[
        \begin{array}{c}
            \\
            \infral{}
            {\SI{}{} \emptyset \Rightarrow_s \emptyset}{}
            \\
            \\
            \infral{
                %\Nst{}\\
                \SI{}{} e \Rightarrow_s v, id, ae ~~~~~
                \SI{[x\rightarrow v]}{[x \rightarrow id]} p
                \Rightarrow_s t
            }
            {\SI{}{} \assume{y}{e};p \Rightarrow_s 
            \assume{x}{ae};t}{}
            \\
            \\
            \infral{
                \Nid{}\\
                \SI{}{} e \Rightarrow_s e'_v , id_e, ae ~~~~~
                \SI{}{} p \Rightarrow_s t
            }
            {\SI{}{} \observed{e}{e_v};p \Rightarrow_s 
             \observed{ae\#id}{e_v};t}{}
        \end{array}
    \]
    \caption{Executing Programs, 
    $\Rightarrow_s \subseteq \Sigma_v \times \Sigma_{id} \times P \rightarrow T$}
\end{subfigure}
\caption{Valid Traces}
\label{fig:validtraces}
\end{figure}

Given a trace, we can drop the computed values and 
assigned $id$s and reroll the augmented expressions to recover the underlying program.
The transition relation $\Rightarrow_r \subseteq T \rightarrow P$ (Figure~\ref{fig:tracestoprog})
formalizes this procedure. Given a trace $t$, we define 
\[p = \mathsf{Program}(t) \iff t \Rightarrow_r p\]
Note that $\forall~t,p.~t \in \mathsf{Traces}(p) \implies p = \mathsf{Program}(t)$. The
reverse may not be true as there are additional constraints that valid traces
must satisfy. Two traces are equivalent if and only if they differ at 
most in the choice of unique identifiers selected for each augmented expression and stochastic choice.

\begin{figure}
    \begin{subfigure}{\textwidth}
            \[
                \begin{array}{c}
                \begin{array}{cc}
    \infral{}
                    {\ndi{x}{x}{id} \Rightarrow_r x}{}
    &
 \infral{}
                    {\ndi{x(id')}{v}{id} \Rightarrow_r x }{}
                    
                    \\
                    \\
    \infral{}
    {\ndi{\lambda.x~e}{v
                    }{id} \Rightarrow_r \lambda.x~e }{}
                &
                    \infral{
ae_1 \Rightarrow_r e_1 ~~~~~
ae_2 \Rightarrow_r e_2\\
}
                    {\ndi{(ae_1~ae_2)aa}{v}{id} \Rightarrow_r (e_1~e_2) }{}
                \end{array}
                    \\
\infral{ae \Rightarrow_r e}
                    {\ndi{\mathsf{Dist}(ae\#id') = ae'}{v}{id} \Rightarrow_r \mathsf{Dist}(e)}{}
                \end{array}
            \]
        \caption{Rolling back Augmented Expressions, $\Rightarrow_r \subseteq aE \rightarrow E$}
    \end{subfigure}
    \begin{subfigure}{\textwidth}
\[
    \begin{array}{c}
\begin{array}{cc}
     \infral{}
    {\emptyset \Rightarrow_r \emptyset}{}
    &
\infral{
ae \Rightarrow_r e
~~~~~~~
t \Rightarrow_r p\\
}
    {\assume{x}{ae};t \Rightarrow_r \assume{x}{e};p }{}
\end{array}
    \\
        \\
\infral{
ae \Rightarrow_r e
    ~~~~~~~~
    t \Rightarrow_r p\\
}
    {\observed{ae\#id}{e_v};t \Rightarrow_r  \observed{e}{e_v};p }{}
\end{array}
\]
        \caption{Rolling back Traces, $\Rightarrow_r \subseteq T \rightarrow P$}
    \end{subfigure}
\caption{Rolling back Traces to Probabilistic Program}
\label{fig:tracestoprog}
\end{figure}

\Comment{
Given a trace $t$, $\mathsf{Program}(t)$ is the Venture Generative 
program $p$  which generated the trace. Hence $p = \mathsf{Program}(t)$ if
$t \Rightarrow_r p$.

Note that $t \in \mathsf{Traces}(p) \implies p = \mathsf{Program}(t)$ 
but the reverse may not be true (since $t$ may not be  a valid trace). }

\Comment{
\noindent{\bf Dependence Graphs:} 
Dependence graphs capture two kinds of dependences --- data dependences
and control dependences. There is a data dependence from 
expression $e_1$ to expression $e_2$ when the value of $e_2$ is computed
directly from $e_1$. There is a control dependence from expression $e_1$ to expression $e_2$
when $e_2$ was computed as a result of evaluating the lambda expression $e_1$. 
Given a trace $t \in T$ we define the dependence graph $\NDE{}{}{} = 
\mathsf{Graph}(t)$. The dependence graph makes the data and existential dependencies 
in $t$ explicit. 
}
%\section{Dependence Graph}

\begin{figure}
    \begin{subfigure}{\textwidth}
\[
\begin{array}{c}
    \begin{array}{cc}
    \infra{}
    {\ndi{x}{x}{id} \Rightarrow_g id, \tup{\{id \rightarrow \perp \},
    \emptyset, \emptyset} }
        &
    \infra{}
    {\ndi{x(id')}{v}{id} \Rightarrow_g id , 
    \tup{\{id \rightarrow \perp \}, 
    \{ \tup{id' , id} \}, \emptyset}}
    \end{array}
    \\
    \\
    \infra{}
    {\ndi{\lambda.x~e}{\tup{\lambda.x~e, \sigma_v, \sigma_{id}}}{id} 
    \Rightarrow_g id, 
     \tup{ \{ id \rightarrow \perp \} 
     ,\emptyset, \emptyset } }
    \\
    \\
    \infra{
        ae_1 \Rightarrow_g id_1, \NDE{_1}{_1}{_1}
        ~~~~~~~~
        ae_2 \Rightarrow_g id_2, \NDE{_2}{_2}{_2}
        ~~~~~~~~
        ae \Rightarrow_g id_e, \NDE{_e}{_e}{_e}\\
        \NDE{}{}{} = \NDE{_1 \cup \cN_2 \cup \cN_e}
        {_1 \cup \cD_2 \cup \cD_e}{_1 \cup \cE_2 \cup \cE_e}
    }
    {\ndi{(ae_1~ae_2)x=ae}{v}{id} \Rightarrow_g \\ 
    id, \NDE{[id \rightarrow \perp ]}
    {\cup \{ \tup{id_1 ,id} ,\tup{id_e,  id } \}}
    {\cup \{ \tup{id_1 ,id_n} \vert id_n \in \dom \cN_e\} }
    }
    \\
    \\
    \infra{
        ae_1 \Rightarrow_g id_1, \NDE{}{}{}
        ~~~~~~~~~~
        ae_2 \Rightarrow_g id_2, \NDE{'}{'}{'}\\
    }
    {
        \ndi{(ae_1~ae_2)\perp}{v}{id} \Rightarrow_g \\ id , 
        \NDE{\cup \cN'[id \rightarrow \perp ]}
        {\cup \cD'\{\tup{id_1, id}, \tup{id_2, id} \} }{\cup \cE}
    }
    \\
    \\
    \infra{ae \Rightarrow_g id_e, \NDE{}{}{} 
    ~~~~~~~~~~
    ae' \Rightarrow_g id'_e, \NDE{'}{'}{'}\\
    \NDE{_r}{_r}{_r} = \\  \NDE{\cup \cN' 
    [id' \rightarrow \mathsf{Sample}] }{\cup \cD'\cup \{\tup{id_e ,id'} \} }
    {\cup \cE'\cup 
    \{ \tup{id', id_n} \vert id_n \in \dom \cN'\} }\\
    }
    {\ndi{\mathsf{Dist}(ae\#id') = ae'}{v}{id} 
    \Rightarrow_g \\ id, \NDE{_r[
        id \rightarrow \perp] 
    }{_r\cup \{ \tup{id'_e ,id}, \tup{id', id} \}}{_r}}
    \\
\end{array}
\]
\caption{Dependence Graph generation for augmented Expressions $\Rightarrow_g \subseteq aE \rightarrow ID \times \NDE{}{}{}$}
\end{subfigure}
\begin{subfigure}{\textwidth}
    \[
        \begin{array}{c}
            \infra{ae \Rightarrow_g id, \NDE{}{}{} }
    {\assume{x}{ae} \Rightarrow_g \NDE{}{}{} }
    \\
            \\
    \infra{ae \Rightarrow id', \NDE{}{}{} }
    {\observed{ae\#id}{e_v} \Rightarrow_g \NDE{
        [id \rightarrow \mathsf{Sample}] 
    }{\cup \{ \tup{id' ,id} \} }{} }
    \end{array}
\]
    \caption{Dependence Graph generation for augmented Statements, $\Rightarrow_g \subseteq aS \rightarrow \NDE{}{}{} $}
\end{subfigure}
\begin{subfigure}{\textwidth}
    \[
        \begin{array}{cc}
        \\
            \infra{}
    {\emptyset \Rightarrow_g \tup{\emptyset, \emptyset, \emptyset}}
    &
    \infra{as \Rightarrow_g \NDE{_s}{_s}{_s}
    ~~~~~~~~~
            t \Rightarrow_g \NDE{}{}{} }
    {as;t \Rightarrow_g \NDE{\cup \cN_s}{\cup \cD_s}{\cup \cE_s}} 
\end{array}
\]
    \caption{Dependence Graph generation for Traces, $\Rightarrow_g \subseteq T \rightarrow \NDE{}{}{}$ }
\end{subfigure}
\caption{Dependency Graph for a Trace $t$}
\label{fig:depgraph}
\end{figure}

\noindent{\bf Dependence Graphs:} 
Given a trace $t$, we define the dependence graph $\NDE{}{}{} =
\mathsf{Graph}(t)$ as a 3-tuple $\NDE{}{}{}$ where $\cN : \mathsf{ID}
\rightarrow \{\perp, \mathsf{Sample}\}$ is a map from   
$ID $ to either $\perp$ (when the corresponding augmented expression
for an $id \in ID$ is a deterministic computation) or $\mathsf{Sample}$ 
(when the augmented expression for an $id \in ID$ makes a stochastic choice).

$\cD \subseteq \mathsf{ID} \times \mathsf{ID}$ are data dependence edges.
There is a data dependence edge $\tup{id_1, id_2} \in \cD$ if the value
of the augmented expression  $id_2$ directly depends on the 
augmented expression $id_1$.
$\cE \subseteq \mathsf{ID} \times \mathsf{ID}$ are existential edges.
There is a existential edge $\tup{id_1, id_2} \in \cE$ if the value
of the augmented expression $id_1$ controls whether or not an 
augmented expression $id_2$ executed. For example, in a lambda application 
$(ae_1~ae_2)x=ae_3$, all augmented expressions in $ae_3$ were executed only
because of the value of $ae_1$. Changing the value of $ae_1$ would require
dropping the augmented expression $ae_3$ and recomputing another 
expression based on the new value of $ae_1$.

We formalize the dependence graph generation procedure as a transition
relation $\Rightarrow_g \subseteq T \rightarrow \NDE{}{}{}$ (Figure~\ref{fig:depgraph}). 
We use the shorthand
$\NDE{}{}{} = \mathsf{Graph}(t)$ if $\NDE{}{}{}$ is the dependence graph 
for trace $t$ i.e. $t \Rightarrow_g \NDE{}{}{}$.

\noindent{\bf Valid Subproblems:}
Subproblem inference must 1) change only the identified subproblem and not the 
enclosing trace while 2) producing a valid trace for the full probabilistic
program. Valid subproblems must therefore include all parts of the trace
that may change if any part of the subproblem changes. We formalize this
requirement as follows. 

\noindent Given a trace $t$ with dependence graph 
$\NDE{}{}{} = \mathsf{Graph}(t)$, a valid subproblem $\cS \subseteq \dom \cN$
must satisfy two properties: 1) there are no outgoing existential edges and
2) all outgoing data dependence edges must terminate at a 
stochastic choice ($\mathsf{Sample}$ node).

The first property ensures that parts of the trace which were executed due to 
values of expressions in the subproblem are also part of the subproblem. An 
example of this is lambda evaluation $(ae_1~ae_2)x=ae_3$. If the value of $ae_1$ 
can be changed by the subproblem inference, $ae_3$ may or may not exist. Hence 
$ae_3$ should be within the subproblem to ensure that the inference algorithm 
can change it if necessary.

The second property ensures that any change made by the subproblem inference can be 
absorbed by a stochastic choice. For example, when the internal parameter of a $\mathsf{Dist}$
changes, the change can be absorbed by changing the probability of the trace
to account for the change in the probability of the value generated by
the execution of the absorbing $\mathsf{Dist}$ node. The changes are
absorbed by the stochastic choice and do not propagate further into the remaining
parts of the trace outside the subproblem.  We formalize these two properties as follows:
\begin{itemize}
\item $\forall id \in \cS.~\tup{id, id_o} \in \cE  \implies id_o \in \cS$
\item  $\forall id \in \cS.~\tup{id, id_o} \in \cD \wedge id_o \in \dom \cN 
- \cS \implies \cN(id_o) = \mathsf{Sample} $
\end{itemize}

The absorbing set $\cA \subseteq \dom \cN - \cS$ of a subproblem $\cS$ 
is the set of stochastic choices whose value directly depends on the 
nodes in the subproblem i.e. $\cA = \{id_a \vert id_a \in \dom \cN
- \cS \wedge \exists~id_i \in \cS.~\tup{id_i, id_o} \in \cD\}$.
The input boundary $\cB \subseteq \dom \cN - \cS$ of a subproblem 
$\cS$ is the set of nodes on which the subproblem directly depends, i.e.,
$\cB = \{id_b \vert id_b \in \dom \cN - \cS \wedge \forall~id_i \in \cS.
~\tup{id_b, id_i} \in \cD \}$.

\Comment{
Since given a trace $t$ and a valid subproblem $\cS$, an absorbing set 
$\cA$ is naturally defined, from this point on we will call the tuple 
$\cS, \cA$ as a valid subproblem for trace $t$.
}

\noindent{\bf Entangled Subproblem Inference:}
Following~\cite{mansinghka2018probabilistic}, we define {\em entangled subproblem inference} 
using the $\mathsf{infer}$ procedure~\cite{mansinghka2018probabilistic},
which takes as parameters a subproblem selection strategy $\mathsf{SS}$, 
an inference tactic $\mathsf{IT}$, and an input trace $t$. The
subproblem inference mutates $t$ to produce a new trace $t'$. 

\[
    \infra{\mathsf{SS}(t) = \cS ~~~~~
        t' = \mathsf{IT}(t, \cS)\\
        t' \in \mathsf{Traces}(\mathsf{Program}(t)) ~~~~~
        \SAC t \equiv t'}
    {\mathsf{infer}(\mathsf{SS}, \mathsf{IT}, t) \Rightarrow_i t'}
\]

This formulation works with arbitrary subproblem selection strategies $\mathsf{SS}$. 
The requirement is that, given a trace $t$, $\mathsf{SS}$ must
produce a valid subproblem $\cS$ over $t$. In practice, one way to satisfy this
requirement is to allow the programmer to specify a (potentially arbitrary) 
set of stochastic choices that must be in the subproblem, with the language
implementation completing these choices into a valid subproblem~\cite{mansinghka2018probabilistic}.

We also work with inference algorithms $\mathsf{IT}$ that 
take as input a full program trace $t$ and a valid subproblem $\cS$ and return
a mutated full program trace $t'$. We require that the output trace $t'$ 
1) is from the same program as the trace $t$ and 
2) $t'$ differs from $t$ only in a) the stochastic choices from the subproblem $\cS$ 
and b) the deterministic computations that depend on these stochastic choices.
We formalize these constraints as 
\begin{itemize}
    \item $t' \in \mathsf{Traces}(\mathsf{Program}(t))$
    \item $\SAC t \equiv t'$
\end{itemize}
Figure~\ref{fig:equivalence} presents 
the definition of $\equiv$. Note that operating with entangled subproblems
forces the inference tactic $\mathsf{IT}$ to take the full program 
trace $t$ as a parameter even though it must modify at most only the subproblem. 

\begin{figure*}
    \begin{subfigure}{\textwidth}
    \[
        \begin{array}{c}
        \begin{array}{cc}
            \infral{
            }
            {\SAC \ndi{x}{x}{id} \equiv \ndi{x}{x}{id'}}{}
            &
            \infral{
            }
            {\SAC \ndi{x(id_v)}{v}{id} \equiv  \ndi{x(id'_v)}{v'}{id'} }{} 
            \end{array}
            \\
            \\
            \infral{
            }
            {\SAC \ndi{\lambda.x~e}{v}{id} \equiv  \ndi{\lambda.x~e}{v'}{id'} }{}
            \\
            \\
            \begin{array}{cc}
                        \infral{
                \ID(ae_1) \notin \cS ~~~~~ 
            \SAC ae_1 \equiv ae'_1 \\
            \SAC ae'_2 \equiv ae'_2 ~~~~~
            \SAC ae'_3 \equiv ae'_3\\
            }
            {\SAC \ndi{(ae_1~ae_2)x=ae_3}{v}{id} \equiv
                \ndi{(ae'_1~ae'_2)x=ae'_3}{v'}{id}
                }{}
            \\
                \\
            \infral{
                \ID(ae_1) \notin \cS\\
            \SAC ae_1 \equiv ae'_1 ~~~~~
            \SAC ae'_2 \equiv ae'_2 \\
            }
            {\SAC \ndi{(ae_1~ae_2)\perp}{v}{id} \equiv
                \ndi{(ae'_1~ae'_2)\perp}{v'}{id'}
                }{}
            \end{array}
            \\
            \\
            \begin{array}{ccc}
            \infral{
                \ID(ae_1) \in \cS\\
            \SAC ae_1 \equiv ae'_1 ~~~~~
            \SAC ae'_2 \equiv ae'_2 
            }
            {\SAC \ndi{(ae_1~ae_2)aa}{v}{id} \equiv
               \ndi{(ae'_1~ae'_2)aa'}{v'}{id'}
                }{}
            \\
                \\
                \infral{
                id_e \notin \cS\\
                \SAC ae \equiv ae' ~~~~~
                \SAC ae_v \equiv ae'_v\\
            }
            {\SAC \ndi{\mathsf{Dist}(ae\#id_e)=ae_e}{v}{id} \equiv 
                 \ndi{\mathsf{Dist}(ae'\#id_e)=ae'_e}{v'}{id'} }{}
            \\
                \\
            \infral{
                id_e \in \cS ~~~~~
                \SAC ae \equiv ae'\\
            }
            {\SAC \ndi{\mathsf{Dist}(ae\#id_e)=ae_e}{v}{id} \equiv 
                 \ndi{\mathsf{Dist}(ae'\#id'_e)=ae'_e}{v'}{id'} }{}
            \\
            \\
                   \end{array}
    \end{array}
    \]
\caption{Equivalence Check over augmented expressions, $\equiv \subseteq \cP(ID) \times aE \times aE$}
\end{subfigure}
    \begin{subfigure}{\textwidth}
        \[
            \begin{array}{c}
            \begin{array}{cc}
            \infral{}
                {\SAC \emptyset \equiv \emptyset}{}
                &
                \infral{
            \SAC ae \equiv ae' ~~~~~
                \SAC t \equiv t'\\
            }
                {\SAC \assume{x}{ae};t \equiv \assume{x}{ae'};t'}{}
            \end{array}
            \\
            \\
                \infral{
            \SAC ae \equiv ae' ~~~~~
                \SAC t \equiv t'\\
            }
                {\SAC \observed{ae\#id}{e_v};t \equiv  \observed{ae'\#id}{e_v};t'}{}
            \end{array}
        \]
        \caption{Equivalence check over traces, $\equiv \subseteq \cP(ID) \times T \times T $}
    \end{subfigure}
\caption{Equivalence Check for correct Inference}
\label{fig:equivalence}
\end{figure*}

\section{Independent Subproblem Inference}
\label{sec:indepsubinfer}

The basic idea of independent subproblem inference is to extract an
independent subtrace $t_s$ from the original trace $t$ given a subproblem $\cS$,
perform inference over the extracted subtrace $t_s$ to obtain a new trace $t_s'$,
then stitch $t_s'$ back into $t$ to obtain a new trace for the full program. Here, consistent with standard inference
techniques for probabilistic programs~\cite{wingate2011lightweight,mansinghka2018probabilistic}, $t_s$ and $t_s'$ are
valid traces of the same program $p_s$ (the subprogram for the subtraces $t_s$
and $t_s'$). The key challenge is converting the entangled subproblem (which 
is typically incomplete and therefore not a valid trace of any program) into
a valid trace by transforming the subproblem to include external dependences
and correctly scope both internal and external dependences in the extracted
trace without giving the inference algorithm access to any external stochastic
choices (including latent choices nested inside certain lambda expressions
which would otherwise override choices outside the subproblem) which it must not change. 

\Comment{
In this section, we present Subprogram based inference metaprogramming.
Given a trace $t$ and a valid subproblem $\cS$ on that trace, subproblem
based inference aims to restrict inference algorithms to only mutate 
stochastic choices within the subproblem. A developer of inference 
algorithms has to be careful to develop algorithms which work with the knowledge 
and constraints associated with subproblem based inference.

Subprogram based inference on the other hand, given a trace $t$ and a valid 
subproblem $\cS$, aims to mutate the trace into a trace $t_s$ of a different program $p_s$. 
The mutated trace and the underlying program only contains stochastic choices 
which are within the subproblem (and unexecuted lambda expressions), constraining
stochastic choices outside the subproblem. 
We call this mutated trace, a {\em subtrace}. Note that the underlying
program of the subtrace will be different from the original program. We 
call this program {\em subprogram}. 

The goal is to use run of the mill inference algorithms over the subtrace $t_s$ 
(in theory on subprogram $p_s$) 
to achieve a new subtrace $t'_s$. We can then stitch the changes made into 
$t'_s$ back into trace $t$ to achieve an inferred trace $t'$. The purpose of this 
exercise is to map inference over the subprogram $p_s$ to subproblem inference
over $t$ (i.e. $t$ and $t'$ differ only in stochastic choices in $\cS$). We 
formalize this process using extraction and stitching procedures which We will
describe next.
}

\noindent{\bf Extract Trace:}
We define the extraction procedure $t_s = \mathsf{ExtractTrace}(t, \cS)$ 
using the transition relation 
$\Rightarrow_{ex} \subseteq \cP(ID) \times T \rightarrow T$ 
(Figure~\ref{fig:extractrace}). 
The extraction procedure removes $\mathsf{Dist}(ae\#id) = ae_v$ 
augmented expressions which are not within the subproblem and converts them 
into $\mathsf{observe}$ statements. This transformation constrains 
the value of these stochastic choices to the values present in the original trace. 
It leaves the stochastic choices in the subproblem in place and therefore
accessible to the inference algorithm. 

\begin{figure*}
    \begin{subfigure}{\textwidth}
    \[
    \begin{array}{c}
        \begin{array}{cc}
        \infral{}
        {\SAC \ndi{x(id')}{v}{id} 
            \Rightarrow_{ex}  \ndi{x(id')}{v}{id}, \emptyset }{}
        \\
            \\
        \infral{}
        {\SAC \ndi{x}{x}{id} \Rightarrow_{ex} 
            \ndi{x}{x}{id}, \emptyset }{}
        \end{array}
        \\
        \\
        \infral{}
        {\SAC \ndi{\lambda.x~e}{v }{id} 
         \Rightarrow_{ex}
         \ndi{\lambda.x~e}{v}{id}, 
            \emptyset}{}
         \\
        \\
            \begin{array}{c}
        \infral{
            \ID(ae_1) \in \cS ~~~~~
        \SAC ae_1  \Rightarrow_{ex} ae'_1 , t_s ~~~~~
        \SAC ae_2 \Rightarrow_{ex} ae'_2, t'_s\\
        }
        {\SAC \ndi{(ae_1~ae_2)aa}{v}{id} \Rightarrow_{ex} 
         \ndi{(ae'_1~ae'_2)aa}{v}{id}, t_s;t'_s}
                {}
                \\
        \\
        \infral{
            \ID(ae_1) \notin \cS ~~~~~
            \SAC ae_1 \Rightarrow_{ex} ae'_1, t_s~~~~
            \SAC ae_2  \Rightarrow_{ex} ae'_2, t'_s\\
        }
        {\SAC \ndi{(ae_1~ae_2)\perp}{v}{id}, t_p \Rightarrow_{ex}
         \ndi{(ae'_1~ae'_2)\perp}{v}{id}, t_s;t'_s
                }{}
            \\
            \\
                \infral{
            \ID(ae_1) \notin \cS ~~~~~
            \SAC ae_1 \Rightarrow_{ex} ae'_1, t_s\\
            \SAC ae_2  \Rightarrow_{ex} ae'_2, t'_s ~~~~~
            \SAC ae_3
            \Rightarrow_{ex} ae'_3, t''_s\\
            \Nst{}\\
            t'''_s = t_s;\assume{x}{ae'_1};t'_s;\assume{y}{ae'_2};t''_s\\
        }
        {\SAC \ndi{(ae_1~ae_2)y= ae_3}{v}{id}, t_p \Rightarrow_{ex}
         ae'_3, t'''_s
                }{}
                \end{array}
        \\
        \\
        \begin{array}{c}
        \infral{
            id' \notin \cS ~~~~
        \SAC ae \Rightarrow_{ex} ae', t_s ~~~~~
            ae_v \Rightarrow_r e_v ~~~~~
            \SAC ae_v \Rightarrow_{ex} ae'_v, 
            t'_s
        }
        {\SAC \ndi{\mathsf{Dist}(ae\#id')= ae_v}{v}{id}  
            \Rightarrow_{ex}  ae'_v, t_s;\observed{ae'\#id'}{e_v};t'_s }{}
        \\
            \\
        \infral{
        id' \in \cS ~~~~~
        \SAC ae \Rightarrow_{ex} ae', t_s \\
        }
        {\SAC \ndi{\mathsf{Dist}(ae\#id') = ae_v}{v}{id} \Rightarrow_{ex} 
            \ndi{\mathsf{Dist}(ae'\#id') = ae_v}{v}{id}, t_s }{}
        \end{array}
    \end{array}
        \]
        \caption{Extracting subtrace from an Augmented Expression, 
        $\Rightarrow_{ex} \subseteq \cP(ID) \times aE \rightarrow 
        aE \times T$}
    \end{subfigure}
    \begin{subfigure}{\textwidth}
        \[
            \begin{array}{c}
            \begin{array}{cc}
        \infral{}
                {\SAC \emptyset \Rightarrow_{ex} \emptyset}{}
        &
        \infral{
            id = \ID(ae)~~~~~\\
            \SAC ae  \Rightarrow_{ex} ae', t_s ~~~~
        \SAC t \Rightarrow_{ex} t'_s}
    {\SAC \assume{x}{ae};t \Rightarrow_{ex} t_s;\assume{x}{ae'};t'_s }{}
            \end{array}
                \\
                \\
        \infral{
        \SAC ae \Rightarrow_{ex} ae', t_s ~~~~~
        \SAC t \Rightarrow_{ex} t'_s
        }
                {\SAC \observed{ae\#id}{e_v};t  \Rightarrow_{ex}  t_s;\observed{ae'\#id}{e_v};t'_s}{}
    \end{array}
\]
        \caption{Extracting subtrace from a given trace, $\Rightarrow_{ex} \subseteq 
        \cP(ID) \times T \rightarrow T$}
\end{subfigure}
\caption{Extraction Relation}
\label{fig:extractrace}
\end{figure*}

\begin{figure*}
    \begin{subfigure}{\textwidth}
    \[
\begin{array}{c}
    \infra{
    }
    {
        \SAC \ndi{x}{v}{id} ,  \ndi{x}{v'}{id'} , \emptyset \Rightarrow_{st} 
         \ndi{x}{v'}{id'}
    }
    \\
    \\
    \infra{}
    {
        \SAC \ndi{x(id_v)}{v}{id},  \ndi{x(id'_v)}{v'}{id'}, \emptyset \Rightarrow_{st}
         \ndi{x(id'_v)}{v'}{id'}
    }
    \\
    \\
    \infra{
    }
    {\SAC \ndi{\lambda.x~e}{v }{id'}, 
     \ndi{\lambda.x~e}{v' }{id}, \emptyset
    \Rightarrow_{st} 
     \ndi{\lambda.x~e}{v' }{id}}
        \\
    \\
    \infra{
        \ID(ae'_1) \in \cS ~~~~~
        \SAC ae'_2, ae_2 , t''_p \Rightarrow_{st} ae''_2
        ~~~~~
        \SAC ae'_1, ae_1, t'_p \Rightarrow_{st} ae''_1\\
    }
    {\SAC \ndi{(ae'_1~ae'_2)aa'}{v'}{id'}, \ndi{(ae_1~ae_2)aa}{v}{id}, t'_p;t''_p \Rightarrow_{st}               
          \ndi{(ae''_1~ae''_2)aa}{v}{id} }
    \\
    \\
    \infra{
        \ID(ae'_1) \notin \cS ~~~~~
        \SAC ae'_3, ae_3, t'''_p \Rightarrow_{st} ae''_3\\
        \SAC ae'_2, ae_2 , t''_p \Rightarrow_{st} ae''_2  ~~~~~
        \SAC ae'_1, ae_1, t'_p \Rightarrow_{st} ae''_1\\
        t_p = t'_p;\assume{x}{ae_1};t''_p;\assume{y}{ae_2};t'''_p
    }
    { \SAC \ndi{(ae'_1~ae'_2)y=ae'_3}{v'}{id}, ae_3, t_p \Rightarrow_{st}               
          \ndi{(ae''_1~ae''_2)y=ae''_3}{\cV(ae''_3)}{id} }
    \\
    \\
    \infra{
        \ID(ae'_1) \notin \cS ~~~~~
        \SAC ae'_2, ae_2 , t''_p \Rightarrow_{st} ae''_2
        ~~~~~~~~~~~~
        \SAC ae'_1, ae_1, t'_p \Rightarrow_{st} ae''_1
    }
    { \SAC \ndi{(ae'_1~ae'_2)\perp}{v'}{id'}, \ndi{(ae_1~ae_2)\perp}{v}{id}, 
            t'_p;t''_p \Rightarrow_{st}               
          \ndi{(ae''_1~ae''_2)\perp}{v}{id}}
    \\
    \\
    \infra{ 
    id'_v \in \cS ~~~~~
    \SAC ae', ae , t_p \Rightarrow_{st} ae''
    }
    {\SAC \ndi{\mathsf{Dist}(ae'\#id'_v) = ae'_v}{v'}{id'} , 
    \ndi{\mathsf{Dist}(ae\#id_v) = ae_v}{v}{id}, t_p
    \Rightarrow_{st}
    \\ \ndi{\mathsf{Dist}(ae\#id_v) = ae_v}{v}{id}
    }
    \\
    \\
    \infra{ 
    id'_v \notin \cS ~~~~~
    \SAC ae'_v, ae_v, t''_p \Rightarrow_{st} ae''_v
       \\
    \SAC ae', ae , t'_p \Rightarrow_{st} ae''
    ~~~~ t_p = t'_p;\observed{ae\#id'_v}{e_v};t''_p
    }
    {\SAC \ndi{\mathsf{Dist}(ae'\#id_v) = ae'_v}{v'}{id'} , 
    ae_v, t_p
    \Rightarrow_{st} 
     \ndi{\mathsf{Dist}(ae\#id_v) = ae''_v}{\cV(ae''_v)}{id} 
    }
    \end{array}
    \]
    \caption{Stitching augmented expressions, $\Rightarrow_{st} \subseteq 
    \cP(ID) \times aE \times aE \times T \rightarrow aE 
    $}
\end{subfigure}
    \begin{subfigure}{\textwidth}
    \[
        \begin{array}{c}
        \begin{array}{ccc}
    \\
            \infra{}
    {\SAC \emptyset, \emptyset \Rightarrow_{st} \emptyset}
            &
    \infra{
        \SAC ae, ae' , t_p \Rightarrow_{st} ae''~~~~~
        \SAC t, t'_p \Rightarrow_{st} t'_s
    }
    {\SAC \assume{x}{ae};t , 
             t_p;\assume{x}{ae'};t'_p \Rightarrow_{st} 
             \assume{x}{ae''};t'_s}
    \end{array}
    \\ \\
            \infra{
        \SAC ae, ae', t_p \Rightarrow_{st} ae'' ~~~~~
        \SAC t, t'_p  \Rightarrow_{st} t'_s
    }
    {\SAC \observed{ae\#id}{e_v};t , 
             t_p;\observed{ae'\#id'}{e_v};t'_p \Rightarrow_{st} 
\\             \observed{ae''\#id}{e_v};t'_s}

        \end{array}
\]
        \caption{Stitching trace and subtrace, $\Rightarrow_{st} \subseteq 
        \cP(ID) \times T \times T \rightarrow T
        $ }
    \end{subfigure}
\caption{Stitching Transition Relation }
\label{fig:stitchtrace}
\end{figure*}

For augmented expressions of the form $(ae_1~ae_2)y=ae_3$, when 
$ae_1$ is within the subproblem, its value can change and hence 
existential edges place the augmented expressions in $ae_3$ within the 
subproblem. When $ae_1$ is not within the subproblem,
then some stochastic choices may or not be within the subproblem.  
If we keep the augmented expression as is, the inference algorithm may
unroll $ae_3$ and execute it again, changing some stochastic choices 
in $ae_3$ but not in the subproblem. If we modify $ae_3$, the 
constraint of $ae_3$ being a valid lambda application breaks. 
We solve this problem by introducing $\mathsf{assume}$ statements and
correctly scoping the resulting dependences.

\noindent{\bf Stitch Trace:}
Given a trace $t$, a valid subproblem $\cS$ over the trace,
and a subtrace $t_s$, the stitching procedure stitches back the trace $t_s$ 
into $t$ to get a new trace $t'$. 
We define the stitching procedure $t' = \mathsf{StitchTrace}(t, t_s, \cS)$
using a transition relation  
$\Rightarrow_{st} \subseteq \cP(ID) 
\times T \times T \rightarrow T$ (Figure~\ref{fig:stitchtrace}), 
where $\SAC t, t_s \Rightarrow_{st} t' \iff t' =\mathsf{StitchTrace}(t, t_s, \cS) $.
Stitching is the dual of extraction. It uses
the original trace to figure out the structure of the resultant trace, then 
stitches back the expressions to get a new trace $t'$.

\noindent{\bf Independent Inference:}
We define independent subproblem inference using the $\mathsf{infer}$ procedure.
$\mathsf{infer}$ takes as input a subproblem selection strategy
$\mathsf{SS}$, a trace $t$ and an inference tactic $\mathsf{IT}$. This 
differs from tangled inference in that inference tactic $\mathsf{IT}$
takes only the extracted subtrace as input and not the entire program trace. 
This approach enables the use of standard inference algorithms which are
designed to operate on complete traces (and not entangled subproblems). 

The new inference procedure works as described below:
\begin{equation}
\infra{
\mathsf{SS}(t) = \cS~~~~~
t_s = \mathsf{ExtractTrace}(t, \cS) ~~~~~ t'_s = \mathsf{IT}(t_s)\\
t'_s \in \mathsf{Traces}(\mathsf{Program}(t_s)) ~~~~~
t' = \mathsf{StitchTrace}(t, t'_s, \cS)
}
{\mathsf{infer}(\mathsf{SS}, \mathsf{IT}), t \Rightarrow_i t' }
\end{equation}

\noindent{\bf Soundness and Completeness:}
Given a trace $t$, a valid subproblem $\cS$, an inferred trace $t'$ and $t_s = 
\mathsf{ExtractTrace}(t, \cS)$,
we prove that extraction and stitching is sound
and complete.
Soundness in this context means that for 
all possible mutated subtraces $t'_s \in \mathsf{Traces}(\mathsf{Program}(t_s))$, 
the stitched trace $t' = \mathsf{StitchTrace}(t, t'_s, \cS)$ is a valid inferred trace.
Completeness means that for all possible inferred traces $t'$, there
exists a mutated subtrace $t'_s$ such that $t'_s \in \mathsf{Traces}(\mathsf{Program}(t_s))$
and $t' = \mathsf{StitchTrace}(t, t'_s, \cS)$.

We summarize the comparison between the entangled subproblem inference and independent
subproblem inference approaches in Figure~\ref{fig:subprogvsubprob}.
We present the theorems, lemmas and proofs of soundness and completeness 
in Appendix~\ref{append:sound} and \ref{append:comp}.

\begin{figure}
    \centering
\begin{tikzpicture}[shorten >=1pt,node distance=3cm,on grid, state/.style={circle,inner sep=2pt}]
\small
  \node[state]   (q_0)                {$t$};
  \node[state]   (q_1) [right=of q_0] {$t'$};
  \node[state]   (q_2) [below=of q_0] {$t, t_s$};
  \node[state]   (q_3) [right=of q_2] {$t, t'_s$};
    \path[->] (q_0) edge                node [above] {$\begin{array}{c}t' \in \mathsf{Traces}
    (\mathsf{Program}(t))\\ \SAC t \equiv t'\end{array} $} (q_1)
                  %edge [loop above]   node         {1} ()
                  %edge [bend left=45] node [below] {1} (q_2)
                  edge                node [left] {$t_s = \mathsf{ExtractTrace}(t, \cS)$} (q_2)
            (q_2) edge                node [below] {$t'_s \in \mathsf{Traces}(\mathsf{Program}(t_s))$} (q_3)
            (q_3) edge                node [right] {$t' = \mathsf{StitchTrace}(t, t'_s, \cS)$} (q_1);
\end{tikzpicture}
    \caption{Entangled vs Independent Subproblem Inference}
    \vspace{-.1in}
    \label{fig:subprogvsubprob}
\end{figure}

\section{Convergence of Stochastic Alternating Class Kernels}
\label{sec:stoch}

We next introduce the concept of class functions and
class kernels, which we use to prove the convergence of  hybrid inference algorithms based 
on asymptotically converging MCMC-algorithms. 
We present key definitions, theorems, and lemmas 
required to prove these results.

\subsection{Preliminaries}

We begin by introducing some key measure theory definitions ~\cite{geyer1998markov, 
meyn2012markov, tierney1994markov}. Readers familiar with measure theory may wish
to skip this subsection. 

\begin{definition}[\bf Topology]\label{def-topology}
    A {\bf Topology} on set $T$ is a collection $\cT$ of subsets of $T$ having the 
    following properties:
    \begin{enumerate}
        \item $\emptyset \in \cT$ and $T \in \cT$.
        \item $\cT$ is closed under arbitrary unions. i.e. For any collection $\{A_i \}_{i \in I}$, if for all $i \in I$,$A_i \in \cT$, then
            $\bigcup\limits_{i \in I} A_i \in \cT$.
        \item $\cT$ is closed under finitely many intersections. i.e. For any finite collection $\{A_i\}_{i \in I}$, if for all $i \in I$, 
            $A_i \in \cT$, then 
            $\bigcap\limits_{i \in I} A_i \in \cT$.
    \end{enumerate}
   % \cite{Munk76}
    
\end{definition}
Given a set $T$ and a topology $\cT$ defined on $T$, the pair $(T, \cT)$ is called a {\bf topological space}.
Given a topological space $(T, \cT)$, all sets $A \in \cT$ are called {\bf open sets}. 
From this point on, when we say $T$ is a topological space, we 
assume we are talking about any topology $\cT$ on $T$.

\begin{definition}[\bf $\sigma$-algebra]\label{def-sigma-algebra}Let $T$ be a set. A collection 
    $\Sigma$ of subsets of $T$ is a {\bf $\sigma$-field} or a 
    {\bf $\sigma$-algebra} over $T$ if and only if 
    $T \in \tau$ and $\tau$ is closed under countable unions, intersections and 
    complements, i.e.,
\begin{enumerate}
\item $T \in \Sigma$ and $\emptyset \in \Sigma$.
\item $A \in \Sigma$ implies $A^c \in \Sigma$.
\item $\Sigma$ is closed under countable unions, i.e. For any countable collection 
    $\{A_i\}_{i \in I}$, if for all $i \in I$, $A_i \in \Sigma$, then 
        $\bigcup\limits_{i \in I} A_i \in \Sigma$.
\end{enumerate}

    A {\bf measurable space} is a pair $(T, \Sigma)$ such that $T$ is a set and 
    $\Sigma$ is a $\sigma$-algebra over $T$.
    %\cite{PTE1}
\end{definition}

\begin{definition}[\bf Measure] $m : \Sigma \rightarrow \mathbb{R} \cup \{-\infty, \infty\}$ is a 
measure over a measurable space $(T, \Sigma)$ if 
\begin{enumerate}
    \item $m(A) \geq  m(\emptyset) = 0$ for all $A \in \Sigma$,
    \item For all countable collections $\{A_i\}_{i \in I}$ of pairwise disjoint
        sets in $\Sigma$, 
        \[m(\bigcup\limits_{i \in I} A_i ) = \sum\limits_{i \in I} m(A_i) \]
\end{enumerate}

\end{definition}

\noindent Given a measurable space $(T, \Sigma)$, a measure $\pi$ on $(T, \Sigma)$ is called 
{\bf probability measure} if $\pi(T) = 1$.
    We call the tuple $(T, \Sigma, \pi)$ a {\bf probability space}, if $\pi$ is
    a probability measure over the measurable space $(T, \Sigma)$.
    Given a set $T$, a collection of subsets $A_\alpha \subseteq T$ (not necessarily countable), 
    we denote the smallest $\sigma$-algebra $\Sigma$ such that $A_\alpha \in \Sigma$
    for all $\alpha$ by $\sigma(\{A_\alpha\})$.

%\begin{definition}[\bf Generated $\sigma$-algebra]
%\end{definition}

\begin{definition}[\bf Borel $\sigma$-algebra] 
    Given a topological space $T$, 
    a Borel $\sigma$-algebra, $\cB(T)$ is the smallest $\sigma$-algebra
    containing all open sets of $T$.
\end{definition}
\noindent $(T, \cB(T))$ is called a {\bf Borel Space} when $T$ is a topological space
    and $\cB(T)$ is a Borel $\sigma$-algebra over $T$.

Consider a topological space $(T, \cT)$. For any set $A \in \cT$,
$(A, \cT_A)$ is also a topological space (where $\cT_A = \{E \cap A \vert E \in \cT\}$).
Given a measurable space $(T, \Sigma)$, for any set $A \in \Sigma$,
$(A, \Sigma_A)$ is also a measurable space (where $\Sigma_A = \{E \cap A \vert E \in \Sigma\}$).

{\noindent\bf Topology and $\sigma$-algebra over Reals:}
Consider the smallest topology $R$ over the real space $\mathbb{R}$ which
contains all intervals $(a, \infty) \subseteq \mathbb{R}$ for all $-\infty < a < \infty$.
To avoid confusion, we will refer the topological space over $\mathbb{R}$ with
the symbol $\cR$. We can now use this topology to define a Borel $\sigma$-algebra $\cB(\cR)$
over this topological space. Using the above defined topology and $\sigma$-algebra,
we can define a topological space and a $\sigma$-algebra for any open or closed intervel
in $\mathbb{R}$.

\begin{definition}[\bf Measurable Function over Measurable Spaces]
Given measurable spaces $(T_1, \Sigma_1)$ and $(T_2, \Sigma_2)$, a function 
    $h : T_1 \rightarrow T_2$ is
a measurable function from $(T_1, \Sigma_1)$  to $(T_2, \Sigma_2)$
if $h^{-1}\{B\} \in \Sigma_1$
for all sets $B \in \Sigma_2$, where $h^{-1}\{B\}  = \{x : h(x) \in B\}$.

\end{definition}

\noindent The measurable function $h$ is also known as a {\bf Random Variable}
    from measurable space $(T_1, \Sigma_1)$ to $(T_2, \Sigma_2)$.
    If $h : T_1 \rightarrow T_2$ is a measurable function from 
    measurable space $(T_1, \Sigma_1)$ 
    to a measurable space $(T_2, \Sigma_2)$, and $\pi$ is a 
    probability measure on $(T_1, \Sigma_1)$, then $\pi_h : \Sigma_2 \rightarrow [0, 1]$
    defined as $\pi_h(A) = \pi(h^{-1}(A))$
    is a probability measure on $(T_2, \Sigma_2)$. 

\begin{definition}[\bf Pushforward measure]
    Given a probability space $(T_1, \Sigma_1, \pi)$ and a measurable function
    to a measurable space $(T_2, \Sigma_2)$, then the pushforward measure of $\pi$
    is defined as a probability measure $f_*(\pi) : \Sigma_2 \rightarrow [0, 1]$ given by
    \[
        (f_*(\pi))(B) = \pi(f^{-1}(B)) \text{ for } B \in \Sigma_2
    \]
\end{definition}

\begin{definition}[\bf Measurable]\label{def-measurable}  
    A function $f$ is measurable if $f$ is a measurable function 
    from a measurable space $(T, \Sigma)$ to $(\mathbb{R}, \cB(\cR))$.
\end{definition}
  
\noindent $f$ is measurable if and only if $\forall a \in \mathbb{R}. \{x \in T \vert f(x) > a \} \in \Sigma$.
Intuitively, for every real number $-\infty < a < \infty$, there exists a set $A \in \tau$ containing all
elements which $f$ maps to real numbers greater than $a$.

\begin{definition}[\bf Simple Function] Given a measurable space $(T, \Sigma)$, 
    $s : T \rightarrow [0, \infty)$ is a simple
function if $s(t) = \Sigma_{i=1}^N a_i I_{A_i}(t)$, where 
    $a_i \in [0, \infty)$, $A_i \in \Sigma$, $I_{A_i}(t) = 1$ if
$t \in A_i$ and $0$ otherwise, and the $A_i$ are disjoint.
\end{definition}

\begin{definition}[\bf Lebesgue Integral]\label{def-int} 
    Given a measurable space $(T, \Sigma)$ 
and a measure $m$ over $(T, \Sigma)$, for each 
$A \in \Sigma$ and disjoint $A_i \in \Sigma$, we define 
\[ \int_A I_{A_i}(t) m(dt) = m(A_i \cap A) \]
\noindent Hence 
\[\int_A s(t) m(dt) = \sum\limits_{i=1}^N a_i m(A_i \cap A) \]

    \noindent Given a function $f : T \rightarrow [0, \infty)$ which is measurable, 
    we define 
\[\int_A f(t) m(dt) = \mathsf{sup}\{\int_A s(t) m(dt) \vert s(t) \text{ is simple }, 0 \leq s 
\leq f\}\]
where $s \leq f$ if $\forall t. s(t) \leq f(t)$, as the Lebesgue Integral
    of function $f$ over a set $E$ in measurable space $(T, \tau)$ with measure $m$.

\end{definition}

\noindent Given a function $f : T \rightarrow \mathbb{R}$ which is measurable, we define
    \[
        \int_A f(t) m(dt) = \int_A f_+(t) m(dt) - \int_A f_-(t) m(dt)
    \]
    where $f_+(t) = \max(f(t), 0)$ and $f_-(t) = \max(-f(t), 0)$.
    An integral of a measurable function $f$ is the sum of the integral of
    the positive part and the integral of the negative part.
    From this point on, $\int f(t) m(dt)$ denotes $\int_T f(t)m(dt)$ where
    $T$ is the set over which the measurable space and measure $m$ is defined. 

\Comment{
\begin{definition}[\bf Expected Value]
    Given a probability space $(T, \Sigma, \pi)$ and $X : T \rightarrow \mathbb{R}$
    be a measurable function.
    then we  define {\bf expected valve} $E~X = \int X(t) \pi(dt)$.
\end{definition}

\begin{definition}[\bf Conditional Expectation]
    Given a probability space $(T, \Sigma, \pi)$, a $\sigma$-field $\cF \subseteq \Sigma$
    and a random variable $X : T \rightarrow \mathbb{R}$ with $E~|X| < \infty$.

    We define the {\bf conditional expectation} of $X$ given $\cF$, $E(X \vert \cF)$, 
    to be any function $Y : T \rightarrow \mathbb{R}$, 
\end{definition}
}

\begin{definition}[\bf Markov Transition Kernel]
    Let $(T, \Sigma)$ be a measurable space. A Markov Transition kernel on $(T, \Sigma)$ is 
a map $K : T \times \Sigma \rightarrow [0, 1]$ such that :
\begin{enumerate}
    \item for any fixed $A \in \Sigma$, the function $K(., A)$ is measurable function from 
        $(T, \Sigma)$ to $[0, 1]$.
    \item for any fixed $t \in T$, the function $K(t, .)$ is a probability 
        measure on $(T, \Sigma)$.
\end{enumerate}
\end{definition}

\begin{definition}
    [\bf $\pi$-irreducible] Given a probability space $(T, \tau, \pi)$, a Markov Transition Kernel 
    $K : T\times \tau \rightarrow [0, 1]$ is 
    $\pi$-irreducible if for each $t \in T$ and each $A \in \tau$, such that $\pi(A) > 0$, 
there exists an integer $n = n(t, A) \geq 1$ such that 
\[K^n(t, A) > 0\]
    where $K^n(t, A) = \int_T K^{n-1}(t, dt')K(t', A)$  
    and $K^1(t, A) = K(t, A)$.
\end{definition}

\begin{definition}
    [\bf Stationary Distribution] Given a probability space $(T, \tau, \pi)$, $\pi$ is the stationary 
    distribution of a $\pi$-irreducible Markov Transition Kernel $K : T \times \tau \rightarrow [0, 1]$ if
\[\pi K  = \pi\]
where $(\pi K)(A) = \int K(t, A) \pi(dt)$.
\end{definition}

\begin{definition}
    [\bf Aperiodicity] 
    Given a probability space $(T, \tau, \pi)$, a $\pi$-irreducible 
    Markov Transition Kernel $K : T \times \tau \rightarrow [0,1]$ 
    is periodic if there exists an integer $d \geq 2$ and a sequence 
$\{E_0, E_1, \ldots E_{d-1}\}$ and $N$ of $d$ non-empty disjoint sets in $\tau$ such 
that, for all $i = 0, 1, \ldots d-1$ and for all $t \in E_i$,
\begin{enumerate}
    \item $ (\cup_{i=0}^d E_i) \cup N = T$
    \item $K(t, E_j) = 1 \text{ for } j = i + 1 (\mathsf{mod}~d)$
    \item $\pi(N) = 0$
\end{enumerate}
        Otherwise $K$ is aperiodic. 
\end{definition}

\begin{definition}
    [\bf Asymptotic convergence] 
    Given a probability space $(T, \tau, \pi)$ and sample $t \in T$, 
    a Markov Transition Kernel $K : T \times \tau \rightarrow [0, 1]$ 
    is said to asymptotically converge to $\pi$ if
\[\mathsf{lim}_{n \rightarrow \infty} || K^n(t, .) - \pi||  = 0\]
    where $||.||$ refers to the total variation norm of a measure $\lambda$, 
    defined over measurable space $(T, \tau)$, defined as 
\[||\lambda|| = \mathsf{sup}_{A \in \tau} \lambda(A) - \mathsf{inf}_{A 
\in \tau} \lambda(A)\]
\end{definition}

\begin{theorem}
    \label{thm:Athreyaconv}
    Given a probability space $(T, \tau, \pi)$ and a Markov Transition Kernel 
    $K : T \times \tau \rightarrow [0,1]$.
    If $K$ is $\pi$-irreducible, aperiodic, and $\pi K = \pi$ holds, 
    then for $\pi$-almost all $t$, 
    \[\mathsf{lim}_{n \rightarrow \infty}|| K^n(t, .)  - \pi || = 0  \]
    i.e., $K$ converges to $\pi$.
    $\pi$-almost all $t$ means that there exists a set $D \subseteq T$ such 
    that $\pi(D) = 1$ and for all $t \in D$, $\mathsf{lim}_{n \rightarrow 
    \infty} || K^n(t, .) - \pi || = 0$.
\end{theorem}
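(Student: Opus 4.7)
The plan is to prove this via the classical coupling/regeneration approach developed by Athreya--Ney and Nummelin for general state space Markov chains, rather than to reprove everything from scratch. Indeed, the theorem is a standard foundational convergence result and is stated here essentially to be invoked later; any proof proposal should structure itself around the pieces one needs to verify (or cite) so that the conclusion follows from the three hypotheses: $\pi$-irreducibility, aperiodicity, and $\pi K = \pi$.

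First I would establish the existence of a $\nu$-small set: there exist $C \in \tau$ with $\pi(C) > 0$, an integer $n_0 \geq 1$, a constant $\epsilon > 0$, and a probability measure $\nu$ on $(T,\tau)$ such that
\[
K^{n_0}(t, A) \;\geq\; \epsilon\, \nu(A) \quad \text{for all } t \in C,\ A \in \tau.
\]
The existence of such a small set follows from $\pi$-irreducibility together with the stationarity $\pi K = \pi$ (this is the measure-theoretic analogue of ``some iterate of the kernel is uniformly lower bounded on a positive-mass set''). Aperiodicity is then used to upgrade this: by the cycle decomposition of a $\pi$-irreducible kernel, aperiodicity implies that the small-set minorization can be arranged with a common $n_0$ so that the chain does not get trapped on any periodic subclass.

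The second step is to apply Nummelin's splitting construction. I would enlarge the state space to $T \times \{0,1\}$ and define an augmented kernel $\tilde K$ whose marginal on $T$ agrees with $K$ and for which every visit to $C \times \{1\}$ is a regeneration epoch: after such a visit, the chain's next $n_0$-step distribution is exactly $\nu$, independent of the past. Using $\pi$-irreducibility, I would argue that for $\pi$-almost every starting point $t$, the set $C$ (and hence the regeneration atom in the split chain) is visited infinitely often almost surely; the exceptional set $D^c$ of starting points where this fails has $\pi(D^c) = 0$. The final step is a coupling argument: run two copies of the split chain, one from $t \in D$ and one from initial distribution $\pi$ (which is preserved by $K$ by hypothesis), couple them at their first common regeneration time $\tau$, and conclude via the coupling inequality
\[
\| K^n(t, \cdot) - \pi \| \;\leq\; 2\, \mathrm{Pr}[\tau > n],
\]
where the right-hand side tends to zero because $\tau$ is almost surely finite.

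The hardest part of this program is the regeneration/splitting construction together with the ``infinitely often'' claim, because both require carefully tying $\pi$-irreducibility to hitting probabilities on small sets and then verifying the minorization upgrade that uses aperiodicity. Since all of this is standard and the paper already cites the relevant measure-theoretic sources (Athreya, Meyn--Tweedie, Tierney), the cleanest path in the paper itself is to attribute the theorem to these references and invoke it as a black box; the structure above is what one would unpack if a self-contained proof were required.
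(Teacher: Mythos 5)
Your proposal matches the paper's treatment: the paper gives no proof of this theorem and simply attributes it to Athreya, Doss, and Sethuraman~\cite{athreya1996convergence}, which is exactly the ``invoke it as a black box'' route you recommend at the end. Your sketch of the underlying machinery (small-set minorization from $\pi$-irreducibility and stationarity, aperiodicity to fix a common $n_0$, Nummelin splitting, and the coupling inequality, with the $\pi$-null exceptional set arising because Harris recurrence is not assumed) is a faithful outline of the standard argument behind the cited result.
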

\noindent Athreya, Doss, and Sethuraman proved this theorem~\cite{athreya1996convergence}.
    All popular asymptotically converging Markov Chain Algorithms, like variants of 
    the Metropolis Hasting and Gibbs Algorithm, when parameterized over probability space 
    $(T, \tau, \pi)$, are 
    $\pi$-irreducible and aperiodic with stationary distribution $\pi$.

\begin{definition}[\bf Subalgebra] $\cE$ is a $\sigma$-subalgebra of a measurable space $(T, \tau)$ if $\cE$ 
is a $\sigma$-algebra of some set $A \subseteq T$ and $\cE \subseteq \tau$.
\end{definition}

\begin{definition}[\bf Induced Probability space] Given a probability space $(T, \tau, \pi)$
and $A \in \tau$, define $\tau_A = \{B \cap A \vert B \in \tau\}$. 
    Note that since $\tau$ is a $\sigma$-algebra, $\tau_A$ is a sub-algebra over 
    set $A$. $(A, \tau_A)$ is a measurable space. If  $\pi(A) > 0$, then function
    $\pi_A : \tau_A \rightarrow [0, 1]$, defined as $\pi_A(x) = \pi(x) / \pi(A)$, 
    is a probability measure over $(A, \tau_A)$. 
    $(A, \tau_A, \pi_A)$  is the probability space induced by $A \in \tau$. 
\end{definition}

\begin{definition}[\bf Regular Conditional Probability Measure over Product Space]
    Given a product probability space $(T_1 \times T_2, \Sigma_1 \otimes \Sigma_2, \pi)$, a regular 
    conditional probability measure $v : T_1 \times \Sigma_2 \rightarrow [0, 1]$ is a transition kernel
    such that 
    \begin{itemize}
        \item For all $A \in \Sigma_2$, $v(., A)$ is a measurable.
        \item For all $t \in T_1$, $v(t, .)$ is a probability measure over $(T_1, \Sigma_1)$.
        \item For all $B \in \Sigma_1$, $\pi(B \times A) = \int_B v(t, A) \pi(dt \times T_2)$.
    \end{itemize}
\end{definition}

\subsection{Class Functions and Class Kernels}
We next introduce class functions and class kernels, which formalize the 
concept of subproblem based inference metaprograms. Gibbs sampling is a special case
of this framework. To aid the reader, we highlight
how the framework is specialized to Gibbs sampling as we introduce the definitions,
lemmas, and theorems. 

\begin{definition}[\bf Two-way measurable function]
    Given a measurable space $(T_1, \Sigma_1)$
    and a measurable space $(T_2, \Sigma_2)$, 
    a measurable function $f$ from $(T_1, \Sigma_1)$
    to $(T_2, \Sigma_2)$ is a {\bf two-way
    measurable function}, if for 
    all sets $A \in \Sigma_1$ there exists a set $B \in \Sigma_2$, 
    such that 
        \[B = \{f(t) \vert t \in A\}\]

    i.e., the map of any set $A \in \Sigma_1$ is a set in $\Sigma_2$.

\end{definition}

Given a two-way measurable function $f$, we can extend 
$f$ to a function $g : \Sigma_1 \rightarrow \Sigma_2$  
between sets in $\Sigma_1$ to sets in $\Sigma_2$, 
    where $g(A) = \{f(t) \vert t \in A\}$. Since $f$ is a measurable function, 
    the function $f^{-1} : \Sigma_2 \rightarrow \Sigma_1$ is also 
    defined which maps sets in $\Sigma_2$ to sets in $\Sigma_1$.

{\bf Note:} From this point on, 
    given a two-way measurable function $f$, we will simply use it to 
    represent function $g$ defined above, mapping sets from $\Sigma_1$ 
    to $\Sigma_2$. We also use $f^{-1}$ as a reverse map from $\Sigma_2$ 
    to $\Sigma_1$ defined above. Note that $f^{-1}$ may or may not be the
    inverse of the function $f$.

\begin{example} (Gibbs)
    Given measurable spaces $(X, \cX)$, $(Y, \cY)$ and  
    product space $(X \times Y, \cX \otimes \cY)$, the projection 
    functions $\mathsf{proj}_x$ and $\mathsf{proj}_y$ are two-way 
    measurable functions from $(X \times Y, \cX \otimes \cY)$
    to $(X, \cX)$ and $(Y, \cY)$, respectively.
\end{example}

\begin{definition}[\bf Generalized Product Space] 
    Given countable sets of 
    disjoint measurable spaces $\{(X_i, \cX_i) \vert i \in I\}$ and 
    $\{(Y_i, \cY_i) \vert i \in I\}$, we 
    define the {\bf generalized product space} $(C, \cC)$ as:
    \[
        (C, \cC) = \big(\bigcup\limits_{i \in I} X_i \times Y_i, 
        \sigma(\bigcup\limits_{i \in I}\cX_i \otimes \cY_i) \big)
    \]
\end{definition}

    \Comment{
    construct a countable set of product spaces
    \[
        \{(X_i \times Y_i, \cX_i \otimes \cY_i) \vert i \in I\}
    \]
    We assume, for any probability measure $\pi_i$ on any of the product spaces
    $(X_i \times Y_i, \cX_i \otimes \cY_i)$, we can construct a regular
    conditional probability measure $v : X_i \times \cY_i \rightarrow [0, 1]$.}

\noindent Given a probability measure $\pi : \cC \rightarrow [0,1]$ on a generalized product space $(C, \cC)$, 
    we can define a conditional distribution $\pi_i$ over each product space 
    $(X_i \times Y_i, \cX_i \otimes \cY_i)$ such that 
    \[\pi(A) = \sum\limits_{i \in I} \pi_i(A \cap X_i \times Y_i) \pi(X_i \times Y_i)\]
    
\noindent In the following we require, for each product probability space 
    $(X_i \times Y_i, \cX_i \otimes \cY_i, \pi_i)$, that we can construct a regular
    conditional probability measure $v_{\pi_i} : X_i \times \cY_i \rightarrow [0, 1]$,
    such that for each $U \times V \in \cX_i \otimes \cY_i$
    \[
        \pi_i(U \times V) = \int_U v_{\pi_i}(x, V) \pi_i(dx \times Y_i)
    \]

\noindent When constructing a generalized product space, we always 
    prove the above assumption, i.e., a regular conditional probability measure exists.

\begin{example} (Gibbs)
    Given measurable spaces $(X, \cX)$, $(Y, \cY)$, consider
    the product probability spaces $(X \times Y, \cX \otimes \cY, \pi)$
    and $(Y \times X, \cY \otimes \cX, \pi')$, where 
    \[
        \pi(U \times V) = \pi'(V \times U)
    \]

    A Gibbs sampler, sampling over $(X \times Y, \cX \otimes \cY, \pi)$, 
    requires regular conditional probabilities 
    $v_{\pi} : X \times \cY \rightarrow [0, 1]$ and 
    $v_{\pi'} : Y \times \cX \rightarrow [0, 1]$ for independence sampling. Therefore
    both $(X \times Y, \cX \otimes \cY)$ and $(Y \times X, \cX \otimes \cY)$
    are {\bf generalized product spaces}.
\end{example}

\begin{definition}[\bf Class Functions]
    Given a measurable space $(T, \Sigma)$ and a
    generalized product space $(C, \cC)$, 
    a {\bf class function} is a one-to-one two-way measurable function 
    $f$ from $(T, \Sigma)$ to $(C, \cC)$.

\end{definition}
\noindent     Given a class function $f$ and the target product space 
    $(C, \cC)$, the {\bf projection functions} 
    \[f_x = \mathsf{proj}_x \circ f : T \rightarrow 
    \bigcup\limits_{i \in I} X_i \text{ and } f_y = \mathsf{proj}_y \circ f : T
    \rightarrow \bigcup\limits_{i \in I} Y_i\]
    are also two-way measurable functions from 
    space $(T, \Sigma)$ to projection spaces \linebreak  $\big(\bigcup\limits_{i \in I} X_i, 
    \sigma(\bigcup\limits_{i \in I}\cX_i)\big)$ and  $\big(\bigcup\limits_{i \in I} Y_i, 
    \sigma(\bigcup\limits_{i \in I}\cY_i)\big)$ respectively. 

Section~\ref{sec:metaprog} uses class functions to model subproblem selection strategies --- each
class function produces a tuple $\langle x, y \rangle$ that 
identifies the parts of the trace that are outside ($x$) and inside ($y$)
the selected subproblem. Because the class function may depend on the input
trace $t$, our framework supports subproblem selection strategies that
depend on the input trace (and specifically on the values of stochastic choices in the trace). 

\begin{example} (Gibbs)
    Given measurable space $(X, \cX)$, $(Y, \cY)$ and 
    generalized product spaces $(X \times Y, \cX \otimes \cY)$ and 
    $(Y \times X, \cY \otimes \cX)$, the identity function 
    $id : X \times Y \rightarrow X \times Y$ and the reverse function 
    $re : X \times Y \rightarrow Y \times X$ (i.e. $re(\tup{x, y}) = \tup{y, x}$)
    are class functions from $(X \times Y, \cX \otimes \cY)$ to 
    $(X \times Y, \cX \otimes \cY)$ and $(Y \times X, \cY \otimes \cX)$.
    Note that $id_x = \mathsf{proj}_x$ and $re_x = \mathsf{proj}_y$. 
\end{example}

    Given a probability space $(T, \Sigma, \pi)$,
    a class function $f$ to a
    generalized product space $(C^f, \cC^f) = \big(\bigcup\limits_{i \in I}X^f_i \times Y^f_i,
    \sigma(\bigcup\limits_{i \in I}\cX^f_i \otimes \cY^f_i)\big)$, 
    $f$ maps each point $t \in T$ to some point $\tup{x, y}$ in $X^f_i \times Y^f_i$ for some $i$.

    For each $i \in I$, we require a function $K_i : X^f_i \rightarrow (Y^f_i \times \cY^f_i) \rightarrow [0, 1]$
    such that 
    \begin{itemize}
        \item $K_i(x) : Y^f_i \times \cY^f_i \rightarrow [0, 1]$ is a 
            $v_{f_*(\pi)_i}(x, .)$-irreducible, aperiodic Markov Transition 
            Kernel with $v_{f_*(\pi)_i}(x, .)$ as it's stationary distribution.
        \item $K_i(.)(y, A)$ is measurable for all $y \in Y^f_i$ and all $A \in \cY^f_i$.
    \end{itemize}
When we apply the framework to prove the convergence of inference metaprograms (Section~\ref{sec:metaprog}), 
each $K_i$ represents an function that, when given a parameter $x$, returns a converging Markov kernel 
based on $x$, where $x$ corresponds loosely to the parts of the program trace that lie outside the 
scope of the selected subproblem. These parameterized $K_i$ support sophisticated subproblem inference
strategies that depend on how the subproblem decomposes the program trace. For example, the metaprogram
may apply one inference strategy to subproblems with discrete random choices and another to subproblems
with continuous choices. 

\begin{definition}[\bf Class Kernels]
    Given $(T, \Sigma, \pi), f,$ and $K_i$ as above, we define a  
    {\bf class kernel} $K_f : T \times \Sigma \rightarrow [0, 1]$ as a Markov Transition Kernel 
    defined as 
    \[
        K_f(t, A) = K_i(x)(y, V)I(x, U)
    \]
    where $f(t) = \tup{x, y} \in X_i \times Y_i$ and $U \times V = f(A) \cap (X_i \times Y_i)$.
\end{definition}

\begin{example} (Gibbs)
    Consider a probability space $(X \times Y, \cX \otimes \cY, \pi)$ and 
    class functions $id$ and $re$, Markov transition Kernels 
    ~$K_{id}(\tup{x, y}, U \times V) = v_{\pi}(x, V)I(x, U)$ and
    ~$K_{re}(\tup{x, y}, V \times U) = v_{\pi'}(y, U)I(y, V)$ 
    representing independence samplers
    are 
    class kernels.
\end{example}

\Comment{
\begin{example}[\bf Conditional Independent Sampling Kernels] 
    Consider a probability space $(T, \Sigma, \pi)$ and
    a class function $f$ to a generalized product space 
     $(C, \cC) = \big(\bigcup\limits_{i \in I}X_i \times Y_i,
    \sigma(\bigcup\limits_{i \in I}\cX_i \otimes \cY_i)\big)$. 
\end{example}
}

\subsection{Properties of Class Kernels}
Consider a probability space $(T, \Sigma, \pi)$ and 
a class function $f$ to a generalized 
product space  $(C^f, \cC^f) = \big(\bigcup\limits_{i \in I}X^f_i \times Y^f_i,
    \sigma(\bigcup\limits_{i \in I}\cX^f_i \otimes \cY^f_i)\big)$.
Below, we prove properties of a Class Kernel $K_f$ within this context.

\begin{lemma}
    \label{lemma:Kfexpand}
    For all $t \in T$ and $A \in \Sigma$, 
    \[K^n_f(t, A) = K^n_i(x)(y, V)I(x, U)\]
    where $f(t) = \tup{x, y} \in X_i \times Y_i$ and $U \times V = f(A) \cap X_i \times Y_i$. 
\end{lemma}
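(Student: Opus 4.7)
The plan is to prove the statement by induction on $n$, with the base case $n=1$ holding directly by the definition of the class kernel $K_f$. The inductive step is where the real work happens, and the key observation driving it is that $K_f(t,\cdot)$ is concentrated on the ``fiber'' $f^{-1}(\{x\} \times Y_i)$: since the defining formula $K_f(t,A) = K_i(x)(y,V) I(x,U)$ carries the indicator $I(x,U)$, the kernel assigns zero mass to any measurable set whose $f$-image lies outside $\{x\} \times Y_i$. Consequently the iterated chain starting at $t$ never leaves $f^{-1}(\{x\}\times Y_i)$, which means the $x$-coordinate is preserved throughout, and the induction hypothesis can be applied with a common $x$.

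Concretely, I would write
\[
K^{n+1}_f(t,A) \;=\; \int_T K_f(t, dt')\, K^n_f(t',A),
\]
restrict the integral to $f^{-1}(\{x\}\times Y_i)$ (on whose complement $K_f(t,\cdot)$ vanishes), and for $t'$ in this fiber write $f(t') = \langle x, y'\rangle$. Applying the induction hypothesis to $t'$ gives $K^n_f(t',A) = K^n_i(x)(y',V) I(x,U)$ --- crucially the same $x$ and therefore the same indicator $I(x,U)$, which can be pulled out of the integral. What remains is
\[
K^{n+1}_f(t,A) \;=\; I(x,U) \int_{f^{-1}(\{x\}\times Y_i)} K_f(t, dt')\, K^n_i(x)\bigl(f_y(t'), V\bigr).
\]

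The technical step I expect to be the main obstacle is the change of variables along $f$: I need to justify that integrating any measurable $g$ against $K_f(t,\cdot)$ restricted to the fiber equals integrating $g \circ f^{-1}(\langle x,\cdot\rangle)$ against $K_i(x)(y,\cdot)$ on $Y_i$. This follows from $f$ being a one-to-one two-way measurable function (so $f$ and $f^{-1}$ carry measurable sets to measurable sets on the fiber) together with the very definition $K_f(t, f^{-1}(\{x\}\times V)) = K_i(x)(y,V)$; but care is needed to verify this first for indicator functions of measurable rectangles inside $X_i \times Y_i$, then extend via a monotone class / standard machine argument to the measurable $g(t') = K^n_i(x)(f_y(t'),V)$ (which is measurable in $t'$ because $K_i(\cdot)(\cdot,V)$ is measurable by assumption and $f_y$ is measurable).

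Once the change of variables is in place, the remaining computation is immediate: the integral collapses to $\int_{Y_i} K_i(x)(y, dy')\, K^n_i(x)(y',V)$, which by the Chapman--Kolmogorov identity for the Markov kernel $K_i(x)$ on $(Y_i, \cY^f_i)$ equals $K^{n+1}_i(x)(y,V)$. Combining with the extracted factor $I(x,U)$ yields $K^{n+1}_f(t,A) = K^{n+1}_i(x)(y,V) I(x,U)$, completing the induction.
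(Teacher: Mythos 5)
Your proposal is correct and follows essentially the same route as the paper's proof: induction on $n$, observing that $K_f(t,\cdot)$ is supported on the fiber over $x$ (the paper encodes this by integrating against $K_i(x)(y,dy')\,I(x,dx')$ over $X_i\times Y_i$), applying the induction hypothesis with the common $x$, and collapsing via Chapman--Kolmogorov for $K_i(x)$. The only difference is that you flag and sketch the change-of-variables/monotone-class justification explicitly, a measure-theoretic detail the paper's proof passes over silently.
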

We present the proof of this lemma in Appendix~\ref{append:conv} (Lemma~\ref{appendlemma:Kfexpand}).

\begin{lemma}
    \[
        \pi(A) = \int_{t \in T} K_f(t, A) \pi(dt)
    \]
\end{lemma}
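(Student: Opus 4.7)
My plan is to prove $\pi K_f = \pi$ by a disintegration argument that reduces everything to the per-component stationarity of the sub-kernels $K_i(x)$. Since $f$ is a one-to-one two-way measurable function, the space decomposes as $T = \bigsqcup_{i \in I} T_i$ with $T_i = f^{-1}(X_i^f \times Y_i^f)$, and the restriction $f|_{T_i}$ is a measurable isomorphism onto $X_i^f \times Y_i^f$. Writing $\mu_i$ for the restriction of $f_*(\pi)$ to $X_i^f \times Y_i^f$ and $p_i = \mu_i(X_i^f \times Y_i^f) = \pi(T_i)$, the normalization $\pi_i = \mu_i/p_i$ agrees with the conditional distribution $f_*(\pi)_i$ on that component, so the regular conditional probability $v_{\pi_i}$ is available by the standing assumption on generalized product spaces.

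I would first establish the identity for test sets of the form $A = f^{-1}(U \times V)$ with $U \times V$ a measurable rectangle in some $X_j^f \times Y_j^f$. For such $A$, the definition of $K_f$ gives $K_f(t, A) = 0$ for $t \in T_i$ with $i \neq j$, while for $t \in T_j$ with $f(t) = \tup{x, y}$ we have $K_f(t, A) = K_j(x)(y, V)\, I(x, U)$. Transferring the integral to $X_j^f \times Y_j^f$ via the change-of-variables formula under $f_*$ and then disintegrating by $v_{\pi_j}$ yields
\[
    \int_T K_f(t, A)\, \pi(dt) = p_j \int_{X_j^f} I(x, U) \Bigl[\int_{Y_j^f} K_j(x)(y, V)\, v_{\pi_j}(x, dy)\Bigr] \pi_j(dx \times Y_j^f).
\]
The inner bracketed integral equals $v_{\pi_j}(x, V)$ because $v_{\pi_j}(x, \cdot)$ is the stationary distribution of $K_j(x)$ by hypothesis. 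Collapsing the outer integral via the disintegration identity produces $p_j \pi_j(U \times V) = \mu_j(U \times V) = f_*(\pi)(U \times V) = \pi(A)$, as required.

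The extension from rectangle-preimages to all of $\Sigma$ is the step I expect to be most delicate. The family $\mathcal{P} = \{f^{-1}(U \times V) : U \in \cX_i^f,\ V \in \cY_i^f,\ i \in I\}$ is a $\pi$-system and generates $\Sigma$ since $f$ is two-way measurable onto the generalized product space. The collection of $A$ for which the identity $\pi(A) = \int K_f(t, A)\, \pi(dt)$ holds is a $\lambda$-system: closure under proper differences follows because both sides are probability measures in $A$ (the left by definition of $\pi$, the right because $K_f(t, \cdot)$ is a probability measure for each $t$), and closure under countable increasing unions follows from monotone convergence applied to $t \mapsto K_f(t, A)$, whose measurability is ensured by the definition of a Markov transition kernel. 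Dynkin's $\pi$-$\lambda$ theorem then promotes the identity from $\mathcal{P}$ to all of $\Sigma$. The main obstacle here is the bookkeeping across the countably many components $X_i^f \times Y_i^f$; Lemma~\ref{lemma:Kfexpand} (in its $n=1$ instance) provides the componentwise simplification that keeps the rectangle case tractable and makes the $\pi$-$\lambda$ extension clean.
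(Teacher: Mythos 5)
Your core computation is the same as the paper's: both proofs handle $A = f^{-1}(U \times V)$ by transferring the integral to the product space via $f$, disintegrating with the regular conditional probability $v_{f_*(\pi)_i}(x,\cdot)$, invoking the per-component stationarity $\int_{Y_i} K_i(x)(y,V)\,v_{f_*(\pi)_i}(x,dy) = v_{f_*(\pi)_i}(x,V)$, and reassembling $f_*(\pi)(U\times V) = \pi(A)$. Where you genuinely differ is the general case. The paper simply asserts that every $A \in \Sigma$ can be written as $f^{-1}(U\times V)$ and stops, which is loose: an arbitrary element of $\cC^f$ need not restrict to a rectangle on each component $X_i \times Y_i$, and the paper's own formula for $K_f(t,A)$ only reads literally when $f(A) \cap (X_i \times Y_i)$ is a rectangle. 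Your Dynkin $\pi$--$\lambda$ step closes that gap cleanly: the rectangle preimages form a generating $\pi$-system, and both sides of the identity are probability measures in $A$ (the right side by the kernel property of $K_f$ together with monotone convergence), so agreement on the $\pi$-system extends to all of $\Sigma$. The one caveat is that your extension leans on $K_f(t,\cdot)$ being a countably additive probability measure on all of $\Sigma$, which the paper obtains by fiat in declaring $K_f$ a Markov transition kernel; a fully strict treatment would first define $K_f(t,\cdot)$ on general sets as the unique extension of the rectangle formula, after which your argument goes through verbatim. Net effect: same decomposition and same key stationarity input, but your version is the more careful of the two on the measure-theoretic bookkeeping.
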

We present the proof of this lemma in Appendix~\ref{append:conv} (Lemma~\ref{lem:convappend2}).

\begin{lemma}
 $K_f$ is aperiodic if for at least one $x \in X_i$ for some $i \in I$, 
    $K_i(x) : Y_i \times \cY_i \rightarrow [0, 1]$ 
    is aperiodic.
\end{lemma}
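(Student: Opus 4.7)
The plan is to argue by contrapositive: assume $K_f$ is periodic with period $d \geq 2$, decomposition $\{E_0, \ldots, E_{d-1}\}$ and null set $N$ as in the definition, and derive that $K_{i_0}(x_0)$ must be periodic for the asserted $i_0 \in I$ and $x_0 \in X^f_{i_0}$, contradicting the hypothesis. The idea is to slice the decomposition of $T$ at the fiber over $x_0$: define
\[ G_k = \{y \in Y^f_{i_0} : (x_0, y) \in f(E_k)\}, \qquad M = \{y \in Y^f_{i_0} : (x_0, y) \in f(N)\} \cup \{y \in Y^f_{i_0} : (x_0, y) \notin f(T)\}. \]
Because $f$ is a one-to-one two-way measurable function, $f(E_k) \in \cC^f$ and its slice at $x_0$ lies in $\cY^f_{i_0}$; injectivity of $f$ together with disjointness of the $E_k$ forces the $G_k$ to be pairwise disjoint, and $\bigcup_k G_k \cup M = Y^f_{i_0}$ follows from $\bigcup_k E_k \cup N = T$.

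Next I would verify the forward-shift condition. Pick $y \in G_k$ and set $t = f^{-1}(x_0, y) \in E_k$. Periodicity of $K_f$ yields $K_f(t, E_{k+1 \bmod d}) = 1$, and Lemma~\ref{lemma:Kfexpand} (with $n=1$) rewrites this as $K_{i_0}(x_0)(y, V)\, I(x_0, U) = 1$, where $U \times V = f(E_{k+1 \bmod d}) \cap (X^f_{i_0} \times Y^f_{i_0})$. The indicator forces $x_0 \in U$, so $V$ coincides with the $x_0$-slice of $f(E_{k+1 \bmod d})$, which is exactly $G_{k+1 \bmod d}$; hence $K_{i_0}(x_0)(y, G_{k+1 \bmod d}) = 1$. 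Nonemptiness of each $G_k$ follows once one $G_{k_0}$ is nonempty (which is guaranteed whenever the fiber over $x_0$ meets the support of the decomposition) because the shift then propagates it cyclically through all indices.

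The main obstacle is condition (3), that $v_{f_*(\pi)_{i_0}}(x_0, M) = 0$, since this does not hold for every $x_0$ but only $v_{f_*(\pi)_{i_0}}$-almost every $x_0$. I would handle it through the defining property of regular conditional probability: because $\pi(N) = 0$ and $f_*(\pi)$ is the pushforward,
\[ \int_{X^f_{i_0}} v_{f_*(\pi)_{i_0}}(x, M_x)\, f_*(\pi)_{i_0}(dx \times Y^f_{i_0}) \;=\; f_*(\pi)_{i_0}\!\left(f(N) \cap (X^f_{i_0} \times Y^f_{i_0})\right) \;=\; 0, \]
where $M_x$ denotes the $x$-slice of $f(N)$. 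Consequently the set of $x$ for which the slice has positive conditional measure is $v_{f_*(\pi)_{i_0}}$-null, so for almost every $x_0 \in X^f_{i_0}$ the three periodicity conditions for $K_{i_0}(x_0)$ are simultaneously met on $\{G_k\}$ and $M$. This contradicts the assumed aperiodicity of $K_{i_0}(x_0)$ at such a point, forcing $K_f$ itself to be aperiodic. The only caveat is the usual measure-theoretic one: the hypothesis should be read as ``for at least one $x$ outside a $v_{f_*(\pi)_i}$-null set,'' which is both the intended and operationally meaningful reading since the inference kernels used downstream are aperiodic on all of their fibers.
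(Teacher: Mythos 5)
Your proof follows essentially the same route as the paper's: assume $K_f$ is periodic with decomposition $\{E_0,\ldots,E_{d-1}\}$ and null set $N$, slice that decomposition along the fiber over the distinguished $x_0$, and show the slices $\{G_k\}$ witness periodicity of $K_{i_0}(x_0)$, contradicting the hypothesis. The difference is one of rigor rather than strategy: the paper's proof only checks the forward-shift condition (and even there leaves the cyclic propagation of non-emptiness implicit), whereas you additionally verify disjointness and covering of the slices via injectivity of $f$, and you confront the transfer of condition (3), namely that $\pi(N)=0$ only yields $v_{f_*(\pi)_{i_0}}(x_0, M)=0$ for $v_{f_*(\pi)_{i_0}}$-almost every $x_0$, not for every $x_0$. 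That observation, together with the possibility that the fiber over $x_0$ misses $\bigcup_k f(E_k)$ entirely, is a genuine gap in the lemma as literally stated with ``for at least one $x$''; your proposed reading of the hypothesis as ``for at least one $x$ outside a $v_{f_*(\pi)_i}$-null set'' (or, as used downstream, aperiodicity on all fibers) is the correct repair, and the paper's own proof silently ignores both issues.
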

We present the proof of this lemma in Appendix~\ref{append:conv} (Lemma~\ref{lem:classaper}).
\subsection{Connecting the probability space}

We use a finite set of class functions $\cF = \{f_1, f_2 \ldots f_n\}$
to model the inference steps of the hybrid inference metaprogram (Section~\ref{sec:metaprog}).
A critical concept here is that, together, the $\cF$ must connect the underlying probability 
space --- conceptually, starting at any positive probability set contained within the
space, it must be possible to reach any other positive probability set by
following a path of positive probability sets as defined by $\cF$. If $\cF$ does not
connect the space, it is possible for the inference metaprogram to become stuck
within an isolated subspace, with some positive probability sets unreachable
even in the limit. 

\Comment{

\begin{theorem}
       \label{thm:gibbsconv}
    Given a product probability space $(X \times Y, \cX \otimes \cY, \pi)$, a two component gibbs sampler converges to the 
    correct $\pi$ if and only if the following condition is true:
    
    For any two sets $U \in \cX$ and $V \in \cY$,
    \[
        \pi(U \times V^c) = \pi(U^c \times V) = 0 \implies \pi(x \in U^c) = 0 \text{ or } \pi(y \in V^c) = 0
    \]
\end{theorem}
Berti et al.~\cite{berti2008trivial} provide a proof for this theorem.

\begin{figure}
    \begin{subfigure}{0.4\textwidth}
    \begin{tikzpicture} 
        \draw []
        (0.15, 0.15) rectangle (2, 2) node [pos=0.5] {$A\times B$};
        \draw [] (2.15, 2.15) rectangle (4, 4) node [pos=0.5] {$U \times V$};
        %\fill [fill=gray] (2.15, 2.15) rectangle (4, 0) node [pos=0.5] {$U^c \times V$};
        %\fill [fill=gray] (2.15, 2.15) rectangle (0, 4) node [pos=0.5] {$U \times V^c$};
        \draw[->,ultra thick] (0,0)--(5,0) node[right]{$x$};
\draw[->,ultra thick] (0,0)--(0,5) node[above]{$y$};
    \end{tikzpicture}
        \caption{}
        \label{subfig:space}
\end{subfigure}
    \begin{subfigure}{0.4\textwidth}
    \begin{tikzpicture} 
     \draw
        (0.15, 0.15) rectangle (2, 2) node [] {};
        \draw (2.15, 2.15) rectangle (4, 4) node [pos=0.5] {$U \times V$};
        \fill [fill=gray] (2.15, 2.15) rectangle (4, 0) node [pos=0.5] {$U^c \times V$};
        \fill [fill=gray] (2.15, 2.15) rectangle (0, 4) node [pos=0.5] {$U \times V^c$};
        \draw[->,ultra thick] (0,0)--(5,0) node[right]{$x$};
\draw[->,ultra thick] (0,0)--(0,5) node[above]{$y$};
    \end{tikzpicture}
        \caption{}
        \label{subfig:ret}
\end{subfigure}

    \begin{subfigure}{0.4\textwidth}
    \begin{tikzpicture} 
     %\draw  (0.15, 0.15) rectangle (2, 2) node [] {};
        \draw (2.15, 2.15) rectangle (4, 4) node [pos=0.5] {$U \times V$};
        %\fill [fill=gray] (2.15, 2.15) rectangle (4, 0) node [pos=0.5] {$U^c \times V$};
        \fill [fill=gray] (2.15, 0) rectangle (0, 4) node [pos=0.5] {$X \times V^c$};
        \draw[->,ultra thick] (0,0)--(5,0) node[right]{$x$};
\draw[->,ultra thick] (0,0)--(0,5) node[above]{$y$};
    \end{tikzpicture}
        \caption{}
        \label{subfig:ret1}
\end{subfigure}
    \begin{subfigure}{0.4\textwidth}
    \begin{tikzpicture} 
     %\draw   (0.15, 0.15) rectangle (2, 2) node [] {};
        \draw (2.15, 2.15) rectangle (4, 4) node [pos=0.5] {$U \times V$};
        \fill [fill=gray] (0, 2.15) rectangle (4, 0) node [pos=0.5] {$U^c \times Y$};
        %\fill [fill=gray] (2.15, 2.15) rectangle (0, 4) node [pos=0.5] {$U \times V^c$};
        \draw[->,ultra thick] (0,0)--(5,0) node[right]{$x$};
\draw[->,ultra thick] (0,0)--(0,5) node[above]{$y$};
    \end{tikzpicture}
        \caption{}
        \label{subfig:ret2}
\end{subfigure}
    \caption{Example of probability space for which Gibbs-sampling does not converge}
\end{figure}

Consider a product probability space $(X \times Y, \cX \otimes \cY, \pi)$ such that
$\pi(A \times B) + \pi(U \times V) = 1$, i.e. probability of all sets disjoint to
$A \times B$ and $U \times V$ have probability $0$ (Figure~\ref{subfig:space}).
Note that in this case a two component Gibbs sampler will never converge. If 
we start the gibbs-sampler from a point $\tup{x, y} \in U \times V$, since $\pi(U \times V^c) = 0$ 
and $\pi(U^c \times V) = 0$, we can never reach points in $U \times V^c$ and $U^c \times V$ and therefore
we can never reach points in $A \times B$. Note that the condition in Theorem~\ref{thm:gibbsconv}
captures the fact that if $\pi(U^c \times V) = 0$ and $\pi(U \times V^c) = 0$ (Figure~\ref{subfig:ret})
then subsets of $U^c \times V^c$ are not reachable. Hence either $\pi(X \times V^c) = 0$ 
(Figure~\ref{subfig:ret1}) or $\pi(U^c \times Y) = 0$ (Figure~\ref{subfig:ret2}) 
or gibbs-sampler doesn't converge $\pi$. 

We expand this idea to set of class functions below:
}

\begin{definition}[\bf Connecting the space $(T, \Sigma, \pi)$]
    \label{def:connect}
    Given a probability space $(T, \Sigma, \pi)$, 
    a finite set of class functions $\cF = \{f_1, f_2 \ldots f_n\}$\Comment{(to some generalized product spaces 
    $(C^{f_1}, \cC^{f_1}),  (C^{f_2}, \cC^{f_2}) \ldots  (C^{f_n}, \cC^{f_n})$ respectively)}
    connect the probability space $(T, \Sigma, \pi)$, if
    for all sets $A \in \Sigma$ and any two functions $f, g \in \cF$
  
    \begin{equation*}
       \begin{aligned}
       \pi(f^{-1}_x(f_x(A)) \cap g^{-1}_x(g_x(A)^c)) =  \pi(f^{-1}_x(f_x(A)^c) \cap g^{-1}_x(g_x(A)) ) = 0
       \\
    \implies \pi(f^{-1}_x(f_x(A)^c)) = 0 \text{ or } \pi(g^{-1}_x(g_x(A)^c)) = 0 
       \end{aligned}
       \end{equation*}
\end{definition}

\Comment{
As shown below that the above condition of {\bf Connecting the space $(T, \Sigma, \pi)$ }
reduces to the condition under which a two component gibbs sampler converges to the 
correct distribution (Theorem~\ref{thm:gibbsconv}).
}

For standard two-component Gibbs sampling, there are only two class functions, 
specifically $id$ and $re$:

\begin{example}[\bf Connected product space] (Gibbs)
    Given a probability space $(X \times Y, \cX \otimes \cY, \pi)$
    and class functions $id$ and $re$, then 
    $id$ and $re$ connect the space $(X \times Y, \cX \otimes \cY, \pi)$, 
    if for all sets $U \times V \in \cX \otimes \cY$, 
    \[
        \pi(U \times V^c) = \pi(U \times V^c) = 0
        \implies \pi(U^c \times Y) = 0  \text{ or } \pi(X \times V^c) = 0
    \]
    as 
    \[\pi(id_x^{-1}(id_x(U \times V)) \cap re^{-1}_x(re^{-1}_x(U \times V)^c)
        = \pi(U \times Y \cap X \times V^c) = \pi(U \times V^c)
    \]
    and 
   \[\pi(id_x^{-1}(id_x(U \times V)^c) \cap re^{-1}_x(re^{-1}_x(U \times V))
        = \pi(U^c \times Y \cap X \times V) = \pi(U^c \times V)
    \]
\end{example}

\subsection{Stochastic Alternating Class Kernels}
\label{sec:staltclk}
Consider a probability space $(T, \Sigma, \pi)$ where
$\Sigma$ is countably generated. 
We construct a new Markov Chain Transition Kernel 
using a finite set of class functions $\cF = \{f_1, f_2 \ldots f_m\}$ 
and respective class Kernels $K_{f_1}, K_{f_2} \ldots K_{f_m}$.

\begin{definition}[Stochastic Alternating Markov Chain Transition Kernel]
Given $m$ positive real numbers $p_k \in (0, 1)$ which sum to $1$ (i.e. $\sum\limits_{k=1}^m p_k = 1$), 
we define a stochastic alternating Markov Chain Transition Kernel 
$K : T \times \Sigma \rightarrow [0, 1]$ as
\[
    K(t, A) = \sum\limits_{k=1}^m p_k K_{f_k}(t, A)
\]
\end{definition}

This transition kernel corresponds to randomly picking a class kernel $K_{f_k}$
with probability $p_k$ and using it to transition into the next Markov Chain State.

\begin{example} (Gibbs)
The Markov Transition Kernel for a 2 component Gibbs sampler 
    over product probability space $(X \times Y, \cX \otimes \cY, \pi)$ is
    \[
        K(\tup{x, y}, U \times V) =  
        p v_{\pi}(x, V)I(x, U) + (1 - p) v_{\pi'}(y, U)I(y, V)
    \]
\end{example}

We now consider the question of convergence of the Stochastic Alternating Transition Kernel.

Let $R_{t}^k = \{
    A \in \Sigma \wedge \exists 1 \leq n \leq k.K^n(t, A) > 0 \}
$ be the set of all sets in $\Sigma$ which 
are reachable by kernel $K$ in $k$ steps
starting from element $t \in T$.

Consider the limiting case $R_t^\infty$. Let $B_t^\infty = \{t \in T \vert \forall~A \in \Sigma.A \notin R_t^\infty \implies t \notin A\}$
which is the set of elements $t' \in T$ such that any set $A$ in $\Sigma$ which contains $t'$ is reachable by kernel $K$.

\begin{lemma}
    For any $f \in \cF$, any element $t' \in B_t^\infty$ such that 
    $f(t') = \tup{x, y} \in X_i^f \times Y_i^f$, and any 
    set $U \times V \in \sigma(\cX_i^f \otimes \cY_i^f)$ such that $A = f^{-1}(U \times V)$, 
    the following condition holds:
    \[
        v_{f_*(\pi)_i}(x, V) > 0 \wedge x \in U \implies A \in R_t^\infty
    \]
\end{lemma}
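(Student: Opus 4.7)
The plan is to show $A \in R_t^\infty$ directly, by first establishing that $A$ is reachable from $t'$ in a fixed number of steps $n_0$ under the full stochastic alternating kernel $K$, and then concatenating this with reachability from $t$ to a suitable measurable set containing $t'$, which is supplied by the defining property of $B_t^\infty$. The three ingredients are Lemma~\ref{lemma:Kfexpand} (the expansion of $K_f^n$), the $v_{f_*(\pi)_i}(x,\cdot)$-irreducibility of the parameterized kernel $K_i(x)$, and Chapman--Kolmogorov for $K$.

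First I would apply Lemma~\ref{lemma:Kfexpand} at the point $t'$ to the set $A = f^{-1}(U \times V)$. Since $f$ is one-to-one and $U \times V \in \sigma(\cX_i^f \otimes \cY_i^f)$ already lies inside $X_i^f \times Y_i^f$, we have $f(A) \cap (X_i^f \times Y_i^f) = U \times V$, so the lemma gives $K_f^n(t',A) = K_i^n(x)(y,V)\,I(x,U)$ for every $n$. The hypothesis $x \in U$ makes $I(x,U)=1$, and $v_{f_*(\pi)_i}(x,V) > 0$ combined with the $v_{f_*(\pi)_i}(x,\cdot)$-irreducibility of $K_i(x)$ yields some $n_0 \geq 1$ with $K_i^{n_0}(x)(y,V) > 0$; hence $K_f^{n_0}(t',A) > 0$. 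I would then transfer this to $K$ via a short induction showing $K^n(\cdot,A) \geq p_f^n K_f^n(\cdot,A)$: the base case follows from $K = \sum_k p_k K_{f_k} \geq p_f K_f$ as measures, and for the step Chapman--Kolmogorov gives $K^n(t'',A) = \int K^{n-1}(t'',dt''')\,K(t''',A) \geq p_f \int K^{n-1}(t'',dt''')\,K_f(t''',A) \geq p_f^n K_f^n(t'',A)$, where the second inequality uses the inductive hypothesis against the non-negative integrand $K_f(\cdot,A)$. In particular $K^{n_0}(t',A) > 0$.

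Finally, let $B = \{t'' \in T : K^{n_0}(t'',A) > 0\}$. Because each $K_{f_k}(\cdot,A)$ is measurable and Chapman--Kolmogorov preserves measurability in the first argument, $K^{n_0}(\cdot,A)$ is a measurable $[0,1]$-valued function, so $B \in \Sigma$. The previous step gives $t' \in B$, and since $t' \in B_t^\infty$ the contrapositive of the defining property forces $B \in R_t^\infty$, so there exists $m \geq 1$ with $K^m(t,B) > 0$. Chapman--Kolmogorov then delivers $K^{m+n_0}(t,A) = \int K^m(t,dt'')\,K^{n_0}(t'',A) \geq \int_B K^m(t,dt'')\,K^{n_0}(t'',A) > 0$, because the integrand is strictly positive throughout $B$ and $K^m(t,B) > 0$. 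Therefore $A \in R_t^\infty$.

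The main obstacle I anticipate is the measurability of $B$, which rests on the measurability of $t'' \mapsto K^n(t'',A)$ for each fixed $A$; this is routine but is what enables the invocation of $B_t^\infty$ to produce a genuinely usable reachable set rather than a single point. A secondary subtlety is the identification $f(f^{-1}(U \times V)) \cap (X_i^f \times Y_i^f) = U \times V$, which needs the one-to-one property of the class function together with $U \times V \subseteq X_i^f \times Y_i^f$, so that Lemma~\ref{lemma:Kfexpand} produces the same $U$ and $V$ that appear in the hypotheses.
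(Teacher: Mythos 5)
Your proof is correct and follows essentially the same route as the paper's: apply Lemma~\ref{lemma:Kfexpand} together with the $v_{f_*(\pi)_i}(x,\cdot)$-irreducibility of $K_i(x)$ and the indicator condition $x \in U$ to get $K_f^{n_0}(t',A)>0$, transfer this to $K$, and then concatenate with reachability of a set containing $t'$ supplied by $t' \in B_t^\infty$. Your final step is in fact more careful than the paper's --- by explicitly introducing the measurable set $B = \{t'' : K^{n_0}(t'',A) > 0\}$ and the bound $K^n \geq p_f^n K_f^n$, you repair the paper's loosely written concluding integral $\int_B K^n(t',A)K^{n'}(t,dt')$, whose integrand is evaluated at the fixed point $t'$ rather than at the integration variable.
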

We present the proof of this lemma in Appendix~\ref{append:conv} (Lemma~\ref{lem:appendconv3}).

\begin{lemma}
    \label{lemma:fxA}
    For any positive probability set $A$ and any function $f \in \cF$, 
    if $A \subseteq f^{-1}_x(f_x(B_t^\infty))$ then $A \in R_t^\infty$.
\end{lemma}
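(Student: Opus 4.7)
The strategy is to first exhibit an intermediate point $t^{**}\in B_t^\infty$ from which $A$ is reachable in finitely many $K$-steps with positive probability, and then to use the defining property of $B_t^\infty$ to push that reachability back to $t$. To locate $t^{**}$, I would decompose $A=\bigsqcup_i A_i$ with $A_i=A\cap f^{-1}(X_i^f\times Y_i^f)$ and pick an index $i$ with $\pi(A_i)>0$ (possible since $\pi(A)>0$ and $I$ is countable). Writing $B=f(A_i)\in \cX_i^f\otimes \cY_i^f$ and $\mu_i=f_*(\pi)_i$, so that $\mu_i(B)>0$, the regular conditional identity
\[
\mu_i(B)=\int_{X_i^f} v_{\mu_i}(x,B_x)\,\mu_i(dx\times Y_i^f)>0
\]
forces the measurable set $U_0=\{x\in X_i^f:v_{\mu_i}(x,B_x)>0\}$ to be nonempty, where $B_x=\{y:(x,y)\in B\}$. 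Since a nonempty slice $B_x$ witnesses $x\in f_x(A_i)\subseteq f_x(A)\subseteq f_x(B_t^\infty)$ (the last inclusion is the lemma's hypothesis), picking any $x^*\in U_0$ yields a matching $t^{**}\in B_t^\infty$ with $f_x(t^{**})=x^*$; write $y^{**}=f_y(t^{**})$.

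For the reachability $K^n(t^{**},A)>0$, I would use that $K_i(x^*)$ is $v_{\mu_i}(x^*,\cdot)$-irreducible by hypothesis and that $v_{\mu_i}(x^*,B_{x^*})>0$ by construction of $U_0$, so there is $n\ge 1$ with $K_i^n(x^*)(y^{**},B_{x^*})>0$. Lemma~\ref{lemma:Kfexpand} then upgrades this to $K_f^n(t^{**},A_i)>0$, and since the stochastic alternating kernel dominates $p_f K_f$ as a measure (where $p_f$ is the weight assigned to $f$), iterating gives $K^n(t^{**},A)\ge K^n(t^{**},A_i)\ge p_f^n K_f^n(t^{**},A_i)>0$. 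Set $\epsilon:=K^n(t^{**},A)/2>0$.

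To transfer this reachability back to $t$, note that $t'\mapsto K^n(t',A)$ is measurable by the definition of a Markov transition kernel, hence $E:=\{t'\in T:K^n(t',A)>\epsilon\}$ is a measurable set containing $t^{**}$. The defining property of $B_t^\infty$, namely that every measurable set containing $t^{**}$ lies in $R_t^\infty$, yields $E\in R_t^\infty$, i.e., $K^m(t,E)>0$ for some $m\ge 1$. Chapman--Kolmogorov then closes the argument:
\[
K^{m+n}(t,A)=\int_T K^m(t,dt')\,K^n(t',A)\ge \int_E K^m(t,dt')\cdot\epsilon=\epsilon\,K^m(t,E)>0,
\]
so $A\in R_t^\infty$.

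The subtlest step will be invoking Lemma~\ref{lemma:Kfexpand} when $f(A_i)$ is not literally a rectangle: the intended reading is that the $x^*$-slice $B_{x^*}$ plays the role of the ``$V$'' entering the $n$-step transition probability of $K_i(x^*)$, while the ``$I(x^*,U)$'' indicator is automatically $1$ because $x^*\in U_0\subseteq f_x(A_i)$. Once this reading is justified (using that $K_f$ keeps the $x$-coordinate fixed, so $K_f^n(t^{**},\cdot)$ is supported in the fiber $f^{-1}(\{x^*\}\times Y_i^f)$ and agrees with $K_i^n(x^*)(y^{**},\cdot)$ under $f_y$), the remainder of the argument rests only on standard $v$-irreducibility of $K_i(x^*)$, monotonicity of $K^n(t,\cdot)$ as a measure, and Chapman--Kolmogorov.
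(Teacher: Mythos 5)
Your proposal is correct and follows essentially the same route as the paper: decompose $A$ over the constituent product spaces, pick a positive-measure piece, use the hypothesis $A \subseteq f_x^{-1}(f_x(B_t^\infty))$ to find a point of $B_t^\infty$ in the right fiber, invoke irreducibility of the fiber kernel together with Lemma~\ref{lemma:Kfexpand}, and transfer reachability back to $t$ via the definition of $B_t^\infty$. The only difference is that the paper delegates the fiber-reachability and transfer steps to the preceding lemma (the one concluding $v_{f_*(\pi)_i}(x,V) > 0 \wedge x \in U \implies A \in R_t^\infty$), which you inline — and your explicit handling of non-rectangular $f(A_i)$ via slices and of the measurable set $E$ in the Chapman--Kolmogorov step is in fact more careful than the paper's.
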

We present the proof of this lemma in Appendix~\ref{append:conv} (Lemma~\ref{appendlemma:fxA}).

\begin{lemma}
    If $\cF$ connects the space $(T, \Sigma, \pi)$ then there does not exist a positive probability set $A \in \Sigma$, 
    such that $A \subseteq \bigcap_{f \in \cF} f_x^{-1}((f_x(B^\infty_t))^c)$.
\end{lemma}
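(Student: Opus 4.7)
The plan is to argue by contradiction: assume there is a set $A \in \Sigma$ with $\pi(A) > 0$ and $A \subseteq \bigcap_{f \in \cF} f_x^{-1}((f_x(B_t^\infty))^c)$, and derive a contradiction by applying the connectivity hypothesis (Definition~\ref{def:connect}) to the set $B_t^\infty$ itself.

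First I would verify, for every pair $f, g \in \cF$, the antecedent of Definition~\ref{def:connect} applied to $B_t^\infty$, namely
\[
\pi\big(f_x^{-1}(f_x(B_t^\infty)) \cap g_x^{-1}(g_x(B_t^\infty)^c)\big) = 0
\]
together with its counterpart obtained by swapping the roles of $f$ and $g$. Setting $E = f_x^{-1}(f_x(B_t^\infty)) \cap g_x^{-1}(g_x(B_t^\infty)^c)$, observe that $E \subseteq f_x^{-1}(f_x(B_t^\infty))$, so Lemma~\ref{lemma:fxA} forces $E \in R_t^\infty$ whenever $\pi(E) > 0$. On the other hand, since $B_t^\infty \subseteq g_x^{-1}(g_x(B_t^\infty))$, the set $E$ is disjoint from $B_t^\infty$, so $E \subseteq T \setminus B_t^\infty$. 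Using the countable subadditivity of each $K^n(t, \cdot)$, the family of non-reachable sets in $\Sigma$ forms a $\sigma$-ideal; together with the countably generated structure of $\Sigma$ assumed in Section~\ref{sec:staltclk}, this lets me express $T \setminus B_t^\infty$ as a countable union of non-reachable sets, hence itself non-reachable. Every measurable subset of $T \setminus B_t^\infty$ is then non-reachable, contradicting $E \in R_t^\infty$; therefore $\pi(E) = 0$. The symmetric equation is established by the same reasoning.

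With the antecedent verified for every pair $f, g \in \cF$, the connectivity hypothesis yields that, for every such pair, either $\pi(f_x^{-1}(f_x(B_t^\infty)^c)) = 0$ or $\pi(g_x^{-1}(g_x(B_t^\infty)^c)) = 0$. If $\pi(f_x^{-1}(f_x(B_t^\infty)^c)) > 0$ held for every $f \in \cF$, choosing any two such functions would violate this disjunction; hence there must exist some $f_0 \in \cF$ with $\pi((f_0)_x^{-1}((f_0)_x(B_t^\infty)^c)) = 0$. But then from $A \subseteq (f_0)_x^{-1}((f_0)_x(B_t^\infty)^c)$ we obtain $\pi(A) = 0$, contradicting the assumption $\pi(A) > 0$.

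The main obstacle is the measure-theoretic step establishing that $T \setminus B_t^\infty$ is itself a measurable and non-reachable set. This requires extracting a countable cofinal subfamily of non-reachable sets whose union equals $T \setminus B_t^\infty$, exploiting both the $\sigma$-ideal property of non-reachable sets and the countable generation of $\Sigma$. Once this structural fact is in place, the remainder of the argument follows directly by combining Lemma~\ref{lemma:fxA} with the connectivity definition, as sketched above.
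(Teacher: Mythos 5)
Your proposal follows the same architecture as the paper's proof (Lemma~\ref{lem:appendconv4}): argue by contradiction, verify the antecedent of Definition~\ref{def:connect} with $B_t^\infty$ playing the role of the set $A$ in that definition, invoke Lemma~\ref{lemma:fxA} to show the two cross-intersections are $\pi$-null, and then observe that a positive-probability set contained in every $f_x^{-1}((f_x(B_t^\infty))^c)$ forces both disjuncts of the connectivity conclusion to fail. Your closing step (some $f_0$ must have $\pi((f_0)_x^{-1}((f_0)_x(B_t^\infty))^c)=0$, hence $\pi(A)=0$) is logically identical to the paper's.

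The one place you genuinely diverge is in justifying why $E = f_x^{-1}(f_x(B_t^\infty)) \cap g_x^{-1}((g_x(B_t^\infty))^c)$ is null. The paper asserts (citing Lemma~\ref{lemma:fxA}) that any positive-probability subset of $g_x^{-1}(g_x(B_t^\infty))$ lies inside $B_t^\infty$ and leaves it at that; you instead route through the claim that $T \setminus B_t^\infty$ is non-reachable, to contradict the reachability of $E$ supplied by Lemma~\ref{lemma:fxA}. Your instinct that this is the delicate step is right, but the specific argument you offer does not close it: countable generation of $\Sigma$ does not make $\Sigma$ countable, and $T \setminus B_t^\infty$ is by definition the union of \emph{all} non-reachable sets, which need not admit a countable cofinal subfamily (in $([0,1],\cB(\cR))$ with an independence sampler, the non-reachable sets are exactly the null sets, whose union is all of $[0,1]$ even though the $\sigma$-algebra is countably generated). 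What actually rescues the step in this paper is that the intended trace spaces $T_p$ are countable with all singletons measurable, so $T\setminus B_t^\infty$ really is a countable union of non-reachable sets and your $\sigma$-ideal argument then goes through verbatim. Since the paper's own proof is no more careful at exactly this point, I would count your proposal as matching the paper's proof, with the caveat that the measure-theoretic patch you propose should appeal to countability of $T_p$ rather than to countable generation of $\Sigma$.
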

We present the proof of this lemma in Appendix~\ref{append:conv} (Lemma~\ref{lem:appendconv4}).

\begin{theorem} 
    \label{thm:convirr}
    If $\cF$ connects the space $(T, \Sigma, \pi)$ then the Markov Transition Kernel 
    $K$ is $\pi$-irreducible.    
\end{theorem}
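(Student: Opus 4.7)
The plan is to decompose the target set $A$ into pieces governed by $B_t^\infty$ and apply the two previous lemmas to conclude. Fix $t \in T$ and a set $A \in \Sigma$ with $\pi(A) > 0$; the goal is to find some $n \geq 1$ with $K^n(t, A) > 0$.

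First I would write the partition
\[
    A \;=\; \Bigl(A \cap \bigcup_{f \in \cF} f_x^{-1}(f_x(B_t^\infty))\Bigr)
    \;\cup\;
    \Bigl(A \cap \bigcap_{f \in \cF} f_x^{-1}((f_x(B_t^\infty))^c)\Bigr),
\]
which is valid because, for each single $f \in \cF$, the two sets $f_x^{-1}(f_x(B_t^\infty))$ and $f_x^{-1}((f_x(B_t^\infty))^c)$ partition $T$. Since $\cF$ connects $(T, \Sigma, \pi)$, Lemma~\ref{lem:appendconv4} says no positive-probability subset can lie inside $\bigcap_{f \in \cF} f_x^{-1}((f_x(B_t^\infty))^c)$, so the second piece has $\pi$-measure zero. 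Therefore the first piece has positive probability. Because $\cF$ is finite, finite additivity then forces at least one $f \in \cF$ with
\[
    \pi\bigl(A \cap f_x^{-1}(f_x(B_t^\infty))\bigr) \;>\; 0.
\]

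Next I would apply Lemma~\ref{lemma:fxA} to the set $A' := A \cap f_x^{-1}(f_x(B_t^\infty))$: it is a positive-probability set contained in $f_x^{-1}(f_x(B_t^\infty))$, so $A' \in R_t^\infty$. Unwinding the definition of $R_t^\infty$, there exists some finite $n \geq 1$ with $K^n(t, A') > 0$. Because $A' \subseteq A$ and $K^n(t, \cdot)$ is a (positive) measure on $\Sigma$, monotonicity gives $K^n(t, A) \geq K^n(t, A') > 0$. Since $t \in T$ and $A \in \Sigma$ with $\pi(A)>0$ were arbitrary, $K$ is $\pi$-irreducible.

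The only delicate point I anticipate is the measurability bookkeeping: one must make sure that $B_t^\infty$ (and hence $f_x^{-1}(f_x(B_t^\infty))$ and its complement) are in $\Sigma$ so that the decomposition above is actually a measurable decomposition and the hypothesis of Lemma~\ref{lem:appendconv4} applies. This is the place where the standing assumption that $\Sigma$ is countably generated is used, since it turns the collection $R_t^\infty$ into something describable via countably many generators and hence keeps $B_t^\infty$ measurable. Once that is granted, the rest of the argument is bookkeeping on top of the two preceding lemmas.
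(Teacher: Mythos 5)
Your proposal is correct and is essentially the paper's own argument run in the contrapositive direction: the paper assumes $A\notin R_t^\infty$ and derives a positive-probability set inside $\bigcap_{f\in\cF}f_x^{-1}((f_x(B_t^\infty))^c)$, contradicting Lemma~\ref{lem:appendconv4}, while you decompose $A$ directly and invoke the same two lemmas (Lemma~\ref{lemma:fxA} and Lemma~\ref{lem:appendconv4}) together with the same identity $f_x^{-1}(f_x(B))^c = f_x^{-1}(f_x(B)^c)$. The measurability caveat you raise about $B_t^\infty$ applies equally to the paper's version and is not a defect specific to your route.
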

We present the proof of this theorem in Appendix~\ref{append:conv} (Theorem~\ref{appendthm:convirr}).

\begin{theorem}
    $\pi$ is the stationary distribution of Markov Kernel $K$, i.e.
    \[\int K(t, A)\pi(dt) = \pi(A) \text{ for all } A \in \Sigma\]
    \label{thm:convst}
\end{theorem}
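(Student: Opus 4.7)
The plan is to reduce the stationarity of $K$ to the stationarity of each individual class kernel $K_{f_k}$, which has already been established as Lemma~\ref{lem:convappend2} in the excerpt. That lemma states that for every class kernel $K_f$ and every measurable set $A \in \Sigma$, the identity $\pi(A) = \int_T K_f(t, A)\,\pi(dt)$ holds, i.e., each $K_{f_k}$ individually preserves $\pi$. Since $K$ is defined as the finite convex combination $K(t, A) = \sum_{k=1}^m p_k K_{f_k}(t, A)$ with $p_k \in (0,1)$ summing to one, stationarity of $K$ should follow by linearity of the Lebesgue integral together with the fact that a convex combination of probability measures that all agree with $\pi$ is again $\pi$.

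Concretely, I would proceed in three short steps. First, fix an arbitrary set $A \in \Sigma$ and write out $\int K(t, A)\,\pi(dt) = \int \sum_{k=1}^m p_k K_{f_k}(t, A)\,\pi(dt)$. Second, invoke linearity of the Lebesgue integral (justified because each $K_{f_k}(\cdot, A)$ is a bounded nonnegative measurable function, so the finite sum is integrable and the integral distributes over the finite sum) to obtain $\sum_{k=1}^m p_k \int K_{f_k}(t, A)\,\pi(dt)$. Third, apply Lemma~\ref{lem:convappend2} to each term to rewrite this as $\sum_{k=1}^m p_k \pi(A) = \pi(A) \sum_{k=1}^m p_k = \pi(A)$, using the hypothesis $\sum_{k=1}^m p_k = 1$. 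Since $A$ was arbitrary, this establishes $\int K(t, A)\,\pi(dt) = \pi(A)$ for all $A \in \Sigma$, which is the claim.

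There is essentially no real obstacle here: the substantive work was already done in proving stationarity of the individual $K_{f_k}$ (which required unpacking the class-function machinery and the regular conditional probability measures $v_{f_*(\pi)_i}$). The only thing to be careful about is confirming that the measurability and boundedness assumptions needed to commute the integral and the finite sum hold. Since each $K_{f_k}(\cdot, A)$ is the first coordinate of a Markov transition kernel, it is measurable and takes values in $[0,1]$, so the finite-sum interchange is immediate and requires no limiting argument. Thus the proof collapses to one or two lines of calculation once Lemma~\ref{lem:convappend2} is cited.
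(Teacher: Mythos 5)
Your proof is correct and is essentially identical to the paper's own argument: expand $K$ as the convex combination of the $K_{f_k}$, interchange the finite sum with the integral, apply the stationarity lemma for individual class kernels to each term, and conclude using $\sum_k p_k = 1$. The remark on measurability and boundedness of $K_{f_k}(\cdot,A)$ is a reasonable addition that the paper leaves implicit.
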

We present the proof of this theorem in Appendix~\ref{append:conv} (Theorem~\ref{appendthm:convst}).

\begin{theorem}
    \label{thm:convap}
    The Markov Transition Kernel $K$ is aperiodic if at least one of the 
    class kernels $K_{f_j}$ is aperiodic.
\end{theorem}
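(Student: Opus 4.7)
The plan is to proceed by contrapositive: assuming $K$ admits a periodic decomposition, show that every class kernel $K_{f_k}$ must admit the same decomposition, which contradicts the aperiodicity hypothesis on $K_{f_j}$. Concretely, suppose there exist an integer $d \geq 2$, non-empty pairwise disjoint sets $E_0, E_1, \ldots, E_{d-1} \in \Sigma$ together with a $\pi$-null set $N$ whose union is $T$, such that for each $i$ and each $t \in E_i$, $K(t, E_{j}) = 1$ where $j = (i+1)~\mathsf{mod}~d$.

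The key step is a one-line convex-combination argument. By the definition of the stochastic alternating kernel, $K(t, E_j) = \sum_{k=1}^m p_k K_{f_k}(t, E_j)$. Since each $p_k \in (0, 1)$ with $\sum_{k=1}^m p_k = 1$, and each $K_{f_k}(t, E_j) \in [0, 1]$, the only way the weighted sum can equal $1$ is for $K_{f_k}(t, E_j) = 1$ to hold for every $k$ and every $t \in E_i$. In particular, $K_{f_j}(t, E_j) = 1$ on $E_i$ for all $i$, so the identical tuple $(d, E_0, \ldots, E_{d-1}, N)$ witnesses periodicity of $K_{f_j}$, contradicting the assumed aperiodicity.

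The main subtlety to verify is simply that the notion of aperiodicity invoked in the hypothesis — in the sense established, for instance, in Lemma~\ref{lem:classaper} — agrees with the decomposition-based criterion applied to $K$; since both rely on the same kind of witness tuples $(d, E_0, \ldots, E_{d-1}, N)$ with the same $\pi$-null exception set, no mismatch arises. Beyond identifying the convex-combination observation, there is no real obstacle: the argument is essentially immediate once one notices that strict positivity of the mixture weights forces each summand to saturate at $1$ simultaneously.
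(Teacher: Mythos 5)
Your proof is correct and takes essentially the same approach as the paper's: both argue by contradiction that a periodic decomposition $(d, E_0, \ldots, E_{d-1}, N)$ of $K$ transfers verbatim to every class kernel $K_{f_k}$, contradicting the assumed aperiodicity of $K_{f_j}$. Your explicit convex-combination justification (strictly positive weights summing to one force every summand bounded by one to saturate) is in fact a detail the paper's proof asserts without spelling out.
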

We present the proof of this theorem in Appendix~\ref{append:conv} (Theorem~\ref{appendthm:convap}).

\begin{theorem}
    Markov Transition Kernel $K$ converges to probability distribution $\pi$.
\end{theorem}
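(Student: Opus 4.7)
The plan is to apply Theorem~\ref{thm:Athreyaconv} (Athreya, Doss, Sethuraman) directly. That theorem concludes convergence of a Markov transition kernel to $\pi$ from $\pi$-almost every starting point, provided the kernel is $\pi$-irreducible, aperiodic, and admits $\pi$ as its stationary distribution. So the proof reduces to verifying these three hypotheses for the stochastic alternating class kernel $K = \sum_{k=1}^m p_k K_{f_k}$, all of which have already been established in the preceding theorems of Section~\ref{sec:staltclk}.

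Concretely, I would proceed in three short steps. First, invoke Theorem~\ref{thm:convirr} to conclude that $K$ is $\pi$-irreducible, relying on the standing assumption that $\cF = \{f_1, \dots, f_m\}$ connects the space $(T, \Sigma, \pi)$ in the sense of Definition~\ref{def:connect}. Second, invoke Theorem~\ref{thm:convst} to conclude $\pi K = \pi$, i.e., that $\pi$ is the stationary distribution of $K$. Third, invoke Theorem~\ref{thm:convap} to conclude that $K$ is aperiodic, using the hypothesis that at least one constituent class kernel $K_{f_j}$ is aperiodic (which in turn follows from aperiodicity of at least one of the inner kernels $K_i(x)$ under $f_j$).

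Having assembled the three hypotheses, a direct appeal to Theorem~\ref{thm:Athreyaconv} yields
\[
\lim_{n \to \infty} \bigl\lVert K^n(t, \cdot) - \pi \bigr\rVert = 0
\]
for $\pi$-almost every $t \in T$, which is exactly the claimed convergence of $K$ to $\pi$. There is no substantive obstacle at this stage; all the genuine work (the irreducibility argument built on Lemmas~\ref{lemma:fxA} and the connectivity hypothesis, the stationarity computation, and the aperiodicity argument) has already been discharged in the earlier lemmas and theorems of this section. The only care needed is to state the hypotheses of the final theorem so that they explicitly include the connectivity of $\cF$ and the aperiodicity of at least one $K_{f_j}$, matching the hypotheses required by Theorems~\ref{thm:convirr} and \ref{thm:convap}.
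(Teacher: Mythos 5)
Your proposal matches the paper's proof exactly: the paper also concludes by citing Theorems~\ref{thm:Athreyaconv}, \ref{thm:convirr}, \ref{thm:convst}, and \ref{thm:convap}, with all substantive work done in those earlier results. Your additional remark that the hypotheses (connectivity of $\cF$ and aperiodicity of at least one $K_{f_j}$) should be stated explicitly in the theorem is a fair point of precision, but does not change the argument.
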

\begin{proof}
    Using Theorems \ref{thm:Athreyaconv}, \ref{thm:convirr}, \ref{thm:convst}, and \ref{thm:convap}.
\end{proof}

\newcommand{\sembr} [1] {\mathsf{pdf}\llbracket #1 \rrbracket}
\newcommand{\prob} [1] {\mathsf{\mu}\llbracket #1 \rrbracket}
\newcommand{\salg} [1] {f\llbracket #1 \rrbracket}

\section{Inference Metaprograms}
\label{sec:metaprog}
We next formalize the concept of the probability of a trace, introduce 
inference metaprogramming, and use the results in Section \ref{sec:stoch} to prove
the convergence of inference metaprograms.

\subsection{Preliminaries}
Here we relate the concepts in Section \ref{sec:stoch} to concepts used in probabilistic programming.

\noindent{\bf Probability of a Trace:}
Because two traces are equivalent if they differ (if at all) only in the choice of unique identifiers, 
within this section, for clarity, we drop the $id$'s associated with augmented expressions
and stochastic choices in traces and augmented expressions wherever they are not required.

Assuming a countable set of variable names allowed in our probabilistic lambda 
calculus language, a countable number of expressions and a countable number of 
programs can be described in our probabilistic lambda calculus language. 

Within our probabilistic lambda calculus language, we assume that all
stochastic distributions $\mathsf{Dist}_k : V \times \cP(E) \rightarrow [0, 1]$
are functions from a tuple of value and set of lambda expressions in our language to a real number between $0$ and $1$, 
such that for any $v \in V$, $\mathsf{Dist}_k(v, .)$ is a probability measure over probability space $(E, \cP(E))$.
We assume that for each distribution $\mathsf{Dist}_k$, we are given a probability density function 
$\mathsf{pdf}_{\mathsf{Dist}_k} : V \times E \rightarrow [0, 1]$, such that
\[
    \mathsf{Dist}_k(v, A) = \sum\limits_{e \in A} \mathsf{pdf}_{\mathsf{Dist}_k}(v, e)
\]
Let $T_p = \mathsf{Traces}(p)$ be the set of valid traces of a program $p$. $T_p$ contains a countable number of traces.
We define a $\sigma$-algebra $\Sigma_p$ over set $T_p$ such that, for all $t \in T_p$, $\{t\} \in \Sigma_p$.
Given a trace $t$, $\sembr{t}$ is the unnormalized probability density of the trace $t$
(Figure~\ref{fig:probmeas}). The normalized probability distribution $\mu_p : \Sigma_p \rightarrow [0, 1]$
for a given program $p$ is defined as 
\[
    \mu_p(A) = \frac{\sum\limits_{t \in A} \sembr{t}}{\sum\limits_{t \in T_p} \sembr{t}}
\]

\begin{figure}
\[
\begin{array}{rcl}
    \sembr{x:x} &=& 1\\
    \sembr{x:v} &=& 1\\
    \sembr{\lambda.x~e: v} &=& 1\\
    \sembr{(ae_1~ae_2)\perp: v} &=& \sembr{ae_1}*\sembr{ae_2}\\
    \sembr{(ae_1~ae_2)x=ae: v} &=& \sembr{ae_1}*\sembr{ae_2}*\sembr{ae}\\
    \sembr{\mathsf{Dist_i}(ae) = ae_e:v_e} &=& \sembr{ae}*\sembr{ae_e}*\mathsf{pdf}_{\mathsf{Dist_i}}(\cV(ae), e)\\
    &\mathsf{where}& p = 
     ae_e \Rightarrow_r e\\
    \\
    \sembr{\emptyset} &=& 1\\
    \sembr{\assumed{x = ae};t} &=& \sembr{ae}*\sembr{t}\\
    \sembr{\observed{ae}{e};t} &=& \mathsf{pdf}_{\mathsf{Dist}}(\cV(ae), e)*\sembr{ae}*\sembr{t}\\
\end{array}
\]
\caption{Probabilistic measure over traces} 
    \label{fig:probmeas}
\end{figure}

\noindent{\bf Reversible Subproblem Selection Strategy:}
Let $p$ be a probabilistic program, $t$ and $t'$
be valid traces from program $p$ (i.e. $t, t' \in \mathsf{Traces}(p)$), and 
$\mathsf{SS}$ be a subproblem selection strategy that 
returns a valid subproblem over $t$. 
\begin{definition}[\bf Reversible subproblem selection strategy]
    A subproblem selection strategy $\mathsf{SS}$ is reversible if 
    given any natural number  $n$ and $n$ valid traces $t_1, t_2, \ldots t_n \in \mathsf{Traces}(p)$,
    \[
        \big(\forall~i \in \{1, 2, \ldots n - 1\}.\mathsf{SS}(t_i) \vdash t_i \equiv t_{i+1}\big)
        \implies \mathsf{SS}(t_n) \vdash t_n \equiv t_1
    \]
\end{definition}
    \noindent{i.e.,}
    given any $n$ traces $t_1, t_2 \ldots t_n$, if we can transform the trace $t_i$ to get trace $t_{i+1}$ by
    only modifying parts of the trace $t_i$ selected by the subproblem selection strategy $\mathsf{SS}$, then 
    it is possible to transform the trace $t_n$ to get trace $t_1$ by modifying parts of the trace
    $t_n$ selected by the subproblem selection strategy $\mathsf{SS}$.

In essence, reversible subproblem selection strategies always allow a subproblem based 
inference algorithm to reverse a countable number of changes it has made to a trace (given that the 
trace was not modified under a different subproblem selection strategy).

We use the shorthand $\mathsf{SS} \vdash t \equiv t'$ to denote
$\mathsf{SS}(t) \vdash t \equiv t'$.

\begin{theorem}
    A reversible subproblem selection strategy $\mathsf{SS}$ divides the trace space of program $p$
    into equivalence classes.
\end{theorem}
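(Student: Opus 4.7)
The plan is to exhibit an explicit equivalence relation $\sim$ on $\mathsf{Traces}(p)$ whose equivalence classes are the ones meant by the theorem. The natural candidate is the chain closure of the one-step relation induced by $\mathsf{SS}$: declare $t \sim t'$ iff there exist traces $t = t_1, t_2, \ldots, t_n = t'$ in $\mathsf{Traces}(p)$ with $\mathsf{SS}(t_i) \vdash t_i \equiv t_{i+1}$ for every $i \in \{1, \ldots, n-1\}$. It then suffices to verify that $\sim$ is reflexive, symmetric, and transitive.

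For reflexivity I would simply take the length-one chain $n = 1$, so that the universally quantified chain condition is vacuous and $t = t_1 = t_n = t$; no properties of $\equiv$ beyond the ability to form a singleton chain are needed. For transitivity I would concatenate witness chains: if $t \sim t'$ via $t = t_1, \ldots, t_n = t'$ and $t' \sim t''$ via $t' = s_1, \ldots, s_m = t''$, then gluing at $t'$ produces a single chain of length $n + m - 1$ witnessing $t \sim t''$, where every consecutive pair still satisfies the one-step relation by construction.

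Symmetry is where the reversibility hypothesis does all of the work. Given a witness chain $t = t_1, t_2, \ldots, t_n = t'$ for $t \sim t'$, the hypothesis of reversibility applies verbatim (the chain assumption is literally the premise of the definition), and its conclusion is $\mathsf{SS}(t_n) \vdash t_n \equiv t_1$, i.e. $\mathsf{SS}(t') \vdash t' \equiv t$. This is already a length-two chain witnessing $t' \sim t$, so $\sim$ is symmetric. Since $\sim$ is an equivalence relation on $\mathsf{Traces}(p)$, its equivalence classes partition the trace space, which is exactly what the theorem asserts.

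The only delicate point, and thus the main obstacle worth attending to, is to confirm that the chain closure I use is really the partition the theorem has in mind: in particular, that taking $n = 1$ is admissible in the reversibility definition so that reflexivity is free, and that the one-step witness $\mathsf{SS}(t_n) \vdash t_n \equiv t_1$ yielded by reversibility is itself a legal step of $\sim$ (it is, because $t_1$ and $t_n$ both lie in $\mathsf{Traces}(p)$ by assumption). Everything else amounts to bookkeeping with the definitions of $\equiv$ and $\mathsf{SS}$ established earlier, with no further appeal to the structure of traces, dependence graphs, or the extraction/stitching machinery.
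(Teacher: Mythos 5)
Your proof is correct, and it reaches the same partition as the paper, but by a slightly different route. The paper argues directly that the \emph{one-step} relation $\mathsf{SS} \vdash t \equiv t'$ is already an equivalence relation: reflexivity holds by definition of $\equiv$, symmetry is reversibility instantiated with $n=2$, and transitivity is reversibility instantiated with $n=3$ (giving $\mathsf{SS} \vdash t_3 \equiv t_1$) followed by one more application of symmetry. You instead pass to the chain closure $\sim$, which makes reflexivity and transitivity free bookkeeping and concentrates all use of the hypothesis in the symmetry step --- a cleaner division of labor, and arguably the more robust argument since it does not need the structural fact $\mathsf{SS}(t) \vdash t \equiv t$. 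The one thing your version leaves implicit, and which the paper's later development actually relies on (it asserts $\mathsf{SS} \vdash t \equiv t'$ for \emph{any} two traces in the same class, not merely chain-connectedness), is that under reversibility the closure collapses back to the one-step relation: from a witness chain $t = t_1, \ldots, t_n = t'$ reversibility gives $\mathsf{SS} \vdash t_n \equiv t_1$, and applying reversibility again to the length-two chain $t_n, t_1$ gives $\mathsf{SS} \vdash t_1 \equiv t_n$, i.e.\ the single-step relation between any two members of a class. You flag this as the delicate point; adding that two-line observation closes it completely.
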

We present the proof of this theorem in Appendix~\ref{append:meta} (Theorem~\ref{thm:appendequi}).

A reversible subproblem selection strategy $\mathsf{SS}$ divides the trace space $T_p$
into a countable number of equivalence classes where each 
equivalence class contains traces which can be modified into any other trace 
in that class under the subproblem selection strategy.
A trace from one equivalence class cannot be modified by any subproblem based
inference algorithm to a trace from a different equivalence class under the 
given subproblem selection strategy.

Given a reversible subproblem selection strategy $\mathsf{SS}$, let 
$\cC_{SS} = \{c_1, \ldots c_n, \ldots \}$ be the countable set of 
equivalence classes created by $\mathsf{SS}$ over the trace space $T_p$ and 
$\{ T_{c_1}, \ldots T_{c_n}, \ldots \}$ be the equivalence partitions 
created by $\mathsf{SS}$ over $T_p$. Note that for all $c \in \cC_{SS}$ and $t, t' \in T_c$,  
\[
    \mathsf{SS} \vdash t \equiv t' ~\wedge~ \mathsf{SS} \vdash t' \equiv t
\]
and for all $c_i, c_j \in \cC_{SS}, t \in T_{c_i}$ and $t' \in T_{c_j}$, where $c_i \neq c_j$
\[
    (\mathsf{SS} \vdash t \equiv t' ~\vee~ \mathsf{SS} \vdash t' \equiv t) = \mathsf{false}
\]

In practice, subproblems are often specified by associating labels
with stochastic choices, then specifying the labels whose stochastic
choices should be included in the subproblem~\cite{mansinghka2018probabilistic}. 
A standard strategy is to have a fixed set of labels, with the labels
partitioning the choices into classes.  Any strategy that always 
specifies the subproblem via a fixed subset of labels is reversible. 
Because of this property, all of the subproblem selection strategies
presented in~\cite{mansinghka2018probabilistic} are reversible. 

Any subproblem selection strategy that always selects a fixed set of variables
is also reversible. This property ensures that the subproblem
selection strategy in Block Gibbs sampling, for example, is reversible. 
Hence if two traces differ only in the choice of their id's, all reversible
subproblem selection strategies will assign them to the same equivalence class.

\noindent{\bf Class functions given a subproblem selection strategy:}
Consider a reversible subproblem selection strategy $\mathsf{SS}$ which 
creates equivalence classes $\cC_{SS} = \{c_1, \ldots c_n \ldots \}$. 
and equivalence partitions $\{T_{c_1}, \ldots T_{c_n}, \ldots \}$.
We create a generalized product space and class functions using the given subproblem 
selection strategy.
Consider the countable set of disjoint measurable spaces
$\big\{ (C_1, \cC_1), \ldots  (C_n, \cC_n), \ldots  \big\}$
where $C_k = \{c_k\}$ and $\cC_k = \{ \emptyset, C_k \}$. Also
consider disjoint measurable spaces $\big\{ (T_{c_1}, \Sigma_{c_1}), \ldots  (T_{c_n}, \Sigma_{c_n}), \ldots \big\}$.
where 
$\Sigma_{c_k} = \{A \cap T_{c_k} \vert A \in \Sigma_p \}$. We then construct the generalized product space 
\[
    (C, \cC) = (\bigcup\limits_{c_i \in \cC_{SS}} C_i \times T_{c_i}, \sigma(\bigcup\limits_{c_k \in \cC_{SS}} \cC_k \otimes \Sigma_{c_k} ) )
\]
Given a probability measure $\pi$ on $(C, \cC)$, we compute the conditional 
distribution $\pi_i$ on $(C_i \times T_{c_i}, \cC_{SS} \otimes \Sigma_{c_k})$, when $\pi(C_i \times T_{c_i}) > 0$
where
\[
    \pi_i(A) = \frac{\pi(A)}{\pi(C_i \times T_{c_i})}
\]
We then define the regular conditional probability measure 
$v_i : C_i \times \Sigma_{c_i} \rightarrow [0, 1]$, where 
    $v_i(c_i, A) = \pi_i(C_i \times A)$.
We create the class function $f_{SS} : T_p \rightarrow C$, 
where 
$ f_{SS}(t) = \tup{c, t}$ where $c$ is the equivalence class of trace $t$.
Since $f_{SS}$ is a one-to-one function, it is straightforward to 
prove that $f_{SS}$ is a two-way measurable function.

\noindent{\bf Probability of the subtraces:}
For all traces $t, t' \in T_p$ such that $\mathsf{SS} \vdash t \equiv t'$,
subtraces $t_s = \mathsf{ExtractTraces}(t, \mathsf{SS}(t))$ and $t'_s = \mathsf{ExtractTraces}(t', \mathsf{SS}(t'))$
are traces from the same program, i.e., $t_s, t'_s \in \mathsf{Traces}(p_s)$, where 
$p_s$ is the subprogram (Soundness). 
Similarly, for all traces $t \in T_p$ and subtraces $t_s = \mathsf{ExtractTrace}(t, \mathsf{SS}(t))$,
for all subtraces $t'_s \in \mathsf{Traces}(p_s)$ (where $p_s = \mathsf{Program}(t_s)$) and 
$t' = \mathsf{StitchTrace}(t, t'_s, \mathsf{SS}(t))$ then 
$\mathsf{SS} \vdash t \equiv t'$ (Completeness). 

Hence given a equivalence class $c_i$ and partitioned trace space $T_{c_i}$ created 
by subprogram selection strategy $\mathsf{SS}$, there exists a subprogram $p_s$ such that 
for all traces $t \in T_{c_i}$, subtraces $t_s = \mathsf{ExtractTrace}(t, \mathsf{SS}(t))$
are valid traces of $p_s$, i.e., $t_s \in T_{p_s}$. We can therefore associate
traces from an equivalence class to valid subtraces of a subprogram.

\Comment{
\begin{theorem}
    Given a trace $t$ and a valid subproblem $\cS$ on $t$, 
    then for subtrace $t_s = \mathsf{ExtractTrace}(t, \cS)$, 
    \[
        \sembr{t} = \sembr{t_s}
    \]
    i.e., the unnormalized densities of $t$ and $t_s$ are equal.
\end{theorem}
View theorem~\ref{thm:appdensity} in Appendix for proof.
}

\begin{theorem}
    \label{thm:appdensity}
    Given a trace $t$ and a valid subproblem $\cS$ on $t$, 
    then for subtrace $t_s = \mathsf{ExtractTrace}(t, \cS)$, 
    \[
        \sembr{t} = \sembr{t_s}
    \]
    i.e., the unnormalized densities of $t$ and $t_s$ are equal.
\end{theorem}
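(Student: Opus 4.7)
The plan is to prove the theorem by structural induction on the derivation of the extraction relation $\Rightarrow_{ex}$, strengthening the induction hypothesis to handle augmented expressions and traces simultaneously. Specifically, I would establish a joint claim: (i) if $\cS \vdash ae \Rightarrow_{ex} ae', t_s$, then $\sembr{ae} = \sembr{ae'} \cdot \sembr{t_s}$ and $\cV(ae) = \cV(ae')$; and (ii) if $\cS \vdash t \Rightarrow_{ex} t_s$, then $\sembr{t} = \sembr{t_s}$. The value-preservation clause in~(i) is essential because the density of a stochastic choice $\mathsf{Dist}(ae\#id) = ae_v$ depends on $\cV(ae)$, so we must verify that extraction never shifts this value.

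The base cases (variable lookup, cached-variable lookup, and lambda abstraction) are immediate: both the original and the extracted expression have $\sembr{\cdot} = 1$ and the same $v$ field, with $t_s = \emptyset$. The trace cases for $\assume{x}{ae};t$ and $\observed{ae\#id}{e_v};t$ reduce to the augmented-expression lemma: the extracted trace has shape $t_s;\assume{x}{ae'};t'_s$ (or with $\mathsf{observe}$), and using the multiplicative form of $\sembr{\cdot}$ over trace concatenation together with $\sembr{ae} = \sembr{ae'}\sembr{t_s}$, $\sembr{t} = \sembr{t'_s}$, and (for observe) $\cV(ae) = \cV(ae')$, the two densities align after rearrangement.

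The central case is the stochastic choice outside the subproblem ($id' \notin \cS$), where $\mathsf{Dist}(ae\#id') = ae_v$ extracts to the residual $ae'_v$ together with the trace fragment $t_s;\observed{ae'\#id'}{e_v};t'_s$. The original contributes $\sembr{ae} \cdot \sembr{ae_v} \cdot \mathsf{pdf}_{\mathsf{Dist}}(\cV(ae), e_v)$, while the extracted side contributes $\sembr{ae'_v} \cdot \sembr{t_s} \cdot \mathsf{pdf}_{\mathsf{Dist}}(\cV(ae'), e_v) \cdot \sembr{ae'} \cdot \sembr{t'_s}$. Equality follows from the induction hypotheses applied to $ae$ and $ae_v$ together with value preservation: converting a sampled $\mathsf{Dist}$ node into an $\mathsf{observe}$ constrained to the same value contributes the identical $\mathsf{pdf}$ factor, so no probability mass is created or lost. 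The complementary case $id' \in \cS$ is simpler since only $ae$ is recursively extracted and the stochastic-choice node is preserved verbatim.

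The main obstacle is the lambda application with $\ID(ae_1) \notin \cS$ and body $y{=}ae_3$, where extraction restructures the single expression $(ae_1~ae_2)y{=}ae_3$ into the residual $ae'_3$ and the trace prefix $t_s;\assume{x}{ae'_1};t'_s;\assume{y}{ae'_2};t''_s$. The accounting requires three applications of the induction hypothesis; the key observation is that the newly introduced $\mathsf{assume}$ statements contribute only $\sembr{ae'_1}$ and $\sembr{ae'_2}$ with no fresh $\mathsf{pdf}$ factors, so the extracted density $\sembr{t_s}\cdot\sembr{ae'_1}\cdot\sembr{t'_s}\cdot\sembr{ae'_2}\cdot\sembr{t''_s}\cdot\sembr{ae'_3}$ rearranges to $\sembr{ae_1}\cdot\sembr{ae_2}\cdot\sembr{ae_3} = \sembr{(ae_1~ae_2)y{=}ae_3}$. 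Once the augmented-expression lemma is established, the trace-level statement $\sembr{t} = \sembr{t_s}$ follows by unfolding $\sembr{\cdot}$ over trace concatenation and invoking the two clauses of the joint induction at each node.
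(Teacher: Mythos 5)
Your proposal is correct and follows essentially the same route as the paper: an auxiliary lemma for augmented expressions ($\sembr{ae} = \sembr{ae_s}\cdot\sembr{t_s}$, proved by structural induction with the same case split on whether the node is in $\cS$), followed by a trace-level induction that multiplies out the concatenated pieces. The only cosmetic difference is that you fold value preservation ($\cV(ae) = \cV(ae_s)$) into the induction hypothesis, whereas the paper invokes it as a standalone observation proved separately.
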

We present the proof of this theorem in the Appendix~\ref{append:meta} (Theorem~\ref{appendthm:appdensity}).

\noindent Consider an equivalence class $c_i$ and partitioned trace space $T_{c_i}$. Let $p_s$ 
be the subprogram such that all subtraces of traces in $T_{c_i}$ are valid traces of $p_s$.
$(T_{p_s}, \Sigma_{p_s})$ is the measurable space over traces of subprogram $p_s$.
The normalized probability distribution $\mu_{p_s} : \Sigma_{p_s} \rightarrow [0, 1]$ for the 
subprogram $p_s$ is 
\[
    \mu_{p_s}(A) = \frac{\sum\limits_{t_s \in A} \sembr{t_s}}{\sum\limits_{t_s \in T_{p_s} } \sembr{t_s} } 
    = \frac{\sum\limits_{t \in A'} \sembr{t} }{\sum\limits_{t \in T_{c_i}} \sembr{t}} = (\mu_p)_i(A') = \frac{\mu_p(A')}{\mu_p(T_{c_i})} = v_i(c_i, A')
\]
where $A' = \big\{t' \big\vert t_s \in A, t' = \mathsf{StitchTrace}(t, t_s, \mathsf{SS}(t))  \big\}$ for any trace $t \in T_{c_i}$.

Hence, sampling/inference over subprogram $p_s$ is equivalent to sampling/inference 
over the original program $p$ with the constraint that all traces belong to the equivalence class
$c_i$.

\begin{theorem}
    Consider equivalence class $c_i$ and partitioned trace space $T_{c_i}$. Let $p_s$
    be the subprogram such that all subtraces of traces in $T_{c_i}$ are valid traces of $p_s$.
    Then given a  Markov Kernel $K : T_{p_s} \times \Sigma_{p_s} \rightarrow [0, 1]$ which is 
    $\mu_{p_s}$-irreducible, aperiodic, and with stationary distribution $\mu_{p_s}$, the 
    Markov kernel $K(c_i) : T_{c_i} \times \Sigma_{c_i} \rightarrow [0, 1]$ defined as 
    \[
        K(c_i)(t, A) = K(t_s, A')
    \]
    where $t_s = \mathsf{ExtractTrace}(t, \mathsf{SS}(t))$ and $A' = \{t_s \vert t \in A, t_s = \mathsf{ExtractTrace}(t, \mathsf{SS}(t)) \}$,
    is $v_i(c_i, .)$-irreducible, aperiodic, and with stationary distribution $v_i(c_i, .)$.
\end{theorem}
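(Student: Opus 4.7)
The plan is to transport the three properties of $K$ on $(T_{p_s}, \Sigma_{p_s}, \mu_{p_s})$ across to $K(c_i)$ on $(T_{c_i}, \Sigma_{c_i}, v_i(c_i, \cdot))$ via a measure-preserving bijection induced by extraction and stitching. First I would define $\phi : T_{c_i} \to T_{p_s}$ by $\phi(t) = \mathsf{ExtractTrace}(t, \mathsf{SS}(t))$, and observe that by soundness and completeness of extraction/stitching, once we fix an arbitrary reference trace $t_0 \in T_{c_i}$, the map $t_s \mapsto \mathsf{StitchTrace}(t_0, t_s, \mathsf{SS}(t_0))$ is a well-defined inverse on $T_{p_s}$ (working modulo the $id$ equivalence already used throughout the section). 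Consequently $\phi$ is a bijection carrying $\Sigma_{c_i}$ onto $\Sigma_{p_s}$, and for any $A \in \Sigma_{c_i}$ we have $A' = \phi(A) \in \Sigma_{p_s}$, which makes the definition $K(c_i)(t,A) = K(\phi(t), \phi(A))$ unambiguous.

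Next I would show $\phi$ is measure preserving, i.e.\ the pushforward $\phi_*\big(v_i(c_i,\cdot)\big) = \mu_{p_s}$. Using Theorem~\ref{thm:appdensity}, $\sembr{t} = \sembr{\phi(t)}$ for every $t \in T_{c_i}$, so for any $A \in \Sigma_{c_i}$,
\[
v_i(c_i, A) \;=\; \frac{\sum_{t \in A}\sembr{t}}{\sum_{t \in T_{c_i}}\sembr{t}} \;=\; \frac{\sum_{t_s \in \phi(A)}\sembr{t_s}}{\sum_{t_s \in T_{p_s}}\sembr{t_s}} \;=\; \mu_{p_s}(\phi(A)).
\]
An easy induction on $n$, integrating $K(c_i)(t,\cdot)$ against $v_i(c_i,\cdot)$ and changing variables through $\phi$, then gives the iteration identity $K(c_i)^n(t,A) = K^n(\phi(t), \phi(A))$ for all $n \geq 1$.

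From here the three properties transfer directly. For stationarity,
\[
\int K(c_i)(t,A)\, v_i(c_i, dt) \;=\; \int K(\phi(t), \phi(A))\, v_i(c_i, dt) \;=\; \int K(t_s, \phi(A))\, \mu_{p_s}(dt_s) \;=\; \mu_{p_s}(\phi(A)) \;=\; v_i(c_i, A),
\]
using stationarity of $K$ with respect to $\mu_{p_s}$. For $v_i(c_i,\cdot)$-irreducibility, if $v_i(c_i, A) > 0$ then $\mu_{p_s}(\phi(A)) > 0$, so by $\mu_{p_s}$-irreducibility of $K$ there exists $n$ with $K^n(\phi(t), \phi(A)) > 0$, and by the iteration identity $K(c_i)^n(t, A) > 0$. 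For aperiodicity, any putative periodic decomposition $\{E_0, \ldots, E_{d-1}, N\}$ of $T_{c_i}$ with respect to $K(c_i)$ pushes forward under $\phi$ to a periodic decomposition $\{\phi(E_0), \ldots, \phi(E_{d-1}), \phi(N)\}$ of $T_{p_s}$ with respect to $K$, contradicting aperiodicity of $K$.

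The main obstacle I expect is purely bookkeeping: verifying that $\phi$ is a two-way measurable bijection and that the change-of-variables identity $K(c_i)^n(t,A) = K^n(\phi(t), \phi(A))$ lifts from $n=1$ to all $n$ through the pushforward, since the integration in the Chapman--Kolmogorov equation is against $K(c_i)(t, dt')$, which is itself defined via $K$ and $\phi$. Once the bijection and pushforward are pinned down, the remainder is a mechanical transfer of definitions.
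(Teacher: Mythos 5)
Your proposal is correct and follows essentially the same route as the paper: the paper's own proof is a two-line remark that $\mathsf{ExtractTrace}$ is a one-to-one map from $T_{c_i}$ to $T_{p_s}$ and that the pushforward of $v_i(c_i,\cdot)$ is $\mu_{p_s}$ (the latter already established via Theorem~\ref{thm:appdensity} in the displayed identity $\mu_{p_s}(A) = v_i(c_i, A')$ just before the theorem). Your write-up simply fills in the details of that sketch — the bijection via soundness/completeness, the measure-preservation computation, and the mechanical transfer of irreducibility, aperiodicity, and stationarity — all of which are consistent with the paper's intent.
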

\begin{proof}
    $\mathsf{ExtractTrace}$ is one-to-one function from $T_{c_i}$ to $T_{p_s}$
    and the push forward measure of  $v_i(c_i, .)$ is $\mu_{p_s}$.
\end{proof}

\begin{definition}[\bf Generalized Markov Kernel]
    A Generalized Markov Kernel $K$ is a parameterized Markov kernel which, when parameterized with a 
     probabilistic program $p$, defines the probability space $(T_p, \Sigma_p, \mu_p)$ (as defined above),
and returns a Markov Kernel $K(p) : T_p \times \Sigma_p \rightarrow [0, 1]$, which is
        $\mu_p$-irreducible, aperiodic, and has the stationary distribution $\mu_p$.
\end{definition}
A generalized Markov Kernel formalizes the concept of Markov chain inference algorithms. 
The inference algorithms used within the probabilistic programming framework are 
generally coded to work with any input probabilistic program $p$ and still 
provide convergence guarantees. For example,
Venture \cite{mansinghka2018probabilistic} allows the programmer to use a variety of inference algorithms
which in general work on all probabilistic programs which can be written in that language.

\begin{definition}[\bf Generalized Class Kernels]
    Given a generalized Markov Kernel $K$ and a subproblem selection strategy 
    $\mathsf{SS}$, a generalized class Kernel $K_{f_{\mathsf{SS}}}$ is parameterized 
    with a probabilistic program $p$ (which defines space $(T_p, \Sigma_p, \mu_p)$), where 
    \[
        K_{f_{\mathsf{SS}}}(p)(t, A) = K'(c_i)(t, A) = K(p_s)(t_s, A')
    \]
    and 
    $t_s = \mathsf{ExtractTrace}(t, \mathsf{SS}(t))$, $p_s = \mathsf{Program}(t_s)$, $f_\mathsf{SS}(t) = \tup{c_i, t}$ 
    and $A' = \{t_s \vert t \in A, 
    t_s = \mathsf{ExtractTrace}(t, \mathsf{SS}(t)) \}$.
    \label{def:genclass}
\end{definition}

\subsection{Inference Metaprogramming}
\label{sec:convmeta}
Using the concept of independent subproblem inference (Section~\ref{sec:indepsubinfer}) and generalized Markov Kernels, we define 
an Inference Metaprogramming Language (Figure~\ref{fig:inferlang}). An inference metaprogram 
is either one of the black box Generalized Markov Kernel inference algorithms $b_i : T_p \rightarrow T_p$ in our framework, which takes
a trace from an arbitrary program $p$ as an input and returns another trace from the same program,
or a finite set $S = \{p_1~ic_1, p_2~ic_2, \ldots, p_k~ic_k\}$ of  inference 
statements with an attached probability value $p_i \in (0, 1)$, such that $\sum\limits_{i=0}^k p_k = 1$. 
These probability values are used to randomly select a subproblem inference statement to execute.
Each $\mathsf{infer}$ statement is parameterized with a subproblem selection strategy $\mathsf{SS}$, 
which returns a valid subproblem over input trace $t$ and an inference metaprogram that is executed over the subtrace.
Figure~\ref{fig:inferlangexec} presents the execution semantics of our inference metaprogramming language. 
In comparison with entangled subproblem inference, one benefit of the approach is 
that it is straightforward to apply independent subproblem inference recursively.

\begin{figure}
    \[
        \begin{array}{rcl}
         ic \in IC &:=& \mathsf{infer}(SS, ip)\\
            ip \in IP &:=& b_i \vert \{p_1~ic_1, p_2~ic_2 \ldots p_k~ic_k  \} ~~~ \mathsf{where}~~~\sum\limits_{i=0}^kp_k = 1\\
        \end{array}
    \]
    \caption{Inference Metaprogramming language}
    \label{fig:inferlang}
\end{figure}

\begin{figure}
    \[
        \begin{array}{cc}
        \infra{t' = b(t) }
        {b, t \Rightarrow_i t'}
            &
            \infra{
                n~\sim~\mathsf{multinomial}(p_1, p_2 \ldots p_k)
                ~~~~~ ic_n = \mathsf{infer}(SS_n, ip_n)
                ~~~~~ SS_n(t) = \cS\\
                t_s = \mathsf{ExtractTrace}(t, \cS)
                ~~~~~
                ip_n, t_s \Rightarrow_i t'_s
                ~~~~~
                t' = \mathsf{StitchTrace}(t, t'_s, \cS)\\
                \mathsf{where}~~t'_s \in \mathsf{Traces}(\mathsf{Program}(t_s))
            }
            {\{p_1~ic_1, p_2~ic_2 \ldots p_k~ic_k\}, t \Rightarrow_i t'}
        \end{array}
    \]
    \caption{Execution Semantics for Inference Metaprograms}
    \label{fig:inferlangexec}
\end{figure}

\begin{theorem}
    If all the subproblems used in our inference metaprograms are reversible and connect the space of their respective input probabilistic programs, then 
    all inference metaprograms in our inference metaprogramming language implement
    a generalized Markov kernel.
    \label{thm:convsub}
\end{theorem}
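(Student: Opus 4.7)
The plan is to prove this by structural induction on the inference metaprogram $ip$, using the framework developed in Section~\ref{sec:stoch} together with the correspondence between reversible subproblem selection strategies and class functions established earlier in Section~\ref{sec:metaprog}. The base case is immediate: when $ip$ is a black box $b_i$, it is a generalized Markov kernel by assumption. So the work is in the inductive step, where $ip = \{p_1~ic_1, \ldots, p_k~ic_k\}$ with each $ic_j = \mathsf{infer}(SS_j, ip_j)$.

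For the inductive step, first I would fix an arbitrary input probabilistic program $p$ and consider the probability space $(T_p, \Sigma_p, \mu_p)$. By the induction hypothesis applied to each $ip_j$, running $ip_j$ on the extracted subprogram $p_s$ yields a Markov kernel that is $\mu_{p_s}$-irreducible, aperiodic, and has $\mu_{p_s}$ as its stationary distribution. Combined with reversibility of $SS_j$, which partitions $T_p$ into equivalence classes and gives rise to a class function $f_{SS_j}$ as constructed in Section~\ref{sec:metaprog}, Definition~\ref{def:genclass} then tells us that $\mathsf{infer}(SS_j, ip_j)$ implements a generalized class kernel $K_{f_{SS_j}}$. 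Hence the semantics of $\{p_1~ic_1, \ldots, p_k~ic_k\}$ is literally the stochastic alternating Markov chain transition kernel
\[
   K(t, A) = \sum_{j=1}^{k} p_j \, K_{f_{SS_j}}(t, A)
\]
built from the finite family $\cF = \{f_{SS_1}, \ldots, f_{SS_k}\}$ of class functions over $(T_p, \Sigma_p, \mu_p)$.

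Now I would verify the three hypotheses required by Theorem~\ref{thm:Athreyaconv}. Since the $SS_j$ by assumption connect the space of $p$, Theorem~\ref{thm:convirr} gives $\mu_p$-irreducibility of $K$. Theorem~\ref{thm:convst} gives $\mu_p$ as the stationary distribution. For aperiodicity, Theorem~\ref{thm:convap} reduces the claim to aperiodicity of at least one $K_{f_{SS_j}}$, which by the earlier lemma on class-kernel aperiodicity follows from aperiodicity of the inner kernel $K(p_s)$, which in turn is guaranteed by the induction hypothesis since it is a generalized Markov kernel. Combining these three facts with Theorem~\ref{thm:Athreyaconv} shows that $K$ converges to $\mu_p$, so the whole metaprogram on input $p$ implements a generalized Markov kernel as required.

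The main obstacle I anticipate is bookkeeping rather than any new ideas: one has to carefully check that the equivalence classes induced by each $SS_j$ really match the generalized product space in Section~\ref{sec:staltclk}, and that the pushforward of $\mu_p$ under $f_{SS_j}$ restricted to each class coincides with $\mu_{p_s}$ on the corresponding subprogram $p_s$. The soundness/completeness of extraction and stitching (Section~\ref{sec:indepsubinfer}) together with Theorem~\ref{thm:appdensity} makes this correspondence go through, but the induction must be stated uniformly over programs $p$ so that the hypothesis is strong enough to recurse into $ip_j$ applied to the freshly constructed subprogram. A secondary subtlety is that when $ip_j$ itself has the form $\{\cdot\}$, one must propagate the connectivity and reversibility assumptions to the inner subproblem selection strategies, which is consistent with the global hypothesis of the theorem that \emph{all} subproblems used in the metaprogram satisfy these conditions.
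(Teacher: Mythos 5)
Your proposal is correct and follows essentially the same route as the paper's own proof: structural induction on the metaprogram, lifting each $ip_j$ to a generalized class kernel via reversibility of $SS_j$, recognizing the resulting semantics as a stochastic alternating class kernel, and then invoking the irreducibility, stationarity, and aperiodicity theorems under the connectivity hypothesis. The only additions beyond the paper's argument are harmless: the appeal to the asymptotic convergence theorem is not needed for the stated conclusion (which only requires irreducibility, aperiodicity, and stationarity), and the bookkeeping you flag about matching equivalence classes to subprograms is exactly what the paper discharges in the preliminaries of the metaprogramming section rather than inside this proof.
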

\begin{proof}
    Proof by induction over structure of inference metaprograms. 

    \noindent{\bf Base Case:}
    All black box inference algorithms in our inference metaprogramming language are 
    generalized Markov kernels. Hence given traces of program $p$ (which define the probability space $(T_p, \Sigma_p, \mu_p)$)
    the black box inference algorithm is $\mu_p$-irreducible, aperiodic, and has the stationary distribution $\mu_p$.

    \noindent{\bf Induction Case:}
    Consider the inference metaprogram $ip = \{p_1~ic_1, p_2, ic_2, \ldots, p_k~ic_k \}$,
    where $ic_i = \mathsf{infer}(\mathsf{SS}_i, ip_i)$ and $\sum\limits_{i=1}^k p_i = 1$.

    Using the induction hypothesis, we assume, for all $i \in \{1, 2, \ldots k\}$, all $ip_i$ implement 
    a generalized Markov kernel $K^{ip_i}$. Since our subproblem $\mathsf{SS}$ is reversible, we 
    lift the generalized Markov kernel to generalized class kernel~(Definition \ref{def:genclass}) $K_{f_{SS_i}}$, which
    for any program $p$ (which defines space $(T_p, \Sigma_p, \mu_p)$), is a class kernel.

    Given an probabilistic program $p$, the inference metaprogram $ip$ implements the Generalized Markov Kernel $K$,
    where
    \[
        K(p)(t, A) = \sum\limits_{i = 1}^k p_iK_{f_{SS_i}}(t, A)
    \]
    Using Theorems \ref{thm:convirr}, \ref{thm:convst}, and \ref{thm:convap}, 
    if the class functions $f_{SS_1}, f_{SS_2}, \ldots, f_{SS_k}$ connect the space $(T_p, \Sigma_p, \mu_p)$, 
    $K(p)$ is $\mu_p$-irreducible, aperiodic, and has $\mu_p$ as its stationary distribution.
\end{proof}

\begin{corollary}
    Given a probabilistic program $p$ (defining trace space $(T_p, \Sigma_p, \mu_p)$), inference 
    metaprograms which use reversible subproblem selection strategies 
    which connect the space of their respective probabilistic programs asymptotically 
    converge to $\mu_p$.
\end{corollary}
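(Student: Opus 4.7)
The plan is to derive this corollary as essentially a one-step consequence of the preceding Theorem~\ref{thm:convsub} combined with the general convergence result of Athreya, Doss, and Sethuraman (Theorem~\ref{thm:Athreyaconv}). So the proof is really just a matter of verifying that all the hypotheses of Theorem~\ref{thm:Athreyaconv} are in place for the Markov kernel that the inference metaprogram implements on the trace space of $p$.

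First I would fix a probabilistic program $p$ and an inference metaprogram $ip$ satisfying the hypotheses (every subproblem selection strategy appearing in $ip$ is reversible and the collection of strategies appearing at each nesting level connects the corresponding trace space). I would then invoke Theorem~\ref{thm:convsub} to conclude that $ip$ implements a generalized Markov kernel; instantiating this generalized kernel at $p$ yields a concrete Markov transition kernel $K(p) : T_p \times \Sigma_p \rightarrow [0,1]$ which, by the conclusion of that theorem, is $\mu_p$-irreducible, aperiodic, and has $\mu_p$ as its stationary distribution.

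Next I would feed the probability space $(T_p, \Sigma_p, \mu_p)$ and the kernel $K(p)$ into Theorem~\ref{thm:Athreyaconv}. All three hypotheses of that theorem ($\pi$-irreducibility, aperiodicity, and $\pi K = \pi$ with $\pi = \mu_p$) were just verified in the previous paragraph, so the theorem directly yields
\[
    \lim_{n \rightarrow \infty} \Vert K(p)^n(t, \cdot) - \mu_p \Vert = 0
\]
for $\mu_p$-almost all $t \in T_p$, which is exactly the asymptotic convergence statement claimed in the corollary.

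There is no substantive obstacle here, because the corollary is a packaging result: all the real work (proving reversibility gives rise to a valid class function, proving class kernels compose into a stochastic alternating kernel whose irreducibility follows from the connectivity assumption, and proving aperiodicity and stationarity) has already been done in Section~\ref{sec:stoch} and in Theorem~\ref{thm:convsub}. The only minor care required is to state clearly that the generalized Markov kernel view of $ip$, when instantiated at the program $p$, produces precisely the Markov kernel whose iterates model running $ip$ repeatedly on an input trace, so that asymptotic convergence of $K(p)^n$ really does correspond to the operational behavior of the metaprogram.
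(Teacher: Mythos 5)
Your proposal is correct and follows exactly the route the paper intends: Theorem~\ref{thm:convsub} supplies a $\mu_p$-irreducible, aperiodic kernel with stationary distribution $\mu_p$, and Theorem~\ref{thm:Athreyaconv} then gives convergence in total variation for $\mu_p$-almost all starting traces. The paper leaves this corollary unproved precisely because it is this immediate packaging step, so there is nothing to add.
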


\begin{comment}
We have previously observed that all subproblem selection strategies in \cite{mansinghka2018probabilistic} are 
reversible (Section~2 and Section~3). The remaining question is whether they connect the space. In the common special case where 
a given probabilisitic program assigns non-zero probability to all traces, if a set of 
subproblem selection strategies cover all stochastic choices in any valid trace (i.e., 
for any stochastic choice in the trace, there exists atleast one subproblem selection strategy
that will add this stochastic choice to its selected subproblem), the 
subproblem selection strategies connect the probability space of the given probabilisitic program. We find that all
probabilisitic programs and subprogram selection strategies studied in \cite{mansinghka2018probabilistic}
follow the special case described above and hence have subproblem selection
strategies which connect the probability space defined by their respective 
probabilisitic program.
\end{comment}

\section{Related Work}

\noindent{\bf Probabilistic Programming Languages:} Over the last several decades
researchers have developed  a range of probabilistic programming languages. 
With current practice each language typically comes paired with one/a few black box inference strategies.
Example language/inference pairs include Stan~\cite{carpenter2016stan} with 
Hamiltonian Monte Carlo inference~\cite{andrieu2003introduction}; Anglician~\cite{tolpin2015probabilistic} with 
particle Gibbs, etc. 
Languages like LibBi~\cite{murray2013bayesian}, Edward~\cite{tran2017deep} and Pyro~\cite{goodman2017pyro} provide 
inference customization mechanisms, but 
without subproblems or asymptotic convergence guarantees.
 
\noindent{\bf Compilation Strategies for Probabilistic Programs:} 
Techniques for efficiently executing probabilistic programs are a prerequisite for their widespread adoption. 
The Swift~\cite{wu2016swift} and Augur~\cite{huang2017compiling, tristan2014augur} compilers generate
efficient compiled implementations of inference algorithms that operate over probabilistic programs. 
We anticipate that applying these compilation techniques to subproblems can deliver significant performance
improvements for the hybrid inference algorithms we study in this paper. 

\noindent{\bf Subproblem Inference:} 
Both Turing~\cite{ge2018turing} and Venture~\cite{mansinghka2018probabilistic} provide inference metaprogramming
constructs with subproblems and different inference algorithms that operate on these
subproblems. In Venture subproblem inference is performed over full program traces,
with subproblems entangled with the full trace. The inference algorithms in Venture
must therefore operate over the entire trace while ensuring that the inference
effects do not escape the specified subproblem. Our extraction and stitching technique
eliminates this entanglement and enables the use of standard inference algorithms
that operate over complete traces while still supporting subproblem identification
and inference. Turing only provides  mechanisms that target specific stochastic choices in the context of the 
complete probabilistic computation. 

There is work on extending Gibbs-like algorithms to work on Open Universe Probabilisitic Models~\cite{arora2012gibbs,milch2010extending}.
This research studies algorithms that apply one strategy to choose a single variable at each iteration and rely on empirical
evidence of convergence. Our research, in contrast, supports a wide range of subproblem selection strategies 
and provides formal proofs of asymptotic convergence properties for MCMC algorithms applied to these subproblems. 

\noindent{\bf Asymptotic Convergence:}
There is a vast literature on asymptotic convergence of Markov chain algorithms
in various statistics and probability settings~\cite{meyn2012markov, tierney1994markov}. Our work is unique
in that it provides the first characterization of asymptotic convergence 
for subproblem inference in probabilistic programs. 
Complications that occur in this setting include mixtures of discrete
and continuous variables, stochastic choices with cascading effects
that may change the number of stochastic choices in the computation,
and resulting sample spaces with unbounded numbers of random variables. 
Standard results from computational statistics, computational physics,
and Monte-Carlo methods focus on finite dimensional discrete state spaces,
a context in which linear algebra (i.e., spectral analysis of the transition
matrix of the underlying Markov chain~\cite{diaconis1991geometric} or coupling arguments~\cite{levin2017markov})
is sufficient to prove convergence. State spaces with continuous random
variables are outside the scope of these formal analyses. Measure-theoretic
treatments are more general~\cite{roberts2004general}. Our results show how to apply
the concepts in these treatments to prove asymptotic convergence results for
probabilistic programs with interference metaprogramming. 

\noindent{\bf Proving Properties of Probabilistic Inference:} 
Researchers have recently developed techniques for proving a variety of properties of different 
inference algorithms for probabilistic programs~\cite{scibor2018functional,scibior2017denotational,atkinson2017typesafety,anon2020typesystem}.
Our unique contribution relates to the treatment of convergence in the context of subproblems, 
specifically 1) the identification of subproblem extraction and
stitching to obtain independent subproblems, 2) support for a general class of 
(potentially state-dependent) subproblems, including programmable 
subproblem selection strategies that may depend on the values of stochastic
choices from the current execution trace, 
and 3) the mathematical formulation that enables us to state and prove asymptotic convergence results for hybrid inference metaprograms
that apply (a general class of potentially very different) MCMC algorithms to different parts of the inference problem. 
We see our research and the research cited above as synergistic --- one potential synergy is 
that the research cited above can prove properties of the black-box MCMC algorithms 
that our inference metaprograms deploy.

\nocite{dal2012probabilistic, ramsey2002stochastic,
borgstrom2016lambda, 
staton2016semantics,
kozen1979semantics,
gehr2016psi,
monniaux2007abstract,
sharir1984verification,
atkinson2017typesafety,
huang2017compiling,
roberts1994simple,
matthijs2019domain,
scibior2017denotational,
heunen2017convenient,
scibor2018functional,
wu2016swift,
arora2012gibbs,
milch2010extending,
arora2012gibbs,
wingate2011lightweight,
higdon1998auxiliary,
}

\section{Conclusion}

Inference metaprogramming, subproblem inference, and asymptotic convergence are key issues in 
probabilistic programming. 
Detangling the subproblem from the surrounding program trace allows 
us to cleanly analyze subproblem based inference. 
Our mathematical framework
introduces new concepts which enable us to 
model subproblem based inference and 
prove asymptotic convergence properties of the 
resulting hybrid probabilistic inference algorithms.

%% Bibliography
\bibliography{citations.bib}

%% Appendix
\newpage
\appendix
\section{Appendix}
\subsection{Convergence}
\label{append:conv}

\begin{lemma}
    \label{appendlemma:Kfexpand}
    For all $t \in T$ and $A \in \Sigma$, 
    \[K^n_f(t, A) = K^n_i(x)(y, V)I(x, U)\]
    where $f(t) = \tup{x, y} \in X_i \times Y_i$ and $U \times V = f(A) \cap X_i \times Y_i$. 
\end{lemma}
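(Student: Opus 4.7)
The plan is to proceed by induction on $n$. The base case $n=1$ is immediate from the definition of the class kernel $K_f$. For the inductive step, I would use the recursive definition $K_f^n(t, A) = \int_T K_f^{n-1}(t, dt') K_f(t', A)$ and evaluate the integral by pushing everything forward through $f$.

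First, I would record the key ``class preservation'' consequence of the inductive hypothesis: since $K_f^{n-1}(t, B) = K_i^{n-1}(x)(y, V_B) I(x, U_B)$ with $U_B \times V_B = f(B) \cap (X^f_i \times Y^f_i)$, the measure $K_f^{n-1}(t, \cdot)$ is concentrated on the fiber $f^{-1}(\{x\} \times Y^f_i)$ (the $I(x, U_B)$ factor vanishes unless $U_B \ni x$, which forces $B$ to meet this fiber). Since $f$ is a one-to-one two-way measurable function, the pushforward $f_* K_f^{n-1}(t, \cdot)$ is well-defined on $(C^f, \cC^f)$, is supported on $\{x\} \times Y^f_i$, and satisfies $f_* K_f^{n-1}(t, \cdot)(\{x\} \times V') = K_i^{n-1}(x)(y, V')$ for every $V' \in \cY^f_i$.

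Next, I would perform the change of variables $t' \mapsto f(t') = \tup{x', y'}$ in the integral. On the support $f^{-1}(\{x\} \times Y^f_i)$, the integrand becomes $K_f(t', A) = K_i(f_x(t'))(f_y(t'), V)\, I(f_x(t'), U) = K_i(x)(y', V)\, I(x, U)$ with $U \times V = f(A) \cap (X^f_i \times Y^f_i)$. Factoring out the indicator $I(x, U)$ (which does not depend on $y'$), what remains is
\[
K_f^n(t, A) = I(x, U) \int_{Y^f_i} K_i(x)(y', V)\, K_i^{n-1}(x)(y, dy'),
\]
and the integral is exactly the definition of $K_i^n(x)(y, V)$, yielding the claimed identity.

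The main obstacle will be the measure-theoretic change of variables: I need the pushforward $f_* K_f^{n-1}(t, \cdot)$ to agree, as a measure on $(C^f, \cC^f)$, with the kernel $K_i^{n-1}(x)(y, \cdot)$ transported into $\{x\} \times Y^f_i$. This uses that $f$ is one-to-one and two-way measurable, so $f^{-1}$ carries measurable subsets of $Y^f_i$ back into $\Sigma$ and preserves the identification $f(f^{-1}(\{x\} \times V')) = \{x\} \times V'$. Once this bijective correspondence between the measurable subsets of the fiber and of $Y^f_i$ is in hand, the computation collapses to the one-dimensional recursion on $K_i$ and the induction closes.
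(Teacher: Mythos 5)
Your proposal is correct and follows essentially the same route as the paper's proof: induction on $n$, a change of variables through $f$ that collapses the $x$-marginal onto the fiber $\{x\}\times Y_i^f$ via the indicator, and recognition of the remaining integral as the $n$-step kernel $K_i^n(x)(y,V)$. The only cosmetic difference is that you peel off the last step of the chain while the paper peels off the first, which is immaterial by the Chapman--Kolmogorov identity.
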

\begin{proof}
    Proof by induction.

    Base case:
    \[K_f(t, A) = K_i(x)(y, V)I(x, U)\] 
    using definition of $K_f$.

    Induction Hypothesis:
    
    For all $1 \leq n \leq m$, the following statement is true 
    \[K^n_f(t, A) = K^n_i(x)(y, V)I(x, U)\]

    Induction Case:
    \[
        K^{m+1}_f(t, A) = \int_{t' \in T} K^m_f(t', A)K_f(t, dt')
    \]
    \[
        =  \int_{x' \in X_i}\int_{y' \in Y_i} K^m_f(f(x', y'), A) K_i(x)(y, dy')I(x, dx')
    \]
    Note that $f(x', y') \in X_i \times Y_i$
    \[
        = \int_{x' \in X_i}\int_{y' \in Y_i} K^m_i(x')(y', V)I(x', U) K_i(x)(y, dy')I(x, dx')
    \]
    \[ 
        = \int_{y' \in Y_i} K^m_i(x)(y', V)I(x, U) K_i(x)(y, dy')
    \]
    \[ 
        = \int_{y' \in Y_i} K^m_i(x)(y', V)I(x, U) K_i(x)(y, dy')
    \]
    \[
        = K^{m+1}_i(x)(y, V)I(x, U)
    \]
    Hence proved.
\end{proof}

\begin{lemma}
    \[
        \pi(A) = \int_{t \in T} K_f(t, A) \pi(dt)
    \]
    \label{lem:convappend2}
\end{lemma}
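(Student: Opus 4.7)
The plan is to apply the change-of-variables formula along $f$ to transport the integral on the left into the generalized product space $(C, \cC)$, then decompose it over the disjoint union $\bigcup_{i \in I} X_i^f \times Y_i^f$ and use the defining properties of the regular conditional probability $v_{f_*(\pi)_i}$ together with the stationarity of each $K_i(x)$ on its slice.

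More concretely, first I would push $\pi$ forward to $f_*(\pi)$ on $(C, \cC)$ and observe that, because $\{X_i \times Y_i\}_{i \in I}$ partitions $C$ (with each $X_i \times Y_i \in \cC$), the pushforward decomposes as a countable convex combination of conditional measures $f_*(\pi)_i$ on each $(X_i \times Y_i, \cX_i \otimes \cY_i)$, weighted by $f_*(\pi)(X_i \times Y_i)$. Since $f$ is a two-way measurable one-to-one function, $\int_T K_f(t, A)\, \pi(dt) = \int_C K_f(f^{-1}(c), A)\, f_*(\pi)(dc)$, and for $c = \langle x, y \rangle \in X_i \times Y_i$ the integrand equals $K_i(x)(y, V_i)\, I(x, U_i)$, where $U_i \times V_i = f(A) \cap (X_i \times Y_i)$.

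Next I would split the integral by the partition and, on each piece $X_i \times Y_i$, factor the measure through the regular conditional probability $v_{f_*(\pi)_i}$. The key computation on the $i$-th piece is
\begin{align*}
\int_{X_i \times Y_i} K_i(x)(y, V_i)\, I(x, U_i)\, f_*(\pi)_i(dx \times dy)
&= \int_{U_i} \Bigl( \int_{Y_i} K_i(x)(y, V_i)\, v_{f_*(\pi)_i}(x, dy) \Bigr) f_*(\pi)_i(dx \times Y_i) \\
&= \int_{U_i} v_{f_*(\pi)_i}(x, V_i)\, f_*(\pi)_i(dx \times Y_i) \\
&= f_*(\pi)_i(U_i \times V_i),
\end{align*}
where the second equality uses that $v_{f_*(\pi)_i}(x, \cdot)$ is the stationary distribution of the Markov kernel $K_i(x)$, and the third uses the defining property of the regular conditional probability. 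Weighting by $f_*(\pi)(X_i \times Y_i)$ and summing over $i \in I$ collapses the right-hand side to $f_*(\pi)(f(A)) = \pi(A)$, which is the desired identity.

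The only genuinely delicate step will be justifying the factorization through $v_{f_*(\pi)_i}$ and the interchange of the countable sum with the integral; both follow from standard monotone-class or Tonelli arguments once one checks that $K_i(\cdot)(y, V)$ is measurable (assumed in the definition of class kernels) and that $K_f(t, A)$ is pointwise non-negative. The push-forward step itself is routine because $f$ is explicitly two-way measurable and one-to-one, so no subtle measurability issue arises in moving between $T$ and $C$.
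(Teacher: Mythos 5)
Your proposal is correct and follows essentially the same route as the paper's proof: push $\pi$ forward along $f$, decompose over the partition $\{X_i \times Y_i\}_{i \in I}$, factor each slice through the regular conditional probability $v_{f_*(\pi)_i}$, invoke stationarity of $K_i(x)$ to collapse the inner integral, and apply the defining property of $v_{f_*(\pi)_i}$ to recover $f_*(\pi)_i(U_i \times V_i)$ before summing back to $\pi(A)$. The only differences are cosmetic (you localize $U_i \times V_i = f(A) \cap (X_i \times Y_i)$ per slice where the paper fixes a global $U \times V$ with $A = f^{-1}(U \times V)$, and you flag the measurability and sum--integral interchange explicitly, which the paper leaves implicit).
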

\begin{proof}
    Every set $A \in \Sigma$ can be written as $f^{-1}(U \times V)$ for some
    $U \times V \in \cC^f$ as $f$ is a two-way measurable function.
    \[
        \int_{t \in T} K_f(t, A) \pi(dt) =  \int_{t \in T} K_f(t, f^{-1}(U \times V)) \pi(f^{-1}(dt))
    \]
    We can split the integral into sum over the constituent product spaces $X_i \times Y_i$,
    \[
        = \sum\limits_{i \in I}\int_{x \in X_i} \int_{y \in Y_i} 
        K_f(f^{-1}(x, y), f^{-1}(U \times V)) f_*(\pi)_i(dx \times dy)
        f_*(\pi)(X_i \times Y_i)
    \]
    We can rewrite $  K_f(f^{-1}(x, y), f^{-1}(U \times V)) $ as  
    $  K(x)(y, V \cap Y_i) I(x, U \cap X_i)$.
    \[
        =  \sum\limits_{i \in I} \int_{x \in X_i} \int_{ y \in Y_i} 
         K(x)(y, V \cap Y_i) I(x, U \cap X_i) f_*(\pi)_i(dx \times dy)
          f_*(\pi)(X_i \times Y_i)
    \]
    $\int_{x \in X_i} f(x) I(x, U \cap X_i) m(dx) = \int_{x \in U \cap X_i} f(x) m(dx)$ as 
    $I(x, U \cap X_i)$ is zero for all $x \notin U \cap X_i$.
    \[
        =  \sum\limits_{i \in I} \int_{ y \in Y_i} \int_{x \in U \cap X_i}  
         K(x)(y, V \cap Y_i) f_*(\pi)_i(dx \times dy)
          f_*(\pi)(X_i \times Y_i)
    \]
    We can rewrite $f_*(\pi)_i(dx' \times dy)$ as $v_{f_*(\pi)_i)}(x, dy)f_*(\pi)_i(dx' \times Y_i)$ 
    using the definition of regular conditional probability distribution.
    \[
        =  \sum\limits_{i \in I}  \int_{ y \in Y_i} 
        \int_{x \in U \cap X_i} K(x)(y, V \cap Y_i) v(x, dy) f_*(\pi)_i(dx \times Y_i)
          f_*(\pi)(X_i \times Y_i)
    \]
    \[
        =  \sum\limits_{i \in I} 
        \int_{x \in U \cap X_i} \big(\int_{ y \in Y_i} 
    K(x)(y, V \cap Y_i)) v_{f_*(\pi)_i}(x, dy) ~\big) f_*(\pi)_i(dx \times Y_i)
          f_*(\pi)(X_i \times Y_i)
    \]
$ \int_{ y \in Y_i} 
    K_i(x)(y, V \cap Y_i) v_{f_*(\pi)_i}(x, dy) = v_{f_*(\pi)_i}(x, V \cap Y_i) $ as 
    $v_{f_*(\pi)_i}(x, .)$ is the stationary distribution for kernel $K_i(x)$. 
    \[
        =  \sum\limits_{i \in I} 
        \int_{x \in U \cap X_i} v_{f_*(\pi)_i}(x, V \times Y_i) f_*(\pi)_i(dx \times Y_i)
          f_*(\pi)(X_i \times Y_i)
    \]
    \[
        =  \sum\limits_{i \in I} 
         f_*(\pi)_i(U \times V \cap X_i\times Y_i)
          f_*(\pi)(X_i \times Y_i) = f_*(\pi)(U \times V) = \pi(A)
    \]

\end{proof}

\begin{lemma}
    \label{lem:classaper}
 $K_f$ is aperiodic if for at least one $x \in X_i$ for some $i \in I$, 
    $K_i(x) : Y_i \times \cY_i \rightarrow [0, 1]$ 
    is aperiodic.
\end{lemma}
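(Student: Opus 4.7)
The plan is to prove the contrapositive by contradiction: assume $K_f$ is periodic while $K_i(x^*)$ is aperiodic for some $x^* \in X_i$, and derive a contradiction by constructing a periodic partition of $Y_i$ witnessing periodicity of $K_i(x^*)$.

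Suppose $K_f$ is periodic with period $d \geq 2$, witnessed by disjoint sets $E_0, E_1, \ldots, E_{d-1}, N \in \Sigma$ satisfying the definition of periodicity. Let $x^* \in X_i$ be the point at which $K_i(x^*)$ is aperiodic. I will pull the partition back along $f$ to the fibre over $x^*$ by setting, for each $j \in \{0, 1, \ldots, d-1\}$,
\[
    E_j^{x^*} = \{ y \in Y_i : f^{-1}(\tup{x^*, y}) \in E_j \}, \qquad N^{x^*} = Y_i \setminus \bigcup_{j=0}^{d-1} E_j^{x^*}.
\]
Each $E_j^{x^*} \in \cY_i$ because $f$ is a two-way measurable function and the $E_j$'s lie in $\Sigma$; the sets $\{E_0^{x^*}, \ldots, E_{d-1}^{x^*}\} \cup \{N^{x^*}\}$ are pairwise disjoint and cover $Y_i$ by construction.

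For the cycling condition, I fix any $y \in E_j^{x^*}$ and set $t = f^{-1}(\tup{x^*, y}) \in E_j$. The $n = 1$ case of Lemma~\ref{lemma:Kfexpand} (reading the class-kernel expansion in terms of $x$-sections) then yields
\[
    1 = K_f(t, E_{j+1 \bmod d}) = K_i(x^*)(y, E_{j+1 \bmod d}^{x^*}),
\]
so $K_i(x^*)$ shuttles the class $E_j^{x^*}$ into $E_{j+1 \bmod d}^{x^*}$ with probability one. For non-emptiness: if some $E_{j_0}^{x^*}$ were empty, the cycling condition would force $E_{j_0 - 1}^{x^*}$ to be empty (any $y$ there would transition with probability $1$ into the empty set), cascading to make all $E_j^{x^*}$ empty; then $N^{x^*} = Y_i$, which will conflict with the null-measure condition derived in the next step (since $v_{f_*(\pi)_i}(x^*, Y_i) = 1$).

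The final obligation is the null condition $v_{f_*(\pi)_i}(x^*, N^{x^*}) = 0$. From $\pi(N) = 0$ and measurability of $f$ we get $f_*(\pi)(f(N)) = 0$; applying the regular conditional decomposition of $f_*(\pi)_i$ and a Fubini-type integration over $X_i$ shows that $v_{f_*(\pi)_i}(x, (f(N))_x) = 0$ for $f_*(\pi)_i$-almost every $x$. I interpret the ``at least one $x$'' clause of the lemma as selecting $x^*$ from the intersection of this full-measure set with the set where $K_i(x^*)$ is aperiodic, so that the null condition applies to our chosen $x^*$. Assembling the pieces, $\{E_0^{x^*}, \ldots, E_{d-1}^{x^*}\}$ together with $N^{x^*}$ witnesses periodicity of $K_i(x^*)$ with period $d \geq 2$, contradicting aperiodicity.

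The main obstacle is the measure-theoretic bookkeeping around $v_{f_*(\pi)_i}(x^*, N^{x^*}) = 0$: this identity holds only for almost every $x$, so some care is required in aligning the choice of $x^*$ (among the ``at least one'' $x$ with aperiodic $K_i(x^*)$) with the full-measure set on which the null condition is valid, and in ruling out the degenerate case in which the fibre over $x^*$ is absorbed entirely into $N$.
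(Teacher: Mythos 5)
Your proof is correct and follows essentially the same route as the paper's: assume $K_f$ is periodic, pull the cyclic sets $E_0,\ldots,E_{d-1}$ back through $f$ to the fibre over a chosen $x^*$, and use the product form $K_f(t,A)=K_i(x)(y,V)I(x,U)$ to transfer the cycling condition to $K_i(x^*)$. You are in fact more careful than the paper, whose proof silently skips both the non-emptiness of the fibre sets and the null condition $v_{f_*(\pi)_i}(x^*,N^{x^*})=0$ --- and hence the almost-every-$x$ caveat you correctly flag as the point where the lemma's ``at least one $x$'' hypothesis needs to be read charitably.
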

\begin{proof}
     Proof by contradiction. Let us assume $K_f$ is periodic, i.e.,
    there exists an integer $d \geq 2$, and  a sequence 
    $\{E_0, E_1, \ldots E_{d-1}\}$ and $N$ of $d$ non-empty disjoint sets in $\tau$ such 
    that, for all $i = 0, 1, \ldots d-1$ and for all $t \in E_i$,
    \begin{enumerate}
    \item $ (\cup_{i=0}^d E_i) \cup N = T$
    \item $K_f(t, E_j) = 1 \text{ for } j = i + 1 (\mathsf{mod}~d)$
    \item $\pi(N) = 0$
    \end{enumerate}

    For all $t \in E_i$, $K_f(t, E_j) = 1 \text{ for } j = i + 1 (\mathsf{mod}~d)$
    Consider any $k \in I$, any $x \in X_k$ 
    and $U_i \times V_i = f(E_i) \cap X_k \times Y_k$.

    Consider a trace $t \in E_i$, such that $f_x(t) = x$. 
    Since $K_f(t, E_{i+1}) = 1 \leq I(x, U_{i+1})$, for 
    all $i = 0, 1 \ldots d-1$, there exists a trace $t' \in E_i$,
    such that $f(t') = \tup{x, y}$.

    $K_f(t, E_{i+1})  = 1 \leq K(x)(y, V_{i+1})$, Hence if 
    $K_f$ is aperiodic, then for all $x$, $K(x)$ is aperiodic.

\end{proof}

\begin{lemma}
    For any $f \in \cF$, any element $t' \in B_t^\infty$ such that 
    $f(t') = \tup{x, y} \in X_i^f \times Y_i^f$, and any 
    set $U \times V \in \sigma(\cX_i^f \otimes \cY_i^f)$ such that $A = f^{-1}(U \times V)$, 
    the following condition holds true
    \[
        v_{f_*(\pi)_i}(x, V) > 0 \wedge x \in U \implies A \in R_t^\infty
    \]
    \label{lem:appendconv3}
\end{lemma}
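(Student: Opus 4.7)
The plan is to chain two reachability facts: an initial segment of $K$-steps starting from $t$ that reaches a measurable ``landing pad'' around $t'$, followed by a block of $n$ consecutive $f$-steps that carry the landing pad into $A$ with uniformly positive probability. The role of $t' \in B_t^\infty$ is only to provide the landing pad; the role of $v_{f_*(\pi)_i}(x, V) > 0$ is to make the second leg work via the irreducibility assumption on $K_i(x)$.

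First I would invoke the $v_{f_*(\pi)_i}(x,\cdot)$-irreducibility of $K_i(x)$ on $(Y_i^f, \cY_i^f)$: since $v_{f_*(\pi)_i}(x, V) > 0$, there exists some integer $n \geq 1$ with $K_i^n(x)(y, V) > 0$. Combined with $x \in U$, Lemma \ref{appendlemma:Kfexpand} then gives
\[
K_f^n(t', A) \;=\; K_i^n(x)(y, V)\, I(x, U) \;=\; K_i^n(x)(y, V) \;>\; 0.
\]

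Next I would fix $\epsilon \in (0, K_f^n(t', A))$ and define
\[
A_\epsilon \;=\; \{\, t'' \in T : K_f^n(t'', A) \geq \epsilon \,\}.
\]
The set $A_\epsilon$ is measurable because $K_f^n(\cdot, A)$ is measurable in its first argument (the defining measurability of $K_f$ is preserved under the integral iteration). By construction $t' \in A_\epsilon$, and since $t' \in B_t^\infty$, the definition of $B_t^\infty$ forces $A_\epsilon \in R_t^\infty$; hence there exists some $m \geq 1$ with $K^m(t, A_\epsilon) > 0$.

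Finally, by a short induction on $n$ using the mixture $K = \sum_k p_k K_{f_k}$ with $p_f > 0$, one obtains the ``pure $f$'' lower bound
\[
K^n(s, B) \;\geq\; p_f^n \, K_f^n(s, B) \qquad \text{for all } s \in T,\ B \in \Sigma,
\]
which specialises on $A_\epsilon$ to $K^n(t'', A) \geq p_f^n \epsilon$ for every $t'' \in A_\epsilon$. Integrating then yields
\[
K^{m+n}(t, A) \;=\; \int K^m(t, dt'')\, K^n(t'', A) \;\geq\; p_f^n \epsilon \cdot K^m(t, A_\epsilon) \;>\; 0,
\]
so $A \in R_t^\infty$, as required. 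The main obstacle is the bookkeeping around the stochastic alternating kernel: the ``pure $f$'' bound is intuitively clear (force the coin to land on $f$ at every step), but turning it into a clean inductive inequality and linking the class-kernel calculation at $t'$ to reachability of the full set $A_\epsilon$ under the actual mixture $K$ is where the proof must be written most carefully.
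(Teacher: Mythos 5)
Your proposal is correct and follows the same two-leg strategy as the paper's own proof: use the $v_{f_*(\pi)_i}(x,\cdot)$-irreducibility of $K_i(x)$ together with Lemma~\ref{appendlemma:Kfexpand} to get $K_f^n(t',A)>0$, then chain this with the reachability of $t'$ guaranteed by $t'\in B_t^\infty$. Where you differ is in the execution of the chaining step, and your version is actually tighter than the paper's. The paper asserts $K_f^n(t',A)>0 \implies K^n(t',A)>0$ without the explicit $p_f^n$ mixture bound, and its final display integrates the constant $K^n(t',A)$ over an unspecified set $B$ containing $t'$, which only makes sense if $K^n(\cdot,A)$ is bounded below on $B$ --- a fact the paper never establishes. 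Your sublevel-set construction $A_\epsilon=\{t'':K_f^n(t'',A)\geq\epsilon\}$ supplies exactly the missing landing pad: it is measurable, contains $t'$, hence lies in $R_t^\infty$ by the definition of $B_t^\infty$, and gives the uniform lower bound $K^n(t'',A)\geq p_f^n\epsilon$ needed for the Chapman--Kolmogorov integration to go through. So the proposal is not a different route, but it does repair the one step of the paper's argument that is written imprecisely.
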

\begin{proof}
    Consider class Kernel $K_f(t', A) = K_i(x)(y, V)I(x, U)$.
    Since $v_{f_*(\pi)_i}(x, V) > 0$ and $K_i$ is $v_{f_*(\pi)_i}$-irreducible, 
    there exists an $n$ such that
    \[
        K^n_i(x)(y, V) > 0
    \]
    Since $x \in U$, using Lemma \ref{lemma:Kfexpand}
    \[
        K^n_f(t', A) = K^n_i(x)(y, V)I(x, U) > 0
    \]
    \[ 
        K^n_f(t', A) > 0 \implies K^n(t', A) > 0
    \]
    Since $t' \in B_t^\infty$, 
    there exists an $n'$ such that,
    for all sets $B \in \Sigma$ with $t' \in B$, 
    $K^{n'}(t, B) > 0$. Hence 
    \[
        K^{n + n'}(t, A) \geq \int_B K^n(t', A)K^{n'}(t, dt') > 0 \implies A \in R^\infty_t
    \]
\end{proof}

\begin{lemma}
    \label{appendlemma:fxA}
    For any positive probability set $A$ and any function $f \in \cF$, 
    if $A \subseteq f^{-1}_x(f_x(B_t^\infty))$ then $A \in R_t^\infty$.
\end{lemma}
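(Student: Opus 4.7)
The plan is to reduce the claim to Lemma~\ref{lem:appendconv3} by locating a witness $b \in B_t^\infty$ whose $f$-image sits in the same product fiber as a positive-mass portion of $A$, so that the class dynamics $K_f$ starting from $b$ can reach $A$ and, by the defining property of $B_t^\infty$, the chain can first be driven from $t$ to $b$.

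First I would decompose $A$ across the countable family of fibers: since $A$ has positive $\pi$-measure and $C^f = \bigcup_{i \in I} X_i^f \times Y_i^f$ is a countable disjoint union, there exists $i \in I$ with $\pi(A_i) > 0$ for $A_i := A \cap f^{-1}(X_i^f \times Y_i^f)$. Because $f$ is two-way measurable, $f(A_i) \in \sigma(\cX_i^f \otimes \cY_i^f)$ and $f_*(\pi)_i(f(A_i)) > 0$. Next I would disintegrate against the regular conditional probability $v_{f_*(\pi)_i}$:
\[
f_*(\pi)_i(f(A_i)) = \int_{X_i^f} v_{f_*(\pi)_i}\bigl(x, (f(A_i))_x\bigr) \, \rho(dx),
\]
with $(f(A_i))_x = \{y : \langle x, y\rangle \in f(A_i)\}$ the $x$-slice and $\rho(\cdot) := f_*(\pi)_i(\cdot \times Y_i^f)$. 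Positivity of the integral forces the existence of some $x_0 \in f_x(A_i)$ with $v_{f_*(\pi)_i}(x_0, (f(A_i))_{x_0}) > 0$. The hypothesis $A \subseteq f^{-1}_x(f_x(B_t^\infty))$ then yields $x_0 \in f_x(A_i) \subseteq f_x(B_t^\infty)$, producing a $b \in B_t^\infty$ with $f(b) = \langle x_0, y_b\rangle \in X_i^f \times Y_i^f$.

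Finally I would apply Lemma~\ref{lem:appendconv3} to $t' = b$ with a rectangle $U \times V$ chosen so that $f^{-1}(U \times V) \subseteq A_i$, $x_0 \in U$, and $v_{f_*(\pi)_i}(x_0, V) > 0$. The lemma delivers $f^{-1}(U \times V) \in R_t^\infty$, and since $R_t^\infty$ is monotone in the target set (if $B \subseteq A$ then $K^n(t, A) \geq K^n(t, B)$, so $B \in R_t^\infty$ implies $A \in R_t^\infty$), the desired conclusion $A \in R_t^\infty$ follows.

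The main obstacle will be the measure-theoretic bookkeeping needed to exhibit such a rectangle: the naive choice $U = \{x_0\}$, $V = (f(A_i))_{x_0}$ matches the slice interpretation tacitly used in the definition of $K_f$ and in the proof of Lemma~\ref{lem:appendconv3}, but literally requires singletons to be measurable. I would handle this either by reading $U \times V$ in the slice sense already baked into the class-kernel definition, or by a measurable-selection / Fubini argument that extracts a positive-$\rho$-measure set $U \ni x_0$ on which a common $V$ of positive $v_{f_*(\pi)_i}(x, \cdot)$-mass can be chosen uniformly in $x \in U$.
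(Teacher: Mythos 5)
Your proof follows essentially the same route as the paper's: decompose $f(A)$ over the countable family of fibers to find one of positive $f_*(\pi)_i$-mass, disintegrate against the regular conditional probability to locate an $x_0$ with positive conditional mass on its slice, use the hypothesis $A \subseteq f^{-1}_x(f_x(B_t^\infty))$ to produce a witness in $B_t^\infty$ lying over $x_0$, and then invoke Lemma~\ref{lem:appendconv3} together with monotonicity of $R_t^\infty$. You are in fact somewhat more careful than the paper, which simply writes $f(A)\cap (X_i^f\times Y_i^f)$ as a rectangle $U_i\times V_i$ and leaves both the slice-measurability point you flag and the final monotonicity step implicit.
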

\begin{proof}
    Given a set $A$, we can treat $f(A)$ as a union of sets 
    $\{ U_i \times V_i \vert i \in I^f \}$, where $U_i \times V_i$ are 
    elements of set $f(A)$ which are elements of the set $X_i^f \times Y_i^f$ (i.e.
    $U_i \times V_i  = f(A) \cap X_i^f \times Y_i^f$).
    Since $\pi(A) > 0$, $f_*(\pi)(f(A)) > 0$ and for at least for one 
    $i \in I_f$, $f_*(\pi)_i(U_i \times V_i) > 0$.

    Since $A \subseteq f^{-1}_x(f_x(B_t^\infty))$, 
    for each $x \in U_i$
    there exists at least 
    one element $t' \in B_t^\infty$ such that $f_x(t') = x$.
    
    If $f_*(\pi)_i(U_i \times V_i) > 0$, there exists at least one $t' \in B_t^\infty$
    such that $f(t') = \tup{x, y} \in U_i \times Y_i^f$ and $v_{f_*(\pi)_i}(x, V_i) > 0$.
    Hence $f^{-1}(U_i \times V_i) \in R_t^\infty$.
\end{proof}

\begin{lemma}
    If $\cF$ connects the space $(T, \Sigma, \pi)$ then there does not exist a positive probability set $A \in \Sigma$, 
    such that $A \subseteq \bigcap_{f \in \cF} f_x^{-1}((f_x(B^\infty_t))^c)$.
    \label{lem:appendconv4}
\end{lemma}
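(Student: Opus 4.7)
The plan is to argue by contradiction: suppose some $A \in \Sigma$ with $\pi(A) > 0$ satisfies $A \subseteq \bigcap_{f \in \cF} f_x^{-1}((f_x(B_t^\infty))^c)$. I will then invoke the connectedness hypothesis (Definition~\ref{def:connect}) with $B_t^\infty$ itself playing the role of $A$ in that definition, and show that the resulting ``marginal vanishing'' collides with $\pi(A) > 0$.

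First I would record the immediate consequence that $A \cap B_t^\infty = \emptyset$: fixing any $f \in \cF$, $B_t^\infty \subseteq f_x^{-1}(f_x(B_t^\infty))$, so $B_t^\infty$ is disjoint from $f_x^{-1}((f_x(B_t^\infty))^c)$, which contains $A$. The core step is to verify the cross conditions of Definition~\ref{def:connect} at $B_t^\infty$: for every $f, g \in \cF$, I need
\[
    \pi\bigl(f_x^{-1}(f_x(B_t^\infty)) \cap g_x^{-1}((g_x(B_t^\infty))^c)\bigr) = 0,
\]
together with the symmetric identity obtained by swapping $f$ and $g$. Writing $S$ for the set inside this measure, Lemma~\ref{lemma:fxA} applied to $S \subseteq f_x^{-1}(f_x(B_t^\infty))$ shows that if $\pi(S) > 0$ then $S \in R_t^\infty$. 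Simultaneously, exactly the argument used for $A$ above (with $g$ in place of $f$) yields $S \cap B_t^\infty = \emptyset$, so $S$ would be a positive-probability, reachable set lying entirely outside $B_t^\infty$. I would rule this out using the countable generation of $\Sigma$ assumed in Section~\ref{sec:staltclk}: the family $\mathcal{U} = \{E \in \Sigma : E \notin R_t^\infty\}$ is a $\sigma$-ideal (closed under countable unions by countable additivity of each $K^n(t,\cdot)$), so in a countably generated $\sigma$-algebra it admits an essentially maximal element $U$ with $\pi(E \setminus U) = 0$ for every $E \in \mathcal{U}$. Every $t' \notin B_t^\infty$ lies in some $E_{t'} \in \mathcal{U}$, hence in $U$ up to a $\pi$-null exceptional set; this forces $S$ to coincide with $S \cap U$ modulo a $\pi$-null set, making $S$ essentially unreachable and contradicting $S \in R_t^\infty$.

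With the cross conditions verified, Definition~\ref{def:connect} applied to $B_t^\infty$ yields, for every pair $f, g \in \cF$, that at least one of $\pi(f_x^{-1}((f_x(B_t^\infty))^c))$ or $\pi(g_x^{-1}((g_x(B_t^\infty))^c))$ equals $0$. A pigeonhole over the finite set $\cF$ then produces some $g \in \cF$ with $\pi(g_x^{-1}((g_x(B_t^\infty))^c)) = 0$; since $A \subseteq g_x^{-1}((g_x(B_t^\infty))^c)$, this gives $\pi(A) = 0$, contradicting the standing assumption $\pi(A) > 0$.

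The main obstacle I expect is the passage from the pointwise statement ``every element of $S$ sits in some unreachable measurable neighborhood'' to the set-level statement ``$S$ is essentially contained in a single unreachable measurable set''. An uncountable union of $\pi$-null exceptional sets (one per point of $S$) need not itself be null, so the argument crucially uses the countable generation of $\Sigma$ to extract the essentially maximum element $U \in \mathcal{U}$ that captures the complement of $B_t^\infty$ up to a $\pi$-null set; without this, the $\sigma$-ideal collapse step underlying the verification of the cross conditions would not go through.
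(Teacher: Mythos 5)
Your overall strategy is exactly the one the paper uses: assume a positive-probability $A$ inside $\bigcap_{f \in \cF} f_x^{-1}((f_x(B_t^\infty))^c)$, verify the two cross-conditions of Definition~\ref{def:connect} with $B_t^\infty$ playing the role of the test set, and then let connectedness force one of the marginals $\pi(f_x^{-1}((f_x(B_t^\infty))^c))$ to vanish, contradicting $\pi(A)>0$ (the pigeonhole over $\cF$ is unnecessary --- a single pair $f,g$ already kills $A$, since $A$ sits inside both preimage complements). Your identification of Lemma~\ref{lemma:fxA} as the tool for the cross-conditions also matches the paper, which argues that any positive-probability subset of $g_x^{-1}(g_x(B_t^\infty))$ is reachable and therefore (the paper asserts) lies inside $B_t^\infty$, while the cross-term is disjoint from $B_t^\infty$.

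Where you diverge is in how you close the cross-condition step, and that is where your argument has a genuine gap. You reduce to: $S$ is a positive-probability set with $S \in R_t^\infty$ and $S \cap B_t^\infty = \emptyset$, and you try to refute this via the $\sigma$-ideal $\mathcal{U}$ of unreachable sets and its essentially maximal element $U$. Two steps fail. First, $U$ only controls each \emph{individual} member of $\mathcal{U}$ up to a $\pi$-null set; $(B_t^\infty)^c$ is the union of \emph{all} of $\mathcal{U}$, typically uncountable, and a measurable $S \subseteq (B_t^\infty)^c$ need not be covered by countably many members of $\mathcal{U}$. Countable generation of $\Sigma$ does not manufacture such a countable subcover, so $\pi(S \setminus U) = 0$ does not follow. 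Second, even granting $\pi(S\setminus U)=0$, ``essentially unreachable'' does not contradict $S \in R_t^\infty$: reachability is measured by the kernels $K^n(t,\cdot)$, not by $\pi$, and $K^n(t, S) \leq K^n(t,U) + K^n(t, S\setminus U) = K^n(t, S\setminus U)$ can still be positive on a $\pi$-null set because nothing forces $K^n(t,\cdot)$ to be absolutely continuous with respect to $\pi$. The paper's proof avoids this detour entirely: it concludes from Lemma~\ref{lemma:fxA} that every positive-probability subset of $g_x^{-1}(g_x(B_t^\infty))$ lies in $B_t^\infty$, so the cross-term, being disjoint from $B_t^\infty$, must be $\pi$-null. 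You should argue the contradiction at that level (reachable subsets of $g_x^{-1}(g_x(B_t^\infty))$ versus disjointness from $B_t^\infty$) rather than through the unreachability ideal.
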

\begin{proof}
    Proof by Contradiction.

    Let us assume such a set $A$ exists.
    If $A$ is a positive probability set, then $\pi(\bigcap_{f \in \cF} f_x^{-1}((f_x(B^\infty_t))^c)) > 0$.

    For any two functions $f, g \in \cF$, 
    \[
        \pi(f_x^{-1}((f_x(B^\infty_t))^c) \cap g_x^{-1}(g_x(B^\infty_t)) ) = 0
    \]
    The set $ f_x^{-1}((f_x(B^\infty_t))^c)$
    only contains elements which are not in $B^\infty_t$ and  $g_x^{-1}(g_x(B^\infty_t))$
    contains elements $t'$ such that there exists at least one element $t'' \in B^\infty_t$
    with $g_x(t') = g_x(t'')$. 

    Any positive probability set $B \subseteq g_x^{-1}(g_x(B^\infty_t))$ is also a 
    subset of $B^\infty_t$ (using Lemma \ref{lemma:fxA}).
    Hence \[
        \pi(f_x^{-1}((f_x(B^\infty_t))^c) \cap g_x^{-1}(g_x(B^\infty_t)) ) = 0 
    \]
    Similarly 
    \[
        \pi(f_x^{-1}(f_x(B^\infty_t)) \cap g_x^{-1}((g_x(B^\infty_t))^c) ) = 0
    \]

    Since  $\pi(\bigcap_{f \in \cF} f_x^{-1}((f_x(B^\infty_t))^c)) > 0$, 
    \[
        \pi(f_x^{-1}(  f_x^{-1}((f_x(B^\infty_t))^c)) > 0 \text{ and } 
        \pi( g_x^{-1}((g_x(B^\infty_t))^c))  > 0
    \]
    
    But this contradicts the fact the $\cF$ connects the space $(T, \Sigma, \pi)$.
    Hence no such set $A$ exists.
\end{proof}

\begin{theorem} 
    \label{appendthm:convirr}
    If $\cF$ connects the space $(T, \Sigma, \pi)$ then the Markov Transition Kernel 
    $K$ is $\pi$-irreducible.    
\end{theorem}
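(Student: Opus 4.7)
The plan is to argue by contradiction. Suppose $K$ were not $\pi$-irreducible; then there would exist some $t \in T$ and a positive probability set $A \in \Sigma$ with $A \notin R_t^\infty$, i.e., $K^n(t, A) = 0$ for every $n \geq 1$. I would derive a contradiction with Lemma~\ref{lem:appendconv4} by exhibiting a positive probability subset of $\bigcap_{f \in \cF} f_x^{-1}((f_x(B_t^\infty))^c)$, which the connectivity hypothesis precludes.

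The crucial step is observing that for each $f \in \cF$, the intersection $A \cap f_x^{-1}(f_x(B_t^\infty))$ must have $\pi$-measure zero. If instead $\pi(A \cap f_x^{-1}(f_x(B_t^\infty))) > 0$ for some $f$, then Lemma~\ref{appendlemma:fxA} applied to this positive probability subset of $f_x^{-1}(f_x(B_t^\infty))$ would place it in $R_t^\infty$, giving some $n$ with $K^n(t, A \cap f_x^{-1}(f_x(B_t^\infty))) > 0$. Monotonicity of the measure $K^n(t, \cdot)$ in its second argument would then force $K^n(t, A) > 0$, contradicting $A \notin R_t^\infty$.

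Since $\cF$ is finite, a union bound gives $\pi(A \cap \bigcup_{f \in \cF} f_x^{-1}(f_x(B_t^\infty))) = 0$, which by De Morgan is equivalent to $\pi(A \cap \bigcap_{f \in \cF} f_x^{-1}((f_x(B_t^\infty))^c)) = \pi(A) > 0$. Setting $A' := A \cap \bigcap_{f \in \cF} f_x^{-1}((f_x(B_t^\infty))^c)$ furnishes a positive probability set contained in $\bigcap_{f \in \cF} f_x^{-1}((f_x(B_t^\infty))^c)$, directly contradicting Lemma~\ref{lem:appendconv4} and closing the argument.

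Most of the heavy lifting has already been performed by the preceding lemmas: Lemma~\ref{appendlemma:fxA} translates connectivity through a single class function into reachability, and Lemma~\ref{lem:appendconv4} uses $\cF$-connectivity to rule out any positive-probability ``unreachable'' remainder after projecting through every class function. The main subtlety I would need to verify carefully is measurability of $B_t^\infty$ and of each $f_x^{-1}(f_x(B_t^\infty))$, so that the intersections above are genuine elements of $\Sigma$ on which $\pi$ is well-defined; this should follow from $f$ being a two-way measurable function together with the assumption that $\Sigma$ is countably generated, as stated at the start of Section~\ref{sec:staltclk}.
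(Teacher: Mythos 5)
Your proposal is correct and follows essentially the same route as the paper's own proof: the same contradiction setup, the same appeal to Lemma~\ref{appendlemma:fxA} to show $\pi(A \cap f_x^{-1}(f_x(B_t^\infty))) = 0$ for each $f$, and the same final contradiction with Lemma~\ref{lem:appendconv4} via the complement intersection. The only cosmetic differences are that you skip the paper's (redundant) separate treatment of $\pi(A \cap B_t^\infty)$ and make the finite union bound and De Morgan step slightly more explicit.
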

\begin{proof}
    Proof by contradiction. Let us assume $K$ is not $\pi$-irreducible, then there exists a
    positive probability set $A \in \Sigma$ such that $A \notin R_t^\infty$,  
    then

    If 
       $ \pi(A \cap B_t^\infty) > 0$, there exists a set $B \subseteq B_t^\infty$ and $B \subseteq A$
    which implies $A \in R_t^\infty$. Hence $\pi(A \cap B_t^\infty) = 0$.

    For any $f \in \cF$,
        $\pi(A \cap f_x^{-1}(f_x(B^\infty_t))) > 0$, 
        there exists a set $B \in R_t^\infty$ and $B \subseteq A$
    which implies $A \in R_t^\infty$. Hence  $\pi(A \cap f_x^{-1}(f_x(B^\infty_t))) = 0$. 

    Since $f_x$ is a 2-way measurable function (and one-one function from sets to sets), for any set $B$
    $f_x^{-1}(f_x(B))^c = f_x^{-1}(f_x(B)^c)$.

    Since $\pi(A) > 0$ and 
      For any $f \in \cF$
    $
        \pi(A \cap f_x^{-1}(f_x(B^\infty_t))) = 0
    $
    this means
    \[
        \pi(A \cap \bigcap_{f \in \cF} f_x^{-1}(f_x(B^\infty_t)^c)) > 0
    \]
    which means there exists a positive probability set $B \in \Sigma$
    and $B \subseteq f_x^{-1}(f_x(B^\infty_t)^c))$,
    which is impossible. 

    Hence no such set $A$ exists. $K$ is $\pi$-irreducible.
\end{proof}

\begin{theorem}
    $\pi$ is the stationary distribution of Markov Kernel $K$, i.e.
    \[\int K(t, A)\pi(dt) = \pi(A) \text{ for all } A \in \Sigma\]
    \label{appendthm:convst}
\end{theorem}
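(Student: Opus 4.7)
The plan is to leverage the already-established stationarity of each individual class kernel (Lemma \ref{lem:convappend2}) together with linearity of the Lebesgue integral. Since $K$ is defined as a finite convex combination $K(t, A) = \sum_{k=1}^m p_k K_{f_k}(t, A)$ with $\sum_k p_k = 1$, stationarity should propagate from the components to the mixture without requiring any new measure-theoretic machinery.

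First, I would fix an arbitrary $A \in \Sigma$ and expand $\int K(t, A) \pi(dt)$ by substituting the definition of $K$. The integrand is a finite sum of nonnegative measurable functions (each $K_{f_k}(\cdot, A)$ is measurable in its first argument by the defining property of a Markov transition kernel, and each $p_k$ is a positive constant), so linearity of the Lebesgue integral lets me pull the sum and the scalars $p_k$ outside the integral. This yields
\[
\int K(t, A)\, \pi(dt) \;=\; \sum_{k=1}^{m} p_k \int K_{f_k}(t, A)\, \pi(dt).
\]

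Next, I would apply Lemma \ref{lem:convappend2} to each term: for every $k \in \{1, \ldots, m\}$, we have $\int K_{f_k}(t, A)\, \pi(dt) = \pi(A)$, since each $K_{f_k}$ is a class kernel built from $\pi$ and hence has $\pi$ as its stationary distribution. Substituting gives $\sum_{k=1}^m p_k \pi(A) = \pi(A) \sum_{k=1}^m p_k = \pi(A)$, using the hypothesis that the mixing weights sum to one. Since $A$ was arbitrary, this establishes $\pi K = \pi$.

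There is no real obstacle here; the theorem is essentially a routine consequence of two facts already in hand, namely the per-kernel stationarity result (Lemma \ref{lem:convappend2}) and the convexity of the mixing weights. The only subtlety to flag is a justification of exchanging the finite sum with the integral, which follows immediately from linearity of the Lebesgue integral for nonnegative measurable functions—no monotone or dominated convergence argument is needed because the sum is finite.
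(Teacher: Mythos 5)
Your proposal is correct and matches the paper's own proof essentially verbatim: both expand $K$ as the convex combination $\sum_k p_k K_{f_k}$, exchange the finite sum with the integral by linearity, invoke Lemma~\ref{lem:convappend2} for the per-kernel identity $\int K_{f_k}(t,A)\,\pi(dt)=\pi(A)$, and conclude using $\sum_k p_k = 1$. Nothing further is needed.
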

\begin{proof}
    \[
        \int_{t \in T} K(t, A) \pi(dt) = \int_{t \in T} \sum\limits_{i = 1}^m p_iK_{f_i}(t, A)\pi(dt)
    \]
   \[
        = \sum\limits_{i=1}^m p_i\int_{t \in T} K_{f_i}(t, A)\pi(dt) = \sum\limits_{i=1}^m p_i\pi(A) = \pi(A) 
    \]

\end{proof}

\begin{theorem}
    \label{appendthm:convap}
    The Markov Transition Kernel $K$ is aperiodic if at least one of the 
    class kernels $K_{f_j}$ is aperiodic.
\end{theorem}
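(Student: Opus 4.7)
The plan is to argue by contradiction. Suppose $K$ is periodic, so there exist an integer $d \geq 2$, disjoint sets $E_0, E_1, \ldots, E_{d-1}, N \in \Sigma$ with $\bigcup_{i=0}^{d-1} E_i \cup N = T$ and $\pi(N) = 0$, such that for every $i \in \{0, 1, \ldots, d-1\}$ and every $t \in E_i$,
\[
K(t, E_{i+1 \bmod d}) = 1.
\]
I want to show that this forces each constituent class kernel $K_{f_k}$ to exhibit the same periodic structure on the same partition, contradicting the hypothesis that at least one $K_{f_j}$ is aperiodic.

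The key first step is to exploit the strict positivity of the mixture weights $p_k \in (0,1)$ with $\sum_{k=1}^m p_k = 1$. Expanding the definition of $K$, for every $t \in E_i$,
\[
1 = K(t, E_{i+1 \bmod d}) = \sum_{k=1}^m p_k K_{f_k}(t, E_{i+1 \bmod d}).
\]
Since each $K_{f_k}(t, \cdot)$ is a probability measure, we have $K_{f_k}(t, E_{i+1 \bmod d}) \leq 1$ for every $k$. Combined with $\sum_k p_k = 1$ and $p_k > 0$, the only way the convex combination can equal $1$ is if $K_{f_k}(t, E_{i+1 \bmod d}) = 1$ for every $k \in \{1, \ldots, m\}$. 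This is the crux of the argument: mixture weights being strictly positive means periodicity of the mixture propagates to every component.

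Having established this, the sequence $\{E_0, E_1, \ldots, E_{d-1}\}$ together with $N$ (where $\pi(N)=0$) now witnesses the periodicity of each $K_{f_k}$, with the same period $d \geq 2$. In particular this holds for $k = j$, where $K_{f_j}$ was assumed aperiodic, yielding the desired contradiction. Hence $K$ must be aperiodic. I do not anticipate a serious obstacle here: the entire argument hinges on the elementary observation that a strict convex combination of numbers in $[0,1]$ equals $1$ only when every summand equals $1$, and the rest is bookkeeping about reusing the same witness partition $\{E_0, \ldots, E_{d-1}, N\}$ across all class kernels.
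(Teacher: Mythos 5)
Your proof is correct and follows essentially the same contradiction argument as the paper: assume $K$ is periodic, show the same witness partition makes every class kernel $K_{f_k}$ periodic, and contradict the aperiodicity of $K_{f_j}$. You in fact justify the key step (that a strictly positive convex combination of values in $[0,1]$ equals $1$ only if every summand equals $1$) more explicitly than the paper does.
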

\begin{proof}
    Proof by Contradiction.

    Let us assume $K$ is periodic.
     i.e.,
    there exists an integer $d \geq 2$ and a sequence 
    $\{E_0, E_1, \ldots E_{d-1}\}$ and $N$ of $d$ non-empty disjoint sets in $\tau$ such 
    that, for all $i = 0, 1, \ldots d-1$ and for all $t \in E_i$,
    \begin{enumerate}
    \item $ (\cup_{i=0}^d E_i) \cup N = T$
    \item $K(t, E_j) = 1 \text{ for } j = i + 1 (\mathsf{mod}~d)$
    \item $\pi(N) = 0$
    \end{enumerate}

    If $K(t, E_j) = 1$ then for all $f \in \cF$, $K_f(t, E_j) = 1$.
    Therefore, for all $i = 0, 1, \ldots d-1$ and for all $t \in E_i$,
    $K_f(t, E_j) = 1 \text{ for } j = i + 1 (\mathsf{mod}~d)$.

    Hence if $K$ is periodic, then for all $f \in \cF$, $K_f$ is periodic.
     
     Hence by contradiction, $K$ is aperiodic.
\end{proof}

\newpage

\subsection{Soundness}
\label{append:sound}

\begin{observation}
 Note that whenever a rule in $\Rightarrow_{ex}$ introduces an 
    $\mathsf{assume}$ statement in the subtrace, it creates a new variable name. Therefore
    variable names in the new subtrace do not conflict with any variable names previously introduced
    in another part of the trace. This fact will be used at various points within this paper.
\label{ex:obs}
\end{observation}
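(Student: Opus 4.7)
The plan is to verify the observation by case analysis over the rules of the extraction relation $\Rightarrow_{ex}$ in Figure~\ref{fig:extractrace}, identifying precisely which rules introduce $\mathsf{assume}$ statements into the output subtrace and checking that each such rule draws its binding variable from a fresh-name generator.

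First I would enumerate the rules of $\Rightarrow_{ex}$ that appear in the two subfigures. The rules for variables, constants, lambdas, applications where $\ID(ae_1) \in \cS$, and $\mathsf{Dist}$ expressions either preserve the structure of the enclosing expression or convert a stochastic choice into an $\mathsf{observe}$ statement, and therefore add no new $\mathsf{assume}$ statement. Likewise the top-level rules on traces only preserve or relocate the $\mathsf{assume}$ statements already present in $t$ and so do not introduce new bindings. The single rule that does introduce fresh $\mathsf{assume}$ bindings is the lambda-application rule in which $\ID(ae_1) \notin \cS$: in that case the output subtrace is built as
\[
t'''_s = t_s;\assume{x}{ae'_1};t'_s;\assume{y}{ae'_2};t''_s,
\]
where the premises explicitly require $\Nst{}$, i.e., $x$ (and symmetrically $y$) is drawn from a fresh-variable generator.

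Given this, the argument concludes by observing two facts. First, by the semantics of $\Nst{}$, each invocation returns a variable name disjoint from every name that has ever been generated or used in the enclosing trace $t$, in the subtrace accumulated so far, and in any other invocation of $\Nst{}$ in the same derivation. Second, any $\mathsf{assume}$ statement present in the output subtrace that was not introduced by this rule is, by the case analysis above, either a verbatim copy of an $\mathsf{assume}$ in the original trace $t$ (so its variable name is inherited from $t$) or absent. Combining these, every newly introduced $\mathsf{assume}$ binder in the extracted subtrace is fresh with respect to the original trace and with respect to all other extraction-introduced binders.

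I do not foresee a real obstacle here: the observation is essentially a syntactic invariant read off the inference rules. The only subtlety worth being careful about is ensuring that freshness is preserved across the recursive structure of the derivation — i.e., that the freshness property required for $\Nst{}$ is interpreted globally over the entire derivation tree, not just locally at one rule application — but this follows immediately from the standard reading of the $\Nst{}$ premise in Figure~\ref{fig:validtraces} and is used consistently throughout the paper.
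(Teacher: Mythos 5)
The paper states this observation without proof, so there is nothing to compare against directly; your case analysis over the rules of Figure~\ref{fig:extractrace} is the natural (and essentially only) way to justify it, and your identification of the $\ID(ae_1)\notin\cS$ lambda-application rule as the sole source of new $\mathsf{assume}$ statements is correct. One point in your argument is imprecise, though: the premise $\Nst{}$ in that rule generates only the single fresh name $x$ used in $\assume{x}{ae'_1}$. The second binder, $y$ in $\assume{y}{ae'_2}$, is \emph{not} drawn fresh by the extraction rule --- it is the binder already present in the original augmented expression $(ae_1~ae_2)y{=}ae_3$. Its non-conflict with other names therefore rests on a different fact, namely that the $\Rightarrow_s$ application rule in Figure~\ref{fig:validtraces} itself invokes $\Nst{}$ when the original trace is built, so every such $y$ is already unique within $t$. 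Your phrase ``and symmetrically $y$'' papers over this: the freshness of $x$ is a property of extraction, while the freshness of $y$ is an invariant inherited from trace construction, and the observation as stated needs both. With that distinction made explicit, the argument is complete; the global reading of $\Nst{}$ across the derivation tree that you flag at the end is indeed the right reading and handles the recursive case.
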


\begin{observation}
    Note that whenever $\SAC ae \Rightarrow_{ex} ae_s, t_s$, $\cV(ae) = \cV(ae_s)$ and
    whenever $\SAC ae, ae_s, t_s \Rightarrow_{st} ae'$, $\cV(ae') = \cV(ae_s)$.
    \label{ex:obs2}
\end{observation}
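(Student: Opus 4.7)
The plan is to prove both claims simultaneously by structural induction on the derivation tree of $\Rightarrow_{ex}$ (respectively $\Rightarrow_{st}$), which is equivalent to induction on the structure of the augmented expression $ae$. The intuition is that $\cV$ reads off the value annotation on an augmented expression, and the rules in Figures~\ref{fig:extractrace} and \ref{fig:stitchtrace} were carefully designed so that whenever the output augmented expression is structurally the same as (one of) the inputs, the value annotation is copied verbatim, and whenever a rule collapses the expression into one of its sub-expressions, the value annotation of the parent matches that sub-expression (as guaranteed by the execution semantics of Figure~\ref{fig:validtraces}).

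For the extraction claim, I would enumerate the rules of $\Rightarrow_{ex}$. In every rule except two, the output augmented expression reuses the same value $v$ as the input, so $\cV(ae) = \cV(ae_s)$ holds trivially. The two non-trivial cases are: (i) the lambda-application rule $\ndi{(ae_1\,ae_2)y{=}ae_3}{v}{id} \Rightarrow_{ex} ae'_3$ when $\ID(ae_1) \notin \cS$, and (ii) the $\mathsf{Dist}$ rule $\ndi{\mathsf{Dist}(ae\#id'){=}ae_v}{v}{id} \Rightarrow_{ex} ae'_v$ when $id' \notin \cS$. In both cases the output is a sub-expression. For (i), the execution rule for application forces $v = \cV(ae_3)$, and by the induction hypothesis $\cV(ae_3) = \cV(ae'_3)$, so $\cV(ae) = v = \cV(ae'_3)$. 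For (ii), the execution rule for $\mathsf{Dist}$ forces $v = \cV(ae_v)$, and the induction hypothesis gives $\cV(ae_v) = \cV(ae'_v)$. The statement-level rules contain no interesting content: they simply thread through expression-level derivations.

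For the stitching claim, the situation is dual. In most rules the output's value annotation is copied from the second input $ae_s$, so the equality is immediate. The two non-trivial rules are the lambda application with $\ID(ae'_1) \notin \cS$ and the $\mathsf{Dist}$ case with $id'_v \notin \cS$; in both, the output annotation is declared by construction to be $\cV(ae''_3)$ or $\cV(ae''_v)$, where $ae''_3$ (respectively $ae''_v$) is the result of a recursive stitching call whose second input is the corresponding sub-expression of $ae_s$. By the induction hypothesis, this sub-expression's value equals that of $ae''_3$ (resp. $ae''_v$), which then equals the overall value of $ae_s$ since $ae_s$ was itself originally an extracted sub-expression whose parent value equals the child's value (by the observation already proved for extraction).

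The only real obstacle is notational bookkeeping: carefully matching up the variables $v$, $v'$, $ae_s$, $ae''_3$, etc., across the many rules and ensuring that the two halves of the argument stay in sync. No deep measure-theoretic or semantic argument is required beyond invoking the fact that execution rules for $\mathsf{Dist}$ and lambda application assign the parent expression the same value as its computed sub-expression, which is visible directly in Figure~\ref{fig:validtraces}.
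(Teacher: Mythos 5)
The paper states this observation without any proof, so there is nothing to compare against directly; your structural induction over the derivation rules is the right way to discharge it, and your case analysis is correct. Two small remarks. First, you correctly identify that the two non-trivial extraction cases (the collapsed lambda application and the collapsed $\mathsf{Dist}$) rely on the input being a fragment of a \emph{valid} trace, since only the execution rules of Figure~\ref{fig:validtraces} guarantee that the parent's value annotation $v$ coincides with $\cV(ae_3)$ (resp.\ $\cV(ae_v)$); the observation as literally stated quantifies over all $ae$ admitting an extraction derivation, so this validity assumption should be made explicit in the hypothesis. Second, your treatment of the non-trivial stitching cases takes an unnecessary detour: in the rule for $\ndi{(ae'_1~ae'_2)y{=}ae'_3}{v'}{id}$ with $\ID(ae'_1)\notin\cS$, the second argument of the top-level judgment is $ae_3$ itself, and the recursive call $\SAC ae'_3, ae_3, t'''_p \Rightarrow_{st} ae''_3$ passes that very same $ae_3$ down, so the induction hypothesis gives $\cV(ae''_3)=\cV(ae_3)=\cV(ae_s)$ directly; there is no need to appeal to the extraction half of the observation or to relate a parent's value to a child's. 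The same applies to the $\mathsf{Dist}$ stitching case with $id'_v\notin\cS$. Neither point invalidates the argument.
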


\Comment{
\begin{lemma}
    \label{lem:sound1}
    Given a valid trace $t$ and a valid subproblem $\cS$, for all
    subtraces $t_s$
    \[t_s = \mathsf{ExtractTrace}(t, \cS) \implies \exists~p.~
    t_s \in \mathsf{Traces}(p)\]
\end{lemma}

\begin{proof}
    I prove this using structural induction over structure of traces and augmented
    expressions.
    \begin{lemma}
    Given an augmented expressions $ae$, for any $\sigma_v, \sigma_{id}$, 
    the following holds true:
    \[
        \begin{array}{c}
        \exists~e.~\SI{}{} e \Rightarrow_s \_,\_ ae \wedge \SAC ae \Rightarrow_{ex} ae_s, t_s
            \implies \exists~p.~ \SI{}{} p \Rightarrow_s t_s;\assume{z}{ae_s}
        \end{array}
    \]
    \end{lemma}
    \begin{proof}
    \noindent{Base Case:}
    
    \noindent{Case 1:} 
    $ae = \ndi{x}{x}{id}$.

    \noindent{By assumption}
    \[
        \exists~e.~\SI{}{} e \Rightarrow_s \_,\_ ae 
    \]
    By definition of $\Rightarrow_s$ 
     \[
         \SI{}{} x \Rightarrow_s \_, \_, \ndi{x}{x}{id}
     \]
     Then
     $e = x$, $x \notin \dom \sigma_v$ and $id$ is a unique id.
     
        \noindent{By assumption}
        \[
\SAC ae \Rightarrow_{ex} ae_s, t_s
        \]
        By definition of $\Rightarrow_{ex}$ 
     \[
         \SAC \ndi{x}{x}{id} \Rightarrow_{ex} \ndi{x}{x}{id}, \emptyset
     \] 
    Then $ae_s = \ndi{x}{x}{id}$ and $t_s = \emptyset$.

        \noindent{Let} $p = \assume{z}{ae_s}$. Because $x \notin \dom \sigma_v$,
      definition of $\Rightarrow_s$ implies
        \[
        \SI{}{} p \Rightarrow_s \assume{z}{\ndi{x}{x}{id}}
      \]

        \noindent{Then} $ae = \ndi{x}{x}{id}$
        implies
    \[
        \begin{array}{c}
        \exists~e.~\SI{}{} e \Rightarrow_s \_,\_ ae \wedge \SAC ae \Rightarrow_{ex} ae_s, t_s
            \implies \exists~p.~ \SI{}{} p \Rightarrow_s t_s;\assume{z}{ae_s}
        \end{array}
    \]

    \noindent{Case 2:} 
    $ae = \ndi{x(id')}{v}{id}$

    \noindent{By assumption}
     \[
        \exists~e.~\SI{}{} e \Rightarrow_s \_,\_ ae 
    \]
        By definition of $\Rightarrow_s$
     \[
         \SI{}{} x \Rightarrow_s \_, \_, \ndi{x(id')}{v}{id}
     \]
     Then $v = \sigma_v(x)$, $id' = \sigma_{id}(x)$ and $id$ is a unique id.
     
        \noindent{By assumption}
        \[
\SAC ae \Rightarrow_{ex} ae_s, t_s
        \]
        By definition of $\Rightarrow_{ex}$
     \[
         \SAC \ndi{x(id')}{v}{id} \Rightarrow_{ex} \ndi{x(id')}{v}{id}, \emptyset
     \]
     Then
        $ae_s = \ndi{x(id')}{v}{id}, t_s = \emptyset$.

      \noindent{Let} $p = \assume{z}{ae_s}$, Because $v = \sigma_v(x)$ and $id' = \sigma_{id}(x)$,
      the definition of $\Rightarrow_s$ implies
      \[
          \SI{}{} p \Rightarrow_s \assume{z}{\ndi{x(id')}{v}{id}}
      \]

      \noindent{Then} $ae = \ndi{x(id')}{v}{id}$ implies
    \[
        \begin{array}{c}
        \exists~e.~\SI{}{} e \Rightarrow_s \_,\_ ae \wedge \SAC ae \Rightarrow_{ex} ae_s, t_s
            \implies \exists~p.~ \SI{}{} p \Rightarrow_s t_s;\assume{z}{ae_s}
        \end{array}
    \]

    \noindent{\bf Case 3:}
    $ae = \ndi{\lambda.x~e'}{v}{id}$
   
    \noindent{By assumption}
    \[
        \exists~e.~\SI{}{} e \Rightarrow_s \_,\_ ae 
    \]
    By definition of $\Rightarrow_s$ 
    \[
        \SI{}{} \lambda.x~e' \Rightarrow_s \_, \_, \ndi{\lambda.x~e'}{v}{id}
    \]
    Then $e = \lambda.x~e'$.

    \noindent{By assumption}
      \[
\SAC ae \Rightarrow_{ex} ae_s, t_s
        \]
        By definition of $\Rightarrow_{ex}$
    \[
        \SAC \ndi{\lambda.x~e'}{v}{id} \Rightarrow_{ex} \ndi{\lambda.x~e'}{v}{id}, \emptyset
    \]
    Then  $ae_s = \ndi{\lambda.x~e'}{v}{id}$ and $t_s = \emptyset$.

    \noindent{Let}
    $p = \assume{z}{e}$. Because $
        \SI{}{} \lambda.x~e' \Rightarrow_s \_, \_, \ndi{\lambda.x~e'}{v}{id}
    $, the definition of $\Rightarrow_s$ implies
    \[
        \SI{}{} p \Rightarrow_s \assume{z}{\ndi{\lambda.x~e'}{v}{id}}
    \]

    \noindent{Then} $ae = \ndi{\lambda.x~e'}{v}{id}$ implies
     \[
        \begin{array}{c}
        \exists~e.~\SI{}{} e \Rightarrow_s \_,\_ ae \wedge \SAC ae \Rightarrow_{ex} ae_s, t_s
            \implies \exists~p.~ \SI{}{} p \Rightarrow_s t_s;\assume{z}{ae_s}
        \end{array}
    \]

    \noindent{\bf Induction Step:}

\noindent{\bf Case 1:}
    $ae = \ndi{(ae_1~ae_2)\perp}{v}{id}$
    
    \noindent{By assumption}
    \[
        \exists~e.~\SI{}{} e \Rightarrow_s \_,\_ ae 
    \]
    By definition of $\Rightarrow_s$
    \[
        \SI{}{} (e_1~e_2) \Rightarrow_s \_, \_, \ndi{(ae_1~ae_2)\perp}{v}{id}
    \]
    Then $e = (e_1~e_2)$, $\SI{}{} e_1 \Rightarrow_s \_,\_ ae_1$ 
    and $\SI{}{} e_2 \Rightarrow_s \_,\_ ae_2$. 

    \noindent{By assumption}
      \[
    \SAC ae \Rightarrow_{ex} ae_s, t_s
        \]
        By definition of $\Rightarrow_{ex}$
    \[
        \SAC \ndi{(ae_1~ae_2)\perp}{v}{id} \Rightarrow_{ex} \ndi{(ae'_s~ae''_s)\perp }{v}{id}, t_s
    \]
    Then $ae_s = \ndi{(ae'_s~ae''_s)\perp }{v}{id}$, 
  $\SAC ae_1 \Rightarrow_{ex} ae'_s, t'_s$, 
    $\SAC ae_2 \Rightarrow_{ex} ae''_s, t''_s$, 
    and $t_s = t'_s;t''_s$.

    \noindent{By induction hypothesis}
    \[
        \begin{array}{c}
            \exists~e_1.~\SI{}{} e_1 \Rightarrow_s \_,\_ ae_1 \wedge \SAC ae_1 \Rightarrow_{ex} ae'_s, t'_s
        \\ \implies \exists~p'.~ \SI{}{} p' \Rightarrow_s t'_s;\assume{z}{ae'_s}
        \end{array}
    \]
    and
     \[
        \begin{array}{c}
            \exists~e_2.~\SI{}{} e_2 \Rightarrow_s \_,\_ ae_2 \wedge \SAC ae_2 \Rightarrow_{ex} ae''_s, t''_s
           \\  \implies \exists~p''.~ \SI{}{} p'' \Rightarrow_s t''_s;\assume{z}{ae''_s}
        \end{array}
    \]

    \noindent{Because} $\SI{}{} e_1 \Rightarrow_s \_,\_ ae_1$, $\SI{}{} e_2 \Rightarrow_s \_,\_ ae_2$, 
    $\SAC ae_1 \Rightarrow_{ex} ae'_s, t'_s$, and
    $\SAC ae_2 \Rightarrow_{ex} ae''_s, t''_s$, induction hypothesis implies
    \[
         \exists~p';\assume{z}{e'_s}.~ \SI{}{} p';\assume{z}{e'_s} \Rightarrow_s t'_s;\assume{z}{ae'_s}
        \]
        and
    \[
        \exists~p'';\assume{z}{e''_s}.~ \SI{}{} p'';\assume{z}{e''_s} \Rightarrow_s t''_s;\assume{z}{ae''_s}
    \]

    \noindent{Let} $p = p';p'';\assume{z}{(e'_s~e''_s)}$. Because all the variable names introduced by $t''_s$ 
    do not conflict with variable names in $ae'_s$ (Observation~\ref{ex:obs}), the definition of $\Rightarrow_s$ implies
    \[
        \SI{}{} p \Rightarrow_s t'_s;t''_s;\assume{z}{(ae'_s~ae''_s)\perp}
    \]

    \noindent{Then} by the induction 
    hypothesis, $ae = \ndi{(ae_1~ae_2)\perp}{v}{id}$ implies 
     \[
        \begin{array}{c}
        \exists~e.~\SI{}{} e \Rightarrow_s \_,\_ ae \wedge \SAC ae \Rightarrow_{ex} ae_s, t_s
            \\ \implies \exists~p.~ \SI{}{} p \Rightarrow_s t_s;\assume{z}{ae_s}
        \end{array}
    \]

    \noindent{\bf Case 2:}
     $ae = \ndi{(ae_1~ae_2)y=ae_3}{v}{id}$ and $ID(ae_1) \notin \cS$
 
    \noindent{By assumption}
    \[
        \exists~e.~\SI{}{} e \Rightarrow_s \_,\_ ae 
    \]
    By definition of $\Rightarrow_s$
    \[
        \SI{}{} (e_1~e_2) \Rightarrow_s \_, \_, \ndi{(ae_1~ae_2)y=ae_3}{v}{id}
    \]
    Then $e = (e_1~e_2)$, $\SI{}{} e_1 \Rightarrow_s \_,\_ ae_1$, 
    $\SI{}{} e_2 \Rightarrow_s \_,\_ ae_2$, $\sigma'_v = \sigma''_v[y \rightarrow \cV(ae_1)]$, 
    $\sigma'_{id} = \sigma''_{id}[y \rightarrow ID(ae_1)]$,
    $\SI{'}{'} e_3 \Rightarrow_s \_, \_, ae_3$, and 
    $\cV(ae_1) = \tup{\lambda.x~e_3, \sigma''_v, \sigma''_{id}}$. 

    \noindent{By assumption}
      \[
    \SAC ae \Rightarrow_{ex} ae_s, t_s
        \]
        By definition of $\Rightarrow_{ex}$
    \[
        \SAC \ndi{(ae_1~ae_2)y=ae_3}{v}{id} \Rightarrow_{ex} \ndi{ae'''_s}{v}{id}, t_s
    \]
    Then $ae_s = \ndi{ae'''_s}{v}{id}$, 
  $\SAC ae_1 \Rightarrow_{ex} ae'_s, t'_s$, 
    $\SAC ae_2 \Rightarrow_{ex} ae''_s, t''_s$, 
    $\SAC ae_3 \Rightarrow_{ex} ae'''_s, t'''_s$
    and $t_s = t'_s;\assume{x}{ae'_s};t''_s;\assume{y}{ae''_s};t'''_s$.

    \noindent{By induction hypothesis}
    \[
        \begin{array}{c}
            \exists~e_1.~\SI{}{} e_1 \Rightarrow_s \_,\_ ae_1 \wedge \SAC ae_1 \Rightarrow_{ex} ae'_s, t'_s
        \\ \implies \exists~p'.~ \SI{}{} p' \Rightarrow_s t'_s;\assume{z}{ae'_s},
        \end{array}
    \]
     \[
        \begin{array}{c}
            \exists~e_2.~\SI{}{} e_2 \Rightarrow_s \_,\_ ae_2 \wedge \SAC ae_2 \Rightarrow_{ex} ae''_s, t''_s
           \\  \implies \exists~p''.~ \SI{}{} p'' \Rightarrow_s t''_s;\assume{z}{ae''_s},
        \end{array}
    \]
    and
    \[
        \begin{array}{c}
            \exists~e_3.~\SI{'}{'} e_3 \Rightarrow_s \_,\_ ae_3 \wedge \SAC ae_3 \Rightarrow_{ex} ae'''_s, t'''_s
           \\  \implies \exists~p'''.~ \SI{'}{'} p''' \Rightarrow_s t'''_s;\assume{z}{ae'''_s}
        \end{array}
    \]

    \noindent{Because} $\SI{}{} e_1 \Rightarrow_s \_,\_ ae_1$, $\SI{}{} e_2 \Rightarrow_s \_,\_ ae_2$,
    $\SI{'}{'} e_3 \Rightarrow_s \_, \_, ae_3$,
    $\SAC ae_1 \Rightarrow_{ex} ae'_s, t'_s$,
    $\SAC ae_2 \Rightarrow_{ex} ae''_s, t''_s$, and
    $\SAC ae_3 \Rightarrow_{ex} ae'''_s, t'''_s$,
    induction hypothesis implies
    \[
         \exists~p';\assume{z}{e'_s}.~ \SI{}{} p';\assume{z}{e'_s} \Rightarrow_s t'_s;\assume{z}{ae'_s}
        , \]
    \[
        \exists~p'';\assume{z}{e''_s}.~ \SI{}{} p'';\assume{z}{e''_s} \Rightarrow_s t''_s;\assume{z}{ae''_s}
    , \]
    and
    \[
\exists~p''';\assume{z}{e'''_s}.~ \SI{}{} p''';\assume{z}{e'''_s} \Rightarrow_s t'''_s;\assume{z}{ae'''_s}
    \]

    \noindent{Let} $p = p';\assume{x}{e'_s};p'';\assume{y}{e''_s};p''';\assume{z}{e'''_s}$. Because all the variable names introduced by $t''_s$ 
    do not conflict with variable names in $ae'_s$ (Observation~\ref{ex:obs}), the definition of $\Rightarrow_s$ implies XXX
    \[
        \SI{}{} p \Rightarrow_s t'_s;\assume{x}{ae'_s};t''_s;\assume{y}{ae''_s};t'''_s\assume{z}{ae'''_s}
    \]

    \noindent{Then} by the induction 
    hypothesis, $ae = \ndi{(ae_1~ae_2)y=ae_3}{v}{id}$ implies 
     \[
        \begin{array}{c}
        \exists~e.~\SI{}{} e \Rightarrow_s \_,\_ ae \wedge \SAC ae \Rightarrow_{ex} ae_s, t_s
            \\ \implies \exists~p.~ \SI{}{} p \Rightarrow_s t_s;\assume{z}{ae_s}
        \end{array}
    \]

   \noindent{\bf Case 3:}
    $ae = \ndi{(ae_1~ae_2)y=ae_3}{v}{id}$ and $ID(ae_1) \in \cS$

    \noindent{By assumption}
    \[
        \exists~e.~\SI{}{} e \Rightarrow_s \_,\_ ae 
    \]
    By definition of $\Rightarrow_s$
    \[
        \SI{}{} (e_1~e_2) \Rightarrow_s \_, \_, \ndi{(ae_1~ae_2)y=ae_3}{v}{id}
    \]
    Then $e = (e_1~e_2)$, $\SI{}{} e_1 \Rightarrow_s \_,\_ ae_1$, 
    $\SI{}{} e_2 \Rightarrow_s \_,\_ ae_2$, $ae_1 = \tup{\lambda.x~e_3, \sigma'_v, \sigma'_{id}}$, and 
    $\SI{'[y \rightarrow \cV(ae_2)]}{'[y \rightarrow ID(ae_2)} e_3 \Rightarrow_s \_, \_, ae_3$. 

    \noindent{By assumption}
      \[
    \SAC ae \Rightarrow_{ex} ae_s, t_s
        \]
        By definition of $\Rightarrow_{ex}$
    \[
        \SAC \ndi{(ae_1~ae_2)y=ae_3}{v}{id} \Rightarrow_{ex} \ndi{(ae'_s~ae''_s)y=ae_3 }{v}{id}, t_s
    \]
    Then $ae_s = \ndi{(ae'_s~ae''_s)y=ae_3}{v}{id}$, 
  $\SAC ae_1 \Rightarrow_{ex} ae'_s, t'_s$, 
    $\SAC ae_2 \Rightarrow_{ex} ae''_s, t''_s$, 
    and $t_s = t'_s;t''_s$.

    \noindent{By induction hypothesis}
    \[
        \begin{array}{c}
            \exists~e_1.~\SI{}{} e_1 \Rightarrow_s \_,\_ ae_1 \wedge \SAC ae_1 \Rightarrow_{ex} ae'_s, t'_s
        \\ \implies \exists~p'.~ \SI{}{} p' \Rightarrow_s t'_s;\assume{z}{ae'_s}
        \end{array}
    \]
    and
     \[
        \begin{array}{c}
            \exists~e_2.~\SI{}{} e_2 \Rightarrow_s \_,\_ ae_2 \wedge \SAC ae_2 \Rightarrow_{ex} ae''_s, t''_s
           \\  \implies \exists~p''.~ \SI{}{} p'' \Rightarrow_s t''_s;\assume{z}{ae''_s}
        \end{array}
    \]

    \noindent{Because} $\SI{}{} e_1 \Rightarrow_s \_,\_ ae_1$, $\SI{}{} e_2 \Rightarrow_s \_,\_ ae_2$, 
    $\SAC ae_1 \Rightarrow_{ex} ae'_s, t'_s$, and
    $\SAC ae_2 \Rightarrow_{ex} ae''_s, t''_s$, induction hypothesis implies
    \[
         \exists~p';\assume{z}{e'_s}.~ \SI{}{} p';\assume{z}{e'_s} \Rightarrow_s t'_s;\assume{z}{ae'_s}
        \]
        and
    \[
        \exists~p'';\assume{z}{e''_s}.~ \SI{}{} p'';\assume{z}{e''_s} \Rightarrow_s t''_s;\assume{z}{ae''_s}
    \]

    \noindent{Let} $p = p';p'';\assume{z}{(e'_s~e''_s)}$. Because all the variable names introduced by $t''_s$ 
    do not conflict with variable names in $ae'_s$ (Observation~\ref{ex:obs}), $\cV(ae_1) = \cV(ae'_s)$, and
   $\SI{'[y \rightarrow \cV(ae_2)]}{'[y \rightarrow ID(ae_2)]} e_3 \Rightarrow_s \_, \_, ae_3$, the definition of $\Rightarrow_s$ implies
    \[
        \SI{}{} p \Rightarrow_s t'_s;t''_s;\assume{z}{(ae'_s~ae''_s)y=ae_3}
    \]

    \noindent{Then} by the induction 
    hypothesis, $ae = \ndi{(ae_1~ae_2)y=ae_3}{v}{id}$ implies 
     \[
        \begin{array}{c}
        \exists~e.~\SI{}{} e \Rightarrow_s \_,\_ ae \wedge \SAC ae \Rightarrow_{ex} ae_s, t_s
           \\ \implies \exists~p.~ \SI{}{} p \Rightarrow_s t_s;\assume{z}{ae_s}
        \end{array}
    \]

    \noindent{\bf Case 4:}
    $ae = \ndi{\mathsf{Dist}(ae' \#id_e) = ae_e }{v}{id}$ and $ id_e \in \cS$
   
    Assuming
    \[
        \begin{array}{c}
            \exists~e'.~\SI{}{} e' \Rightarrow_s \_,\_ ae' \wedge \SAC ae' \Rightarrow_{ex} ae'_s, t'_s
            \implies \exists~p'.~ \SI{}{} p' \Rightarrow_s t'_s;\assume{z}{ae'_s}
        \end{array}
    \]
    from induction hypothesis.
    
    And since
    \[
        \SI{}{} \mathsf{Dist}(e') \Rightarrow_s \_, \_, \ndi{\mathsf{Dist}(ae' \#id_e) = ae_e }{v}{id}
    \]
    \[
        \SAC \ndi{\mathsf{Dist}(ae' \#id_e) = ae_e }{v}{id} \Rightarrow_{ex} \ndi{\mathsf{Dist}(ae'_s \#id_e) = ae_e }{v}{id}, t_s
    \]
    The following is true
    \[
    \exists~e'.~\SI{}{} e' \Rightarrow_s \_,\_ ae' \wedge \SAC ae' \Rightarrow_{ex} ae'_s, t'_s
    \]
    Where $t_s = t'_s;t''_s$. Using induction hypothesis, the following is true.
    \[
         \exists~p';\assume{z}{e'_s}.~ \SI{}{} p';\assume{z}{e'_s} \Rightarrow_s t'_s;\assume{z}{ae'_s}
    \]

    Given $p = p';\assume{z}{\mathsf{Dist}(e')}$, the following is true
    \[
        \SI{}{} p \Rightarrow_s t'_s;\assume{z}{\ndi{\mathsf{Dist}(ae'_s \#id_e) = ae_e }{v}{id}}
    \]

    Hence, given induction 
    hypothesis, if    $ae = \ndi{\mathsf{Dist}(ae' \#id_e) = ae_e }{v}{id}$ and $ id_e \in \cS$
   then the theorem is true.

    \noindent{\bf Case 5:}
  $ae = \ndi{\mathsf{Dist}(ae' \#id_e) = ae_e }{v}{id}$ and $ id_e \notin \cS$

    Assuming
    \[
        \begin{array}{c}
            \exists~e'.~\SI{}{} e' \Rightarrow_s \_,\_, ae' \wedge \SAC ae' \Rightarrow_{ex} ae'_s, t'_s
            \implies \exists~p'.~ \SI{}{} p' \Rightarrow_s t'_s;\assume{z}{ae'_s}
        \end{array}
    \]
     and
     \[
        \begin{array}{c}
            \exists~e_e.~\SI{}{} e_e \Rightarrow_s \_,\_ ae_e \wedge \SAC ae_e \Rightarrow_{ex} ae''_s, t''_s
            \implies \exists~p''.~ \SI{}{} p'' \Rightarrow_s t''_s;\assume{z}{ae''_s}
        \end{array}
    \]
    from induction hypothesis.
    
    And since
    \[
        \SI{}{} \mathsf{Dist}(e') \Rightarrow_s \_, \_, \ndi{\mathsf{Dist}(ae'\#id_e)=ae_e}{v}{id}
    \]
    \[
        \SAC  \ndi{\mathsf{Dist}(ae'\#id_e)=ae_e}{v}{id}
 \Rightarrow_{ex} \ndi{ae''_s }{v}{id}, t_s
    \]
    The following is true
    \[
    \exists~e'.~\SI{}{} e' \Rightarrow_s \_,\_, ae' \wedge \SAC ae' \Rightarrow_{ex} ae'_s, t'_s
    \]
    \[
 \exists~e_e.~\SI{}{} e_e \Rightarrow_s \_,\_, ae_e \wedge \SAC ae_e \Rightarrow_{ex} ae''_s, t''_s
    \]
    Where $t_s = t'_s;\observed{ae'_s\#id_e}{e_e};t''_s$. Using induction hypothesis, the following is true.
    \[
         \exists~p';\assume{z}{e'_s}.~ \SI{}{} p';\assume{z}{e'_s} \Rightarrow_s t'_s;\assume{z}{ae'_s}
    \]
    \[
        \exists~p'';\assume{z}{e''_s}.~ \SI{}{} p'';\assume{z}{e''_s} \Rightarrow_s t''_s;\assume{z}{ae''_s}
    \]

    Given $p = p';\observed{e'_s}{e_e};p'';\assume{z}{e''_s}$, the following is true
    \[
        \SI{}{} p \Rightarrow_s t'_s;\observed{ae'_s\#id_e}{e_e};t''_s;\assume{z}{ae''_s}
    \]

    Hence, given induction 
    hypothesis, if   $ae = \ndi{\mathsf{Dist}(ae' \#id_e) = ae_e }{v}{id}$ and $ id_e \notin \cS$
   then the theorem is true.

    Since I have covered all the bases cases and induction cases, by using induction 
    my induction statement is correct,
      \[
        \begin{array}{c}
        \exists~e.~\SI{}{} e \Rightarrow_s \_,\_ ae \wedge \SAC ae \Rightarrow_{ex} ae_s, t_s
            \implies \exists~p.~ \SI{}{} p \Rightarrow_s t_s;\assume{z}{ae_s}
        \end{array}
    \]
    \end{proof}

\begin{lemma}
    Given environments $\sigma_v$ and $\sigma_{id}$
    such that $ \dom \sigma_v = \dom \sigma_{id}$,
    given a partial trace $t$ of trace $t_p$ ($t$ is the suffix trace for trace $t_p$),
    a valid subproblem $\cS$ on trace $t_p$  and a subtrace $t_s$,   
    \[
        \exists.p~\SI{}{} p \Rightarrow_s t \wedge \SAC t \Rightarrow_{ex} t_s
        \implies \exists.p_s~\SI{}{} p_s \Rightarrow_s t_s
    \]
\end{lemma}
   \begin{proof}
   I will prove this using induction over traces.

    \noindent{\bf Base Case:}
    $t = \emptyset$, $p = \emptyset$ 
    Since
    \[
        \SI{}{} p \Rightarrow_s t
    \]
    and 
    \[
        \SAC \emptyset \Rightarrow_{ex} \emptyset
    \]
    
    Hence if $p_s = \emptyset$ and $t_s = \emptyset$, the
    lemma is true.
     
    \noindent{\bf Induction Case:}
    
    \noindent{\bf Case 1:}$t = \assume{x}{ae};t'$
    
    Assuming from Induction Hypothesis
    \[
        \exists.p'~\SI{'}{'} p' \Rightarrow_s t' \wedge \SAC t' \Rightarrow_{ex} t'_s
        \implies \exists.p'_s~\SI{'}{'} p'_s \Rightarrow_s t'_s
    \]
    and from previous lemma
      \[
    \begin{array}{c}
    \exists~e.~\SI{}{} e \Rightarrow_s v,id, ae \wedge \SAC ae \Rightarrow_{ex} ae_s, t''_s
        \implies \exists~p'';\assume{z}{e_s}.~ \SI{}{} p'';\assume{z}{e_s} \Rightarrow_s t''_s;\assume{z}{ae_s}
    \end{array}
    \]
    where $\sigma'_v = \sigma_v[x \rightarrow v], \sigma'_{id} = \sigma_{id}[x \rightarrow v']$.

    Given
    \[
        \SI{}{} \assume{x}{e};p' \Rightarrow_s \assume{x}{ae};t' \wedge 
        \SAC \assume{x}{ae};t' \Rightarrow_{ex} t''_s;\assume{x}{ae_s};t'_s
    \]

    Hence 
    \[
        \SI{}{} p'';\assume{x}{e_s};p'_s \Rightarrow_s t''_s;\assume{x}{ae_s};t'_s
    \]
    since $t''_s$ only introduces new variables which do not conflict with 
    variables in $\sigma_v$.
    
    \noindent{\bf Case 2:} $t = \observed{ae \# id}{e_v};t'$

    Assuming from Induction Hypothesis
    \[
        \exists.p'~\SI{}{} p' \Rightarrow_s t' \wedge \SAC t' \Rightarrow_{ex} t'_s
        \implies \exists.p'_s~\SI{}{} p'_s \Rightarrow_s t'_s
    \]
    and from previous lemma
      \[
    \begin{array}{c}
    \exists~e.~\SI{}{} e \Rightarrow_s v,id, ae \wedge \SAC ae \Rightarrow_{ex} ae_s, t''_s
        \implies \exists~p'';\assume{z}{e_s}.~ \SI{}{} p'';\assume{z}{e_s} \Rightarrow_s t''_s;\assume{z}{ae_s}
    \end{array}
    \]

    Given
    \[
        \SI{}{} \assume{x}{e};p' \Rightarrow_s \assume{x}{ae};t' \wedge 
        \SAC \assume{x}{ae};t' \Rightarrow_{ex} t''_s;\assume{x}{ae_s};t'_s
    \]

    Hence 
    \[
        \SI{}{} p'';\observed{e_s \# id}{e_v};p'_s \Rightarrow_s t''_s;\observed{ae_s \# id}{e_v};t'_s
    \]
    since $t''_s$ only introduces new variables which do not conflict with 
    variables in $\sigma_v$.

    Hence, since I have covered all the bases cases and induction cases,
    the following statement is true.
    \[
        \exists.p~\SI{}{} p \Rightarrow_s t \wedge \SAC t \Rightarrow_{ex} t_s
        \implies \exists.p_s~\SI{}{} p_s \Rightarrow_s t_s
    \]
    \end{proof}

Hence 
given a valid trace (i.e. $\exists.p~\emptyset, \emptyset \vdash p \Rightarrow_s t$)
and a valid subproblem $\cS$, for all subtraces $t_s$,

\[
    t_s = \mathsf{ExtractTrace}(t, \cS) \implies \exists~p_s.t_s \in \mathsf{Traces}(p_s)
\]

\end{proof}

}
    \noindent{To prove} soundness of our interface we start by proving 
    that for a given trace $t$ and a valid subproblem $\cS$ on trace $t$, 
    if for any subtrace $t_s$ the stitching process succeeds, the output trace 
    $t'$ differs from trace $t$ only in parts which are within the subproblem (i.e. $\SAC t \equiv t'$).

    \noindent{Formally}, for any trace $t$ and subproblem $\cS$,
    \[
        t' = \mathsf{StitchTrace}(t, t_s, \cS) \implies \SAC t \equiv t'
    \]

    \noindent{Using} the definition of $\mathsf{StitchTrace}$,
    the above lemma can be rewritten as 
    \[
        \SAC t, \_ \Rightarrow_{st} t' \implies \SAC t \equiv t'
    \]
    One will note that the stucture of $t_s$ does not play a significant role in 
    proving the above condition.

    \noindent{To prove the above statement we require a similar condition  over 
    augmented expressions embedded within traces. The lemma over augmented expressions is given below:}
   
   \begin{lemma}
    \noindent{For all} augmented expressions $ae, ae'$,
       \[
        \SAC ae, \_, \_ \Rightarrow_{st} ae' \implies \SAC ae \equiv ae'    
        \]
       \label{lem:sound2exp}
    \end{lemma}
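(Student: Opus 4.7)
The plan is to prove this by structural induction on the derivation of $\SAC ae, \_, \_ \Rightarrow_{st} ae'$. Equivalently, we induct on the structure of the original augmented expression $ae$, since every stitching rule in Figure~\ref{fig:stitchtrace} is directed by the shape of $ae$. For each stitching rule, we identify the corresponding equivalence rule in Figure~\ref{fig:equivalence} that produces $\SAC ae \equiv ae'$, applying the induction hypothesis to each sub-derivation of the stitching relation.

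First I would dispatch the base cases $\ndi{x}{v}{id}$, $\ndi{x(id_v)}{v}{id}$, and $\ndi{\lambda.x~e}{v}{id}$. For each, the stitching rule leaves the underlying expression intact (only $id$s and the recorded value may differ), and the corresponding equivalence rules in Figure~\ref{fig:equivalence} impose no premises, so equivalence follows immediately. Next, for the application cases $\ndi{(ae_1~ae_2)aa}{v}{id}$, I would split on whether $\ID(ae_1) \in \cS$ or not. When $\ID(ae_1) \in \cS$, the stitching rule recurses into $ae_1$ and $ae_2$ producing $ae''_1, ae''_2$; the induction hypothesis yields $\SAC ae_1 \equiv ae''_1$ and $\SAC ae_2 \equiv ae''_2$, which is exactly the premise of the equivalence rule for the $\ID(ae_1) \in \cS$ case. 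When $\ID(ae_1) \notin \cS$, the same recursion applies, but I must additionally handle the $y=ae_3$ branch: here the stitching rule recurses into $ae_3$ as well, and the induction hypothesis gives $\SAC ae_3 \equiv ae''_3$, matching the corresponding equivalence rule.

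For the $\mathsf{Dist}$ cases, if $id'_v \in \cS$ the stitching rule discards the original parameter and value entirely, and the equivalence rule for $id_e \in \cS$ requires nothing on the inner $\mathsf{Dist}$ parameter, so equivalence is trivial. The interesting case is $id'_v \notin \cS$: here the stitching rule recurses into both the parameter $ae$ (against $ae'$ extracted from the trace) and the sampled value $ae_v$ (against $ae'_v$), yielding $ae''$ and $ae''_v$. By the induction hypothesis $\SAC ae \equiv ae''$ and $\SAC ae_v \equiv ae''_v$, matching the premises of the $id_e \notin \cS$ equivalence rule. The main subtlety to verify is that the recorded value on the outer node, which Observation~\ref{ex:obs2} guarantees equals $\cV(ae''_v)$, does not matter for equivalence, since the equivalence rules allow the recorded value to differ arbitrarily.

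I expect the hardest step to be bookkeeping in the application case with $\ID(ae_1) \notin \cS$ and a non-$\perp$ annotation $y = ae_3$: the stitching rule for this subcase splits the incoming partial trace $t_p$ into three pieces $t'_p; \mathsf{assume}(x{=}ae_1); t''_p; \mathsf{assume}(y{=}ae_2); t'''_p$ and recurses into each piece, so I must carefully match each sub-derivation to the right sub-inductive hypothesis before concluding with the appropriate equivalence rule. All other cases are routine consequences of matching the stitching rule to its dual equivalence rule.
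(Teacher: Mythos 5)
Your plan is essentially identical to the paper's proof: structural induction on $ae$ (equivalently on the stitching derivation), matching each stitching rule against its dual rule in Figure~\ref{fig:equivalence} and discharging the premises with the induction hypothesis, with exactly the same case split on $\ID(ae_1) \in \cS$ and $id_e \in \cS$. One small correction: in the $\mathsf{Dist}$ case with $id_e \in \cS$ the equivalence rule is not premise-free --- it still requires $\SAC ae_1 \equiv ae'_1$ on the distribution's parameter, so that case is not "trivial"; you need one application of the induction hypothesis to the recursively stitched parameter (which the stitching rule's premise does provide), exactly as the paper's Case~5 does.
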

    \begin{proof}
    \noindent{Proof using induction.}

    \noindent{\bf Base Case:}

    \noindent{Case 1:} $ae = \ndi{x}{x}{id}$,
   
    \noindent{By assumption}
    \[
        \SAC ae, \_, \_ \Rightarrow_{st} ae'
    \]
        By definition of $\Rightarrow_{st}$
        \[
            \SAC \ndi{x}{x}{id'}, \_, \_ \Rightarrow_{st} \ndi{x}{x}{id'}
        \]
        Then
        $ae' = \ndi{x}{x}{id'}$.

     \noindent{By definition} of $\equiv$
    \[
    \SAC \ndi{x}{x}{id} \equiv \ndi{x}{x}{id'}
    \]
    Therefore
        \[
            \SAC ae \equiv ae'
        \]

        \noindent{Therefore} when $ae = \ndi{x}{x}{id}$,
        \[
            \SAC ae, \_, \_ \Rightarrow_{st} ae' \implies \SAC ae \equiv ae'
        \]

    \noindent{Case 2:} $ae = \ndi{x(id_v)}{v}{id}$

    \noindent{By assumption}
        \[
            \SAC ae, \_, \_ \Rightarrow_{st} ae'
        \]
    By definition of $\Rightarrow_{st}$
        \[
            \SAC \ndi{x(id_v)}{v}{id}, \_, \_ \Rightarrow_{st} \ndi{x(id'_v)}{v'}{id'}
        \]
     Then $ae' = \ndi{x(id'_v)}{v'}{id'}$.

        \noindent{By definition} of $\equiv$
        \[
            \SAC \ndi{x(id_v)}{v}{id} \equiv \ndi{x(id'_v)}{v'}{id'}
        \]
        Therefore
        \[
            \SAC ae \equiv ae'
        \]

        \noindent{Therefore} when $ae = \ndi{x(id_v)}{v}{id}$,
        \[
            \SAC ae, \_, \_ \Rightarrow_{st} ae' \implies \SAC ae \equiv ae'
        \]

    \noindent{\bf Case 3:} $ae = \ndi{\lambda.x~e}{v}{id}$

    \noindent{By assumption}
        \[
            \SAC ae, \_, \_ \Rightarrow_{st} ae'
        \]
    By definition of $\Rightarrow_{st}$
        \[
            \SAC \ndi{\lambda.x~e}{v}{id}, \_, \_ \Rightarrow_{st} \ndi{\lambda.x~e}{v'}{id'}
        \]
     Then $ae' = \ndi{\lambda.x~e}{v'}{id'}$.

        \noindent{By definition} of $\equiv$
        \[
            \SAC \ndi{\lambda.x~e}{v}{id} \equiv \ndi{\lambda.x~e}{v'}{id'}
        \]
        Therefore
        \[
            \SAC ae \equiv ae'
        \]

        \noindent{Therefore} when $ae = \ndi{\lambda.x~e}{v}{id}$,
        \[
            \SAC ae, \_, \_ \Rightarrow_{st} ae' \implies \SAC ae \equiv ae'
        \]

    \noindent{\bf Induction Cases:}

    \noindent{\bf Case 1:}
    $ae = \ndi{(ae_1~ae_2)\perp}{v}{id}$ and $ID(ae_1) \notin \cS$

    \noindent{By assumption}
        \[
            \SAC ae, \_, \_ \Rightarrow_{st} ae'
        \]
    By definition of $\Rightarrow_{st}$
        \[
            \SAC \ndi{(ae_1~ae_2)\perp}{v}{id}, \_, \_ \Rightarrow_{st} \ndi{(ae'_1~ae'_2)\perp}{v'}{id'}
        \]
     Then $ae' = \ndi{(ae_1~ae_2)\perp}{v'}{id'}$, $\SAC ae_1, \_, \_ \Rightarrow_{st} ae'_1$, and
     $\SAC ae_2, \_, \_ \Rightarrow_{st} ae'_2$.

        \noindent{By induction hypothesis}
        \[
            \SAC ae_1, \_, \_ \Rightarrow_{st} ae'_1 \implies \SAC ae_1 \equiv ae'_1
        \]
        Because $\SAC ae_1, \_, \_ \Rightarrow_{st} ae'_1$
        \[
            \SAC ae_1 \equiv ae'_1
        \]

        \noindent{By induction hypothesis}
        \[
            \SAC ae_2, \_, \_ \Rightarrow_{st} ae'_2 \implies \SAC ae_2 \equiv ae'_2
        \]
        Because $\SAC ae_2, \_, \_ \Rightarrow_{st} ae'_2$
        \[
            \SAC ae_2 \equiv ae'_2
        \]

        \noindent{Because} $\SAC ae_1 \equiv ae'_1$ and $\SAC ae_2 \equiv ae'_2$, by definition of $\equiv$
        \[
            \SAC \ndi{(ae_1~ae_2)\perp}{v}{id} \equiv \ndi{(ae'_1~ae'_2)\perp}{v'}{id'}
        \]
        Therefore
        \[
            \SAC ae \equiv ae'
        \]

        \noindent{Therefore} when $ae = \ndi{(ae_1~ae_2)\perp}{v}{id}$,
        \[
            \SAC ae, \_, \_ \Rightarrow_{st} ae' \implies \SAC ae \equiv ae'
        \]

  \noindent{\bf Case 2:}
     $ae = \ndi{(ae_1~ae_2)aa}{v}{id}$ and $ID(ae_1) \in \cS$

    \noindent{By assumption}
        \[
            \SAC ae, \_, \_ \Rightarrow_{st} ae'
        \]
    By definition of $\Rightarrow_{st}$
        \[
            \SAC \ndi{(ae_1~ae_2)aa}{v}{id}, \_, \_ \Rightarrow_{st} \ndi{(ae'_1~ae'_2)aa'}{v'}{id'}
        \]
     Then $ae' = \ndi{(ae_1~ae_2)aa'}{v'}{id'}$, $\SAC ae_1, \_, \_ \Rightarrow_{st} ae'_1$, and
     $\SAC ae_2, \_, \_ \Rightarrow_{st} ae'_2$.

        \noindent{By induction hypothesis}
        \[
            \SAC ae_1, \_, \_ \Rightarrow_{st} ae'_1 \implies \SAC ae_1 \equiv ae'_1
        \]
        Because $\SAC ae_1, \_, \_ \Rightarrow_{st} ae'_1$
        \[
            \SAC ae_1 \equiv ae'_1
        \]

        \noindent{By induction hypothesis}
        \[
            \SAC ae_2, \_, \_ \Rightarrow_{st} ae'_2 \implies \SAC ae_2 \equiv ae'_2
        \]
        Because $\SAC ae_2, \_, \_ \Rightarrow_{st} ae'_2$
        \[
            \SAC ae_2 \equiv ae'_2
        \]

        \noindent{Because} $\SAC ae_1 \equiv ae'_1$ and $\SAC ae_2 \equiv ae'_2$, by definition of $\equiv$
        \[
            \SAC \ndi{(ae_1~ae_2)aa}{v}{id} \equiv \ndi{(ae'_1~ae'_2)aa'}{v'}{id'}
        \]
        Therefore
        \[
            \SAC ae \equiv ae'
        \]

        \noindent{Therefore} when $ae = \ndi{(ae_1~ae_2)aa}{v}{id}$,
        \[
            \SAC ae, \_, \_ \Rightarrow_{st} ae' \implies \SAC ae \equiv ae'
        \]

    \noindent{\bf Case 3:}
     $ae = \ndi{(ae_1~ae_2)x=ae_3}{v}{id}$ and $ID(ae_1) \notin \cS$

    \noindent{By assumption}
        \[
            \SAC ae, \_, \_ \Rightarrow_{st} ae'
        \]
    By definition of $\Rightarrow_{st}$
        \[
            \SAC \ndi{(ae_1~ae_2)x=ae_3}{v}{id}, \_, \_ \Rightarrow_{st} \ndi{(ae'_1~ae'_2)y=ae'_3}{v'}{id'}
        \]
     Then $ae' = \ndi{(ae_1~ae_2)y=ae_3'}{v'}{id'}$, $\SAC ae_1, \_, \_ \Rightarrow_{st} ae'_1$, and
     $\SAC ae_2, \_, \_ \Rightarrow_{st} ae'_2$.

        \noindent{By induction hypothesis}
        \[
            \SAC ae_1, \_, \_ \Rightarrow_{st} ae'_1 \implies \SAC ae_1 \equiv ae'_1
        \]
        Because $\SAC ae_1, \_, \_ \Rightarrow_{st} ae'_1$
        \[
            \SAC ae_1 \equiv ae'_1
        \]

        \noindent{By induction hypothesis}
        \[
            \SAC ae_2, \_, \_ \Rightarrow_{st} ae'_2 \implies \SAC ae_2 \equiv ae'_2
        \]
        Because $\SAC ae_2, \_, \_ \Rightarrow_{st} ae'_2$
        \[
            \SAC ae_2 \equiv ae'_2
        \]

        \noindent{By induction hypothesis}
        \[
            \SAC ae_3, \_, \_ \Rightarrow_{st} ae'_3 \implies \SAC ae_3 \equiv ae'_3
        \]
        Because $\SAC ae_3, \_, \_ \Rightarrow_{st} ae'_3$
        \[
            \SAC ae_3 \equiv ae'_3
        \]

        \noindent{Because} $\SAC ae_1 \equiv ae'_1$, $\SAC ae_2 \equiv ae'_2$ and $\SAC ae_3 \equiv ae'_3$, by definition of $\equiv$
        \[
            \SAC \ndi{(ae_1~ae_2)x=ae_3}{v}{id} \equiv \ndi{(ae'_1~ae'_2)y=ae'_3}{v'}{id'}
        \]
        Therefore
        \[
            \SAC ae \equiv ae'
        \]

        \noindent{Therefore} when $ae = \ndi{(ae_1~ae_2)x=ae_3}{v}{id}$,
        \[
            \SAC ae, \_, \_ \Rightarrow_{st} ae' \implies \SAC ae \equiv ae'
        \]

    \noindent{\bf Case 4:}
    $ae = \ndi{\mathsf{Dist}(ae_1 \# id_e) = ae_2)}{v}{id}$ and $id_e \notin \cS$.

    \noindent{By assumption}
        \[
            \SAC ae, \_, \_ \Rightarrow_{st} ae'
        \]
    By definition of $\Rightarrow_{st}$
        \[
            \SAC \ndi{\mathsf{Dist}(ae_1 \# id_e) = ae_2)}{v}{id}, \_, \_ \Rightarrow_{st} \ndi{\mathsf{Dist}(ae'_1 \# id_e) = ae'_2)}{v'}{id'}
        \]
     Then $ae' = \ndi{\mathsf{Dist}(ae'_1 \# id_e) = ae'_2)}{v'}{id'}$, $\SAC ae_1, \_, \_ \Rightarrow_{st} ae'_1$, and
     $\SAC ae_2, \_, \_ \Rightarrow_{st} ae'_2$.

        \noindent{By induction hypothesis}
        \[
            \SAC ae_1, \_, \_ \Rightarrow_{st} ae'_1 \implies \SAC ae_1 \equiv ae'_1
        \]
        Because $\SAC ae_1, \_, \_ \Rightarrow_{st} ae'_1$
        \[
            \SAC ae_1 \equiv ae'_1
        \]

        \noindent{By induction hypothesis}
        \[
            \SAC ae_2, \_, \_ \Rightarrow_{st} ae'_2 \implies \SAC ae_2 \equiv ae'_2
        \]
        Because $\SAC ae_2, \_, \_ \Rightarrow_{st} ae'_2$
        \[
            \SAC ae_2 \equiv ae'_2
        \]

        \noindent{Because} $\SAC ae_1 \equiv ae'_1$ and $\SAC ae_2 \equiv ae'_2$, by definition of $\equiv$
        \[
            \SAC  \ndi{\mathsf{Dist}(ae_1 \# id_e) = ae_2)}{v}{id} \equiv \ndi{\mathsf{Dist}(ae'_1 \# id_e) = ae'_2)}{v'}{id'}
        \]
        Therefore
        \[
            \SAC ae \equiv ae'
        \]

        \noindent{Therefore} when $ae = \ndi{\mathsf{Dist}(ae_1 \# id_e) = ae_2)}{v}{id}$ and $id_e \notin \cS$,
        \[
            \SAC ae, \_, \_ \Rightarrow_{st} ae'\implies \SAC ae \equiv ae'
        \]

    \noindent{\bf Case 5:}
    $ae = \ndi{\mathsf{Dist}(ae_1 \# id_e) = ae_2)}{v}{id}$ and $id_e \in \cS$.

    \noindent{By assumption}
        \[
            \SAC ae, \_, \_ \Rightarrow_{st} ae'
        \]
    By definition of $\Rightarrow_{st}$
        \[
            \SAC \ndi{\mathsf{Dist}(ae_1 \# id_e) = ae_2)}{v}{id}, \_, \_ \Rightarrow_{st} \ndi{\mathsf{Dist}(ae'_1 \# id_e) = ae'_2)}{v'}{id'}
        \]
     Then $ae' = \ndi{\mathsf{Dist}(ae'_1 \# id_e) = ae'_2)}{v'}{id'}$ and $\SAC ae_1, \_, \_ \Rightarrow_{st} ae'_1$.

        \noindent{By induction hypothesis}
        \[
            \SAC ae_1, \_, \_ \Rightarrow_{st} ae'_1 \implies \SAC ae_1 \equiv ae'_1
        \]
        Because $\SAC ae_1, \_, \_ \Rightarrow_{st} ae'_1$
        \[
            \SAC ae_1 \equiv ae'_1
        \]

        \noindent{Because} $\SAC ae_1 \equiv ae'_1$, by definition of $\equiv$
        \[
            \SAC  \ndi{\mathsf{Dist}(ae_1 \# id_e) = ae_2)}{v}{id} \equiv \ndi{\mathsf{Dist}(ae'_1 \# id_e) = ae'_2)}{v'}{id'}
        \]
        Therefore
        \[
            \SAC ae \equiv ae'
        \]

        \noindent{Therefore} when $ae = \ndi{\mathsf{Dist}(ae_1 \# id_e) = ae_2)}{v}{id}$ and $id_e \in \cS$,
        \[
            \SAC ae, \_, \_ \Rightarrow_{st} ae'\implies \SAC ae \equiv ae'
        \]

    \noindent{Because} all cases are covered, using induction, the following statement is true for 
    all augmented expressions $ae, ae'$ and subproblems $\cS$.
    \[
        \SAC ae, \_, \_ \Rightarrow_{st} ae' \implies \SAC ae \equiv ae'
    \]
\end{proof}

    \noindent{Next we use Lemma~\ref{lem:sound2exp} to prove the lemma below::}

\begin{lemma}
    \noindent{Given} traces $t$ and $t'$  and a subproblem $\cS$,
    \[
        \SAC t, \_ \Rightarrow_{st} t' \implies \SAC t \equiv t'
    \]
\end{lemma}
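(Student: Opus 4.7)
The plan is to proceed by structural induction on the trace $t$, following exactly the same pattern as the proof of Lemma~\ref{lem:sound2exp} one level up. At each step the relation $\Rightarrow_{st}$ on traces strips off one leading statement, invokes $\Rightarrow_{st}$ on the embedded augmented expression, and recurses on the tail. So the whole argument reduces to (i) citing Lemma~\ref{lem:sound2exp} for the augmented-expression component and (ii) applying the induction hypothesis to the tail, then assembling the two pieces with the corresponding rule of $\equiv$ on traces.

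For the base case $t = \emptyset$, the only rule of $\Rightarrow_{st}$ that fires forces $t' = \emptyset$, and the rule $\SAC \emptyset \equiv \emptyset$ of the trace-level equivalence judgment discharges the goal immediately. For the inductive step I would split on the shape of the leading statement. If $t = \assume{x}{ae};t_1$, then by inversion on $\Rightarrow_{st}$ there exist $ae'$ and $t_1'$ with $\SAC ae,\_,\_ \Rightarrow_{st} ae'$ and $\SAC t_1,\_ \Rightarrow_{st} t_1'$ such that $t' = \assume{x}{ae'};t_1'$. Lemma~\ref{lem:sound2exp} gives $\SAC ae \equiv ae'$, the induction hypothesis gives $\SAC t_1 \equiv t_1'$, and the $\assume$-rule for trace equivalence assembles these into $\SAC \assume{x}{ae};t_1 \equiv \assume{x}{ae'};t_1'$. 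The $\mathsf{observe}$ case is symmetric: inversion on $\Rightarrow_{st}$ produces stitched components $ae'$ and $t_1'$, the identifier $id$ is preserved by the stitching rule, Lemma~\ref{lem:sound2exp} handles the expression, the induction hypothesis handles the tail, and the $\mathsf{observe}$-rule of $\equiv$ closes the case.

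There is no significant obstacle, since the trace-level rules of $\Rightarrow_{st}$ and of $\equiv$ are in one-to-one correspondence and differ only in threading through the stitched sub-expression. The one minor point worth double-checking is that the inversion on $\Rightarrow_{st}$ really is deterministic on the shape of $t$: each of the two trace-forming rules matches on a distinct leading statement, so given $t$ the decomposition into $ae$ and $t_1$ is forced, and in particular the $id$ on an $\mathsf{observe}$ in $t$ is carried unchanged to the $\mathsf{observe}$ in $t'$, which is exactly what the trace $\equiv$ rule for $\mathsf{observe}$ requires. Modulo this bookkeeping check, the proof is a mechanical structural induction whose real content lives in Lemma~\ref{lem:sound2exp}.
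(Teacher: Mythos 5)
Your proposal is correct and follows essentially the same route as the paper: a structural induction on $t$ whose base case forces $t' = \emptyset$, and whose inductive cases invert the trace-level stitching rule, invoke Lemma~\ref{lem:sound2exp} on the embedded augmented expression, apply the induction hypothesis to the tail, and reassemble with the corresponding rule of $\equiv$. No gaps.
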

   \begin{proof}
    \noindent{Proof by Induction}

    \noindent{\bf Base Case:}
    $t = \emptyset$

    \noindent{By assumption}
            \[
                \SAC t, \_ \Rightarrow_{st} t'
            \]
            By definition of $\Rightarrow_{st}$
            \[
                \SAC \emptyset, \_ \Rightarrow_{st} \emptyset
            \]
            Then $t' = \emptyset$.

      \noindent{By definition} of $\equiv$
            \[
                \SAC \emptyset \equiv \emptyset
            \]
            Therefore
            \[
                \SAC t \equiv t'
            \]

    \noindent{Therefore} when $t = \emptyset$
            \[
                \SAC t, \_ \Rightarrow_{st} t' \implies \SAC t \equiv t'
            \]

            \noindent{\bf Induction Case:}

    \noindent{\bf Case 1:}
        $t = t_s;\assume{x}{ae}$
 
    \noindent{By assumption}
            \[
                \SAC t, \_ \Rightarrow_{st} t'
            \]
            By definition of $\Rightarrow_{st}$
            \[
                \SAC t_s;\assume{x}{ae}, \_ \Rightarrow_{st} t'_s;\assume{x}{ae'}
            \]
            Then $t' = t'_s;\assume{x}{ae'}$, $\SAC ae, \_, \_ \Rightarrow_{st} ae'$, and $\SAC t_s, \_ \Rightarrow_{st} t'_s$.

    \noindent{By induction hypothesis}
            \[
                \SAC t_s, \_ \Rightarrow_{st} t'_s \implies \SAC t_s \equiv t'_s
            \]
            Because $\SAC t_s, \_ \Rightarrow_{st} t'_s$
            \[
                \SAC t_s \equiv t'_s
            \]

      \noindent{From Lemma}~\ref{lem:sound2exp}
            \[
                \SAC ae, \_, \_ \Rightarrow_{st} ae' \implies \SAC ae \equiv ae'
            \]
            Because $\SAC ae, \_, \_ \Rightarrow_{st} ae'$
            \[
                \SAC ae \equiv ae'
            \]

      \noindent{Because} $\SAC t'_s \equiv t_s$, and $\SAC ae \equiv ae'$, by definition of $\equiv$
            \[
                \SAC t_s;\assume{x}{ae} \equiv t'_s;\assume{x}{ae'}
            \]
            Therefore
            \[
                \SAC t \equiv t'
            \]

    \noindent{Therefore} when $t = t_s;\assume{x}{ae}$
            \[
                \SAC t, \_ \Rightarrow_{st} t' \implies \SAC t \equiv t'
            \]
    
            \noindent{\bf Case 2:}
        $t = t_s;\observed{ae}{e_v}$
 
    \noindent{By assumption}
            \[
                \SAC t, \_ \Rightarrow_{st} t'
            \]
            By definition of $\Rightarrow_{st}$
            \[
                \SAC t_s;\observed{ae}{e_v}, \_ \Rightarrow_{st} t'_s;\observed{ae'}{e_v}
            \]
            Then $t' = t'_s;\observed{ae'}{e_v}$, $\SAC ae, \_, \_ \Rightarrow_{st} ae' $, and $\SAC t_s, \_ \Rightarrow_{st} t'_s$.

    \noindent{By induction hypothesis}
            \[
                \SAC t_s, \_ \Rightarrow_{st} t'_s \implies \SAC t_s \equiv t'_s
            \]
            Because $\SAC t_s, \_ \Rightarrow_{st} t'_s$
            \[
                \SAC t_s \equiv t'_s
            \]

      \noindent{From Lemma}~\ref{lem:sound2exp}
            \[
                \SAC ae, \_, \_ \Rightarrow_{st} ae' \implies \SAC ae \equiv ae'
            \]
            Because $\SAC ae, \_, \_ \Rightarrow_{st} ae'$
            \[
                \SAC ae \equiv ae'
            \]

      \noindent{Because} $\SAC t'_s \equiv t_s$ and $\SAC ae \equiv ae'$, by definition of $\equiv$
            \[
                \SAC t_s;\observed{ae}{e_v} \equiv t'_s;\observed{ae'}{e_v}
            \]
            Therefore
            \[
                \SAC t \equiv t'
            \]

    \noindent{Therefore} when $t = t_s;\observed{ae'}{e_v}$
            \[
                \SAC t, \_ \Rightarrow_{st} t' \implies \SAC t \equiv t'
            \]

    \noindent{Because} all cases have been covered, using induction, 
    for all traces $t, t'$, subproblems $\cS$  and  subtrace $t_s$, 
    \[\SAC t, t_s \Rightarrow_{st} t' \implies \SAC t \equiv t'\]
    \end{proof}

\begin{corollary}
    \label{lem:sound2}
Given a valid trace $t$, a valid subproblem $\cS$, 
a valid subtrace $t_s$, for all traces $t'$ :
    \[t' = \mathsf{StitchTrace}(t, t_s, \cS) \implies \SAC t \equiv t'\]
\end{corollary}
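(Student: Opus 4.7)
The plan is to observe that this corollary is an almost immediate restatement of the lemma proved just above it, specialized via the definition of $\mathsf{StitchTrace}$. Recall that $\mathsf{StitchTrace}$ is defined so that $t' = \mathsf{StitchTrace}(t, t_s, \cS)$ holds exactly when $\SAC t, t_s \Rightarrow_{st} t'$. Thus the hypothesis of the corollary can be unfolded directly into a judgment of the form $\SAC t, t_s \Rightarrow_{st} t'$.

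Given that unfolding, I would simply appeal to the preceding lemma, which states that for all traces $t, t'$, all subproblems $\cS$, and any second argument in the stitching relation (denoted $\_$), $\SAC t, \_ \Rightarrow_{st} t'$ implies $\SAC t \equiv t'$. Instantiating the $\_$ with our particular subtrace $t_s$ yields $\SAC t \equiv t'$, which is the desired conclusion. No induction is needed at this level, since the heavy lifting (induction over traces, and the sub-induction over augmented expressions in Lemma~\ref{lem:sound2exp}) has already been carried out.

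The one thing worth being explicit about is that the validity assumptions on $t$, $\cS$, and $t_s$ are not actually needed to run the argument — the previous lemma is proved for arbitrary $t, t'$ and $\cS$, and does not depend on $t_s$ being a valid trace of the extracted subprogram. The only role of validity here is to ensure that the stitching relation $\SAC t, t_s \Rightarrow_{st} t'$ is in fact inhabited, so that the hypothesis $t' = \mathsf{StitchTrace}(t, t_s, \cS)$ is meaningful. I do not expect any obstacles: the corollary is really just packaging the previous lemma in the notation used elsewhere in the paper.

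Concretely, the proof I would write is a two-line argument: (i) by definition of $\mathsf{StitchTrace}$, $t' = \mathsf{StitchTrace}(t, t_s, \cS)$ gives $\SAC t, t_s \Rightarrow_{st} t'$; (ii) applying the preceding lemma yields $\SAC t \equiv t'$, completing the proof.
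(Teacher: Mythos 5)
Your proposal matches the paper exactly: the corollary is stated there with no separate proof precisely because it is the immediate specialization of the preceding lemma ($\SAC t, \_ \Rightarrow_{st} t' \implies \SAC t \equiv t'$) obtained by unfolding $t' = \mathsf{StitchTrace}(t, t_s, \cS)$ into $\SAC t, t_s \Rightarrow_{st} t'$. Your observation that the validity hypotheses play no role in the implication itself (only in ensuring the stitching relation is inhabited) is also consistent with how the paper proves the lemma for arbitrary second arguments.
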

Therefore, given a trace $t$ and a valid subproblem $\cS$ on $t$, if the 
stitching process succeeds, then the output trace $t'$ will only differ from trace $t$ 
with parts which are within the subproblem $\cS$.

\noindent{Next, we} prove that given a valid trace $t$, a valid subproblem $\cS$ on trace $t$,
    and a subtrace $t_s = \mathsf{ExtractTrace}(t, \cS)$, for any subtrace $t'_s \in \mathsf{Traces}(\mathsf{Program}(t_s))$,
    the stitched trace $t' = \mathsf{StitchTrace}(t, t'_s, \cS)$ is a valid trace from the program of trace $t$ 
    (i.e. $t' \in \mathsf{Traces}(\mathsf{Program}(t))$).

    \noindent{Formally}, given a valid trace $t$, a valid subproblem $\cS$ on $t$, and 
    a subtrace $t_s = \mathsf{ExtractTrace}(t, \cS)$, 
    \[
        \forall~t'_s \in \mathsf{Traces}(\mathsf{Program}(t_s)).~t' = \mathsf{StitchTrace}(t, t'_s, \cS) \wedge t' \in \mathsf{Traces}(\mathsf{Program}(t))
    \]

    \noindent{To prove the above statement, we require a few lemmas first which prove a similar condition for augmented expressions 
    and traces under non-empty environment}

    \begin{lemma}
    \noindent{Given} environements $\sigma_v, \sigma_{id}, \sigma'_v$ and $\sigma'_{id}$
    such that $\dom \sigma_v = \dom \sigma_{id} = \dom \sigma'_v = 
    \dom \sigma'_{id}$ an augmented expression $ae$ within a trace $t$, 
    and a valid subproblem $\cS$ over trace $t$
    \[
        \exists~e, p_s. \SI{}{} e \Rightarrow_s \_, \_, ae
    \wedge \SAC ae \Rightarrow_{ex} ae_s, t_s\]
    \[
    \wedge t_s;\assume{z}{ae_s} \Rightarrow_r p_s
    \wedge \SI{'}{'} p_s \Rightarrow_s  t'_s;\assume{z}{ae'_s}
    \]
        \[
        \implies \exists~ae'.
        \SAC ae, ae'_s, t'_s \Rightarrow_{st} ae', \_
        \wedge \SI{'}{'} e \Rightarrow_s \_,\_, ae' 
        \]    \label{lem:sound3exp}
    \end{lemma}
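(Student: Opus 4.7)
I plan to prove Lemma~\ref{lem:sound3exp} by structural induction on the augmented expression $ae$, with a companion mutual induction on traces (corresponding to the trace-level rules of $\Rightarrow_{ex}$ and $\Rightarrow_{st}$) used to handle the bodies of the lambda-application case. In each case I unfold the unique $\Rightarrow_s$ rule that witnesses $\SI{}{} e \Rightarrow_s \_,\_, ae$, the unique $\Rightarrow_{ex}$ rule producing $ae_s, t_s$, the $\Rightarrow_r$ rule producing $p_s$, and then trace through the re-execution $\SI{'}{'} p_s \Rightarrow_s t'_s;\assume{z}{ae'_s}$, exhibit the witness $ae'$ produced by the stitching rule $\SAC ae, ae'_s, t'_s \Rightarrow_{st} ae', \_$, and finally check that the direct re-execution $\SI{'}{'} e \Rightarrow_s \_,\_, ae'$ produces the same augmented expression. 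Observation~\ref{ex:obs2}, which tells us that extraction and stitching both preserve the attached value $\cV(\cdot)$, will be used throughout to line up the value components.

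The base cases $ae = \ndi{x}{x}{id}$, $ae = \ndi{x(id')}{v}{id}$, and $ae = \ndi{\lambda.x~e}{v}{id}$ are essentially immediate: extraction leaves $ae$ unchanged with $t_s = \emptyset$, so $p_s = \assume{z}{e}$, re-execution under $\sigma'_v,\sigma'_{id}$ yields an augmented expression of the same shape but with the new value and identifier, and the corresponding stitching rule reproduces the same shape. The inductive cases for $\ndi{(ae_1~ae_2)\perp}{v}{id}$ and for $\ndi{(ae_1~ae_2)aa}{v}{id}$ with $\ID(ae_1) \in \cS$ reduce immediately to the induction hypotheses applied to $ae_1$ and $ae_2$; since extraction produces the concatenated trace $t_s;t'_s$ and the subprograms generated by $\Rightarrow_r$ are independent (thanks to the freshness of introduced names, Observation~\ref{ex:obs}), the two re-executions under $\sigma'_v,\sigma'_{id}$ can be recombined by the corresponding stitching rule.

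The hard case is $ae = \ndi{(ae_1~ae_2)y=ae_3}{v}{id}$ with $\ID(ae_1) \notin \cS$. Here extraction pulls $ae_3$ to the top, producing $t_s = t_{s,1};\assume{x}{ae'_{s,1}};t_{s,2};\assume{y}{ae'_{s,2}};t_{s,3}$ and $ae_s = ae'_{s,3}$, so $p_s$ is a four-piece program whose re-execution evaluates $e_1$ in $\sigma'_v,\sigma'_{id}$, binds the result to $x$, evaluates $e_2$ similarly, binds the result to $y$, and then evaluates $e_3$ in the environment extended with $x \mapsto \cV(ae'_{s,1})$ and $y \mapsto \cV(ae'_{s,2})$. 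The obstacle is that the original $\Rightarrow_s$ rule for $(e_1~e_2)$ instead evaluates $e_3[x/y']$ in an environment derived from the closure captured inside the re-evaluated $ae_1$; I will show these two environments agree on $\mathsf{FreeVariables}(e_3)$ by combining the invariant that the captured closure is $\mathsf{RestrictKeys}(\sigma'_v,\mathsf{FreeVariables}(\lambda.y~e_3))$ with $\alpha$-equivalence of $y$ and $y'$, so that the bindings introduced by the two hoisted $\assume$ statements replicate exactly the argument passing performed by the original rule. The companion lemma on traces then supplies the induction for $ae_3$. The distribution cases are analogous: for $id' \notin \cS$ the hoisted $\mathsf{observe}$ statement constrains re-execution to the original sampled value, which is precisely what the stitching rule for this case copies back into $ae'$; for $id' \in \cS$ the subproblem owns the stochastic choice and the induction hypothesis on $ae_1$ suffices.

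Alongside this expression-level induction I will prove the trace-level companion by induction on $t$: given $\SI{}{} p \Rightarrow_s t$, $\SAC t \Rightarrow_{ex} t_s$, and a re-execution $\SI{'}{'} p_s \Rightarrow_s t'_s$ of the rolled-back subprogram $p_s$ in a compatible environment, one can exhibit $t'$ with $\SAC t, t'_s \Rightarrow_{st} t'$ and $\SI{'}{'} p \Rightarrow_s t'$. Each $\assume{x}{ae};t$ and $\observed{ae\#id}{e_v};t$ step reduces to Lemma~\ref{lem:sound3exp} applied to $ae$ plus the trace induction hypothesis on the tail, where the freshness from Observation~\ref{ex:obs} guarantees that the hoisted prefix of $t_s$ does not shadow any variable needed by the continuation. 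The main obstacle remains the bookkeeping in the application-with-existential-body case; once the environment-agreement argument above is in place, every other case follows by mechanical unfolding of the rules.
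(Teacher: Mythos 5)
Your proposal matches the paper's own proof in all essentials: a structural induction on $ae$ that unfolds the $\Rightarrow_s$, $\Rightarrow_{ex}$, $\Rightarrow_r$ and re-execution derivations in each case, invokes the induction hypotheses on the subexpressions (instantiating the lemma's two arbitrary equal-domain environments with the closure environments in the $y{=}ae_3$ case), and closes each case with the freshness and value-preservation facts of Observations~\ref{ex:obs} and~\ref{ex:obs2}. The only cosmetic difference is that the paper handles the application body $ae_3$ directly by the expression-level induction hypothesis rather than by a mutually inductive trace-level companion, proving the trace-level lemma afterwards as a consequence.
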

    \begin{proof}
        \noindent{Proof by induction}

    \noindent{\bf Base Case:}
    
    \noindent{Case 1:}
    $ae = \ndi{x}{x}{id}$
   
    \noindent{By assumption}
    \[
        \SI{}{} e \Rightarrow_s \_, \_, ae
    \]
    By definition of $\Rightarrow_s$
    \[
        \SI{}{} x \Rightarrow_s \_, \_, \ndi{x}{x}{id}
    \]
    Then $e = x$ and $x \notin \dom \sigma_v$.

    \noindent{By assumption}
    \[
        \SAC ae \Rightarrow_{ex} ae_s, t_s
    \]
    By definition of $\Rightarrow_{ex}$
    \[
        \SAC \ndi{x}{x}{id} \Rightarrow_{ex} \ndi{x}{x}{id}, \emptyset
    \]
    Then $ae_s = \ndi{x}{x}{id}$ and $t_s = \emptyset$.

    \noindent{By assumption}
    \[
        t_s;\assume{z}{ae_s} \Rightarrow_r p_s
    \]
    By definition of $\Rightarrow_r$
    \[
        \assume{z}{\ndi{x}{x}{id}} \Rightarrow_r \assume{z}{x}
    \]
    Then $p_s = \assume{z}{x}$.

    \noindent{By assumption}
    \[
        \SI{'}{'} p_s \Rightarrow_s t'_s;\assume{z}{ae'_s}
    \]
    By definition of $\Rightarrow_s$, $\dom \sigma'_v = \dom \sigma_v$
    \[
        \SI{'}{'} \assume{z}{x} \Rightarrow_s \assume{z}{\ndi{x}{x}{id'}}
    \]
    Then $t'_s = \emptyset$ and $ae'_s = \ndi{x}{x}{id'}$.

    \noindent{Consider} $ae' = \ndi{x}{x}{id'}$.

    \noindent{By definition} of $\Rightarrow_{st}$
    \[
        \SAC \ndi{x}{x}{id}, \ndi{x}{x}{id'}, \emptyset \Rightarrow_{st} \ndi{x}{x}{id'}
    \]
    Therefore
    \[
        \SAC ae, ae'_s, t'_s \Rightarrow_{st} ae'
    \]

    \noindent{By definition} of $\Rightarrow_s$ and $x \in \dom \sigma'_v$
    \[
        \SI{'}{'} x \Rightarrow_s \_, \_, \ndi{x}{x}{id'} 
    \]
    Therefore
    \[
        \SI{'}{'} e \Rightarrow_s \_, \_, ae'
    \]
    
    \noindent{Therefore} when $ae = \ndi{x}{x}{id}$
    \[
        \exists~e, p_s. \SI{}{} e \Rightarrow_s \_, \_, ae
    \wedge \SAC ae \Rightarrow_{ex} ae_s, t_s\]
    \[
    \wedge t_s;\assume{z}{ae_s} \Rightarrow_r p_s
    \wedge \SI{'}{'} p_s \Rightarrow_s  t'_s;\assume{z}{ae'_s}
    \]
    \[
        \implies \exists~ae'.
        \SAC ae, ae'_s, t'_s \Rightarrow_{st} ae', \_
        \wedge \SI{'}{'} e \Rightarrow_s \_,\_, ae'
    \]

    \noindent{Case 2:}
    $ae = \ndi{x(id_v)}{v}{id}$
   
    \noindent{By assumption}
    \[
        \SI{}{} e \Rightarrow_s \_, \_, ae
    \]
    By definition of $\Rightarrow_s$
    \[
        \SI{}{} x \Rightarrow_s \_, \_, \ndi{x(id_v)}{v}{id}
    \]
    Then $e = x$, $v = \sigma_v(x)$, and $id_v = \sigma_{id}(x)$.

    \noindent{By assumption}
    \[
        \SAC ae \Rightarrow_{ex} ae_s, t_s
    \]
    By definition of $\Rightarrow_{ex}$
    \[
        \SAC \ndi{x(id_v)}{v}{id} \Rightarrow_{ex} \ndi{x(id_v)}{v}{id}, \emptyset
    \]
    Then $ae_s = \ndi{x(id_v)}{v}{id}$ and $t_s = \emptyset$.

    \noindent{By assumption}
    \[
        t_s;\assume{z}{ae_s} \Rightarrow_r p_s
    \]
    By definition of $\Rightarrow_r$
    \[
        \assume{z}{\ndi{x(id_v)}{v}{id}} \Rightarrow_r \assume{z}{x}
    \]
    Then $p_s = \assume{z}{x}$.

    \noindent{By assumption}
    \[
        \SI{'}{'} p_s \Rightarrow_s t'_s;\assume{z}{ae'_s}
    \]
    By definition of $\Rightarrow_s$, $\dom \sigma'_v = \dom \sigma_v$
    \[
        \SI{'}{'} \assume{z}{x} \Rightarrow_s \assume{z}{\ndi{x(id'_v)}{v'}{id'}}
    \]
    Then $t'_s = \emptyset$, $ae'_s = \ndi{x(id'_v)}{v'}{id'}$, $v' = \sigma'_v(x)$, and $id'_v = \sigma'_{id}(x)$.

    \noindent{Consider} $ae' = \ndi{x(id'_v)}{v'}{id'}$.

    \noindent{By definition} of $\Rightarrow_{st}$
    \[
        \SAC \ndi{x(id_v)}{v}{id}, \ndi{x(id'_v)}{v'}{id'}, \emptyset \Rightarrow_{st} \ndi{x(id'_v)}{v'}{id'}
    \]
    Therefore
    \[
        \SAC ae, ae'_s, t'_s \Rightarrow_{st} ae'
    \]

    \noindent{By definition} of $\Rightarrow_s$,$ v' = \sigma'_v(x)$, and $id'_v = \sigma'_{id}(x)$
    \[
        \SI{'}{'} x \Rightarrow_s \_, \_, \ndi{x(id'_v)}{v'}{id'} 
    \]
    Therefore
    \[
        \SI{'}{'} e \Rightarrow_s \_, \_, ae'
    \]
    
    \noindent{Therefore} when $ae = \ndi{x(id_v)}{v}{id}$
    \[
        \exists~e, p_s. \SI{}{} e \Rightarrow_s \_, \_, ae
    \wedge \SAC ae \Rightarrow_{ex} ae_s, t_s\]
    \[
    \wedge t_s;\assume{z}{ae_s} \Rightarrow_r p_s
    \wedge \SI{'}{'} p_s \Rightarrow_s  t'_s;\assume{z}{ae'_s}
    \]
    \[
        \implies \exists~ae'.
        \SAC ae, ae'_s, t'_s \Rightarrow_{st} ae', \_
        \wedge \SI{'}{'} e \Rightarrow_s \_,\_, ae'
    \]

    \noindent{Case 3:}
    $ae = \ndi{\lambda.x~e'}{v}{id}$
    
    \noindent{By assumption}
    \[
        \SI{}{} e \Rightarrow_s \_, \_, ae
    \]
    By definition of $\Rightarrow_s$
    \[
        \SI{}{} \lambda.x~e' \Rightarrow_s \_, \_, \ndi{\lambda.x~e'}{v}{id}
    \]
    Then $e = \lambda.x~e'$.

    \noindent{By assumption}
    \[
        \SAC ae \Rightarrow_{ex} ae_s, t_s
    \]
    By definition of $\Rightarrow_{ex}$
    \[
        \SAC \ndi{\lambda.x~e'}{v}{id} \Rightarrow_{ex} \ndi{\lambda.x~e'}{v}{id}, \emptyset
    \]
    Then $ae_s = \ndi{\lambda.x~e'}{v}{id}$ and $t_s = \emptyset$.

    \noindent{By assumption}
    \[
        t_s;\assume{z}{ae_s} \Rightarrow_r p_s
    \]
    By definition of $\Rightarrow_r$
    \[
        \assume{z}{\ndi{\lambda.x~e'}{v}{id}} \Rightarrow_r \assume{z}{\lambda.x~e'}
    \]
    Then $p_s = \assume{z}{\lambda.x~e'}$.

    \noindent{By assumption}
    \[
        \SI{'}{'} p_s \Rightarrow_s t'_s;\assume{z}{ae'_s}
    \]
    By definition of $\Rightarrow_s$, $\dom \sigma'_v = \dom \sigma_v$
    \[
        \SI{'}{'} \assume{z}{\lambda.x~e'} \Rightarrow_s \assume{z}{\ndi{\lambda.x~e'}{v'}{id'}}
    \]
    Then $t'_s = \emptyset$ and $ae'_s = \ndi{\lambda.x~e'}{v'}{id'}$.

    \noindent{Consider} $ae' = \ndi{\lambda.x~e'}{v'}{id'}$.

    \noindent{By definition} of $\Rightarrow_{st}$
    \[
        \SAC \ndi{\lambda.x~e'}{v}{id}, \ndi{\lambda.x~e'}{v'}{id'}, \emptyset \Rightarrow_{st}\ndi{\lambda.x~e'}{v'}{id'}
    \]
    Therefore
    \[
        \SAC ae, ae'_s, t'_s \Rightarrow_{st} ae'
    \]

    \noindent{By definition} of $\Rightarrow_s$ and $\SI{'}{'} \lambda.x~e' \Rightarrow_s \_, \_, \ndi{\lambda.x~e'}{v'}{id'}$
    \[
        \SI{'}{'} \lambda.x~e' \Rightarrow_s \_, \_, \ndi{\lambda.x~e'}{v'}{id'} 
    \]
    Therefore
    \[
        \SI{'}{'} e \Rightarrow_s \_, \_, ae'
    \]
    
    \noindent{Therefore} when $ae = \ndi{\lambda.x~e'}{v}{id}$
    \[
        \exists~e, p_s. \SI{}{} e \Rightarrow_s \_, \_, ae
    \wedge \SAC ae \Rightarrow_{ex} ae_s, t_s\]
    \[
    \wedge t_s;\assume{z}{ae_s} \Rightarrow_r p_s
    \wedge \SI{'}{'} p_s \Rightarrow_s  t'_s;\assume{z}{ae'_s}
    \]
    \[
        \implies \exists~ae'.
        \SAC ae, ae'_s, t'_s \Rightarrow_{st} ae', \_
        \wedge \SI{'}{'} e \Rightarrow_s \_,\_, ae'
    \]

    \noindent{\bf Induction Cases:}
    
    \noindent{Case 1:} 
    $ae = \ndi{\mathsf{Dist}(ae_1 \#id_e)=ae_v}{v}{id}$ and $id_e \in \cS$

    \noindent{By assumption}
    \[
        \SI{}{} e \Rightarrow_s \_, \_, ae
    \]
    By definition of $\Rightarrow_s$
    \[
        \SI{}{} \mathsf{Dist}(e_1) \Rightarrow_s \_, \_, \ndi{\mathsf{Dist}(ae_1 \#id_e)=ae_v}{v}{id}
    \]
    Then $e = \mathsf{Dist}(e_1)$, $\SI{}{} e_1 \Rightarrow_s \_, \_, ae_1$.

    \noindent{By assumption}
    \[
        \SAC ae \Rightarrow_{ex} ae_s, t_s
    \]
    By definition of $\Rightarrow_{ex}$
    \[
        \SAC \ndi{\mathsf{Dist}(ae_1 \#id_e)=ae_v}{v}{id} \Rightarrow_{ex} \ndi{\mathsf{Dist}(ae^1_s \#id_e)=ae_v}{v}{id}, t^1_s
    \]
    Then $ae_s = \ndi{\mathsf{Dist}(ae_1 \#id_e)=ae_v}{v}{id}$, $t_s = t^1_s$, 
    and $\SAC ae_1 \Rightarrow_{ex} ae^1_s, t^1_s$.

    \noindent{By assumption}
    \[
        t_s;\assume{z}{ae_s} \Rightarrow_r p_s
    \]
    By definition of $\Rightarrow_r$
    \[
        \assume{z}{\ndi{\mathsf{Dist}(ae^1_s \#id_e)=ae_v}{v}{id}} \Rightarrow_r \assume{z}{\mathsf{Dist}(e^1_s)}
    \]
    Then $p_s = p^1_s;\assume{z}{\mathsf{Dist}(e^1_s)}$ and $ae^1_s \Rightarrow_r e^1_s$.

    \noindent{By assumption}
    \[
        \SI{'}{'} p_s \Rightarrow_s t'_s;\assume{z}{ae'_s}
    \]
    By definition of $\Rightarrow_s$, 
    \[
        \SI{'}{'} p^1_s;\assume{z}{\mathsf{Dist}(e_1) } \Rightarrow_s t^2_s;\assume{z}{\ndi{\mathsf{Dist}(ae^2_s \#id'_e)=ae'_v}{v'}{id'}}
    \]
    Then $t'_s = t^2_s$, $ae'_s = \ndi{\mathsf{Dist}(ae^2_s \#id_e)=ae'_v}{v'}{id'}$, and $\SI{'}{'} p^1_s;\assume{z}{e_1} \Rightarrow_s t^2_s;\assume{z}{ae^2_s}$.

   \noindent{By induction hypothesis}
    \[
        \exists~e_1, p^1_s, e^1_s. \SI{}{} e_1 \Rightarrow_s \_, \_, ae_1
    \wedge \SAC ae_1 \Rightarrow_{ex} ae^1_s, t^1_s\]
    \[
        \wedge t^1_s;\assume{z}{ae^1_s} \Rightarrow_r p^1_s;\assume{z}{e^1_s}
    \]
            \[
    \wedge \SI{'}{'} p^1_s;\assume{z}{e^1_s} \Rightarrow_s  t^2_s;\assume{z}{ae^2_s}
    \]
    \[
        \implies \exists~ae'_1.
        \SAC ae_1, ae^2_s, t^2_s \Rightarrow_{st} ae'_1
        \wedge \SI{'}{'} e_1 \Rightarrow_s \_,\_, ae'_1
    \]
    Because $\SI{}{} e_1 \Rightarrow_s \_, \_, ae_1$, $\SAC ae_1 \Rightarrow_{ex} ae^1_s, t^1_s$, 
    $t^1_s;\assume{z}{ae^1_s} \Rightarrow_r p^1_s;\assume{z}{e^1_s}$, and 
    $\SI{'}{'} p^1_s;\assume{z}{e^1_s} \Rightarrow_s t^2_s;\assume{z}{ae^2_s}$.
    \[
\SAC ae_1, ae^2_s, t^2_s \Rightarrow_{st} ae'_1
        \wedge \SI{'}{'} e_1 \Rightarrow_s \_,\_, ae'_1
    \]

    \noindent{Consider} $ae' = \ndi{\mathsf{Dist}(ae'_1 \#id'_e)=ae'_v}{v'}{id'}$.

    \noindent{By definition} of $\Rightarrow_{st}$, $\SAC ae_1, ae^2_s, t^2_s \Rightarrow_{st} ae'_1$
    \[
        \begin{array}{c}
        \SAC \ndi{\mathsf{Dist}(ae^1_s \#id_e)=ae_v}{v}{id}, 
        \ndi{\mathsf{Dist}(ae^2_s \#id'_e)=ae'_v}{v'}{id'}, t'_s \\ \Rightarrow_{st} \ndi{\mathsf{Dist}(ae'_1 \#id'_e)=ae'_v}{v'}{id'}
        \end{array}
    \]
    Therefore
    \[
        \SAC ae, ae'_s, t'_s \Rightarrow_{st} ae'
    \]

    \noindent{By definition} of $\Rightarrow_s$ and $\SI{'}{'} e_1 \Rightarrow_s \_, \_, ae_1$,
    \[
        \SI{'}{'} e \Rightarrow_s \_, \_,  \ndi{\mathsf{Dist}(ae'_1 \#id'_e)=ae'_v}{v'}{id'} 
    \]
    Therefore
    \[
        \SI{'}{'} e \Rightarrow_s \_, \_, ae'
    \]
    
    \noindent{Therefore} when $ae = \ndi{\mathsf{Dist}(ae_1 \#id_e)=ae_v}{v}{id} $
    \[
        \exists~e, p_s. \SI{}{} e \Rightarrow_s \_, \_, ae
    \wedge \SAC ae \Rightarrow_{ex} ae_s, t_s\]
    \[
    \wedge t_s;\assume{z}{ae_s} \Rightarrow_r p_s
    \wedge \SI{'}{'} p_s \Rightarrow_s  t'_s;\assume{z}{ae'_s}
    \]
    \[
        \implies \exists~ae'.
        \SAC ae, ae'_s, t'_s \Rightarrow_{st} ae', \_
        \wedge \SI{'}{'} e \Rightarrow_s \_,\_, ae'
    \]

    \noindent{Case 2:} 
    $ae = \ndi{\mathsf{Dist}(ae_1 \#id_e)=ae_2}{v}{id}$ and $id_e \notin \cS$

    \noindent{By assumption}
    \[
        \SI{}{} e \Rightarrow_s \_, \_, ae
    \]
    By definition of $\Rightarrow_s$
    \[
        \SI{}{} \mathsf{Dist}(e_1) \Rightarrow_s \_, \_, \ndi{\mathsf{Dist}(ae_1 \#id_e)=ae_2}{v}{id}
    \]
    Then $e = \mathsf{Dist}(e_1)$, $\SI{}{} e_1 \Rightarrow_s \_, \_, ae_1$, $ae_2 \Rightarrow_r e_2$
    and $\SI{}{} e_2 \Rightarrow_s \_, \_, ae_2$.

    \noindent{By assumption}
    \[
        \SAC ae \Rightarrow_{ex} ae_s, t_s
    \]
    By definition of $\Rightarrow_{ex}$
    \[
        \SAC \ndi{\mathsf{Dist}(ae_1 \#id_e)=ae_2}{v}{id} \Rightarrow_{ex} ae^3_s, t^1_s;\observed{ae^1_s}{e_2};t^3_s
    \]
    Then $ae_s = ae^3_s$, $t_s = t^1_s;\observed{ae^1_s}{e_2};t^3_s$, 
    $\SAC ae_1 \Rightarrow_{ex} ae^1_s, t^1_s$, and $\SAC ae_2 \Rightarrow_{ex} ae^3_s, t^3_s$.

    \noindent{By assumption}
    \[
        t_s;\assume{z}{ae_s} \Rightarrow_r p_s
    \]
    By definition of $\Rightarrow_r$
    \[
        \begin{array}{c}
        t^1_s;\observed{ae^1_s}{e_2};t^3_s;\assume{z}{ae^3_s} \Rightarrow_r \\ p^1_s;\observed{e^1_s}{e_2};p^2_s\assume{z}{\mathsf{Dist}(e^1_s)}
        \end{array}
    \]
    Then $p_s = p^1_s;\observed{e^1_s}{e_2};p^2_s;\assume{z}{e^2_s}$, $ae^1_s \Rightarrow_r e^1_s$, 
    $t^1_s \Rightarrow_r p^1_s$, $ae^3_s \Rightarrow_r e^2_s$, and $t^3_s \Rightarrow_r p^2_s$.

    \noindent{By assumption}
    \[
        \SI{'}{'} p_s \Rightarrow_s t'_s;\assume{z}{ae'_s}
    \]
    By definition of $\Rightarrow_s$, 
    \[
        \begin{array}{c}
        \SI{'}{'} p^1_s;\observed{e^1_s}{e_2};p^2_s;\assume{z}{e^2_s} \Rightarrow_s \\ t^2_s;\observed{ae^2_s}{e_2};t^4_s;
        \assume{z}{ae^4_s}
        \end{array}
    \]
    Then $t'_s = t^2_s;\observed{ae^2_s}{e_2};t^4_s$, $ae'_s = ae^4_s$, $\SI{'}{'} p^1_s;\assume{z}{e^1_s} \Rightarrow_s t^2_s;\assume{z}{ae^2_s}$,
    and $\SI{'}{'} p^2_s;\assume{z}{e^2_s} \Rightarrow_s t^4_s;\assume{z}{ae^4_s}$.

   \noindent{By induction hypothesis}
    \[
        \exists~e_1, p^1_s, e^1_s. \SI{}{} e_1 \Rightarrow_s \_, \_, ae_1
    \wedge \SAC ae_1 \Rightarrow_{ex} ae^1_s, t^1_s\]
    \[
        \wedge t^1_s;\assume{z}{ae^1_s} \Rightarrow_r p^1_s;\assume{z}{e^1_s}
    \]
            \[
    \wedge \SI{'}{'} p^1_s;\assume{z}{e^1_s} \Rightarrow_s  t^2_s;\assume{z}{ae^2_s}
    \]
    \[
        \implies \exists~ae'_1.
        \SAC ae_1, ae^2_s, t^2_s \Rightarrow_{st} ae'_1
        \wedge \SI{'}{'} e_1 \Rightarrow_s \_,\_, ae'_1
    \]
    Because $\SI{}{} e_1 \Rightarrow_s \_, \_, ae_1$, $\SAC ae_1 \Rightarrow_{ex} ae^1_s, t^1_s$, 
    $t^1_s;\assume{z}{ae^1_s} \Rightarrow_r p^1_s;\assume{z}{e^1_s}$, and 
    $\SI{'}{'} p^1_s;\assume{z}{e^1_s} \Rightarrow_s t^2_s;\assume{z}{ae^2_s}$.
    \[
\SAC ae_1, ae^2_s, t^2_s \Rightarrow_{st} ae'_1
        \wedge \SI{'}{'} e_1 \Rightarrow_s \_,\_, ae'_1
    \]

   \noindent{By induction hypothesis}
    \[
        \exists~e_2, p^2_s, e^2_s. \SI{}{} e_2 \Rightarrow_s \_, \_, ae_2
    \wedge \SAC ae_2 \Rightarrow_{ex} ae^3_s, t^3_s\]
    \[
        \wedge t^3_s;\assume{z}{ae^3_s} \Rightarrow_r p^2_s;\assume{z}{e^2_s}
    \]
            \[
    \wedge \SI{'}{'} p^2_s;\assume{z}{e^2_s} \Rightarrow_s  t^4_s;\assume{z}{ae^4_s}
    \]
    \[
        \implies \exists~ae'_2.
        \SAC ae_2, ae^4_s, t^4_s \Rightarrow_{st} ae'_2
        \wedge \SI{'}{'} e_2 \Rightarrow_s \_,\_, ae'_2
    \]
    Because $\SI{}{} e_2 \Rightarrow_s \_, \_, ae_2$, $\SAC ae_2 \Rightarrow_{ex} ae^3_s, t^3_s$, 
    $t^3_s;\assume{z}{ae^3_s} \Rightarrow_r p^2_s;\assume{z}{e^2_s}$, and 
    $\SI{'}{'} p^2_s;\assume{z}{e^2_s} \Rightarrow_s t^4_s;\assume{z}{ae^4_s}$.
    \[
\SAC ae_2, ae^4_s, t^4_s \Rightarrow_{st} ae'_2
        \wedge \SI{'}{'} e_2 \Rightarrow_s \_,\_, ae'_2
    \]

    \noindent{Consider} $ae' = \ndi{\mathsf{Dist}(ae'_1 \#id_e)=ae'_2}{v'}{id'}$.

    \noindent{By definition} of $\Rightarrow_{st}$, $\SAC ae_1, ae^2_s, t^2_s \Rightarrow_{st} ae'_1$, and
    $\SAC ae_2, ae^4_s, t^4_s \Rightarrow_{st} ae'_2$,
    \[
        \begin{array}{c}
        \SAC \ndi{\mathsf{Dist}(ae_1 \#id_e)=ae_2}{v}{id}, 
        ae^4_s, t'_s \\ \Rightarrow_{st} \ndi{\mathsf{Dist}(ae'_1 \#id'_e)=ae'_2}{v'}{id'}
        \end{array}
    \]
    Therefore
    \[
        \SAC ae, ae'_s, t'_s \Rightarrow_{st} ae'
    \]

    \noindent{By definition} of $\Rightarrow_s$, $\SI{'}{'} e_1 \Rightarrow_s \_, \_, ae'_1$, $\SI{'}{'} e_2 \Rightarrow_s, \_, \_, ae'_2$,
    \[
        \SI{'}{'} e \Rightarrow_s \_, \_,  \ndi{\mathsf{Dist}(ae'_1 \#id_e)=ae'_2}{v'}{id'} 
    \]
    Therefore
    \[
        \SI{'}{'} e \Rightarrow_s \_, \_, ae'
    \]
    
    \noindent{Therefore} when $ae = \ndi{\mathsf{Dist}(ae_1 \#id_e)=ae_2}{v}{id} $
    \[
        \exists~e, p_s. \SI{}{} e \Rightarrow_s \_, \_, ae
    \wedge \SAC ae \Rightarrow_{ex} ae_s, t_s\]
    \[
    \wedge t_s;\assume{z}{ae_s} \Rightarrow_r p_s
    \wedge \SI{'}{'} p_s \Rightarrow_s  t'_s;\assume{z}{ae'_s}
    \]
    \[
        \implies \exists~ae'.
        \SAC ae, ae'_s, t'_s \Rightarrow_{st} ae', \_
        \wedge \SI{'}{'} e \Rightarrow_s \_,\_, ae'
    \]

    \noindent{Case 3:} 
    $ae = \ndi{(ae_1~ae_2)\perp}{v}{id}$ and $ID(ae_1) \notin \cS$

    \noindent{By assumption}
    \[
        \SI{}{} e \Rightarrow_s \_, \_, ae
    \]
    By definition of $\Rightarrow_s$
    \[
        \SI{}{} (e_1~e_2) \Rightarrow_s \_, \_, \ndi{(ae_1~ae_2)\perp}{v}{id}
    \]
    Then $e = (e_1~e_2)$, $\SI{}{} e_1 \Rightarrow_s \_, \_, ae_1$, 
    and $\SI{}{} e_2 \Rightarrow_s \_, \_, ae_2$.

    \noindent{By assumption}
    \[
        \SAC ae \Rightarrow_{ex} ae_s, t_s
    \]
    By definition of $\Rightarrow_{ex}$
    \[
        \SAC \ndi{(ae_1~ae_2)\perp}{v}{id} \Rightarrow_{ex} \ndi{(ae^1_s~ae^3_s)\perp}{v}{id}, t^1_s;t^3_s
    \]
    Then $ae_s = \ndi{(ae^1_s~ae^3_s)\perp}{v}{id}$, $t_s = t^1_s;t^3_s$, 
    $\SAC ae_1 \Rightarrow_{ex} ae^1_s, t^1_s$, and $\SAC ae_2 \Rightarrow_{ex} ae^3_s, t^3_s$.

    \noindent{By assumption}
    \[
        t_s;\assume{z}{ae_s} \Rightarrow_r p_s
    \]
    By definition of $\Rightarrow_r$
    \[
        t^1_s;t^3_s;\assume{z}{\ndi{(ae^1_s~ae^3_s)\perp}{v}{id}} \Rightarrow_r p^1_s;p^2_s\assume{z}{(e^1_s~e^2_s)}
    \]
    Then $p_s = p^1_s;p^2_s;\assume{z}{(e^1_s~e^2_s)}$, $ae^1_s \Rightarrow_r e^1_s$, 
    $t^1_s \Rightarrow_r p^1_s$, $ae^3_s \Rightarrow_r e^2_s$, and $t^3_s \Rightarrow_r p^2_s$.

    \noindent{By assumption}
    \[
        \SI{'}{'} p_s \Rightarrow_s t'_s;\assume{z}{ae'_s}
    \]
    By definition of $\Rightarrow_s$, 
    \[
        \SI{'}{'} p^1_s;p^2_s;\assume{z}{(e^1_s~e^2_s)} \Rightarrow_s t^2_s;t^4_s;
        \assume{z}{\ndi{(ae^2_s~ae^4_s)\perp}{v}{id}}
    \]
    Then $t'_s = t^2_s;t^4_s$, $ae'_s = \ndi{(ae^2_s~ae^4_s)\perp}{v}{id}$, $\SI{'}{'} p^1_s;\assume{z}{e^1_s} \Rightarrow_s t^2_s;\assume{z}{ae^2_s}$,
    and $\SI{'}{'} p^2_s;\assume{z}{e^2_s} \Rightarrow_s t^4_s;\assume{z}{ae^4_s}$.

   \noindent{By induction hypothesis}
    \[
        \exists~e_1, p^1_s, e^1_s. \SI{}{} e_1 \Rightarrow_s \_, \_, ae_1
    \wedge \SAC ae_1 \Rightarrow_{ex} ae^1_s, t^1_s\]
    \[
        \wedge t^1_s;\assume{z}{ae^1_s} \Rightarrow_r p^1_s;\assume{z}{e^1_s}
    \]
            \[
    \wedge \SI{'}{'} p^1_s;\assume{z}{e^1_s} \Rightarrow_s  t^2_s;\assume{z}{ae^2_s}
    \]
    \[
        \implies \exists~ae'_1.
        \SAC ae_1, ae^2_s, t^2_s \Rightarrow_{st} ae'_1
        \wedge \SI{'}{'} e_1 \Rightarrow_s \_,\_, ae'_1
    \]
    Because $\SI{}{} e_1 \Rightarrow_s \_, \_, ae_1$, $\SAC ae_1 \Rightarrow_{ex} ae^1_s, t^1_s$, 
    $t^1_s;\assume{z}{ae^1_s} \Rightarrow_r p^1_s;\assume{z}{e^1_s}$, and 
    $\SI{'}{'} p^1_s;\assume{z}{e^1_s} \Rightarrow_s t^2_s;\assume{z}{ae^2_s}$.
    \[
\SAC ae_1, ae^2_s, t^2_s \Rightarrow_{st} ae'_1
        \wedge \SI{'}{'} e_1 \Rightarrow_s \_,\_, ae'_1
    \]

   \noindent{By induction hypothesis}
    \[
        \exists~e_2, p^2_s, e^2_s. \SI{}{} e_2 \Rightarrow_s \_, \_, ae_2
    \wedge \SAC ae_2 \Rightarrow_{ex} ae^3_s, t^3_s\]
    \[
        \wedge t^3_s;\assume{z}{ae^3_s} \Rightarrow_r p^2_s;\assume{z}{e^2_s}
    \]
            \[
    \wedge \SI{'}{'} p^2_s;\assume{z}{e^2_s} \Rightarrow_s  t^4_s;\assume{z}{ae^4_s}
    \]
    \[
        \implies \exists~ae'_2.
        \SAC ae_2, ae^4_s, t^4_s \Rightarrow_{st} ae'_2
        \wedge \SI{'}{'} e_2 \Rightarrow_s \_,\_, ae'_2
    \]
    Because $\SI{}{} e_2 \Rightarrow_s \_, \_, ae_2$, $\SAC ae_2 \Rightarrow_{ex} ae^3_s, t^3_s$, 
    $t^3_s;\assume{z}{ae^3_s} \Rightarrow_r p^2_s;\assume{z}{e^2_s}$, and 
    $\SI{'}{'} p^2_s;\assume{z}{e^2_s} \Rightarrow_s t^4_s;\assume{z}{ae^4_s}$.
    \[
\SAC ae_2, ae^4_s, t^4_s \Rightarrow_{st} ae'_2
        \wedge \SI{'}{'} e_2 \Rightarrow_s \_,\_, ae'_2
    \]

    \noindent{Consider} $ae' = \ndi{(ae'_1~ae'_2)\perp}{v'}{id'}$.

    \noindent{By definition} of $\Rightarrow_{st}$, $\SAC ae_1, ae^2_s, t^2_s \Rightarrow_{st} ae'_1$, and
    $\SAC ae_2, ae^4_s, t^4_s \Rightarrow_{st} ae'_2$,
    \[
        \begin{array}{c}
        \SAC \ndi{(ae_1~ae_2)\perp}{v}{id}, \ndi{(ae^2_s~ae^4_s)\perp}{v'}{id'}, 
            t'_s \\ \Rightarrow_{st} \ndi{(ae'_1~ae'_2)\perp}{v'}{id'}
        \end{array}
    \]
    Therefore
    \[
        \SAC ae, ae'_s, t'_s \Rightarrow_{st} ae'
    \]

    \noindent{By definition} of $\Rightarrow_s$, $\SI{'}{'} e_1 \Rightarrow_s \_, \_, ae'_1$, $\SI{'}{'} e_2 \Rightarrow_s, \_, \_, ae'_2$,
    and because $ae_1$ is not in the subproblem $\cS$, therefore its value will not change
    \[
        \SI{'}{'} e \Rightarrow_s \_, \_,  \ndi{(ae'_1~ae'_2)\perp}{v'}{id'}
    \]
    Therefore
    \[
        \SI{'}{'} e \Rightarrow_s \_, \_, ae'
    \]
    
    \noindent{Therefore} when $ae = \ndi{(ae_1~ae_2)\perp}{v}{id}$, $ID(ae_1) \notin \cS$
    \[
        \exists~e, p_s. \SI{}{} e \Rightarrow_s \_, \_, ae
    \wedge \SAC ae \Rightarrow_{ex} ae_s, t_s\]
    \[
    \wedge t_s;\assume{z}{ae_s} \Rightarrow_r p_s
    \wedge \SI{'}{'} p_s \Rightarrow_s  t'_s;\assume{z}{ae'_s}
    \]
    \[
        \implies \exists~ae'.
        \SAC ae, ae'_s, t'_s \Rightarrow_{st} ae', \_
        \wedge \SI{'}{'} e \Rightarrow_s \_,\_, ae'
    \]

    \noindent{Case 4:} 
    $ae = \ndi{(ae_1~ae_2)aa}{v}{id}$ and $ID(ae_1) \in \cS$
 
    \noindent{By assumption}
    \[
        \SI{}{} e \Rightarrow_s \_, \_, ae
    \]
    By definition of $\Rightarrow_s$
    \[
        \SI{}{} (e_1~e_2) \Rightarrow_s \_, \_, \ndi{(ae_1~ae_2)aa}{v}{id}
    \]
    Then $e = (e_1~e_2)$, $\SI{}{} e_1 \Rightarrow_s \_, \_, ae_1$, 
    and $\SI{}{} e_2 \Rightarrow_s \_, \_, ae_2$.

    \noindent{By assumption}
    \[
        \SAC ae \Rightarrow_{ex} ae_s, t_s
    \]
    By definition of $\Rightarrow_{ex}$
    \[
        \SAC \ndi{(ae_1~ae_2)aa}{v}{id} \Rightarrow_{ex} \ndi{(ae^1_s~ae^3_s)aa}{v}{id}, t^1_s;t^3_s
    \]
    Then $ae_s = \ndi{(ae^1_s~ae^3_s)aa}{v}{id}$, $t_s = t^1_s;t^3_s$, 
    $\SAC ae_1 \Rightarrow_{ex} ae^1_s, t^1_s$, and $\SAC ae_2 \Rightarrow_{ex} ae^3_s, t^3_s$.

    \noindent{By assumption}
    \[
        t_s;\assume{z}{ae_s} \Rightarrow_r p_s
    \]
    By definition of $\Rightarrow_r$
    \[
        t^1_s;t^3_s;\assume{z}{\ndi{(ae^1_s~ae^3_s)aa}{v}{id}} \Rightarrow_r p^1_s;p^2_s\assume{z}{(e^1_s~e^2_s)}
    \]
    Then $p_s = p^1_s;p^2_s;\assume{z}{(e^1_s~e^2_s)}$, $ae^1_s \Rightarrow_r e^1_s$, 
    $t^1_s \Rightarrow_r p^1_s$, $ae^3_s \Rightarrow_r e^2_s$, and $t^3_s \Rightarrow_r p^2_s$.

    \noindent{By assumption}
    \[
        \SI{'}{'} p_s \Rightarrow_s t'_s;\assume{z}{ae'_s}
    \]
    By definition of $\Rightarrow_s$, 
    \[
        \SI{'}{'} p^1_s;p^2_s;\assume{z}{(e^1_s~e^2_s)} \Rightarrow_s t^2_s;t^4_s;
        \assume{z}{\ndi{(ae^2_s~ae^4_s)aa'}{v}{id}}
    \]
    Then $t'_s = t^2_s;t^4_s$, $ae'_s = \ndi{(ae^2_s~ae^4_s)aa'}{v}{id}$, $\SI{'}{'} p^1_s;\assume{z}{e^1_s} \Rightarrow_s t^2_s;\assume{z}{ae^2_s}$,
    and $\SI{'}{'} p^2_s;\assume{z}{e^2_s} \Rightarrow_s t^4_s;\assume{z}{ae^4_s}$.

   \noindent{By induction hypothesis}
    \[
        \exists~e_1, p^1_s, e^1_s. \SI{}{} e_1 \Rightarrow_s \_, \_, ae_1
    \wedge \SAC ae_1 \Rightarrow_{ex} ae^1_s, t^1_s\]
    \[
        \wedge t^1_s;\assume{z}{ae^1_s} \Rightarrow_r p^1_s;\assume{z}{e^1_s}
    \]
            \[
    \wedge \SI{'}{'} p^1_s;\assume{z}{e^1_s} \Rightarrow_s  t^2_s;\assume{z}{ae^2_s}
    \]
    \[
        \implies \exists~ae'_1.
        \SAC ae_1, ae^2_s, t^2_s \Rightarrow_{st} ae'_1
        \wedge \SI{'}{'} e_1 \Rightarrow_s \_,\_, ae'_1
    \]
    Because $\SI{}{} e_1 \Rightarrow_s \_, \_, ae_1$, $\SAC ae_1 \Rightarrow_{ex} ae^1_s, t^1_s$, 
    $t^1_s;\assume{z}{ae^1_s} \Rightarrow_r p^1_s;\assume{z}{e^1_s}$, and 
    $\SI{'}{'} p^1_s;\assume{z}{e^1_s} \Rightarrow_s t^2_s;\assume{z}{ae^2_s}$.
    \[
\SAC ae_1, ae^2_s, t^2_s \Rightarrow_{st} ae'_1
        \wedge \SI{'}{'} e_1 \Rightarrow_s \_,\_, ae'_1
    \]

   \noindent{By induction hypothesis}
    \[
        \exists~e_2, p^2_s, e^2_s. \SI{}{} e_2 \Rightarrow_s \_, \_, ae_2
    \wedge \SAC ae_2 \Rightarrow_{ex} ae^3_s, t^3_s\]
    \[
        \wedge t^3_s;\assume{z}{ae^3_s} \Rightarrow_r p^2_s;\assume{z}{e^2_s}
    \]
            \[
    \wedge \SI{'}{'} p^2_s;\assume{z}{e^2_s} \Rightarrow_s  t^4_s;\assume{z}{ae^4_s}
    \]
    \[
        \implies \exists~ae'_2.
        \SAC ae_2, ae^4_s, t^4_s \Rightarrow_{st} ae'_2
        \wedge \SI{'}{'} e_2 \Rightarrow_s \_,\_, ae'_2
    \]
    Because $\SI{}{} e_2 \Rightarrow_s \_, \_, ae_2$, $\SAC ae_2 \Rightarrow_{ex} ae^3_s, t^3_s$, 
    $t^3_s;\assume{z}{ae^3_s} \Rightarrow_r p^2_s;\assume{z}{e^2_s}$, and 
    $\SI{'}{'} p^2_s;\assume{z}{e^2_s} \Rightarrow_s t^4_s;\assume{z}{ae^4_s}$.
    \[
\SAC ae_2, ae^4_s, t^4_s \Rightarrow_{st} ae'_2
        \wedge \SI{'}{'} e_2 \Rightarrow_s \_,\_, ae'_2
    \]

    \noindent{Consider} $ae' = \ndi{(ae'_1~ae'_2)aa'}{v'}{id'}$.

    \noindent{By definition} of $\Rightarrow_{st}$, $\SAC ae_1, ae^2_s, t^2_s \Rightarrow_{st} ae'_1$, and
    $\SAC ae_2, ae^4_s, t^4_s \Rightarrow_{st} ae'_2$,
    \[
        \begin{array}{c}
        \SAC \ndi{(ae_1~ae_2)aa}{v}{id}, \ndi{(ae^2_s~ae^4_s)aa'}{v'}{id'}, 
            t'_s \\ \Rightarrow_{st} \ndi{(ae'_1~ae'_2)aa'}{v'}{id'}
        \end{array}
    \]
    Therefore
    \[
        \SAC ae, ae'_s, t'_s \Rightarrow_{st} ae'
    \]

    \noindent{By definition} of $\Rightarrow_s$, $\SI{'}{'} e_1 \Rightarrow_s \_, \_, ae'_1$, $\SI{'}{'} e_2 \Rightarrow_s, \_, \_, ae'_2$,
    and because $\cV(ae'_1) = \cV(ae^2_s)$, $\cV(ae'_2) = \cV(ae^4_s)$ (Observation~\ref{ex:obs2})
    \[
        \SI{'}{'} e \Rightarrow_s \_, \_,  \ndi{(ae'_1~ae'_2)aa'}{v'}{id'}
    \]
    Therefore
    \[
        \SI{'}{'} e \Rightarrow_s \_, \_, ae'
    \]
    
    \noindent{Therefore} when $ae = \ndi{(ae_1~ae_2)aa}{v}{id}$, $ID(ae_1) \in \cS$
    \[
        \exists~e, p_s. \SI{}{} e \Rightarrow_s \_, \_, ae
    \wedge \SAC ae \Rightarrow_{ex} ae_s, t_s\]
    \[
    \wedge t_s;\assume{z}{ae_s} \Rightarrow_r p_s
    \wedge \SI{'}{'} p_s \Rightarrow_s  t'_s;\assume{z}{ae'_s}
    \]
    \[
        \implies \exists~ae'.
        \SAC ae, ae'_s, t'_s \Rightarrow_{st} ae', \_
        \wedge \SI{'}{'} e \Rightarrow_s \_,\_, ae'
    \]

    \noindent{\bf Case 5:} 
    $ae = \ndi{(ae_1~ae_2)y=ae_3}{v}{id}$ and $ID(ae_1) \notin \cS$

    \noindent{By assumption}
    \[
        \SI{}{} e \Rightarrow_s \_, \_, ae
    \]
    By definition of $\Rightarrow_s$
    \[
        \SI{}{} (e_1~e_2) \Rightarrow_s \_, \_, \ndi{(ae_1~ae_2)y=ae_3}{v}{id}
    \]
    Then $e = (e_1~e_2)$, $\SI{}{} e_1 \Rightarrow_s \_, \_, ae_1$, 
    $\SI{}{} e_2 \Rightarrow_s \_, \_, ae_2$, $\SI{''}{''} e_3 \Rightarrow_s \_, \_, ae_3$, and $\cV(ae_1) = \tup{\lambda.x~e_3, \sigma''_v, \sigma''_{id}}$.

    \noindent{By assumption}
    \[
        \SAC ae \Rightarrow_{ex} ae_s, t_s
    \]
    By definition of $\Rightarrow_{ex}$
    \[
        \SAC \ndi{(ae_1~ae_2)y=ae_3}{v}{id} \Rightarrow_{ex} ae^5_s, t^1_s;\assume{x}{ae^1_s};t^3_s;\assume{y}{ae^3_s};t^5_s
    \]
    Then $ae_s = ae^5_s$, $t_s = t^1_s;\assume{x}{ae^1_s};t^3_s;\assume{y}{ae^3_s};t^5_s$, 
    $\SAC ae_1 \Rightarrow_{ex} ae^1_s, t^1_s$, $\SAC ae_2 \Rightarrow_{ex} ae^3_s, t^3_s$, and
    $\SAC ae_3 \Rightarrow_{ex} ae^5_s, t^5_s$.

    \noindent{By assumption}
    \[
        t_s;\assume{z}{ae_s} \Rightarrow_r p_s
    \]
    By definition of $\Rightarrow_r$
    \[
        \begin{array}{c}
        t^1_s;\assume{x}{ae^1_s};t^3_s;\assume{y}{ae^3_s};t^5_s\assume{z}{ae^5_s} \Rightarrow_r \\
        p^1_s;\assume{x}{e^1_s};p^2_s;\assume{y}{e^2_s};p^3_s;\assume{z}{e^3_s}
        \end{array}
    \]
    Then $p_s =  p^1_s;\assume{x}{e^1_s};p^2_s;\assume{y}{e^2_s};p^3_s;\assume{z}{e^3_s}$, $ae^1_s \Rightarrow_r e^1_s$, 
    $t^1_s \Rightarrow_r p^1_s$, $ae^3_s \Rightarrow_r e^2_s$, $t^3_s \Rightarrow_r p^2_s$, $ae^5_s \Rightarrow_r e^3_s$, and $t^5_s \Rightarrow_r p^3_s$.

    \noindent{By assumption}
    \[
        \SI{'}{'} p_s \Rightarrow_s t'_s;\assume{z}{ae'_s}
    \]
    By definition of $\Rightarrow_s$, 
    \[
        \begin{array}{c}
        \SI{'}{'}  p^1_s;\assume{x}{e^1_s};p^2_s;\assume{y}{e^2_s};p^3_s;\assume{z}{e^3_s}
 \Rightarrow_s \\ t^2_s;\assume{x}{ae^2_s};t^4_s;\assume{y}{ae^4_s};t^6_s
        \assume{z}{ae^6_s}
        \end{array}
    \]
    Then $t'_s = t^2_s;\assume{x}{ae^2_s};t^4_s;\assume{y}{ae^4_s};t^6_s$,
    $ae'_s = ae^6_s$, $\SI{'}{'} p^1_s;\assume{z}{e^1_s} \Rightarrow_s t^2_s;\assume{z}{ae^2_s}$,
    $\SI{'}{'} p^2_s;\assume{z}{e^2_s} \Rightarrow_s t^4_s;\assume{z}{ae^4_s}$, $\SI{'''}{'''} p^3_s;\assume{z}{e^3_s} \Rightarrow_s t^6_s;\assume{z}{ae^6_s}$.

   \noindent{By induction hypothesis}
    \[
        \exists~e_1, p^1_s, e^1_s. \SI{}{} e_1 \Rightarrow_s \_, \_, ae_1
    \wedge \SAC ae_1 \Rightarrow_{ex} ae^1_s, t^1_s\]
    \[
        \wedge t^1_s;\assume{z}{ae^1_s} \Rightarrow_r p^1_s;\assume{z}{e^1_s}
    \]
            \[
    \wedge \SI{'}{'} p^1_s;\assume{z}{e^1_s} \Rightarrow_s  t^2_s;\assume{z}{ae^2_s}
    \]
    \[
        \implies \exists~ae'_1.
        \SAC ae_1, ae^2_s, t^2_s \Rightarrow_{st} ae'_1
        \wedge \SI{'}{'} e_1 \Rightarrow_s \_,\_, ae'_1
    \]
    Because $\SI{}{} e_1 \Rightarrow_s \_, \_, ae_1$, $\SAC ae_1 \Rightarrow_{ex} ae^1_s, t^1_s$, 
    $t^1_s;\assume{z}{ae^1_s} \Rightarrow_r p^1_s;\assume{z}{e^1_s}$, and 
    $\SI{'}{'} p^1_s;\assume{z}{e^1_s} \Rightarrow_s t^2_s;\assume{z}{ae^2_s}$.
    \[
\SAC ae_1, ae^2_s, t^2_s \Rightarrow_{st} ae'_1
        \wedge \SI{'}{'} e_1 \Rightarrow_s \_,\_, ae'_1
    \]

   \noindent{By induction hypothesis}
    \[
        \exists~e_2, p^2_s, e^2_s. \SI{}{} e_2 \Rightarrow_s \_, \_, ae_2
    \wedge \SAC ae_2 \Rightarrow_{ex} ae^3_s, t^3_s\]
    \[
        \wedge t^3_s;\assume{z}{ae^3_s} \Rightarrow_r p^2_s;\assume{z}{e^2_s}
    \]
            \[
    \wedge \SI{'}{'} p^2_s;\assume{z}{e^2_s} \Rightarrow_s  t^4_s;\assume{z}{ae^4_s}
    \]
    \[
        \implies \exists~ae'_2.
        \SAC ae_2, ae^4_s, t^4_s \Rightarrow_{st} ae'_2
        \wedge \SI{'}{'} e_2 \Rightarrow_s \_,\_, ae'_2
    \]
    Because $\SI{}{} e_2 \Rightarrow_s \_, \_, ae_2$, $\SAC ae_2 \Rightarrow_{ex} ae^3_s, t^3_s$, 
    $t^3_s;\assume{z}{ae^3_s} \Rightarrow_r p^2_s;\assume{z}{e^2_s}$, and 
    $\SI{'}{'} p^2_s;\assume{z}{e^2_s} \Rightarrow_s t^4_s;\assume{z}{ae^4_s}$.
    \[
\SAC ae_2, ae^4_s, t^4_s \Rightarrow_{st} ae'_2
        \wedge \SI{'}{'} e_2 \Rightarrow_s \_,\_, ae'_2
    \]

   \noindent{By induction hypothesis}
    \[
        \exists~e_3, p^3_s, e^3_s. \SI{''}{''} e_3 \Rightarrow_s \_, \_, ae_3
    \wedge \SAC ae_3 \Rightarrow_{ex} ae^5_s, t^5_s\]
    \[
        \wedge t^5_s;\assume{z}{ae^5_s} \Rightarrow_r p^3_s;\assume{z}{e^3_s}
    \]
            \[
    \wedge \SI{'''}{'''} p^3_s;\assume{z}{e^3_s} \Rightarrow_s  t^6_s;\assume{z}{ae^6_s}
    \]
    \[
        \implies \exists~ae'_3.
        \SAC ae_3, ae^6_s, t^6_s \Rightarrow_{st} ae'_3
        \wedge \SI{'}{'} e_3 \Rightarrow_s \_,\_, ae'_3
    \]
    Because $\SI{''}{''} e_3 \Rightarrow_s \_, \_, ae_3$, $\SAC ae_3 \Rightarrow_{ex} ae^5_s, t^5_s$, 
    $t^5_s;\assume{z}{ae^5_s} \Rightarrow_r p^3_s;\assume{z}{e^3_s}$, and 
    $\SI{'''}{'''} p^3_s;\assume{z}{e^3_s} \Rightarrow_s t^6_s;\assume{z}{ae^6_s}$.
    \[
\SAC ae_3, ae^6_s, t^6_s \Rightarrow_{st} ae'_3
        \wedge \SI{'''}{'''} e_3 \Rightarrow_s \_,\_, ae'_3
    \]

    \noindent{Consider} $ae' = \ndi{(ae'_1~ae'_2)y=ae'_3}{v'}{id'}$.

    \noindent{By definition} of $\Rightarrow_{st}$, $\SAC ae_1, ae^2_s, t^2_s \Rightarrow_{st} ae'_1$,
    $\SAC ae_2, ae^4_s, t^4_s \Rightarrow_{st} ae'_2$, and $\SAC ae_3, ae^6_s, t^6_s \Rightarrow_{st} ae'_3$
    \[
        \begin{array}{c}
        \SAC \ndi{(ae_1~ae_2)y=ae_3}{v}{id}, \ndi{(ae^2_s~ae^4_s)y=ae^6_s}{v'}{id'}, 
            t'_s \\ \Rightarrow_{st} \ndi{(ae'_1~ae'_2)y=ae'_3}{v'}{id'}
        \end{array}
    \]
    Therefore
    \[
        \SAC ae, ae'_s, t'_s \Rightarrow_{st} ae'
    \]

    \noindent{By definition} of $\Rightarrow_s$, $\SI{'}{'} e_1 \Rightarrow_s \_, \_, ae'_1$, $\SI{'}{'} e_2 \Rightarrow_s, \_, \_, ae'_2$,
    and because $\cV(ae'_1) = \cV(ae^2_s)$, $\cV(ae'_1) = \cV(ae^2_s)$ (Observation~\ref{ex:obs2}), $\SI{'''}{'''} e_3 \Rightarrow_s \_, \_, ae'_3$
    \[
        \SI{'}{'} e \Rightarrow_s \_, \_,  \ndi{(ae'_1~ae'_2)y=ae'_3}{v'}{id'}
    \]
    Therefore
    \[
        \SI{'}{'} e \Rightarrow_s \_, \_, ae'
    \]
    
    \noindent{Therefore} when $ae = \ndi{(ae_1~ae_2)y=ae_3}{v}{id}$, $ID(ae_1) \notin \cS$
    \[
        \exists~e, p_s. \SI{}{} e \Rightarrow_s \_, \_, ae
    \wedge \SAC ae \Rightarrow_{ex} ae_s, t_s\]
    \[
    \wedge t_s;\assume{z}{ae_s} \Rightarrow_r p_s
    \wedge \SI{'}{'} p_s \Rightarrow_s  t'_s;\assume{z}{ae'_s}
    \]
    \[
        \implies \exists~ae'.
        \SAC ae, ae'_s, t'_s \Rightarrow_{st} ae', \_
        \wedge \SI{'}{'} e \Rightarrow_s \_,\_, ae'
    \]
 
    \noindent{Because} we have covered all cases, using induction, the lemma is true.
    \end{proof}

    \begin{lemma}
    \noindent{Given} environements $\sigma_v, \sigma_{id}, \sigma'_v$ and $\sigma'_{id}$
    such that $\dom \sigma_v = \dom \sigma_{id} = \dom \sigma'_v = 
    \dom \sigma'_{id}$ a partial trace $t$ within a trace $t_p$($t$ is a suffix of trace $t_p$), 
    and a valid subproblem $\cS$ over trace $t_p$
    \[
        \exists.p, p_s~\SI{}{} p \Rightarrow_s t\\
        \wedge
        \SAC t \Rightarrow_{ex} t_s\\
        \wedge
        p_s \Rightarrow_r t_s\\
        \wedge 
        \SI{'}{'} p_s \Rightarrow_s t'_s\\
    \]
        \[\implies \exists~t'.
        \SAC t, t'_s \Rightarrow_{st} t'
        \wedge \SI{'}{'} p \Rightarrow_s t'
        \]
        \label{lem:sound3tr}
    \end{lemma}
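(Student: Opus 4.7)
The plan is to prove this by structural induction on the trace $t$, paralleling the structure of the expression-level Lemma~\ref{lem:sound3exp} which I will use as the workhorse at every assume/observe step. The trace-level extraction and stitching rules process $t$ statement-by-statement, so the induction naturally decomposes along the sequential structure of the trace.

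For the \textbf{base case} $t = \emptyset$, the definition of $\Rightarrow_{ex}$ forces $t_s = \emptyset$, hence $p_s = \emptyset$ and $t'_s = \emptyset$; then by the trace-level $\Rightarrow_{st}$ rule for $\emptyset$ I obtain $t' = \emptyset$, and $\SI{'}{'} \emptyset \Rightarrow_s \emptyset$ holds immediately, so I can pick $t' = \emptyset$.

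For the \textbf{inductive case} $t = \assume{x}{ae};t''$, I first invert $\Rightarrow_s$ to extract an expression $e$ with $\SI{}{} e \Rightarrow_s v, id, ae$ and a program tail $p''$ with $\SI{[x \rightarrow v]}{[x \rightarrow id]} p'' \Rightarrow_s t''$; then invert $\Rightarrow_{ex}$ on $t$ to get $\SAC ae \Rightarrow_{ex} ae_s, t^1_s$ and $\SAC t'' \Rightarrow_{ex} t^2_s$, with $t_s = t^1_s;\assume{x}{ae_s};t^2_s$. Rolling back through $\Rightarrow_r$ similarly splits $p_s$, and re-executing under $\sigma'_v, \sigma'_{id}$ splits $t'_s$ into a prefix corresponding to $t^1_s;\assume{x}{ae_s}$ and a suffix corresponding to $t^2_s$. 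I then invoke Lemma~\ref{lem:sound3exp} to obtain a stitched $ae'$ with $\SI{'}{'} e \Rightarrow_s \cV(ae'), id', ae'$, and apply the induction hypothesis to $t''$ under the extended environments $\sigma'_v[x \rightarrow \cV(ae')]$ and $\sigma'_{id}[x \rightarrow id']$ to obtain a stitched tail $t'''$ with $\SI{'[x \rightarrow \cV(ae')]}{'[x \rightarrow id']} p'' \Rightarrow_s t'''$. Concatenating via the $\Rightarrow_{st}$ rule for $\assume{x}{ae};t$ gives the desired $t' = \assume{x}{ae'};t'''$, and the $\Rightarrow_s$ rule for $\mathsf{assume}$ reassembles the global derivation. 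The \textbf{observe case} is analogous, using the $\observed{ae\#id}{e_v}$ rules for $\Rightarrow_{ex}$, $\Rightarrow_r$, and $\Rightarrow_{st}$, and noting that $e_v$ is preserved verbatim through all three relations.

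The \textbf{main obstacle} is the environment bookkeeping: I need to keep the invariant $\dom \sigma_v = \dom \sigma_{id} = \dom \sigma'_v = \dom \sigma'_{id}$ alive through the assume extension, and I need to rely on the non-conflicting-freshness property noted in Observation~\ref{ex:obs} to ensure that variables freshly introduced by $t^1_s$ and $t^2_s$ inside $t_s$ never collide with $x$ or with each other when $t'_s$ is re-executed. A secondary subtlety is that when the extraction of $ae$ emits $\mathsf{assume}$s (as happens in the lambda-application rule of Figure~\ref{fig:extractrace}), the split of $p_s$ and $t'_s$ needed to apply Lemma~\ref{lem:sound3exp} must align with the $\assume{z}{ae_s}$ tail expected by that lemma; once this alignment is justified by the definition of $\Rightarrow_{ex}$ and the matching $\Rightarrow_{st}$ rule, the rest of the proof is routine rule-by-rule manipulation.
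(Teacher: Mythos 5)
Your proposal is correct and follows essentially the same route as the paper's own proof: structural induction on the trace with the empty, $\mathsf{assume}$, and $\mathsf{observe}$ cases, invoking the expression-level Lemma~\ref{lem:sound3exp} at each statement and the induction hypothesis on the tail under the environments extended with $\cV(ae')$ and $\ID(ae')$ (justified via Observation~\ref{ex:obs2}), with Observation~\ref{ex:obs} handling variable freshness. The obstacles you flag (environment bookkeeping and aligning the $\assume{z}{ae_s}$ tail with the split of $p_s$ and $t'_s$) are exactly the points the paper's proof spells out.
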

    \begin{proof}
   \noindent{Proof by induction}

    \noindent{\bf Base Case:}
    $t = \emptyset$ 
    
    \noindent{By assumption}
        \[
            \SI{}{} p \Rightarrow_s t
        \]
        By definition of $\Rightarrow_s$
        \[
            \SI{}{} \emptyset \Rightarrow_s \emptyset
        \]
        Then $p = \emptyset$.

        \noindent{By assumption}
        \[
            \SAC t \Rightarrow_{ex} t_s
        \]
        By definition of $\Rightarrow_{ex}$
        \[
            \SAC \emptyset \Rightarrow_{ex} \emptyset
        \]
        Then $t_s = \emptyset$.

        \noindent{By assumption}
        \[
            t_s \Rightarrow_r p_s
        \]
        By definition of $\Rightarrow_r$
        \[
            \emptyset \Rightarrow_r \emptyset
        \]
        Then $p_s = \emptyset$.

        \noindent{By assumption}
        \[
            \SI{'}{'} p_s \Rightarrow_s t'_s
        \]
        By definition of $\Rightarrow_s$
        \[
            \SI{'}{'} \emptyset \Rightarrow_s \emptyset
        \]
        Then $t'_s = \emptyset$.

    \noindent{Consider} $t' = \emptyset$.

        \noindent{By definition} of $\Rightarrow_{st}$
        \[
            \SAC \emptyset, \emptyset \Rightarrow_{st} \emptyset
        \]
        Therefore
        \[
            \SAC t, t'_s \Rightarrow_{st} t'
        \]

        \noindent{By definition} of $\Rightarrow_s$
        \[
            \SI{'}{'} \emptyset \Rightarrow_s \emptyset
        \]
        Therefore
        \[
            \SI{'}{'} p \Rightarrow_s t'
        \]

    \noindent{Therefore} when $t = \emptyset$
    \[
        \exists.p, p_s~\SI{}{} p \Rightarrow_s t\\
        \wedge
        \SAC t \Rightarrow_{ex} t_s\\
        \wedge
        p_s \Rightarrow_r t_s\\
        \wedge 
        \SI{'}{'} p_s \Rightarrow_s t'_s\\
    \]
    \[
        \implies
        \SAC t, t'_s \Rightarrow_{st} t'
        \wedge \SI{'}{'} p \Rightarrow_s t'
    \]

    \noindent{\bf Induction Case:}

    \noindent{ Case 1:}
    $t = \assume{x}{ae};t_1$
    
    \noindent{By assumption}
        \[
            \SI{}{} p \Rightarrow_s t
        \]
        By definition of $\Rightarrow_s$
        \[
            \SI{}{} \assume{x}{e};p_1 \Rightarrow_s \assume{x}{ae};t_1
        \]
        Then $p = \assume{x}{e};p_1$, $\SI{}{} e \Rightarrow_s v, id, ae$, $\sigma''_v = \sigma_v[x \rightarrow v]$, 
        $\sigma''_{id} = \sigma_{id}[x \rightarrow id]$, and
        $\SI{''}{''} p_1 \Rightarrow_s t_1$.

        \noindent{By assumption}
        \[
            \SAC t \Rightarrow_{ex} t_s
        \]
        By definition of $\Rightarrow_{ex}$
        \[
            \SAC \assume{x}{ae};t_1 \Rightarrow_{ex} t^1_s;\assume{x}{ae_s};t^3_s
        \]
        Then $t_s = t^1_s;\assume{x}{ae_s};t^3_s$, $\SAC ae \Rightarrow_{ex} ae_s, t^1_s$, and
        $\SAC t_1 \Rightarrow_{ex} t^3_s$.

        \noindent{By assumption}
        \[
            t_s \Rightarrow_r p_s
        \]
        By definition of $\Rightarrow_r$
        \[
            t^1_s;\assume{x}{ae_s};t^3_s \Rightarrow_r p^1_s;\assume{x}{e_s};p^2_s
        \]
        Then $p_s = p^1_s;\assume{x}{e_s};p^2_s$, $t^1_s \Rightarrow_r p^1_s$, $ae_s \Rightarrow_r e_s$, and $t^3_s \Rightarrow_r p^2_s$.

        \noindent{By assumption}
        \[
            \SI{'}{'} p_s \Rightarrow_s t'_s
        \]
        By definition of $\Rightarrow_s$
        \[
            \SI{'}{'} p^1_s;\assume{x}{e_s};p^2_s \Rightarrow_s t^2_s;\assume{x}{ae'_s};t^4_s
        \]
        Then $t'_s =  t^2_s;\assume{x}{ae'_s};t^4_s$, $\SI{'}{'} p^1_s;\assume{x}{e_s} \Rightarrow_s t^2_s;\assume{x}{ae'_s}$, 
        $\sigma'''_v = \sigma'_v[x \rightarrow \cV(ae'_s)]$, $\sigma'''_{id} = \sigma'_{id}[x \rightarrow ID(ae'_s)]$, 
        and $\SI{'''}{'''} p^2_s \Rightarrow_s t^4_s$ (Observation~\ref{ex:obs}, variable names in $t^2_s$ do not collide with 
        variable names in $t^4_s$).

        \noindent{By induction hypothesis}
        \[
        \SI{''}{''} p_1 \Rightarrow_s t_1\\
        \wedge
        \SAC t_1 \Rightarrow_{ex} t^3_s\\
        \wedge
        t^3_s \Rightarrow_r p^2_s\\
        \wedge 
        \SI{'''}{'''} p^2_s \Rightarrow_s t^4_s\\
    \]
    \[
        \implies
        \SAC t_1, t^4_s \Rightarrow_{st} t'_1
        \wedge \SI{'}{'} p_1 \Rightarrow_s t'_1
    \]
    Because $\SI{''}{''} p_1 \Rightarrow_s t_1$, $\SAC t_1 \Rightarrow_{ex} t^3_s$,
    $\SAC t_1 \Rightarrow_{ex} t^3_s$, $t^3_s \Rightarrow_r p^2_s$, and $\SI{'''}{'''} p^2_s \Rightarrow_s t^4_s$,
        \[
        \SAC t_1, t^4_s \Rightarrow_{st} t'_1
        \wedge \SI{'}{'} p_1 \Rightarrow_s t'_1
    \]

    \noindent{By statement~\ref{lem:sound3exp}}
    \[
       \SI{}{} e \Rightarrow_s \_, \_, ae
    \wedge \SAC ae \Rightarrow_{ex} ae_s, t^1_s\]
    \[
        \wedge t^1_s;\assume{z}{ae_s} \Rightarrow_r p^1_s;\assume{z}{e_s}
    \]\[
    \wedge \SI{'}{'} p^1_s;\assume{z}{e_s} \Rightarrow_s  t^2_s;\assume{z}{ae'_s}
    \]
    \[
        \implies
        \SAC ae, ae'_s, t^2_s \Rightarrow_{st} ae'
        \wedge \SI{'}{'} e \Rightarrow_s \_,\_, ae'
    \]
    Because $\SI{}{} e \Rightarrow_s \_, \_, ae$, $\SAC ae \Rightarrow_{ex} ae_s, t^1_s$, 
    $t^1_s;\assume{z}{ae_s} \Rightarrow_r p^1_s;\assume{z}{e_s}$ and $\SI{'}{'} p^1_s;\assume{z}{e_s} 
    \Rightarrow_s t^2_s;\assume{z}{ae'_s}$,
    \[
        \SAC ae, ae'_s, t^2_s \Rightarrow_{st} ae'
        \wedge \SI{'}{'} e \Rightarrow_s \_,\_, ae'
    \]

    \noindent{Consider} $t' = \assume{x}{ae'};t'_1$.

        \noindent{By definition} of $\Rightarrow_{st}$, $\SAC t_1, t^4_s \Rightarrow_{st} t'_1$, and 
        $\SAC ae, ae'_s, t^2_s \Rightarrow_{st} ae'$,
        \[
            \SAC \assume{x}{ae}, t^2_s;\assume{x}{ae'_s};t^4_s \Rightarrow_{st} \assume{x}{ae'};t'_1
        \]
        Therefore
        \[
            \SAC t, t'_s \Rightarrow_{st} t'
        \]

        \noindent{By definition} of $\Rightarrow_s$, $\SI{'}{'} e \Rightarrow_s \_, \_, ae'$, and 
        $\SI{'''}{'''} p^2_s \Rightarrow_s t^4_s$ ($\cV(ae') = \cV(ae'_s)$ and $ID(ae')= ID(ae'_s)$ Observation~\ref{ex:obs2})
        \[
            \SI{'}{'} \assume{x}{e};p_1 \Rightarrow_s \assume{x}{ae'};t'_1
        \]
        Therefore
        \[
            \SI{'}{'} p \Rightarrow_s t'
        \]

    \noindent{Therefore} when $t = \assume{x}{ae};t_1$
    \[
        \exists.p, p_s~\SI{}{} p \Rightarrow_s t\\
        \wedge
        \SAC t \Rightarrow_{ex} t_s\\
        \wedge
        p_s \Rightarrow_r t_s\\
        \wedge 
        \SI{'}{'} p_s \Rightarrow_s t'_s\\
    \]
    \[
        \implies
        \SAC t, t'_s \Rightarrow_{st} t'
        \wedge \SI{'}{'} p \Rightarrow_s t'
    \]

    \noindent{\bf Case 2:}
    $t = \observed{ae}{e_v};t_1$

    \noindent{By assumption}
        \[
            \SI{}{} p \Rightarrow_s t
        \]
        By definition of $\Rightarrow_s$
        \[
            \SI{}{} \observed{e}{e_v};p_1 \Rightarrow_s \observed{ae}{e_v};t_1
        \]
        Then $p = \observed{e}{e_v};p_1$, $\SI{}{} e \Rightarrow_s v, id, ae$, and
        $\SI{}{} p_1 \Rightarrow_s t_1$.

        \noindent{By assumption}
        \[
            \SAC t \Rightarrow_{ex} t_s
        \]
        By definition of $\Rightarrow_{ex}$
        \[
            \SAC \observed{ae}{e_v};t_1 \Rightarrow_{ex} t^1_s;\observed{ae_s}{e_v};t^3_s
        \]
        Then $t_s = t^1_s;\observed{ae_s}{e_v};t^3_s$, $\SAC ae \Rightarrow_{ex} ae_s, t^1_s$, and
        $\SAC t_1 \Rightarrow_{ex} t^3_s$.

        \noindent{By assumption}
        \[
            t_s \Rightarrow_r p_s
        \]
        By definition of $\Rightarrow_r$
        \[
            t^1_s;\observed{ae_s}{e_v};t^3_s \Rightarrow_r p^1_s;\observed{e_s}{e_v};p^2_s
        \]
        Then $p_s = p^1_s;\observed{ae_s}{e_v};p^2_s$, $t^1_s \Rightarrow_r p^1_s$, $ae_s \Rightarrow_r e_s$, and $t^3_s \Rightarrow_r p^2_s$.

        \noindent{By assumption}
        \[
            \SI{'}{'} p_s \Rightarrow_s t'_s
        \]
        By definition of $\Rightarrow_s$
        \[
            \SI{'}{'} p^1_s;\observed{e_s}{e_v};p^2_s \Rightarrow_s t^2_s;\observed{ae'_s}{e_v};t^4_s
        \]
        Then $t'_s =  t^2_s;\observed{ae'_s}{e_v};t^4_s$, $\SI{'}{'} p^1_s;\observed{e_s}{e_v} \Rightarrow_s t^2_s;\observed{ae'_s}{e_v}$, 
        and $\SI{'}{'} p^2_s \Rightarrow_s t^4_s$ (Observation~\ref{ex:obs}, variable names in $t^2_s$ do not collide with 
        variable names in $t^4_s$).

        \noindent{By induction hypothesis}
        \[
        \SI{}{} p_1 \Rightarrow_s t_1\\
        \wedge
        \SAC t_1 \Rightarrow_{ex} t^3_s\\
        \wedge
        t^3_s \Rightarrow_r p^2_s\\
        \wedge 
        \SI{'}{'} p^2_s \Rightarrow_s t^4_s\\
    \]
    \[
        \implies
        \SAC t_1, t^4_s \Rightarrow_{st} t'_1
        \wedge \SI{'}{'} p_1 \Rightarrow_s t'_1
    \]
    Because $\SI{}{} p_1 \Rightarrow_s t_1$, $\SAC t_1 \Rightarrow_{ex} t^3_s$,
    $\SAC t_1 \Rightarrow_{ex} t^3_s$, $t^3_s \Rightarrow_r p^2_s$, and $\SI{'}{'} p^2_s \Rightarrow_s t^4_s$,
        \[
        \SAC t_1, t^4_s \Rightarrow_{st} t'_1
        \wedge \SI{'}{'} p_1 \Rightarrow_s t'_1
    \]

    \noindent{By statement~\ref{lem:sound3exp}}
    \[
       \SI{}{} e \Rightarrow_s \_, \_, ae
    \wedge \SAC ae \Rightarrow_{ex} ae_s, t^1_s\]
    \[
        \wedge t^1_s;\observed{ae_s}{e_v} \Rightarrow_r p^1_s;\observed{e_s}{e_v}
    \]\[
    \wedge \SI{'}{'} p^1_s;\observed{e_s}{e_v} \Rightarrow_s  t^2_s;\observed{e_s}{ae'_v}
    \]
    \[
        \implies
        \SAC ae, ae'_s, t^2_s \Rightarrow_{st} ae'
        \wedge \SI{'}{'} e \Rightarrow_s \_,\_, ae'
    \]
    Because $\SI{}{} e \Rightarrow_s \_, \_, ae$, $\SAC ae \Rightarrow_{ex} ae_s, t^1_s$, 
    $t^1_s;\observed{ae_s}{e_v} \Rightarrow_r p^1_s;\observed{e_s}{e_v}$ and $\SI{'}{'} p^1_s;\observed{e_s}{e_v} 
    \Rightarrow_s t^2_s;\observed{ae'_s}{e_v}$,
    \[
        \SAC ae, ae'_s, t^2_s \Rightarrow_{st} ae'
        \wedge \SI{'}{'} e \Rightarrow_s \_,\_, ae'
    \]

    \noindent{Consider} $t' = \observed{ae'}{e_v};t'_1$.

        \noindent{By definition} of $\Rightarrow_{st}$, $\SAC t_1, t^4_s \Rightarrow_{st} t'_1$, and 
        $\SAC ae, ae'_s, t^2_s \Rightarrow_{st} ae'$,
        \[
            \SAC \observed{ae}{e_v}, t^2_s;\observed{ae'_s}{e_v};t^4_s \Rightarrow_{st} \observed{ae'}{e_v};t'_1
        \]
        Therefore
        \[
            \SAC t, t'_s \Rightarrow_{st} t'
        \]

        \noindent{By definition} of $\Rightarrow_s$, $\SI{'}{'} e \Rightarrow_s \_, \_, ae'$, and 
        $\SI{'}{'} p^2_s \Rightarrow_s t^4_s$
        \[
            \SI{'}{'}\observed{e}{e_v};p_1 \Rightarrow_s \observed{ae'}{e_v};t'_1
        \]
        Therefore
        \[
            \SI{'}{'} p \Rightarrow_s t'
        \]

    \noindent{Therefore} when $t = \observed{ae}{e_v};t_1$
    \[
        \exists.p, p_s~\SI{}{} p \Rightarrow_s t\\
        \wedge
        \SAC t \Rightarrow_{ex} t_s\\
        \wedge
        p_s \Rightarrow_r t_s\\
        \wedge 
        \SI{'}{'} p_s \Rightarrow_s t'_s\\
    \]
    \[
        \implies
        \SAC t, t'_s \Rightarrow_{st} t'
        \wedge \SI{'}{'} p \Rightarrow_s t'
    \]

    \noindent{Because} we have covered all cases, using induction, the  below statement 
    is true.
    \[
        \exists.p, p_s~\SI{}{} p \Rightarrow_s t\\
        \wedge
        \SAC t \Rightarrow_{ex} t_s\\
        \wedge
        p_s \Rightarrow_r t_s\\
        \wedge 
        \SI{'}{'} p_s \Rightarrow_s t'_s\\
    \]
    \[
        \implies
        \SAC t, t'_s \Rightarrow_{st} t'
        \wedge \SI{'}{'} p \Rightarrow_s t'
    \]
    \end{proof}
\begin{lemma}
    \label{lem:sound3}
    Given a valid trace $t$ and a valid subproblem $\cS$ 
    and subtrace $t_s = \mathsf{ExtractTrace}(t, \cS)$ 
    for all possible subtraces $t'_s$ :
    \[t'_s \in
    \mathsf{Traces}(\mathsf{Program}(t_s))\implies \exists~t'.t' = \mathsf{StitchTrace}(t, t'_s, \cS) 
    ~\wedge~
    t' \in
    \mathsf{Traces}(\mathsf{Program}(t))\]
\end{lemma}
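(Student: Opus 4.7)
The plan is to derive Lemma~\ref{lem:sound3} as a direct corollary of the more general trace-level Lemma~\ref{lem:sound3tr}, which has already been set up to handle partial traces under arbitrary matching environments. The key observation is that Lemma~\ref{lem:sound3tr} is strictly stronger: by specializing its four environments $\sigma_v, \sigma_{id}, \sigma'_v, \sigma'_{id}$ all to $\emptyset$, the partial-trace statement collapses onto the full-trace statement we need here.

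First I would unpack the four hypotheses the specialization requires. By validity of $t$, there exists $p = \mathsf{Program}(t)$ with $\emptyset, \emptyset \vdash p \Rightarrow_s t$. By definition of $\mathsf{ExtractTrace}$ and the assumption $t_s = \mathsf{ExtractTrace}(t, \cS)$, we have $\emptyset, \emptyset \vdash_{\cS} t \Rightarrow_{ex} t_s$. Taking $p_s = \mathsf{Program}(t_s)$ gives $t_s \Rightarrow_r p_s$. Finally, the hypothesis $t'_s \in \mathsf{Traces}(p_s)$ unfolds to $\emptyset, \emptyset \vdash p_s \Rightarrow_s t'_s$. These are exactly the four premises of Lemma~\ref{lem:sound3tr} with all four environments empty (and $\dom \emptyset = \dom \emptyset$ trivially).

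Next, I would invoke Lemma~\ref{lem:sound3tr} to obtain a trace $t'$ witnessing both $\emptyset, \emptyset \vdash_{\cS} t, t'_s \Rightarrow_{st} t'$ and $\emptyset, \emptyset \vdash p \Rightarrow_s t'$. The first conjunct is precisely the definition of $t' = \mathsf{StitchTrace}(t, t'_s, \cS)$, and the second gives $t' \in \mathsf{Traces}(p) = \mathsf{Traces}(\mathsf{Program}(t))$, completing the conclusion.

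There is essentially no new mathematical obstacle at this level; the hard work is already carried out in the induction for Lemma~\ref{lem:sound3tr} (which in turn relies on the augmented-expression Lemma~\ref{lem:sound3exp}). The only small care point is to match the $\Rightarrow_r$ shape appearing in Lemma~\ref{lem:sound3tr}'s hypothesis with the $t_s \Rightarrow_r p_s$ we obtain from $p_s = \mathsf{Program}(t_s)$, and to observe that the convention of fresh variable names introduced during extraction (Observation~\ref{ex:obs}) ensures the re-execution of $p_s$ under $\emptyset, \emptyset$ is well-defined, so no shadowing issues arise when instantiating at empty environments. With these pieces in place, the proof is a one-line appeal to Lemma~\ref{lem:sound3tr}.
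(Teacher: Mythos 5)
Your proposal is correct and takes exactly the same route as the paper: the paper's own proof of Lemma~\ref{lem:sound3} is a one-line instantiation of Lemma~\ref{lem:sound3tr} with all four environments set to $\emptyset$. Your write-up just spells out the unpacking of the four premises slightly more explicitly than the paper does.
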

\begin{proof}
\noindent{Since} statement \ref{lem:sound3tr} is true for all $\sigma_v, \sigma_{id}, \sigma'_v$, and $\sigma'_{id}$,
    given $\dom \sigma_v = \dom \sigma_{id} = \dom \sigma'_v = \dom \sigma'_{id}$, 
    replacing $\sigma_v, \sigma_{id}, \sigma'_v$, and $\sigma'_{id}$ with $\emptyset$ (empty environment) will result in 
    \[
        \exists.p, p_s~t \in \mathsf{Traces}(p)
        \wedge
       \SAC t \Rightarrow_{ex} t_s
        \wedge 
        t_s, t'_s  \in \mathsf{Traces}(p_s)
        \implies \SAC t, t'_s \Rightarrow_{st} t'
        \wedge t' \in \mathsf{Traces}(p)
    \]

\end{proof}

\begin{theorem}
    \label{thm:sound}
    Given a valid trace $t$, a valid subproblem $\cS$, subtrace $t_s 
    = \mathsf{ExtractTrace}(t, \cS)$. 
    For all possible subtraces $t_s'$, 
    $t_s' \in \mathsf{Traces}(\mathsf{Program}(t_s))$ implies
    there exist a trace $t'$ such that:
    \begin{itemize}
        \item $t' = \mathsf{StitchTrace}(t, t'_s, \cS)$
        \item $t' \in \mathsf{Traces}(\mathsf{Program}(t))$
        \item $\SAC t \equiv t'$
    \end{itemize}
\end{theorem}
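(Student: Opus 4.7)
The plan is to derive this theorem directly by composing the three preceding results: Lemma \ref{lem:sound2exp} and its corollary Corollary \ref{lem:sound2} (which together give the $\equiv$ property of the stitched trace), and Lemma \ref{lem:sound3} (which gives both the existence of the stitched trace and the fact that it is a valid trace of the original program). Since all three bullet conclusions of the theorem are already packaged into these earlier results, the top-level proof should be essentially a one-line invocation rather than any further induction.

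First I would apply Lemma \ref{lem:sound3} to the hypothesis that $t_s' \in \mathsf{Traces}(\mathsf{Program}(t_s))$. This immediately yields a witness trace $t'$ such that $t' = \mathsf{StitchTrace}(t, t_s', \cS)$ and $t' \in \mathsf{Traces}(\mathsf{Program}(t))$, which covers the first two bullet points. Then, using that $t' = \mathsf{StitchTrace}(t, t_s', \cS)$ unfolds to $\SAC t, t_s' \Rightarrow_{st} t'$ by definition, I would invoke Corollary \ref{lem:sound2} on this stitching derivation to conclude $\SAC t \equiv t'$, which is the third bullet point. This is purely a combination step; no new induction is needed at the theorem level.

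The genuinely hard work is already absorbed into Lemma \ref{lem:sound3}, and specifically into the auxiliary Lemma \ref{lem:sound3exp} on which it depends. The main obstacle there is the case of a lambda application $\ndi{(ae_1~ae_2)y=ae_3}{v}{id}$ with $\ID(ae_1) \notin \cS$: extraction rewrites this to the bare expression $ae_3'$ and introduces fresh $\mathsf{assume}$ statements whose variable names must not collide with any name in the surrounding trace (Observation \ref{ex:obs}), and the induction must carry through matching environments $\sigma_v, \sigma_{id}, \sigma'_v, \sigma'_{id}$ so that the re-executed subprogram produces augmented expressions whose values agree with those reconstructed by $\Rightarrow_{st}$ (Observation \ref{ex:obs2}). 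The $\mathsf{Dist}(ae \# id') = ae_v$ case with $id' \notin \cS$ is analogous: extraction emits an $\mathsf{observe}$ constraining the outside stochastic choice, and we need that any subtrace $t_s'$ which satisfies this observation gets stitched to an augmented expression whose value is exactly $\cV(ae_v)$.

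Concretely, the final proof should read: assume $t_s' \in \mathsf{Traces}(\mathsf{Program}(t_s))$. By Lemma \ref{lem:sound3} there exists $t'$ with $t' = \mathsf{StitchTrace}(t, t_s', \cS)$ and $t' \in \mathsf{Traces}(\mathsf{Program}(t))$; by Corollary \ref{lem:sound2} applied to the stitching derivation witnessing the first equality, $\SAC t \equiv t'$. All three clauses are established, completing the proof.
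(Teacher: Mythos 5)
Your proposal matches the paper's own proof exactly: Theorem~\ref{thm:sound} is proved there by simply combining Corollary~\ref{lem:sound2} (for the $\SAC t \equiv t'$ clause) with Lemma~\ref{lem:sound3} (for the existence of the stitched trace and its membership in $\mathsf{Traces}(\mathsf{Program}(t))$), with all the inductive work deferred to those earlier results just as you describe. Your identification of the lambda-application and $\mathsf{Dist}$ cases as the hard parts of the underlying Lemma~\ref{lem:sound3exp} is also consistent with where the paper spends its effort.
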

\begin{proof}
    From Corollary \ref{lem:sound2} and Lemma \ref{lem:sound3}.
\end{proof}

\subsection{Completeness}
\label{append:comp}
Within this section we prove that our interface is complete, i.e.
given a valid trace $t$, a valid subproblem $\cS$ on $t$, a subtrace $t_s = \mathsf{ExtractTrace}(t, t_s, \cS)$, 
for any trace $t'$ which can be achived from entangled subproblem interface (i.e. $t' \in \mathsf{Traces}(\mathsf{Program}(t))$ and $\SAC t \equiv t'$),
there exists a subtrace $t'_s \in \mathsf{Traces}(\mathsf{Program}(t_s))$ such that $t' = \mathsf{StitchTrace}(t, t'_s, \cS)$.

\noindent{Formally,} given valid trace $t$, a valid subproblem $\cS$ on $t$, a subtrace $t_s$, and any trace $t'$
\[
    t_s = \mathsf{ExtractTrace}(t, \cS)~\wedge~t' \in \mathsf{Traces}(\mathsf{Program}(t))~\wedge~\SAC t \equiv t'
\]
\[
    \implies \exists~t'_s.~t'_s \in \mathsf{Traces}(\mathsf{Program}(t_s))~\wedge~t' = \mathsf{StitchTrace}(t, t'_s, \cS)
\]

\noindent{We need to prove a few lemmas which will aid us in proving the above statement.}
   
   \begin{lemma}
        \noindent{Given an augmented} expression $ae$ and a subproblem $\cS$, a subtrace $t_s$, subaugmented expression $ae_s$
        and an augmented expressions $ae'$ such that
        \[
            \SAC ae \Rightarrow_{ex} ae_s, t_s
        \wedge
            \SAC ae \equiv ae'
        \wedge
           \exists~e.~ae \Rightarrow_r e
        \wedge
            \SI{}{} e \Rightarrow_s \_, \_, ae'
        \]
        then, there exists an $ae'_s$, $t'_s$, $p_s$ and $e_s$ such that
        \[
            \SAC ae, ae'_s, t'_s \Rightarrow_{st} ae'
        \wedge
            t_s \Rightarrow_r p_s~\wedge~ae_s \Rightarrow_r e_s
        \]
            \[
                \wedge~
            \SI{}{} p_s;\assume{z}{e_s} \Rightarrow_s t'_s;\assume{z}{ae'_s}
        \]
   \end{lemma}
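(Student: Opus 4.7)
The plan is to prove this by structural induction on the augmented expression $ae$. This lemma is the completeness counterpart to Lemma~\ref{lem:sound3exp}: whereas that lemma shows stitching produces the expected re-executed expression, this lemma shows that any equivalent augmented expression $ae'$ obtainable by re-execution of $e$ under some environment can be realized as the result of extracting some $ae_s, t_s$ from $ae$, re-executing the rolled-back subprogram to produce some $ae'_s, t'_s$, and then stitching back.

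The base cases ($ae$ is a free variable $\ndi{x}{x}{id}$, a variable lookup $\ndi{x(id_v)}{v}{id}$, or a lambda abstraction $\ndi{\lambda.x~e''}{v}{id}$) are immediate: the extraction rule leaves $ae$ unchanged with $t_s = \emptyset$, and the equivalence rule forces $ae'$ to have the same surface shape (differing only in $id$ or value components). Setting $ae'_s := ae'$, $t'_s := \emptyset$, $p_s := \emptyset$, and taking $e_s$ as the rolled-back expression from $ae_s$ closes each base case by direct appeal to the corresponding $\Rightarrow_{st}$ and $\Rightarrow_s$ rule.

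For the inductive step the cases fall into three families. First, application $\ndi{(ae_1~ae_2)\perp}{v}{id}$: I apply the IH to $ae_1$ and $ae_2$ independently, then paste the resulting $p_s^1, p_s^2$ subprograms together. Freshness of the introduced variables (Observation~\ref{ex:obs}) keeps the compositions well-scoped, and Observation~\ref{ex:obs2} ensures the reconstructed value $\cV(ae')$ matches $\cV(ae'_s)$. Second, stochastic choice $\ndi{\mathsf{Dist}(ae_1\#id')=ae_v}{v}{id}$: when $id' \in \cS$ the IH on $ae_1$ alone suffices and $ae_v$ is carried through verbatim; when $id' \notin \cS$ I additionally apply the IH to $ae_v$ and splice an $\mathsf{observe}$ statement into $p_s$ to pin the resampled value to $e_v$. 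Third, application with a lambda body.

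The hard case will be application $\ndi{(ae_1~ae_2)y=ae_3}{v}{id}$ when $ID(ae_1) \notin \cS$: extraction replaces the entire application by $ae_3^s$ flanked by fresh $\mathsf{assume}$ statements binding $x$ and $y$ to the extracted $ae_1^s$ and $ae_2^s$. To close it I must apply the IH to $ae_1, ae_2, ae_3$ in a carefully chosen order, ensuring the environments $\sigma'_v, \sigma'_{id}$ used to re-execute $e_3$ (extended with the bindings for $x, y$) line up exactly with those produced by the intermediate assumes in the stitched subprogram, and that the freshness condition keeps the three IH applications independent. Equally subtle is the subcase $ID(ae_1) \in \cS$: since $ae_1$ lies inside the subproblem, its re-evaluated value may or may not now be a lambda closure, so the tag $aa'$ of $ae'$ may differ from $aa$ of $ae$. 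This forces a case split on the shape of $\cV(ae'_1)$ to select the right $\Rightarrow_s$ rule and the matching $\Rightarrow_{st}$ rule. The principal obstacle is bookkeeping the environments across these nested IH appeals while maintaining the $\cV$ and $ID$ invariants of Observation~\ref{ex:obs2}; once that is in place, the corresponding trace-level lemma (analogous to Lemma~\ref{lem:sound3tr}) and the main completeness theorem follow by straightforward induction on trace structure.
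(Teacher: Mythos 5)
Your proposal is correct and follows essentially the same route as the paper's own proof: structural induction on $ae$, with the base cases closed by taking $ae'_s = ae'$, $t'_s = \emptyset$, $p_s = \emptyset$, and the inductive cases closed by applying the induction hypothesis to each subexpression and reassembling via the freshness and value-preservation observations (Observations~\ref{ex:obs} and~\ref{ex:obs2}). The case analysis you identify as delicate (the lambda application with $\ID(ae_1) \notin \cS$, and the tag change $aa \to aa'$ when $\ID(ae_1) \in \cS$) is exactly where the paper's proof also does its heaviest bookkeeping.
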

    \begin{proof}
        \noindent{Proof by Induction}

        \noindent{\bf Base Case:}

        \noindent{Case 1:} $ae = \ndi{x}{x}{id}$
        
        \noindent{By assumption}
        \[
            \SAC ae \Rightarrow_{ex} ae_s, t_s
        \]
        By definition of $\Rightarrow_{ex}$ 
        \[
            \SAC \ndi{x}{x}{id} \Rightarrow_{ex} \ndi{x}{x}{id}, \emptyset
        \]
        Then $ae_s = \ndi{x}{x}{id}$ and $t_s = \emptyset$.

        \noindent{By assumption}
        \[
            \SAC ae \equiv ae'
        \]
        By definition of $\equiv$
        \[
            \SAC \ndi{x}{x}{id} \equiv \ndi{x}{x}{id'}
        \]
        Then $ae' = \ndi{x}{x}{id'}$.

        \noindent{By assumption}
        \[
            ae \Rightarrow_r e
        \]
        By definition of $\Rightarrow_r$
        \[
            \ndi{x}{x}{id} \Rightarrow_r x
        \]
        Then $e = x$.

        \noindent{By assumption}
        \[
            \SI{}{} e \Rightarrow_s \_, \_, ae'
        \]
        By definition of $\Rightarrow_s$
        \[
            \SI{}{} x \Rightarrow_s \_, \_, \ndi{x}{x}{id'}
        \]
        Then $x \notin \dom \sigma_v$ and $id'$ is a unique id.

        \noindent{Consider} $ae'_s = \ndi{x}{x}{id'}$, $t'_s = \emptyset$, 
        $p_s = \emptyset$ and $e_s = x$. By definition of $\Rightarrow_{st}$
        \[
            \SAC \ndi{x}{x}{id}, \ndi{x}{x}{id'}, \emptyset \Rightarrow_{st} \ndi{x}{x}{id'}
        \]
        Therefore 
        \[
            \SAC ae, ae'_s, t'_s \Rightarrow_{st} ae'
        \]
        
        \noindent{By} definition of $\Rightarrow_r$
        \[
            \emptyset \Rightarrow_r \emptyset
        \]
        Therefore
        \[
            t_s \Rightarrow_r p_s
        \]

        \noindent{By} definition of $\Rightarrow_r$
        \[
            \ndi{x}{x}{id} \Rightarrow_r x
        \]
        Therefore
        \[
            ae_s \Rightarrow_r e_s
        \]

        \noindent{Because} $x \notin \dom \sigma_v$, the definition of $\Rightarrow_s$ implies
        \[
            \SI{}{} \assume{z}{x} \Rightarrow_s \assume{z}{\ndi{x}{x}{id'}}
        \]
        Therefore
        \[
            \SI{}{} p_s;\assume{z}{e_s} \Rightarrow_s t_s;\assume{z}{ae'_s}
        \]

        \noindent{Therefore} when $ae = \ndi{x}{x}{id}$,
        \[
            \forall t_s, ae_s, ae'~\SAC ae \Rightarrow_{ex} ae_s, t_s~\wedge~\SAC ae \equiv ae'~\exists e.~ae \Rightarrow_r e~\wedge~\SI{}{} e \Rightarrow_s \_, \_, ae'
        \]
        implies
        \[
            \exists ae'_s, t'_s, p_s, e_s.~\SAC ae, ae'_s, t'_s \Rightarrow_{st} ae'~\wedge~t_s \Rightarrow_r p_s~\wedge~ae_s \Rightarrow_r e_s
        \]
        \[
            \wedge~\SI{}{} p_s;\assume{z}{e_s} \Rightarrow_s t'_s;\assume{z}{ae'_s}
        \]

        \noindent{Case 2:} $ae = \ndi{x(id_v)}{v}{id}$
       
         \noindent{By assumption}
        \[
            \SAC ae \Rightarrow_{ex} ae_s, t_s
        \]
        By definition of $\Rightarrow_{ex}$ 
        \[
            \SAC \ndi{x(id_v)}{v}{id} \Rightarrow_{ex} \ndi{x(id_v)}{v}{id}, \emptyset
        \]
        Then $ae_s = \ndi{x(id_v)}{v}{id}$ and $t_s = \emptyset$.

        \noindent{By assumption}
        \[
            \SAC ae \equiv ae'
        \]
        By definition of $\equiv$
        \[
            \SAC \ndi{x(id_v)}{v}{id} \equiv \ndi{x(id'_v)}{v'}{id'}
        \]
        Then $ae' = \ndi{x(id'_v)}{v}{id'}$.

        \noindent{By assumption}
        \[
            ae \Rightarrow_r e
        \]
        By definition of $\Rightarrow_r$
        \[
            \ndi{x(id_v)}{v}{id} \Rightarrow_r x
        \]
        Then $e = x$.

        \noindent{By assumption}
        \[
            \SI{}{} e \Rightarrow_s \_, \_, ae'
        \]
        By definition of $\Rightarrow_s$
        \[
            \SI{}{} x \Rightarrow_s \_, \_, \ndi{x(id'_v)}{v'}{id'}
        \]
        Then $\sigma_v(x) = v', \sigma_{id}(x) = id'_v$, and $id'$ is a unique id.

        \noindent{Consider} $ae'_s = \ndi{x(id'_v)}{v'}{id'}$, $t'_s = \emptyset$, 
        $p_s = \emptyset$ and $e_s = x$. By definition of $\Rightarrow_{st}$
        \[
            \SAC \ndi{x(id_v)}{v}{id}, \ndi{x(id'_v)}{v'}{id'}, \emptyset \Rightarrow_{st} \ndi{x(id'_v)}{v'}{id'}
        \]
        Therefore 
        \[
            \SAC ae, ae'_s, t'_s \Rightarrow_{st} ae'
        \]
        
        \noindent{By} definition of $\Rightarrow_r$
        \[
            \emptyset \Rightarrow_r \emptyset
        \]
        Therefore
        \[
            t_s \Rightarrow_r p_s
        \]

        \noindent{By} definition of $\Rightarrow_r$
        \[
            \ndi{x(id_v)}{v}{id} \Rightarrow_r x
        \]
        Therefore
        \[
            ae_s \Rightarrow_r e_s
        \]

        \noindent{Because} $\sigma_v(x) = v', \sigma_{id}(x) = id'_v$, the definition of $\Rightarrow_s$ implies
        \[
            \SI{}{} \assume{z}{x} \Rightarrow_s \assume{z}{\ndi{x(id'_v)}{v'}{id'}}
        \]
        Therefore
        \[
            \SI{}{} p_s;\assume{z}{e_s} \Rightarrow_s t_s;\assume{z}{ae'_s}
        \]

        \noindent{Therefore when} $ae = \ndi{x(id_v)}{v}{id}$,
        \[
            \forall t_s, ae_s, ae'~\SAC ae \Rightarrow_{ex} ae_s, t_s~\wedge~\SAC ae \equiv ae'~\exists e.~ae \Rightarrow_r e~\wedge~\SI{}{} e \Rightarrow_s \_, \_, ae'
        \]
        implies
        \[
            \exists ae'_s, t'_s, p_s, e_s.~\SAC ae, ae'_s, t'_s \Rightarrow_{st} ae'~\wedge~t_s \Rightarrow_r p_s~\wedge~ae_s \Rightarrow_r e_s
        \]
        \[
            \wedge~\SI{}{} p_s;\assume{z}{e_s} \Rightarrow_s t'_s;\assume{z}{ae'_s}
        \]

         \noindent{Case 3:} $ae = \ndi{\lambda.x~e'}{v}{id}$

        \noindent{By assumption}
        \[
            \SAC ae \Rightarrow_{ex} ae_s, t_s
        \]
        By definition of $\Rightarrow_{ex}$ 
        \[
            \SAC \ndi{\lambda.x~e'}{v}{id} \Rightarrow_{ex} \ndi{\lambda.x~e'}{v}{id}, \emptyset
        \]
        Then $ae_s = \ndi{\lambda.x~e'}{v}{id}$ and $t_s = \emptyset$.

        \noindent{By assumption}
        \[
            \SAC ae \equiv ae'
        \]
        By definition of $\equiv$
        \[
            \SAC \ndi{\lambda.x~e'}{v}{id} \equiv \ndi{\lambda.x~e'}{v'}{id'}
        \]
        Then $ae' = \ndi{\lambda.x~e'}{v}{id'}$.

        \noindent{By assumption}
        \[
            ae \Rightarrow_r e
        \]
        By definition of $\Rightarrow_r$
        \[
            \ndi{\lambda.x~e'}{v}{id} \Rightarrow_r \lambda.x~e'
        \]
        Then $e = \lambda.x~e'$.

        \noindent{By assumption}
        \[
            \SI{}{} e \Rightarrow_s \_, \_, ae'
        \]
        By definition of $\Rightarrow_s$
        \[
            \SI{}{} \lambda.x~e' \Rightarrow_s \_, \_, \ndi{\lambda.x~e'}{v'}{id'}
        \]
        Then $id'$ is a unique id.

        \noindent{Consider} $ae'_s = \ndi{\lambda.x~e'}{v'}{id'}$, $t'_s = \emptyset$, 
        $p_s = \emptyset$ and $e_s = \lambda.x~e'$. By definition of $\Rightarrow_{st}$
        \[
            \SAC \ndi{\lambda.x~e'}{v}{id}, \ndi{\lambda.x~e'}{v'}{id'}, \emptyset \Rightarrow_{st} \ndi{\lambda.x~e'}{v'}{id'}
        \]
        Therefore 
        \[
            \SAC ae, ae'_s, t'_s \Rightarrow_{st} ae'
        \]
        
        \noindent{By} definition of $\Rightarrow_r$
        \[
            \emptyset \Rightarrow_r \emptyset
        \]
        Therefore
        \[
            t_s \Rightarrow_r p_s
        \]

        \noindent{By} definition of $\Rightarrow_r$
        \[
            \ndi{\lambda.x~e'}{v}{id} \Rightarrow_r \lambda.x~e'
        \]
        Therefore
        \[
            ae_s \Rightarrow_r e_s
        \]

        \noindent{Because} $\SI{}{} \lambda.x~e' \Rightarrow_s \_, \_, \ndi{\lambda.x~e'}{v'}{id'}$, the definition of $\Rightarrow_s$ implies
        \[
            \SI{}{} \assume{z}{x} \Rightarrow_s \assume{z}{\ndi{x(id'_v)}{v'}{id'}}
        \]
        Therefore
        \[
            \SI{}{} p_s;\assume{z}{e_s} \Rightarrow_s t_s;\assume{z}{ae'_s}
        \]

        \noindent{Therefore when} $ae = \ndi{\lambda.x~e'}{v}{id}$,
        \[
            \forall t_s, ae_s, ae'~\SAC ae \Rightarrow_{ex} ae_s, t_s~\wedge~\SAC ae \equiv ae'~\exists e.~ae \Rightarrow_r e~\wedge~\SI{}{} e \Rightarrow_s \_, \_, ae'
        \]
        implies
        \[
            \exists ae'_s, t'_s, p_s, e_s.~\SAC ae, ae'_s, t'_s \Rightarrow_{st} ae'~\wedge~t_s \Rightarrow_r p_s~\wedge~ae_s \Rightarrow_r e_s
        \]
        \[
            \wedge~\SI{}{} p_s;\assume{z}{e_s} \Rightarrow_s t'_s;\assume{z}{ae'_s}
        \]

        \noindent{\bf Induction Case:}

        \noindent{Case 1:} $ae = \ndi{\mathsf{Dist}(ae_1\#id_e) = ae_2}{v}{id}$ and $id_e \in \cS$

        \noindent{By assumption}
        \[
            \SAC ae \Rightarrow_{ex} ae_s, t_s
        \]
        By definition of $\Rightarrow_{ex}$
        \[
            \SAC \ndi{\mathsf{Dist}(ae_1\#id_e) = ae_2}{v}{id} \Rightarrow_{ex} \ndi{\mathsf{Dist}(ae^1_s\#id_e) = ae_2}{v}{id}, t_s
        \]
        Then $ae_s = \ndi{\mathsf{Dist}(ae^1_s\#id_e) = ae_2}{v}{id}$, $\SAC ae_1 \Rightarrow_{ex} ae^1_s, t^1_s$ and $t_s = t^1_s$.

        \noindent{By assumption}
        \[
            \SAC ae \equiv ae'
        \]
        By definition of $\equiv$
        \[
            \SAC \ndi{\mathsf{Dist}(ae_1\#id_e) = ae_2}{v}{id} \equiv \ndi{\mathsf{Dist}(ae'_1\#id'_e) = ae'_2}{v'}{id'}
        \]
        Then $ae' = \ndi{\mathsf{Dist}(ae'_1\#id'_e) = ae'_2}{v'}{id'}$ and $\SAC ae_1 \equiv ae'_1$.

        \noindent{By assumption}
        \[
            \exists e.~ae \Rightarrow_r e
        \]
        By definition of $\Rightarrow_r$
        \[
            \ndi{\mathsf{Dist}(ae_1\#id_e) = ae_2}{v}{id} \Rightarrow_r \mathsf{Dist}(e_1)
        \]
        Then $e = \mathsf{Dist}(e_1)$ and $ae_1 \Rightarrow_r e_1$.

        \noindent{By assumption}
        \[
            \SI{}{} e \Rightarrow_s \_, \_, ae'
        \]
        By definition of $\Rightarrow_s$
        \[
            \SI{}{} \mathsf{Dist}(e_1) \Rightarrow_s \_, \_, \ndi{\mathsf{Dist}(ae'_1\#id'_e)=ae'_2}{v'}{id'}
        \]
        Then $\SI{}{} e_1 \Rightarrow_s \_, \_, ae'_1$, $e_2 \in \dom \mathsf{Dist}(v')$, and $\SI{}{} e_2 \Rightarrow_s \_, \_, ae'_2$.

        \noindent{By induction hypothesis}
        \[
            \SAC ae_1 \Rightarrow_{ex} ae^1_s, t^1_s~\wedge~\SAC ae_1 \equiv ae'_1~
            \wedge~ae_1 \Rightarrow_r e_1~\wedge \SI{}{} e_1 \Rightarrow_s \_, \_, ae'_1
        \]
        \[\implies
            \SAC ae_1, ae^2_s, t^2_s \Rightarrow_{st} ae'_1~\wedge~t^1_s \Rightarrow_r p^1_s~\wedge~ae^1_s \Rightarrow_r e^1_s
        \]
        \[
            \wedge~\SI{}{} p^1_s;\assume{z}{e^1_s} \Rightarrow_s t^2_s;\assume{z}{ae^2_s}
        \]
        Because $\SAC ae_1 \Rightarrow_{ex} ae^1_s, t^1_s$, $\SAC ae_1 \equiv ae'_1$, $ae_1 \Rightarrow_r e_1$, and 
        $\SI{}{} e_1 \Rightarrow_s \_, \_, ae'_1$, 
        \[
            \SAC ae_1, ae^2_s, t^2_s \Rightarrow_{st} ae'_1~\wedge~t^1_s \Rightarrow_r p^1_s~\wedge~ae^1_s \Rightarrow_r e^1_s
        \]
        \[
            \wedge~\SI{}{} p^1_s;\assume{z}{e^1_s} \Rightarrow_s t^2_s;\assume{z}{ae^2_s}
        \]

        \noindent{Consider} $ae'_s = \ndi{\mathsf{Dist}(ae^2_s\#id_e)=ae'_2}{v'}{id'}$, $t'_s = t^2_s$, $p_s = p^1_s$, 
        and $e_s = \mathsf{Dist}(e^1_s)$. Because $ \SAC ae_1, ae^2_s, t^2_s \Rightarrow_{st} ae'_1$ and $id_e \in \cS$, the 
        definition of $\Rightarrow_{st}$ implies
        \[
            \begin{array}{c}
            \SAC  \ndi{\mathsf{Dist}(ae_1\#id_e) = ae_2}{v}{id},  \ndi{\mathsf{Dist}(ae^2_s\#id'_e) = ae'_2}{v'}{id'}, t'_s 
            \\\Rightarrow_{st}  \ndi{\mathsf{Dist}(ae'_1\#id'_e) = ae'_2}{v'}{id'} 
            \end{array}
        \]
        Therefore
        \[
            \SAC ae, ae'_s, t'_s \Rightarrow_{st} ae'
        \]

        \noindent{Because} $t^1_s \Rightarrow_r p^1_s$, 
        \[
            t_s \Rightarrow_r p_s
        \]

        \noindent{Because} $ ae^1_s \Rightarrow_r e^1_s$, the definition of $\Rightarrow_r$ implies
        \[
            \ndi{\mathsf{Dist}(ae^1_s\#id_e) = ae_2}{v}{id} \implies \mathsf{Dist}(e^1_s) 
        \]
        Therefore
        \[
            ae_s \Rightarrow_r e_s
        \]

        \noindent{Because} $\SI{}{} p^1_s;\assume{z}{e^1_s} 
        \Rightarrow_s t^2_s;\assume{z}{ae^2_s}$, $\SI{}{} e_2 \Rightarrow_s \_, \_, ae'_2$, 
        and all variable names introduced by $t^2_s$ do not conflict with variable names in $ae'_2$ (Observation \ref{ex:obs}),
        the definition of $\Rightarrow_s$ implies
        \[
            \SI{}{} p_s;\assume{z}{\mathsf{Dist}(e^1_s)} \Rightarrow_s t'_s;\assume{z}{\ndi{\mathsf{Dist}(ae^1_s\#id'_e) = ae'_2}{v'}{id'}}
        \]
        Therefore
        \[
            \SI{}{} p_s;\assume{z}{e_s} \Rightarrow_s t'_s;\assume{z}{ae'_s}
        \]

        \noindent{Therefore} when $ae =  \ndi{\mathsf{Dist}(ae_1\#id_e) = ae_2}{v}{id}$ and  $id_e \in \cS$, and assuming induction hypothesis,
        \[
            \forall t_s, ae_s, ae'~\SAC ae \Rightarrow_{ex} ae_s, t_s~\wedge~\SAC ae \equiv ae'~\exists e.~ae \Rightarrow_r e~\wedge~\SI{}{} e \Rightarrow_s \_, \_, ae'
        \]
        \[
          \implies  \exists ae'_s, t'_s, p_s, e_s.~\SAC ae, ae'_s, t'_s \Rightarrow_{st} ae'~\wedge~t_s \Rightarrow_r p_s~\wedge~ae_s \Rightarrow_r e_s
        \]
        \[
            \wedge~\SI{}{} p_s;\assume{z}{e_s} \Rightarrow_s t'_s;\assume{z}{ae'_s}
        \]

        \noindent{Case 2:} $ae = \ndi{\mathsf{Dist}(ae_1\#id_e) = ae_2}{v}{id}$ and $id_e \notin \cS$

        \noindent{By assumption}
        \[
            \SAC ae \Rightarrow_{ex} ae_s, t_s
        \]
        By definition of $\Rightarrow_{ex}$
        \[
            \SAC \ndi{\mathsf{Dist}(ae_1\#id_e) = ae_2}{v}{id} \Rightarrow_{ex} \ndi{ae^3_s}{v}{id}, t_s
        \]
        Then $ae_s = \ndi{ae^3_s}{v}{id}$, $\SAC ae_1 \Rightarrow_{ex} ae^1_s, t^1_s$, 
        $\SAC ae_2 \Rightarrow_{ex} ae^3_s, t^3_s$, $ae_2 \Rightarrow_r e_2$, and $t_s = t^1_s;\observed{ae^1_s}{e_2} ;t^3_s$.

        \noindent{By assumption}
        \[
            \SAC ae \equiv ae'
        \]
        By definition of $\equiv$
        \[
            \SAC \ndi{\mathsf{Dist}(ae_1\#id_e) = ae_2}{v}{id} \equiv \ndi{\mathsf{Dist}(ae'_1\#id_e) = ae'_2}{v'}{id'}
        \]
        Then $ae' = \ndi{\mathsf{Dist}(ae'_1\#id_e) = ae'_2}{v'}{id'}$, $\SAC ae_1 \equiv ae'_1$, $\SAC ae_2 \equiv ae'_2$.

        \noindent{By assumption}
        \[
            \exists e.~ae \Rightarrow_r e
        \]
        By definition of $\Rightarrow_r$
        \[
            \ndi{\mathsf{Dist}(ae_1\#id_e) = ae_2}{v}{id} \Rightarrow_r \mathsf{Dist}(e_1)
        \]
        Then $e = \mathsf{Dist}(e_1)$ and $ae_1 \Rightarrow_r e_1$.

        \noindent{By assumption}
        \[
            \SI{}{} e \Rightarrow_s \_, \_, ae'
        \]
        By definition of $\Rightarrow_s$
        \[
            \SI{}{} \mathsf{Dist}(e_1) \Rightarrow_s \_, \_, \ndi{\mathsf{Dist}(ae'_1\#id_e)=ae'_2}{v'}{id'}
        \]
        Then $\SI{}{} e_1 \Rightarrow_s \_, \_, ae'_1$, $e_2 \in \dom \mathsf{Dist}(v')$, and $\SI{}{} e_2 \Rightarrow_s \_, \_, ae'_2$.

        \noindent{By induction hypothesis}
        \[
            \SAC ae_1 \Rightarrow_{ex} ae^1_s, t^1_s~\wedge~\SAC ae_1 \equiv ae'_1~
            \wedge~ae_1 \Rightarrow_r e_1~\wedge \SI{}{} e_1 \Rightarrow_s \_, \_, ae'_1
        \]
        \[\implies
            \SAC ae_1, ae^2_s, t^2_s \Rightarrow_{st} ae'_1~\wedge~t^1_s \Rightarrow_r p^1_s~\wedge~ae^1_s \Rightarrow_r e^1_s
        \]
        \[
            \wedge~\SI{}{} p^1_s;\assume{z}{e^1_s} \Rightarrow_s t^2_s;\assume{z}{ae^2_s}
        \]
        Because $\SAC ae_1 \Rightarrow_{ex} ae^1_s, t^1_s$, $\SAC ae_1 \equiv ae'_1$, $ae_1 \Rightarrow_r e_1$, and 
        $\SI{}{} e_1 \Rightarrow_s \_, \_, ae'_1$, 
        \[
            \SAC ae_1, ae^2_s, t^2_s \Rightarrow_{st} ae'_1~\wedge~t^1_s \Rightarrow_r p^1_s~\wedge~ae^1_s \Rightarrow_r e^1_s
        \]
        \[
            \wedge~\SI{}{} p^1_s;\assume{z}{e^1_s} \Rightarrow_s t^2_s;\assume{z}{ae^2_s}
        \]

        \noindent{By induction hypothesis}
        \[
            \SAC ae_2 \Rightarrow_{ex} ae^3_s, t^3_s~\wedge~\SAC ae_2 \equiv ae'_2~
            \wedge~ae_2 \Rightarrow_r e_2~\wedge \SI{}{} e_2 \Rightarrow_s \_, \_, ae'_2
        \]
        \[\implies
            \SAC ae_2, ae^4_s, t^4_s \Rightarrow_{st} ae'_2~\wedge~t^3_s \Rightarrow_r p^2_s~\wedge~ae^3_s \Rightarrow_r e^2_s
        \]
        \[
            \wedge~\SI{}{} p^2_s;\assume{z}{e^2_s} \Rightarrow_s t^4_s;\assume{z}{ae^4_s}
        \]
        Because $\SAC ae_2 \Rightarrow_{ex} ae^3_s, t^3_s$, $\SAC ae_2 \equiv ae'_2$, $ae_2 \Rightarrow_r e_2$, and 
        $\SI{}{} e_2 \Rightarrow_s \_, \_, ae'_2$, 
        \[
            \SAC ae_2, ae^4_s, t^4_s \Rightarrow_{st} ae'_2~\wedge~t^3_s \Rightarrow_r p^2_s~\wedge~ae^3_s \Rightarrow_r e^2_s
        \]
        \[
            \wedge~\SI{}{} p^2_s;\assume{z}{e^2_s} \Rightarrow_s t^4_s;\assume{z}{ae^4_s}
        \]

        \noindent{Consider} $ae'_s = \ndi{ae^4_s}{v'}{id'}$, $t'_s = t^2_s;\observed{ae^2_s}{e_2};t^4_s$, $p_s = p^1_s;\observed{e^1_s}{e_2};p^2_s$, 
        and $e_s = e^2_s$. Because $ \SAC ae_1, ae^2_s, t^2_s \Rightarrow_{st} 
        ae'_1$, $\SAC ae_2, ae^4_s, t^4_s \Rightarrow_{st} ae'_s$, and $id_e \in \cS$, the 
        definition of $\Rightarrow_{st}$ implies
        \[
            \SAC  \ndi{\mathsf{Dist}(ae_1\#id_e) = ae_2}{v}{id},  \ndi{ae^4_s}{v'}{id'}, t'_s 
            \Rightarrow_{st}  \ndi{\mathsf{Dist}(ae'_1\#id_e) = ae'_2}{v'}{id'} 
        \]
        Therefore
        \[
            \SAC ae, ae'_s, t'_s \Rightarrow_{st} ae'
        \]

        \noindent{Because} $t^1_s \Rightarrow_r p^1_s$, $t^3_s \Rightarrow_r p^2_s$, $ae^1_s \Rightarrow_r e^1_s$,
        definition of $\Rightarrow_r$ implies
        \[
            t^1_s;\observed{ae^1_s}{e_2};t^3_s \Rightarrow_r p^1_s;\observed{e^1_s}{e_2};p^2_s
        \]
        Therefore
        \[
            t_s \Rightarrow_r p_s
        \]

        \noindent{Because} $ ae^3_s \Rightarrow_r e^2_s$, the definition of $\Rightarrow_r$ implies
        \[
            \ndi{ae^3_s}{v}{id} \implies e^2_s
        \]
        Therefore
        \[
            ae_s \Rightarrow_r e_s
        \]

        \noindent{Because} $\SI{}{} p^1_s;\assume{z}{e^1_s} 
        \Rightarrow_s t^2_s;\assume{z}{ae^2_s}$, $\SI{}{} p^2_s;\assume{z}{e^2_s} \Rightarrow_s t^4_s;\assume{z}{ae^4_s}$, 
        and all variable names introduced by $t^2_s$ do not conflict with variable names in $ae^4_s$ and $t^4_s$ (Observation \ref{ex:obs}),
        the definition of $\Rightarrow_s$ implies
        \[
            \SI{}{} p_s;\assume{z}{e^2_s} \Rightarrow_s t'_s;\assume{z}{ae^4_s}
        \]
        Therefore
        \[
            \SI{}{} p_s;\assume{z}{e_s} \Rightarrow_s t'_s;\assume{z}{ae'_s}
        \]

        \noindent{Therefore} when $ae =  \ndi{\mathsf{Dist}(ae_1\$id_e) = ae_2}{v}{id}$ and $id_e \in \cS$, and assuming induction hypothesis,
        \[
            \forall t_s, ae_s, ae'~\SAC ae \Rightarrow_{ex} ae_s, t_s~\wedge~
            \SAC ae \equiv ae'~\exists e.~ae \Rightarrow_r e~\wedge~\SI{}{} e \Rightarrow_s \_, \_, ae'
        \]
        \[
          \implies  \exists ae'_s, t'_s, p_s, e_s.~\SAC ae, ae'_s, t'_s \Rightarrow_{st} ae'~\wedge~t_s \Rightarrow_r p_s~\wedge~ae_s \Rightarrow_r e_s
        \]
        \[
            \wedge~\SI{}{} p_s;\assume{z}{e_s} \Rightarrow_s t'_s;\assume{z}{ae'_s}
        \]

        \noindent{Case 3:} $ae = \ndi{(ae_1~ae_2)aa}{v}{id}$ and $ID(ae_1) \in \cS$

        \noindent{By assumption}
        \[
            \SAC ae \Rightarrow_{ex} ae_s, t_s
        \]
        By definition of $\Rightarrow_{ex}$
        \[
            \SAC \ndi{(ae_1~ae_2)aa}{v}{id} \Rightarrow_{ex} \ndi{(ae^1_s~ae^3_s)aa}{v}{id}, t_s
        \]
        Then $ae_s = \ndi{(ae^1_s~ae^3_s)aa}{v}{id}$, $\SAC ae_1 \Rightarrow_{ex} ae^1_s, t^1_s$, 
        $\SAC ae_2 \Rightarrow_{ex} ae^3_s, t^3_s$, and $t_s = t^1_s;t^3_s$.

        \noindent{By assumption}
        \[
            \SAC ae \equiv ae'
        \]
        By definition of $\equiv$
        \[
            \SAC \ndi{(ae_1~ae_2)aa}{v}{id} \equiv \ndi{(ae'_1~ae'_2)aa'}{v'}{id'}
        \]
        Then $ae' = \ndi{(ae'_1~ae'_2)aa'}{v'}{id'}$, $\SAC ae_1 \equiv ae'_1$, $\SAC ae_2 \equiv ae'_2$.

        \noindent{By assumption}
        \[
            \exists e.~ae \Rightarrow_r e
        \]
        By definition of $\Rightarrow_r$
        \[
            \ndi{(ae_1~ae_2)aa}{v}{id} \Rightarrow_r (e_1~e_2)
        \]
        Then $e = (e_1~e_2)$, $ae_1 \Rightarrow_r e_1$, and $ae_2 \Rightarrow_r e_2$.

        \noindent{By assumption}
        \[
            \SI{}{} e \Rightarrow_s \_, \_, ae'
        \]
        By definition of $\Rightarrow_s$
        \[
            \SI{}{} (e_1~e_2) \Rightarrow_s \_, \_, \ndi{(ae'_1~ae'_2)aa'}{v'}{id'}
        \]
        Then $\SI{}{} e_1 \Rightarrow_s \_, \_, ae'_1$ and $\SI{}{} e_2 \Rightarrow_s \_, \_, ae'_2$.
        
        \noindent{If} $\cV(ae'_1) = \tup{\lambda.x~e', \sigma'_v, \sigma'_{id}}$, 
        $\SI{'[y \rightarrow \cV(ae'_2)]}{'[y \rightarrow ID(ae'_2)]} e'[y/x] \Rightarrow_s \_, \_, ae_e$, $aa = y=ae_e$,
        else
        $aa = \perp$.

        \noindent{By induction hypothesis}
        \[
            \SAC ae_1 \Rightarrow_{ex} ae^1_s, t^1_s~\wedge~\SAC ae_1 \equiv ae'_1~
            \wedge~ae_1 \Rightarrow_r e_1~\wedge \SI{}{} e_1 \Rightarrow_s \_, \_, ae'_1
        \]
        \[\implies
            \SAC ae_1, ae^2_s, t^2_s \Rightarrow_{st} ae'_1~\wedge~t^1_s \Rightarrow_r p^1_s~\wedge~ae^1_s \Rightarrow_r e^1_s
        \]
        \[
            \wedge~\SI{}{} p^1_s;\assume{z}{e^1_s} \Rightarrow_s t^2_s;\assume{z}{ae^2_s}
        \]
        Because $\SAC ae_1 \Rightarrow_{ex} ae^1_s, t^1_s$, $\SAC ae_1 \equiv ae'_1$, $ae_1 \Rightarrow_r e_1$, and 
        $\SI{}{} e_1 \Rightarrow_s \_, \_, ae'_1$, 
        \[
            \SAC ae_1, ae^2_s, t^2_s \Rightarrow_{st} ae'_1~\wedge~t^1_s \Rightarrow_r p^1_s~\wedge~ae^1_s \Rightarrow_r e^1_s
        \]
        \[
            \wedge~\SI{}{} p^1_s;\assume{z}{e^1_s} \Rightarrow_s t^2_s;\assume{z}{ae^2_s}
        \]

        \noindent{By induction hypothesis}
        \[
            \SAC ae_2 \Rightarrow_{ex} ae^3_s, t^3_s~\wedge~\SAC ae_2 \equiv ae'_2~
            \wedge~ae_2 \Rightarrow_r e_2~\wedge \SI{}{} e_2 \Rightarrow_s \_, \_, ae'_2
        \]
        \[\implies
            \SAC ae_2, ae^4_s, t^4_s \Rightarrow_{st} ae'_2~\wedge~t^3_s \Rightarrow_r p^2_s~\wedge~ae^3_s \Rightarrow_r e^2_s
        \]
        \[
            \wedge~\SI{}{} p^2_s;\assume{z}{e^2_s} \Rightarrow_s t^4_s;\assume{z}{ae^4_s}
        \]
        Because $\SAC ae_2 \Rightarrow_{ex} ae^3_s, t^3_s$, $\SAC ae_2 \equiv ae'_2$, $ae_2 \Rightarrow_r e_2$, and 
        $\SI{}{} e_2 \Rightarrow_s \_, \_, ae'_2$, 
        \[
            \SAC ae_2, ae^4_s, t^4_s \Rightarrow_{st} ae'_2~\wedge~t^3_s \Rightarrow_r p^2_s~\wedge~ae^3_s \Rightarrow_r e^2_s
        \]
        \[
            \wedge~\SI{}{} p^2_s;\assume{z}{e^2_s} \Rightarrow_s t^4_s;\assume{z}{ae^4_s}
        \]

        \noindent{Consider} $ae'_s = \ndi{(ae^2_s~ae^4_s)aa'}{v'}{id'}$, 
        $t'_s = t^2_s;t^4_s$, $p_s = p^1_s;p^2_s$, 
        and $e_s = (e^1_s~e^2_s)$. Because $ \SAC ae_1, ae^2_s, t^2_s \Rightarrow_{st} 
        ae'_1$, $\SAC ae_2, ae^4_s, t^4_s \Rightarrow_{st} ae'_2$, and $id_e \in \cS$, the 
        definition of $\Rightarrow_{st}$ implies
        \[
            \SAC  \ndi{(ae_1~ae_2)aa}{v}{id},  \ndi{(ae^2_s~ae^4_s)aa'}{v'}{id'}, t'_s 
            \Rightarrow_{st}  \ndi{(ae'_1~ae'_2)aa'}{v'}{id'} 
        \]
        Therefore
        \[
            \SAC ae, ae'_s, t'_s \Rightarrow_{st} ae'
        \]

        \noindent{Because} $t^1_s \Rightarrow_r p^1_s$ and $t^3_s \Rightarrow_r p^2_s$,
        definition of $\Rightarrow_r$ implies
        \[
            t^1_s;t^3_s \Rightarrow_r p^1_s;p^2_s
        \]
        Therefore
        \[
            t_s \Rightarrow_r p_s
        \]

        \noindent{Because} $ ae^3_s \Rightarrow_r e^2_s$ and $ae^1_s \Rightarrow_r e^1_s$, the definition of $\Rightarrow_r$ implies
        \[
            \ndi{(ae^1_s~ae^3_s)aa}{v}{id} \implies (e^1_s~e^2_s)
        \]
        Therefore
        \[
            ae_s \Rightarrow_r e_s
        \]

        \noindent{Because} $\SI{}{} p^1_s;\assume{z}{e^1_s} 
        \Rightarrow_s t^2_s;\assume{z}{ae^2_s}$, $\SI{}{} p^2_s;\assume{z}{e^2_s} \Rightarrow_s t^4_s;\assume{z}{ae^4_s}$, 
        and all variable names introduced by $t^2_s$ do not conflict with variable names in $ae^4_s$ and $t^4_s$ (Observation \ref{ex:obs}),
        the definition of $\Rightarrow_s$ implies
        \[
            \SI{}{} p_s;\assume{z}{(e^1_s~e^2_s)} \Rightarrow_s t'_s;\assume{z}{(ae^2_s~ae^4_s)aa'}
        \]
        Therefore
        \[
            \SI{}{} p_s;\assume{z}{e_s} \Rightarrow_s t'_s;\assume{z}{ae'_s}
        \]

        \noindent{Therefore} when $ae =  \ndi{(ae_1~ae_2)aa}{v}{id}$ and $id_e \in \cS$, and assuming induction hypothesis,
        \[
            \forall t_s, ae_s, ae'~\SAC ae \Rightarrow_{ex} ae_s, t_s~\wedge~
            \SAC ae \equiv ae'~\exists e.~ae \Rightarrow_r e~\wedge~\SI{}{} e \Rightarrow_s \_, \_, ae'
        \]
        \[
          \implies  \exists ae'_s, t'_s, p_s, e_s.~\SAC ae, ae'_s, t'_s \Rightarrow_{st} ae'~\wedge~t_s \Rightarrow_r p_s~\wedge~ae_s \Rightarrow_r e_s
        \]
        \[
            \wedge~\SI{}{} p_s;\assume{z}{e_s} \Rightarrow_s t'_s;\assume{z}{ae'_s}
        \]

        \noindent{Case 4:} $ae = \ndi{(ae_1~ae_2)\perp}{v}{id}$ and $ID(ae_1) \notin \cS$

        \noindent{By assumption}
        \[
            \SAC ae \Rightarrow_{ex} ae_s, t_s
        \]
        By definition of $\Rightarrow_{ex}$
        \[
            \SAC \ndi{(ae_1~ae_2)\perp}{v}{id} \Rightarrow_{ex} \ndi{(ae^1_s~ae^3_s)\perp}{v}{id}, t_s
        \]
        Then $ae_s = \ndi{(ae^1_s~ae^3_s)\perp}{v}{id}$, $\SAC ae_1 \Rightarrow_{ex} ae^1_s, t^1_s$, 
        $\SAC ae_2 \Rightarrow_{ex} ae^3_s, t^3_s$, and $t_s = t^1_s;t^3_s$.

        \noindent{By assumption}
        \[
            \SAC ae \equiv ae'
        \]
        By definition of $\equiv$
        \[
            \SAC \ndi{(ae_1~ae_2)\perp}{v}{id} \equiv \ndi{(ae'_1~ae'_2)\perp}{v'}{id'}
        \]
        Then $ae' = \ndi{(ae'_1~ae'_2)\perp}{v'}{id'}$, $\SAC ae_1 \equiv ae'_1$, $\SAC ae_2 \equiv ae'_2$.

        \noindent{By assumption}
        \[
            \exists e.~ae \Rightarrow_r e
        \]
        By definition of $\Rightarrow_r$
        \[
            \ndi{(ae_1~ae_2)\perp}{v}{id} \Rightarrow_r (e_1~e_2)
        \]
        Then $e = (e_1~e_2)$, $ae_1 \Rightarrow_r e_1$, and $ae_2 \Rightarrow_r e_2$.

        \noindent{By assumption}
        \[
            \SI{}{} e \Rightarrow_s \_, \_, ae'
        \]
        By definition of $\Rightarrow_s$
        \[
            \SI{}{} (e_1~e_2) \Rightarrow_s \_, \_, \ndi{(ae'_1~ae'_2)\perp}{v'}{id'}
        \]
        Then $\SI{}{} e_1 \Rightarrow_s \_, \_, ae'_1$ and $\SI{}{} e_2 \Rightarrow_s \_, \_, ae'_2$.

        \noindent{By induction hypothesis}
        \[
            \SAC ae_1 \Rightarrow_{ex} ae^1_s, t^1_s~\wedge~\SAC ae_1 \equiv ae'_1~
            \wedge~ae_1 \Rightarrow_r e_1~\wedge \SI{}{} e_1 \Rightarrow_s \_, \_, ae'_1
        \]
        \[\implies
            \SAC ae_1, ae^2_s, t^2_s \Rightarrow_{st} ae'_1~\wedge~t^1_s \Rightarrow_r p^1_s~\wedge~ae^1_s \Rightarrow_r e^1_s
        \]
        \[
            \wedge~\SI{}{} p^1_s;\assume{z}{e^1_s} \Rightarrow_s t^2_s;\assume{z}{ae^2_s}
        \]
        Because $\SAC ae_1 \Rightarrow_{ex} ae^1_s, t^1_s$, $\SAC ae_1 \equiv ae'_1$, $ae_1 \Rightarrow_r e_1$, and 
        $\SI{}{} e_1 \Rightarrow_s \_, \_, ae'_1$, 
        \[
            \SAC ae_1, ae^2_s, t^2_s \Rightarrow_{st} ae'_1~\wedge~t^1_s \Rightarrow_r p^1_s~\wedge~ae^1_s \Rightarrow_r e^1_s
        \]
        \[
            \wedge~\SI{}{} p^1_s;\assume{z}{e^1_s} \Rightarrow_s t^2_s;\assume{z}{ae^2_s}
        \]

        \noindent{By induction hypothesis}
        \[
            \SAC ae_2 \Rightarrow_{ex} ae^3_s, t^3_s~\wedge~\SAC ae_2 \equiv ae'_2~
            \wedge~ae_2 \Rightarrow_r e_2~\wedge \SI{}{} e_2 \Rightarrow_s \_, \_, ae'_2
        \]
        \[\implies
            \SAC ae_2, ae^4_s, t^4_s \Rightarrow_{st} ae'_2~\wedge~t^3_s \Rightarrow_r p^2_s~\wedge~ae^3_s \Rightarrow_r e^2_s
        \]
        \[
            \wedge~\SI{}{} p^2_s;\assume{z}{e^2_s} \Rightarrow_s t^4_s;\assume{z}{ae^4_s}
        \]
        Because $\SAC ae_2 \Rightarrow_{ex} ae^3_s, t^3_s$, $\SAC ae_2 \equiv ae'_2$, $ae_2 \Rightarrow_r e_2$, and 
        $\SI{}{} e_2 \Rightarrow_s \_, \_, ae'_2$, 
        \[
            \SAC ae_2, ae^4_s, t^4_s \Rightarrow_{st} ae'_2~\wedge~t^3_s \Rightarrow_r p^2_s~\wedge~ae^3_s \Rightarrow_r e^2_s
        \]
        \[
            \wedge~\SI{}{} p^2_s;\assume{z}{e^2_s} \Rightarrow_s t^4_s;\assume{z}{ae^4_s}
        \]

        \noindent{Consider} $ae'_s = \ndi{(ae^2_s~ae^4_s)\perp}{v'}{id'}$, 
        $t'_s = t^2_s;t^4_s$, $p_s = p^1_s;p^2_s$, 
        and $e_s = (e^1_s~e^2_s)$. Because $ \SAC ae_1, ae^2_s, t^2_s \Rightarrow_{st} 
        ae'_1$, $\SAC ae_2, ae^4_s, t^4_s \Rightarrow_{st} ae'_2$, and $id_e \in \cS$, the 
        definition of $\Rightarrow_{st}$ implies
        \[
            \SAC  \ndi{(ae_1~ae_2)\perp}{v}{id},  \ndi{(ae^2_s~ae^4_s)\perp}{v'}{id'}, t'_s 
            \Rightarrow_{st}  \ndi{(ae'_1~ae'_2)\perp}{v'}{id'} 
        \]
        Therefore
        \[
            \SAC ae, ae'_s, t'_s \Rightarrow_{st} ae'
        \]

        \noindent{Because} $t^1_s \Rightarrow_r p^1_s$ and $t^3_s \Rightarrow_r p^2_s$,
        definition of $\Rightarrow_r$ implies
        \[
            t^1_s;t^3_s \Rightarrow_r p^1_s;p^2_s
        \]
        Therefore
        \[
            t_s \Rightarrow_r p_s
        \]

        \noindent{Because} $ ae^3_s \Rightarrow_r e^2_s$ and $ae^1_s \Rightarrow_r e^1_s$, the definition of $\Rightarrow_r$ implies
        \[
            \ndi{(ae^1_s~ae^3_s)\perp}{v}{id} \implies (e^1_s~e^2_s)
        \]
        Therefore
        \[
            ae_s \Rightarrow_r e_s
        \]

        \noindent{Because} $\SI{}{} p^1_s;\assume{z}{e^1_s} 
        \Rightarrow_s t^2_s;\assume{z}{ae^2_s}$, $\SI{}{} p^2_s;\assume{z}{e^2_s} \Rightarrow_s t^4_s;\assume{z}{ae^4_s}$, 
        and all variable names introduced by $t^2_s$ do not conflict with variable names in $ae^4_s$ and $t^4_s$ (Observation \ref{ex:obs}),
        the definition of $\Rightarrow_s$ implies
        \[
            \SI{}{} p_s;\assume{z}{(e^1_s~e^2_s)} \Rightarrow_s t'_s;\assume{z}{(ae^2_s~ae^4_s)\perp}
        \]
        Therefore
        \[
            \SI{}{} p_s;\assume{z}{e_s} \Rightarrow_s t'_s;\assume{z}{ae'_s}
        \]

        \noindent{Therefore} when $ae =  \ndi{(ae_1~ae_2)\perp}{v}{id}$ and $id_e \notin \cS$, and assuming induction hypothesis,
        \[
            \forall t_s, ae_s, ae'~\SAC ae \Rightarrow_{ex} ae_s, t_s~\wedge~
            \SAC ae \equiv ae'~\exists e.~ae \Rightarrow_r e~\wedge~\SI{}{} e \Rightarrow_s \_, \_, ae'
        \]
        \[
          \implies  \exists ae'_s, t'_s, p_s, e_s.~\SAC ae, ae'_s, t'_s \Rightarrow_{st} ae'~\wedge~t_s \Rightarrow_r p_s~\wedge~ae_s \Rightarrow_r e_s
        \]
        \[
            \wedge~\SI{}{} p_s;\assume{z}{e_s} \Rightarrow_s t'_s;\assume{z}{ae'_s}
        \]

        \noindent{Case 5:} $ae = \ndi{(ae_1~ae_2)y=ae_3}{v}{id}$ and $ID(ae_1) \notin \cS$

        \noindent{By assumption}
        \[
            \SAC ae \Rightarrow_{ex} ae_s, t_s
        \]
        By definition of $\Rightarrow_{ex}$
        \[
            \SAC \ndi{(ae_1~ae_2)y=ae_3}{v}{id} \Rightarrow_{ex} \ndi{ae^5_s}{v}{id}, t_s
        \]
        Then $ae_s = \ndi{ae^5_s}{v}{id}$, $\SAC ae_1 \Rightarrow_{ex} ae^1_s, t^1_s$, 
        $\SAC ae_2 \Rightarrow_{ex} ae^3_s, t^3_s$, $\SAC ae_3 \rightarrow_{ex} ae^5_s, t^5_s$, 
        and $t_s = t^1_s;\assume{x}{ae^1_s};t^3_s;\assume{y}{ae^3_s};t^5_s$.

        \noindent{By assumption}
        \[
            \SAC ae \equiv ae'
        \]
        By definition of $\equiv$
        \[
            \SAC \ndi{(ae_1~ae_2)y=ae_3}{v}{id} \equiv \ndi{(ae'_1~ae'_2)y=ae'_3}{v'}{id'}
        \]
        Then $ae' = \ndi{(ae'_1~ae'_2)y=ae'_3}{v'}{id'}$, $\SAC ae_1 \equiv ae'_1$, $\SAC ae_2 \equiv ae'_2$, 
        and $\SAC ae_3 \equiv ae'_3$.

        \noindent{By assumption}
        \[
            \exists e.~ae \Rightarrow_r e
        \]
        By definition of $\Rightarrow_r$
        \[
            \ndi{(ae_1~ae_2)y=ae_3}{v}{id} \Rightarrow_r (e_1~e_2)
        \]
        Then $e = (e_1~e_2)$, $ae_1 \Rightarrow_r e_1$, and $ae_2 \Rightarrow_r e_2$.

        \noindent{By assumption}
        \[
            \SI{}{} e \Rightarrow_s \_, \_, ae'
        \]
        By definition of $\Rightarrow_s$
        \[
            \SI{}{} (e_1~e_2) \Rightarrow_s \_, \_, \ndi{(ae'_1~ae'_2)y=ae'_3}{v'}{id'}
        \]
        Then $\SI{}{} e_1 \Rightarrow_s \_, \_, ae'_1$ and $\SI{}{} e_2 \Rightarrow_s \_, \_, ae'_2$, 
        $\cV(ae'_1) = \tup{\lambda.x~e'_3, \sigma'_v, \sigma'_{id}}$, $e_3 = e'_3[y/x]$, and
        $\SI{'[y \rightarrow \cV(ae'_2)]}{'[y \rightarrow ID(ae'_2)]} e_3 \Rightarrow_s \_, \_, ae'_3$.

        \noindent{By induction hypothesis}
        \[
            \SAC ae_1 \Rightarrow_{ex} ae^1_s, t^1_s~\wedge~\SAC ae_1 \equiv ae'_1~
            \wedge~ae_1 \Rightarrow_r e_1~\wedge \SI{}{} e_1 \Rightarrow_s \_, \_, ae'_1
        \]
        \[\implies
            \SAC ae_1, ae^2_s, t^2_s \Rightarrow_{st} ae'_1~\wedge~t^1_s \Rightarrow_r p^1_s~\wedge~ae^1_s \Rightarrow_r e^1_s
        \]
        \[
            \wedge~\SI{}{} p^1_s;\assume{z}{e^1_s} \Rightarrow_s t^2_s;\assume{z}{ae^2_s}
        \]
        Because $\SAC ae_1 \Rightarrow_{ex} ae^1_s, t^1_s$, $\SAC ae_1 \equiv ae'_1$, $ae_1 \Rightarrow_r e_1$, and 
        $\SI{}{} e_1 \Rightarrow_s \_, \_, ae'_1$, 
        \[
            \SAC ae_1, ae^2_s, t^2_s \Rightarrow_{st} ae'_1~\wedge~t^1_s \Rightarrow_r p^1_s~\wedge~ae^1_s \Rightarrow_r e^1_s
        \]
        \[
            \wedge~\SI{}{} p^1_s;\assume{z}{e^1_s} \Rightarrow_s t^2_s;\assume{z}{ae^2_s}
        \]

        \noindent{By induction hypothesis}
        \[
            \SAC ae_2 \Rightarrow_{ex} ae^3_s, t^3_s~\wedge~\SAC ae_2 \equiv ae'_2~
            \wedge~ae_2 \Rightarrow_r e_2~\wedge \SI{}{} e_2 \Rightarrow_s \_, \_, ae'_2
        \]
        \[\implies
            \SAC ae_2, ae^4_s, t^4_s \Rightarrow_{st} ae'_2~\wedge~t^3_s \Rightarrow_r p^2_s~\wedge~ae^3_s \Rightarrow_r e^2_s
        \]
        \[
            \wedge~\SI{}{} p^2_s;\assume{z}{e^2_s} \Rightarrow_s t^4_s;\assume{z}{ae^4_s}
        \]
        Because $\SAC ae_2 \Rightarrow_{ex} ae^3_s, t^3_s$, $\SAC ae_2 \equiv ae'_2$, $ae_2 \Rightarrow_r e_2$, and 
        $\SI{}{} e_2 \Rightarrow_s \_, \_, ae'_2$, 
        \[
            \SAC ae_2, ae^4_s, t^4_s \Rightarrow_{st} ae'_2~\wedge~t^3_s \Rightarrow_r p^2_s~\wedge~ae^3_s \Rightarrow_r e^2_s
        \]
        \[
            \wedge~\SI{}{} p^2_s;\assume{z}{e^2_s} \Rightarrow_s t^4_s;\assume{z}{ae^4_s}
        \]

        \noindent{By induction hypothesis}
        \[
            \SAC ae_3 \Rightarrow_{ex} ae^5_s, t^5_s~\wedge~\SAC ae_3 \equiv ae'_3~
            \wedge~ae_3 \Rightarrow_r e_3~\wedge \SI{}{} e_3 \Rightarrow_s \_, \_, ae'_3
        \]
        \[\implies
            \SAC ae_3, ae^6_s, t^6_s \Rightarrow_{st} ae'_3~\wedge~t^5_s \Rightarrow_r p^3_s~\wedge~ae^5_s \Rightarrow_r e^3_s
        \]
        \[
            \wedge~\SI{}{} p^3_s;\assume{z}{e^3_s} \Rightarrow_s t^6_s;\assume{z}{ae^6_s}
        \]
        Because $\SAC ae_3 \Rightarrow_{ex} ae^5_s, t^5_s$, $\SAC ae_3 \equiv ae'_3$, $ae_3 \Rightarrow_r e_3$, and 
        $\SI{}{} e_3 \Rightarrow_s \_, \_, ae'_3$, 
        \[
            \SAC ae_3, ae^6_s, t^6_s \Rightarrow_{st} ae'_3~\wedge~t^5_s \Rightarrow_r p^3_s~\wedge~ae^5_s \Rightarrow_r e^3_s
        \]
        \[
            \wedge~\SI{}{} p^3_s;\assume{z}{e^3_s} \Rightarrow_s t^6_s;\assume{z}{ae^6_s}
        \]

        \noindent{Consider} $ae'_s = \ndi{ae^6_s}{v'}{id'}$, 
        $t'_s = t^2_s;\assume{x}{ae^2_s};t^4_s;\assume{y}{ae^4_s};t^6_s$, $p_s = p^1_s;\assume{x}{e^1_s};p^2_s;\assume{y}{e^2_s};p^3_s$, 
        and $e_s = e^3_s$. Because $ \SAC ae_1, ae^2_s, t^2_s \Rightarrow_{st} 
        ae'_1$, $\SAC ae_2, ae^4_s, t^4_s \Rightarrow_{st} ae'_2$, $\SAC ae_3, ae^6_s, t^6_s \Rightarrow_{st} ae'_3$, the 
        definition of $\Rightarrow_{st}$ implies
        \[
            \SAC  \ndi{(ae_1~ae_2)y=ae_3}{v}{id},  \ndi{ae^6_s}{v'}{id'}, t'_s 
            \Rightarrow_{st}  \ndi{(ae'_1~ae'_2)y=ae'_3}{v'}{id'} 
        \]
        Therefore
        \[
            \SAC ae, ae'_s, t'_s \Rightarrow_{st} ae'
        \]

        \noindent{Because} $t^1_s \Rightarrow_r p^1_s$, $t^3_s \Rightarrow_r p^2_s$,
        $t^5_s \Rightarrow_r p^3_s$, $ae^1_s \Rightarrow_r e^1_s$, and $ae^3_s \Rightarrow_r e^2_s$,
        definition of $\Rightarrow_r$ implies
        \[
            t^1_s;\assume{x}{ae^1_s};t^3_s;\assume{y}{ae^3_s}t^5_s \Rightarrow_r p^1_s;\assume{x}{e^1_s};p^2_s;\assume{y}{e^2_s};p^3_s
        \]
        Therefore
        \[
            t_s \Rightarrow_r p_s
        \]

        \noindent{Because} $ ae^5_s \Rightarrow_r e^3_s$, the definition of $\Rightarrow_r$ implies
        \[
            \ndi{ae^5_s}{v}{id} \implies e^3_s
        \]
        Therefore
        \[
            ae_s \Rightarrow_r e_s
        \]

        \noindent{Because} $\SI{}{} p^1_s;\assume{z}{e^1_s} 
        \Rightarrow_s t^2_s;\assume{z}{ae^2_s}$, $\SI{}{} p^2_s;\assume{z}{e^2_s} \Rightarrow_s t^4_s;\assume{z}{ae^4_s}$, 
        and all variable names introduced by $t^2_s$ do not conflict with variable names in $ae^4_s$ and $t^4_s$ (Observation \ref{ex:obs}),
        the definition of $\Rightarrow_s$ implies
        \[
            \SI{}{} p_s;\assume{z}{e^3_s} \Rightarrow_s t'_s;\assume{z}{ae^6_s}
        \]
        Therefore
        \[
            \SI{}{} p_s;\assume{z}{e_s} \Rightarrow_s t'_s;\assume{z}{ae'_s}
        \]

        \noindent{Therefore} when $ae =  \ndi{(ae_1~ae_2)y=ae_3}{v}{id}$ and $id_e \notin \cS$, and assuming induction hypothesis,
        \[
            \forall t_s, ae_s, ae'~\SAC ae \Rightarrow_{ex} ae_s, t_s~\wedge~
            \SAC ae \equiv ae'~\exists e.~ae \Rightarrow_r e~\wedge~\SI{}{} e \Rightarrow_s \_, \_, ae'
        \]
        \[
          \implies  \exists ae'_s, t'_s, p_s, e_s.~\SAC ae, ae'_s, t'_s \Rightarrow_{st} ae'~\wedge~t_s \Rightarrow_r p_s~\wedge~ae_s \Rightarrow_r e_s
        \]
        \[
            \wedge~\SI{}{} p_s;\assume{z}{e_s} \Rightarrow_s t'_s;\assume{z}{ae'_s}
        \]
 
        \noindent{Because we have covered all cases, using induction, the following statement is true}
    \[
            \forall t_s, ae_s, ae'~\SAC ae \Rightarrow_{ex} ae_s, t_s~\wedge~
            \SAC ae \equiv ae'~\exists e.~ae \Rightarrow_r e~\wedge~\SI{}{} e \Rightarrow_s \_, \_, ae'
        \]
        \[
          \implies  \exists ae'_s, t'_s, p_s, e_s.~\SAC ae, ae'_s, t'_s \Rightarrow_{st} ae'~\wedge~t_s \Rightarrow_r p_s~\wedge~ae_s \Rightarrow_r e_s
        \]
        \[
            \wedge~\SI{}{} p_s;\assume{z}{e_s} \Rightarrow_s t'_s;\assume{z}{ae'_s}
        \]
   \end{proof} 

\begin{lemma}
    \noindent{For any} two traces $t, t'$, a valid subproblem $\cS$ on $t$ and 
        a subtrace $\SAC t \Rightarrow_{ex} t_s$
        \[
            \SAC t \equiv t'  \wedge t \Rightarrow_r p \wedge \SI{}{} p \Rightarrow_s t'
        \]
        implies there exists a subtrace $t'_s$ such that
        \[
            \exists~p_s.t_s \Rightarrow_r p_s \wedge \SI{}{} p_s \Rightarrow_s t'_s \wedge \SAC t, t'_s \Rightarrow_{st} t'
        \]
        \label{lem:comptr}
\end{lemma}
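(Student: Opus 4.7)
The plan is to proceed by structural induction on the trace $t$, mirroring the structure of the corresponding soundness argument (Lemma~\ref{lem:sound3tr}) but running it in the reverse direction. As a preliminary step I would strengthen the statement so that it holds under arbitrary matched environments $\sigma_v, \sigma_{id}$ (with $\dom\sigma_v = \dom\sigma_{id}$), since the induction on the $\assume$-case will extend the environment when recursing on the tail $t_1$; the stated lemma is the instance obtained by taking both environments empty. The previous expression-level completeness lemma (the one immediately preceding this statement) will play exactly the role here that Lemma~\ref{lem:sound3exp} played for soundness.

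For the base case $t = \emptyset$, inversion on each of $\Rightarrow_{ex}$, $\equiv$, $\Rightarrow_r$, and $\Rightarrow_s$ forces $t_s = t' = p = \emptyset$, and choosing $t'_s = \emptyset$ and $p_s = \emptyset$ discharges all three conclusions by the empty-trace rules of $\Rightarrow_{st}$, $\Rightarrow_r$, and $\Rightarrow_s$.

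For the inductive case $t = \assume{x}{ae};t_1$, inversion gives a decomposition on each premise: $t_s = t^1_s;\assume{x}{ae_s};t^3_s$ with $\SAC ae \Rightarrow_{ex} ae_s, t^1_s$ and $\SAC t_1 \Rightarrow_{ex} t^3_s$; $t' = \assume{x}{ae'};t'_1$ with $\SAC ae \equiv ae'$ and $\SAC t_1 \equiv t'_1$; $p = \assume{x}{e};p_1$ with $ae \Rightarrow_r e$ and $t_1 \Rightarrow_r p_1$; and $\SI{}{} e \Rightarrow_s v, id, ae'$ together with $\SI{[x\mapsto v]}{[x\mapsto id]} p_1 \Rightarrow_s t'_1$. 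I would then invoke the expression lemma on $ae$ to obtain $ae'_s, t^2_s, p^1_s, e_s$ with $\SAC ae, ae'_s, t^2_s \Rightarrow_{st} ae'$, $t^1_s \Rightarrow_r p^1_s$, $ae_s \Rightarrow_r e_s$, and $\SI{}{} p^1_s;\assume{z}{e_s} \Rightarrow_s t^2_s;\assume{z}{ae'_s}$. Applying the (strengthened) induction hypothesis to $t_1$ under the extended environment yields $p^2_s$ and $t^4_s$ with $t^3_s \Rightarrow_r p^2_s$, $\SI{[x\mapsto v]}{[x\mapsto id]} p^2_s \Rightarrow_s t^4_s$, and $\SAC t_1, t^4_s \Rightarrow_{st} t'_1$. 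Taking $p_s = p^1_s;\assume{x}{e_s};p^2_s$ and $t'_s = t^2_s;\assume{x}{ae'_s};t^4_s$ and assembling the three conclusions from the $\Rightarrow_r$, $\Rightarrow_s$, and $\Rightarrow_{st}$ rules for $\assume$ completes this case. The $\observed{ae\#id}{e_v};t_1$ case follows the same pattern, except that the environment does not grow when recursing on the tail and the $\observe$-rules of the three relations are used instead.

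The main technical obstacle, as in the soundness direction, is controlling the fresh variables that the extraction phase introduces via $\assume$-statements when it flattens composite expressions into a subtrace. In the $\assume$ case above, the prefix $t^2_s$ produced by re-executing the expression-level extract binds names whose scope must not collide with free variables of $ae'_s$ or of the continuation $t^4_s$; this is handled by appealing to Observation~\ref{ex:obs} that the extraction rule always picks fresh names, ensuring the composite $\Rightarrow_s$ derivation type-checks and produces exactly $t^2_s;\assume{x}{ae'_s};t^4_s$. A secondary subtlety is that the value $\cV(ae'_s)$ produced by re-executing $e_s$ coincides with $\cV(ae')$; this is needed so that the environment under which the tail $p^2_s$ and $p_1$ are executed matches, and it follows from Observation~\ref{ex:obs2} together with $\SAC ae, ae'_s, t^2_s \Rightarrow_{st} ae'$.
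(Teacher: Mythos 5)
Your proposal matches the paper's proof essentially step for step: the same structural induction on $t$ (with the two cases $\assume{x}{ae};t_1$ and $\observed{ae}{e_v};t_1$), the same inversion-based decomposition of all four premises, the same appeal to the preceding expression-level completeness lemma to handle $ae$, the same choices $p_s = p^1_s;\assume{x}{e_s};p^2_s$ and $t'_s = t^2_s;\assume{x}{ae'_s};t^4_s$, and the same reliance on Observations~\ref{ex:obs} and~\ref{ex:obs2} to discharge the freshness and value-coincidence obligations. The only cosmetic difference is that you describe generalizing to arbitrary environments as a strengthening step, whereas the lemma as stated is already quantified over arbitrary $\sigma_v, \sigma_{id}$ and the paper's Theorem~\ref{thm:complete} simply instantiates it with empty environments.
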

\begin{proof}
        \noindent{Proof by Induction}

        \noindent{\bf Base Case:}
        $t = \emptyset$
        
        \noindent{By assumption}
        \[
            \SAC t \Rightarrow_{ex} t_s
        \]
        By definition of $\Rightarrow_{ex}$
        \[
            \SAC \emptyset \Rightarrow_{ex} \emptyset
        \]
        Then $t_s = \emptyset$.

        \noindent{By assumption}
        \[
            \SAC t \equiv t'
        \]
        By definition of $\equiv$
        \[
            \SAC \emptyset \equiv \emptyset 
        \]
        Then $t' = \emptyset$.

        \noindent{By assumption}
        \[
            t \Rightarrow_r p
        \]
        By definition of $\Rightarrow_r$
        \[
            \emptyset \Rightarrow_r \emptyset
        \]
        Then $p = \emptyset$.

        \noindent{By assumption}
        \[
            \SI{}{} p \Rightarrow_s t'
        \]
        By definition of $\Rightarrow_s$
        \[
            \SI{}{} \emptyset \Rightarrow_s \emptyset
        \]

        \noindent{Consider} $t'_s = \emptyset$, $p_s = \emptyset$.
        By definition of $\Rightarrow_r$
        \[
            \emptyset \Rightarrow_r \emptyset
        \]
        Therefore
        \[
            t_s \Rightarrow_r p_s
        \]

        \noindent{By definition of} $\Rightarrow_s$
        \[
            \SI{}{} \emptyset \Rightarrow_s \emptyset
        \]
        Therefore 
        \[
            \SI{}{} p_s \Rightarrow_s t'_s
        \]

        \noindent{By definition of} $\Rightarrow_{st}$
        \[
            \SAC \emptyset, \emptyset \Rightarrow_{st} \emptyset
        \]
        Therefore
        \[
            \SAC t, t'_s \Rightarrow_{st} t'
        \]

        \noindent{Therefore} when $t = \emptyset$,
        \[
            \forall~t', t_s, p.~\SAC t \Rightarrow_{ex} t_s
            ~\wedge~\SAC t \equiv t'~\wedge t \Rightarrow_r p \wedge  \SI{}{} p \Rightarrow_s t'
        \]
        \[
            \implies \exists~t'_s, p_s.t_s \Rightarrow_r p_s~\wedge~\SI{}{} p_s \Rightarrow_s t'_s~\wedge~\SAC t, t'_s \Rightarrow_{st} t'
        \]

        \noindent{\bf Induction Case:}

        \noindent{Case 1:}
         $t = \assume{x}{ae};t_1$
        
        \noindent{By assumption}
        \[
            \SAC t \Rightarrow_{ex} t_s
        \]
        By definition of $\Rightarrow_{ex}$
        \[
            \SAC \assume{x}{ae};t_1 \Rightarrow_{ex} t^1_s;\assume{x}{ae_s};t^3_s
        \]
        Then $t_s = t^1_s;\assume{x}{ae_s};t^3_s$, $\SAC ae \Rightarrow_{ex} ae_s, t^1_s$
        and $\SAC t_1 \Rightarrow_{ex} t^3_s$.

        \noindent{By assumption}
        \[
            \SAC t \equiv t'
        \]
        By definition of $\equiv$
        \[
            \SAC \assume{x}{ae};t_1  \equiv \assume{x}{ae'};t'_1 
        \]
        Then $t' = \assume{x}{ae'};t'_1 $, $\SAC ae \equiv ae'$, and $\SAC t_1 \equiv t'_1$.

        \noindent{By assumption}
        \[
            t \Rightarrow_r p
        \]
        By definition of $\Rightarrow_r$
        \[
            \assume{x}{ae};t_1  \Rightarrow_r \assume{x}{e};p_1 
        \]
        Then $p = \assume{x}{e};p_1$, $ae \Rightarrow_r e $, and $t_1 \Rightarrow_r p_1$.

        \noindent{By assumption}
        \[
            \SI{}{} p \Rightarrow_s t'
        \]
        By definition of $\Rightarrow_s$
        \[
            \SI{}{} \assume{x}{e};p_1  \Rightarrow_s \assume{x}{ae'};t'_1 
        \]
        Then $\SI{}{} e \Rightarrow_s \_, \_, ae'$, $\SI{'}{'} p_1 \Rightarrow_s t'_1$,
        and $\sigma'_v = \sigma_v[x \rightarrow \cV(ae')]$, $\sigma'_{id} = \sigma_{id}[x \rightarrow ID(ae')]$.

        \noindent{By induction hypothesis}
        \[
            \SAC t_1 \Rightarrow_{ex} t^3_s
            ~\wedge~\SAC t_1 \equiv t'_1~\wedge t_1 \Rightarrow_r p_1 \wedge  \SI{'}{'} p_1 \Rightarrow_s t'_1
        \]
        \[
            \implies \exists~t^4_s, p^2_s.t^3_s \Rightarrow_r p^2_s~\wedge~\SI{'}{'} p^2_s \Rightarrow_s t^4_s~\wedge~
            \SAC t_1, t^4_s \Rightarrow_{st} t'_1
        \]
        Because $\SAC t_1 \Rightarrow_{ex} t^3_s$, $\SAC t_1 \equiv t'_1$, $t_1 \Rightarrow_r p_1$, and $\SI{'}{'} p_1 \Rightarrow_s t'_1$,
        \[
            t^3_s \Rightarrow_r p^2_s~\wedge~\SI{'}{'} p^2_s \Rightarrow_s t^4_s~\wedge~
            \SAC t_1, t^4_s \Rightarrow_{st} t'_1
        \]

        \noindent{By induction hypothesis}
        \[
            \forall t^1_s, ae_s, ae'~\SAC ae \Rightarrow_{ex} ae_s, t^1_s~\wedge~
            \SAC ae \equiv ae'~\exists e.~ae \Rightarrow_r e~\wedge~\SI{}{} e \Rightarrow_s \_, \_, ae'
        \]
        \[
          \implies  \exists ae'_s, t^2_s, p^1_s, e_s.~\SAC ae, ae'_s, t^2_s 
          \Rightarrow_{st} ae'~\wedge~t^1_s \Rightarrow_r p^1_s~\wedge~ae_s \Rightarrow_r e_s
        \]
        \[
            \wedge~\SI{}{} p^1_s;\assume{z}{e_s} \Rightarrow_s t^2_s;\assume{z}{ae'_s}
        \]
        Because $\SAC ae \Rightarrow_{ex} ae_s, t^1_s$, $\SAC ae \equiv ae'$, 
        $ae \Rightarrow_r e$, and $\SI{}{} e \Rightarrow_s \_, \_, ae'$,
        \[
          \SAC ae, ae'_s, t^2_s 
          \Rightarrow_{st} ae'~\wedge~t^1_s \Rightarrow_r p^1_s~\wedge~ae_s \Rightarrow_r e_s
        \]
        \[
            \wedge~\SI{}{} p^1_s;\assume{z}{e_s} \Rightarrow_s t^2_s;\assume{z}{ae'_s}
        \]

        \noindent{Because} $\SAC ae, ae'_s, t^2_s \Rightarrow_{st} ae'$ and $\SAC t_1, t^4_s \Rightarrow_{st} t'_1$, 
        the definition of $\Rightarrow_{st}$ implies
        \[
            \SAC \assume{x}{ae};t_1, t^2_s;\assume{x}{ae'_s};t^4_s \Rightarrow_{st} \assume{x}{ae'};t'_1
        \]
        Therefore
        \[
            \SAC t, t'_s \Rightarrow_{st} t'
        \]

        \noindent{Because} $t^3_s \Rightarrow_r p^2_s$, $t^1_s \Rightarrow_r p^1_s$, and $ae_s \Rightarrow_r e_s$, the 
        definition of $\Rightarrow_r$ implies
        \[
            t^1_s;\assume{x}{ae_s};t^3_s \Rightarrow_r p^1_s;\assume{x}{e_s};p^2_s
        \]
        Therefore
        \[
            t \Rightarrow_r p
        \]

        \noindent{Because} $\SI{}{} p^1_s;\assume{z}{e_s} \Rightarrow_s t^2_s;\assume{z}{ae'_s}$, 
        $\SI{'}{'} p^2_s \Rightarrow_s t^4_s$, all variable names introduced 
        by $t^2_s$ do not conflict with variable names in $t^4_s$ (Observation~\ref{ex:obs}), 
        $\cV(ae') = \cV(ae'_s)$, and $ID(ae') = ID(ae'_s)$ (Observation~\ref{ex:obs2}), the 
        definition of $\Rightarrow_s$ implies
        \[
            \SI{}{} p^1_s;\assume{x}{e_s};p^2_s \Rightarrow_s t^2_s;\assume{x}{ae'_s};t^4_s
        \]
        Therefore
        \[
            \SI{}{} p_s \Rightarrow_s t'_s
        \]

        \noindent{Therefore} when $t = \assume{x}{ae};t_1$, and assuming the induction hypothesis,
        \[
            \forall~t', t_s, p.~\SAC t \Rightarrow_{ex} t_s
            ~\wedge~\SAC t \equiv t'~\wedge t \Rightarrow_r p \wedge  \SI{}{} p \Rightarrow_s t'
        \]
        \[
            \implies \exists~t'_s, p_s.t_s \Rightarrow_r p_s~\wedge~\SI{}{} p_s \Rightarrow_s t'_s~\wedge~\SAC t, t'_s \Rightarrow_{st} t'
        \]

        \noindent{Case 2:}
         $t = \observed{ae}{e_v};t_1$
        
        \noindent{By assumption}
        \[
            \SAC t \Rightarrow_{ex} t_s
        \]
        By definition of $\Rightarrow_{ex}$
        \[
            \SAC \observed{ae}{e_v};t_1 \Rightarrow_{ex} t^1_s;\observed{ae_s}{e_v};t^3_s
        \]
        Then $t_s = t^1_s;\observed{ae_s}{e_v};t^3_s$, $\SAC ae \Rightarrow_{ex} ae_s, t^1_s$
        and $\SAC t_1 \Rightarrow_{ex} t^3_s$.

        \noindent{By assumption}
        \[
            \SAC t \equiv t'
        \]
        By definition of $\equiv$
        \[
            \SAC \observed{ae}{e_v};t_1  \equiv \observed{ae'}{e_v};t'_1 
        \]
        Then $t' = \observed{ae'}{e_v};t'_1 $, $\SAC ae \equiv ae'$, and $\SAC t_1 \equiv t'_1$.

        \noindent{By assumption}
        \[
            t \Rightarrow_r p
        \]
        By definition of $\Rightarrow_r$
        \[
            \observed{ae}{e_v};t_1  \Rightarrow_r \observed{e}{e_v};p_1 
        \]
        Then $p = \observed{e}{e_v};p_1$, $ae \Rightarrow_r e $, and $t_1 \Rightarrow_r p_1$.

        \noindent{By assumption}
        \[
            \SI{}{} p \Rightarrow_s t'
        \]
        By definition of $\Rightarrow_s$
        \[
            \SI{}{} \observed{e}{e_v};p_1  \Rightarrow_s \observed{ae'}{e_v};t'_1 
        \]
        Then $\SI{}{} e \Rightarrow_s \_, \_, ae'$ and $\SI{}{} p_1 \Rightarrow_s t'_1$.
        
        \noindent{By induction hypothesis}
        \[
            \SAC t_1 \Rightarrow_{ex} t^3_s
            ~\wedge~\SAC t_1 \equiv t'_1~\wedge t_1 \Rightarrow_r p_1 \wedge  \SI{}{} p_1 \Rightarrow_s t'_1
        \]
        \[
            \implies \exists~t^4_s, p^2_s.t^3_s \Rightarrow_r p^2_s~\wedge~\SI{}{} p^2_s \Rightarrow_s t^4_s~\wedge~
            \SAC t_1, t^4_s \Rightarrow_{st} t'_1
        \]
        Because $\SAC t_1 \Rightarrow_{ex} t^3_s$, $\SAC t_1 \equiv t'_1$, $t_1 \Rightarrow_r p_1$, and $\SI{}{} p_1 \Rightarrow_s t'_1$,
        \[
            t^3_s \Rightarrow_r p^2_s~\wedge~\SI{}{} p^2_s \Rightarrow_s t^4_s~\wedge~
            \SAC t_1, t^4_s \Rightarrow_{st} t'_1
        \]

        \noindent{By induction hypothesis}
        \[
            \forall t^1_s, ae_s, ae'~\SAC ae \Rightarrow_{ex} ae_s, t^1_s~\wedge~
            \SAC ae \equiv ae'~\exists e.~ae \Rightarrow_r e~\wedge~\SI{}{} e \Rightarrow_s \_, \_, ae'
        \]
        \[
          \implies  \exists ae'_s, t^2_s, p^1_s, e_s.~\SAC ae, ae'_s, t^2_s 
          \Rightarrow_{st} ae'~\wedge~t^1_s \Rightarrow_r p^1_s~\wedge~ae_s \Rightarrow_r e_s
        \]
        \[
            \wedge~\SI{}{} p^1_s;\assume{z}{e_s} \Rightarrow_s t^2_s;\assume{z}{ae'_s}
        \]
        Because $\SAC ae \Rightarrow_{ex} ae_s, t^1_s$, $\SAC ae \equiv ae'$, 
        $ae \Rightarrow_r e$, and $\SI{}{} e \Rightarrow_s \_, \_, ae'$,
        \[
          \SAC ae, ae'_s, t^2_s 
          \Rightarrow_{st} ae'~\wedge~t^1_s \Rightarrow_r p^1_s~\wedge~ae_s \Rightarrow_r e_s
        \]
        \[
            \wedge~\SI{}{} p^1_s;\assume{z}{e_s} \Rightarrow_s t^2_s;\assume{z}{ae'_s}
        \]

        \noindent{Because} $\SAC ae, ae'_s, t^2_s \Rightarrow_{st} ae'$ and $\SAC t_1, t^4_s \Rightarrow_{st} t'_1$, 
        the definition of $\Rightarrow_{st}$ implies
        \[
            \SAC \observed{ae}{e_v};t_1, t^2_s;\observed{ae_s}{e_v};t^4_s \Rightarrow_{st} \observed{ae'}{e_v};t'_1
        \]
        Therefore
        \[
            \SAC t, t'_s \Rightarrow_{st} t'
        \]

        \noindent{Because} $t^3_s \Rightarrow_r p^2_s$, $t^1_s \Rightarrow_r p^1_s$, and $ae_s \Rightarrow_r e_s$, the 
        definition of $\Rightarrow_r$ implies
        \[
            t^1_s;\observed{ae_s}{e_v};t^3_s \Rightarrow_r p^1_s;\observed{e_s}{e_v};p^2_s
        \]
        Therefore
        \[
            t \Rightarrow_r p
        \]

        \noindent{Because} $\SI{}{} p^1_s;\assume{z}{e_s} \Rightarrow_s t^2_s;\assume{z}{ae'_s}$, 
        $\SI{}{} p^2_s \Rightarrow_s t^4_s$, and all variable names introduced 
        by $t^2_s$ do not conflict with variable names in $t^4_s$ (Observation~\ref{ex:obs}), 
         the 
        definition of $\Rightarrow_s$ implies
        \[
            \SI{}{} p^1_s;\observed{e_s}{e_v};p^2_s \Rightarrow_s t^2_s;\observed{ae'_s}{e_v};t^4_s
        \]
        Therefore
        \[
            \SI{}{} p_s \Rightarrow_s t'_s
        \]

        \noindent{Therefore} when $t = \observed{ae}{e_v};t_1$, and assuming the induction hypothesis,
        \[
            \forall~t', t_s, p.~\SAC t \Rightarrow_{ex} t_s
            ~\wedge~\SAC t \equiv t'~\wedge t \Rightarrow_r p \wedge  \SI{}{} p \Rightarrow_s t'
        \]
        \[
            \implies \exists~t'_s, p_s.t_s \Rightarrow_r p_s~\wedge~\SI{}{} p_s \Rightarrow_s t'_s~\wedge~\SAC t, t'_s \Rightarrow_{st} t'
        \]

    \noindent{All cases} have been covered therefore by induction the following statement is true.
        \[
            \forall~t', t_s, p.~\SAC t \Rightarrow_{ex} t_s
            ~\wedge~\SAC t \equiv t'~\wedge t \Rightarrow_r p \wedge  \SI{}{} p \Rightarrow_s t'
        \]
        \[
            \implies \exists~t'_s, p_s.t_s \Rightarrow_r p_s~\wedge~\SI{}{} p_s \Rightarrow_s t'_s~\wedge~\SAC t, t'_s \Rightarrow_{st} t'
        \]
\end{proof}

\begin{theorem}
    Given a valid trace $t$ and a valid subproblem $\cS$ of $t$ and a subtrace 
    $t_s = \mathsf{ExtractTrace}(t, \cS)$. For all possible
    traces $t'$,
    $
        t' \in \mathsf{Traces}(\mathsf{Program}(t))~\wedge~
    \SAC t \equiv t'$ 
implies 
there exists a subtrace $t'_s$ such that
    \begin{itemize}
        \item $t'_s \in \mathsf{Traces}(\mathsf{Program}(t_s))$
        \item $t' = \mathsf{StitchTrace}(t, t'_s, \cS)$
     \end{itemize}
    \label{thm:complete}
\end{theorem}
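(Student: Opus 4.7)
The plan is to derive Theorem \ref{thm:complete} as an immediate specialization of Lemma \ref{lem:comptr}, which has already done essentially all of the structural work (by mutual induction with the auxiliary lemma on augmented expressions) for arbitrary variable/identifier environments $\sigma_v, \sigma_{id}$. Since Lemma \ref{lem:comptr} is stated for partial traces under any environment with $\dom \sigma_v = \dom \sigma_{id}$, specializing both environments to $\emptyset$ gives the statement for complete traces executed from the empty environment, which is precisely the notion of $\mathsf{Traces}(p)$ used in the theorem.

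Concretely, first I would unfold the hypotheses: from $t \in \mathsf{Traces}(\mathsf{Program}(t))$ and the fact $p = \mathsf{Program}(t)$ (so that $t \Rightarrow_r p$ by definition), I obtain $\emptyset, \emptyset \vdash p \Rightarrow_s t$. Similarly, $t' \in \mathsf{Traces}(\mathsf{Program}(t))$ together with $\SAC t \equiv t'$ gives $\emptyset, \emptyset \vdash p \Rightarrow_s t'$ (using that $\equiv$ only permits differences on parts inside $\cS$, so $t$ and $t'$ roll back to the same program $p$). Finally, by assumption $t_s = \mathsf{ExtractTrace}(t, \cS)$, i.e., $\SAC t \Rightarrow_{ex} t_s$.

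Next I would apply Lemma \ref{lem:comptr} with $\sigma_v = \sigma_{id} = \emptyset$, using $t, t', t_s, p$ as above, to obtain a subtrace $t'_s$ and a subprogram $p_s$ such that
\[
  t_s \Rightarrow_r p_s, \qquad \emptyset, \emptyset \vdash p_s \Rightarrow_s t'_s, \qquad \SAC t, t'_s \Rightarrow_{st} t'.
\]
By definition of $\mathsf{Program}$ and $\mathsf{Traces}$, $t_s \Rightarrow_r p_s$ gives $p_s = \mathsf{Program}(t_s)$, and $\emptyset, \emptyset \vdash p_s \Rightarrow_s t'_s$ gives $t'_s \in \mathsf{Traces}(p_s) = \mathsf{Traces}(\mathsf{Program}(t_s))$. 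The relation $\SAC t, t'_s \Rightarrow_{st} t'$ is exactly $t' = \mathsf{StitchTrace}(t, t'_s, \cS)$. These are the three conclusions required by the theorem.

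The only mild obstacle is justifying that the hypothesis $t' \in \mathsf{Traces}(\mathsf{Program}(t))$ really supplies an execution derivation $\emptyset, \emptyset \vdash p \Rightarrow_s t'$ with the same program $p$ that $t$ came from; this follows because $\SAC t \equiv t'$ (as defined in Figure \ref{fig:equivalence}) preserves the underlying rolled-back program (verifiable by a straightforward structural induction showing $t \Rightarrow_r p \wedge \SAC t \equiv t' \implies t' \Rightarrow_r p$), so $\mathsf{Program}(t) = \mathsf{Program}(t')$ and the assumption delivers the needed derivation. All other work is already encapsulated in Lemma \ref{lem:comptr}, so the theorem's proof is essentially a one-line invocation.
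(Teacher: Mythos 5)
Your proposal is correct and follows essentially the same route as the paper's own proof: the paper likewise discharges Theorem~\ref{thm:complete} by instantiating Lemma~\ref{lem:comptr} with all environments set to $\emptyset$, so that the partial-trace statement specializes to the definitions of $\mathsf{Traces}$, $\mathsf{Program}$, $\mathsf{ExtractTrace}$, and $\mathsf{StitchTrace}$. Your extra remark that $\SAC t \equiv t'$ preserves the rolled-back program (so the hypothesis really supplies a derivation $\emptyset, \emptyset \vdash p \Rightarrow_s t'$ for the same $p$) is a small detail the paper leaves implicit, but it does not change the argument.
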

\begin{proof}
    From definitions of $\mathsf{Traces}$, $\mathsf{Program}$, $\mathsf{ExtractTrace}$, and $\mathsf{StitchTrace}$,
    given a trace $t$ and a valid subproblem $\cS$
    \[
       \exists~p. \SAC t \Rightarrow_{ex} t_s~\wedge~\SAC t \equiv t'~\wedge~\emptyset, \emptyset \vdash p \Rightarrow_s t~\wedge~\emptyset, \emptyset
       \vdash p \Rightarrow_s t'
    \]
    \[
        \implies \exists~t'_s. \SAC t, t'_s \Rightarrow_{st} t'~\wedge~t_s \Rightarrow_r p_s~\wedge~\emptyset, \emptyset \vdash p_s \Rightarrow_s t_s
    \]
    Because Lemma~\ref{lem:comptr} is true for all environments $\sigma_v, \sigma_{id}, \sigma'_v$, and $\sigma'_{id}$. 
    The theorem is equivalent to the lemma, but with $\sigma_v, \sigma_{id}, \sigma'_v$, and $\sigma'_{id}$ set to $\emptyset$.
\end{proof}

\newpage
\Comment{
\chapter{Metaprogramming Theorems}
}

\subsection{Metaprogramming}
\label{append:meta}

\begin{theorem}
    \label{thm:appendequi}
    A reversible subproblem selection strategy $\mathsf{SS}$ divides the trace space of program $p$
    into equivalence classes.
\end{theorem}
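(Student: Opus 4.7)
The plan is to show that the relation $\sim$ on $T_p$ defined by $t \sim t' \iff \mathsf{SS}(t) \vdash t \equiv t'$ is an equivalence relation, from which the partition of $T_p$ into equivalence classes follows immediately by standard set theory. I will verify reflexivity, symmetry, and transitivity in turn, using the reversibility hypothesis on $\mathsf{SS}$ with $n=1,2,3$ respectively.

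First, reflexivity ($t \sim t$). I would establish this by structural induction on traces and augmented expressions using the rules in Figure~\ref{fig:equivalence}. Every syntactic rule for $\equiv$ is preserved under the identity mapping (values and identifiers on the two sides are allowed to differ, so in particular they may coincide), so $\mathsf{SS}(t) \vdash t \equiv t$ holds for any trace $t$ and any subproblem $\mathsf{SS}(t)$. This is the only step that touches the inductive structure of $\equiv$; the remaining steps treat $\equiv$ as a black box.

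Next, symmetry: suppose $\mathsf{SS}(t_1) \vdash t_1 \equiv t_2$. Instantiate the reversibility definition with $n = 2$ and the two-trace sequence $t_1, t_2$. The hypothesis on consecutive pairs holds vacuously except for $i = 1$, which is precisely our assumption, so the definition yields $\mathsf{SS}(t_2) \vdash t_2 \equiv t_1$. For transitivity, suppose $\mathsf{SS}(t_1) \vdash t_1 \equiv t_2$ and $\mathsf{SS}(t_2) \vdash t_2 \equiv t_3$. Applying reversibility with $n = 3$ on the sequence $t_1, t_2, t_3$ gives $\mathsf{SS}(t_3) \vdash t_3 \equiv t_1$, and then one application of the symmetry step just proved yields $\mathsf{SS}(t_1) \vdash t_1 \equiv t_3$.

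Having verified the three properties, $\sim$ is an equivalence relation and therefore induces a partition of $T_p$ into equivalence classes $\{T_{c_1}, T_{c_2}, \ldots\}$, which is the claim of the theorem. The only nontrivial step is the structural induction for reflexivity; the rest is a direct rewriting of the reversibility definition at $n = 2$ and $n = 3$, so I do not expect any real obstacle.
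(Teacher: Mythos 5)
Your proposal is correct and follows essentially the same route as the paper's own proof: reflexivity from the structure of the $\equiv$ rules, symmetry from reversibility at $n=2$, and transitivity from reversibility at $n=3$ combined with symmetry. The only difference is that you spell out the instantiations of the reversibility definition explicitly, which the paper leaves implicit.
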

\begin{proof}
    $\mathsf{SS} \vdash t \equiv t'$  is an equivalence 
    relation over traces $t,t'\in T$.
         
         Reflexivity : $\mathsf{SS} \vdash t \equiv t$ is true by definition.
         
         Symmetry : $\mathsf{SS} \vdash t \equiv t \wedge \mathsf{SS} \vdash t \equiv t' \implies \mathsf{SS} \vdash t' \equiv t$.
         Hence its symmetric.
         
        Transitivity : $\mathsf{SS} \vdash t_1 \equiv t_2$, $\mathsf{SS} \vdash t_2 \equiv t_3$
        then $\mathsf{SS} \vdash t_1 \equiv t_3$ (by definition of reversibility and symmetry).
\end{proof}

    \begin{lemma}
    \noindent{For any} augmented expression $ae$ and subproblem $\cS$,
        \[
            \SAC ae \Rightarrow_{ex} ae_s, t_s \implies \sembr{ae} = \sembr{ae_s}*\sembr{t_s}
        \]
        \label{lem:denexp}
    \end{lemma}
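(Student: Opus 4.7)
The plan is to proceed by structural induction on the derivation $\SAC ae \Rightarrow_{ex} ae_s, t_s$, following the rules in Figure~\ref{fig:extractrace}. In each case, I will unfold the definition of $\sembr{\cdot}$ on both sides and show that the factors on the right rearrange to equal the product on the left.

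For the three base cases ($ae = \ndi{x}{x}{id}$, $ae = \ndi{x(id')}{v}{id}$, and $ae = \ndi{\lambda.x~e}{v}{id}$), the extraction yields $ae_s = ae$ and $t_s = \emptyset$, so both sides equal $1$ by the first three clauses of Figure~\ref{fig:probmeas}. The inductive cases for lambda applications $\ndi{(ae_1~ae_2)aa}{v}{id}$ (both when $ID(ae_1) \in \cS$ and when $ID(ae_1) \notin \cS$ with $aa = \perp$) are essentially straightforward: the induction hypothesis gives $\sembr{ae_i} = \sembr{ae'_i} * \sembr{t^i_s}$ for $i \in \{1,2\}$, the original density is $\sembr{ae_1}*\sembr{ae_2}$, the extracted density is $\sembr{ae'_1}*\sembr{ae'_2}*\sembr{t^1_s;t^2_s} = \sembr{ae'_1}*\sembr{ae'_2}*\sembr{t^1_s}*\sembr{t^2_s}$, and a simple rearrangement of products (using commutativity of multiplication in $\mathbb{R}$) yields equality.

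The two more delicate cases are the lambda application $\ndi{(ae_1~ae_2)y=ae_3}{v}{id}$ with $ID(ae_1) \notin \cS$ and the stochastic choice $\ndi{\mathsf{Dist}(ae\#id') = ae_v}{v}{id}$ with $id' \notin \cS$, because in each the extracted subtrace absorbs one or more factors through a newly introduced $\mathsf{assume}$ or $\mathsf{observe}$ statement. For the lambda case, the extracted subtrace is $t^1_s;\assume{x}{ae'_1};t^2_s;\assume{y}{ae'_2};t^3_s$, whose density by Figure~\ref{fig:probmeas} equals $\sembr{t^1_s}*\sembr{ae'_1}*\sembr{t^2_s}*\sembr{ae'_2}*\sembr{t^3_s}$, and multiplying by $\sembr{ae'_3}$ and applying the induction hypothesis to $ae_1, ae_2, ae_3$ recovers $\sembr{ae_1}*\sembr{ae_2}*\sembr{ae_3}$, which matches $\sembr{ae}$. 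For the $\mathsf{observe}$ case, the extracted subtrace $t^1_s;\observed{ae'\#id'}{e_v};t^2_s$ contributes a factor $\mathsf{pdf}_{\mathsf{Dist}}(\cV(ae'), e_v) * \sembr{ae'}$, and the key point I will need is Observation~\ref{ex:obs2}: $\cV(ae') = \cV(ae)$, so the pdf factor matches the one in $\sembr{ae}$.

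The main obstacle I anticipate is bookkeeping: keeping straight which factors come from the $\mathsf{assume}$/$\mathsf{observe}$ clauses of $\sembr{\cdot}$ over traces versus which come from the augmented-expression clauses, and making sure Observation~\ref{ex:obs2} (equality of $\cV$ across extraction) is invoked wherever a pdf factor appears. The algebra itself is trivial because densities are non-negative real numbers and multiplication is commutative and associative; the verification that each rule of $\Rightarrow_{ex}$ produces exactly the right set of factors is the content of the proof.
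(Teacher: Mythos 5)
Your plan is correct and matches the paper's own proof essentially case for case: structural induction over the $\Rightarrow_{ex}$ derivation, unfolding $\sembr{\cdot}$ on both sides, with the newly introduced $\mathsf{assume}$/$\mathsf{observe}$ statements absorbing the displaced factors and Observation~\ref{ex:obs2} supplying $\cV(ae') = \cV(ae)$ wherever a $\mathsf{pdf}$ factor must be matched. No gaps; carrying out the bookkeeping as described yields the paper's argument.
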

\begin{proof}   
     \noindent{Proof by Induction}

    \noindent{\bf Base Case:}
    
    \noindent{Case 1:} $ae = x:x$, 
        
    \noindent{By assumption}
        \[
            \SAC ae \Rightarrow_{ex} ae_s, t_s
        \]
    By definition of $\Rightarrow_{ex}$
    \[
        \SAC x:x \Rightarrow_{ex} x:x, \emptyset
    \]
    Then $ae_s = x:x$ and $t_s = \emptyset$.

    \noindent{By definition} of $\mathsf{pdf}$, 
    $\sembr{x:x} = 1$ and $\sembr{\emptyset} = 1$.
    Therefore
    \[
        \sembr{ae} = \sembr{ae_s}*\sembr{t_s}
    \]

    \noindent{Case 2:} $ae = x:v$,
        
        \noindent{By assumption}
        \[
            \SAC ae \Rightarrow_{ex} ae_s, t_s
        \]
        By definition of $\Rightarrow_{ex}$
        \[
            \SAC x:v \Rightarrow_{ex} x:v, \emptyset
        \]
        Then $ae_s = x:v$ and $t_s = \emptyset$.

    \noindent{By definition} of $\mathsf{pdf}$, 
    $\sembr{x:v} = 1$ and $\sembr{\emptyset} = 1$.
    Therefore
    \[
        \sembr{ae} = \sembr{ae_s}*\sembr{t_s}
    \]
    
    \noindent{Case 3:} $ae = \lambda.x~e:v$,
        
        \noindent{By assumption}
        \[
            \SAC ae \Rightarrow_{ex} ae_s, t_s
        \]
        By definition of $\Rightarrow_{ex}$
        \[
            \SAC \lambda.x~e:v \Rightarrow_{ex} \lambda.x~e:v, \emptyset
        \]
        Then $\sembr{\lambda.x~e:v} = v, 1$ and $\sembr{\emptyset} = 1$. 
        Therefore
        \[
            \sembr{ae} = \sembr{ae_s}*\sembr{t_s}
        \]

    \noindent{\bf Induction Case:}

    \noindent{Case 1:} $ae = (ae_1~ae_2)\perp:v$,

    \noindent{By assumption}
    \[
        \SAC ae \Rightarrow_{ex} ae_s, t_s
    \]
    By definition of $\Rightarrow_{ex}$
    \[
        \SAC (ae_1~ae_2)\perp:v \Rightarrow_{ex} (ae^1_s~ae^2_s)\perp:v, t^1_s;t^2_s
    \]
    Then $ae_s = (ae^1_s~ae^2_s)\perp:v$, $t_s = t^1_s;t^2_s$, $\SAC ae_1 \Rightarrow_{ex} ae^1_s, t^1_s$,
    and $\SAC ae_2 \Rightarrow_{ex} ae^2_s, t^2_s$.

    \noindent{By induction hypothesis}
    \[
        \SAC ae_1 \Rightarrow_{ex} ae^1_s, t^1_s \implies \sembr{ae_1} = \sembr{ae^1_s}*\sembr{t^1_s}
    \]
    \noindent{Because} $\SAC ae_1 \Rightarrow_{ex} ae^1_s, t^1_s$,
    \[
       \sembr{ae_1} = \sembr{ae^1_s}*\sembr{t^1_s}
    \]

    \noindent{By induction hypothesis}
    \[
        \SAC ae_2 \Rightarrow_{ex} ae^2_s, t^2_s \implies \sembr{ae_2} = \sembr{ae^2_s}*\sembr{t^2_s}
    \]
    \noindent{Because} $\SAC ae_2 \Rightarrow_{ex} ae^2_s, t^2_s$
    \[
        \sembr{ae_2} = \sembr{ae^2_s}*\sembr{t^2_s}
    \]

    \noindent{From} definition of $\mathsf{pdf}$
    \[
        \sembr{ae} = \sembr{ae_1}*\sembr{ae_2}
    \]
    \[
        = \sembr{ae^1_s}*\sembr{ae^2_s}*\sembr{t^1_s}*\sembr{t^2_s}
    \]
    From definition of $\mathsf{pdf}$
    $
        \sembr{ae_s} = \sembr{ae^1_s}*\sembr{ae^2_s}
    $
    and
    $
        \sembr{t_s} = \sembr{t^1_s}*\sembr{t^2_s}
    $. Therefore
    \[
        \sembr{ae}  = \sembr{ae_s}*\sembr{t_s}
    \]

    \noindent{Case 2:} $ae = \ndi{(ae_1~ae_2)x=ae_3}{v}{id}$ and $ID(ae_1) \in \cS$

    \noindent{By assumption}
    \[
        \SAC ae \Rightarrow_{ex} ae_s, t_s
    \]
    By definition of $\Rightarrow_{ex}$
    \[
        \SAC (ae_1~ae_2)x=ae_3:v \Rightarrow_{ex} (ae^1_s~ae^2_s)x=ae_3:v, t^1_s;t^2_s
    \]
    Then $ae_s = (ae^1_s~ae^2_s)x=ae_3:v$, $t_s = t^1_s;t^2_s$, $\SAC ae_1 \Rightarrow_{ex} ae^1_s, t^1_s$,
    and $\SAC ae_2 \Rightarrow_{ex} ae^2_s, t^2_s$.

    \noindent{By induction hypothesis}
    \[
        \SAC ae_1 \Rightarrow_{ex} ae^1_s, t^1_s \implies \sembr{ae_1} = \sembr{ae^1_s}*\sembr{t^1_s}
    \]
    \noindent{Because} $\SAC ae_1 \Rightarrow_{ex} ae^1_s, t^1_s$,
    \[
       \sembr{ae_1} = \sembr{ae^1_s}*\sembr{t^1_s}
    \]

    \noindent{By induction hypothesis}
    \[
        \SAC ae_2 \Rightarrow_{ex} ae^2_s, t^2_s \implies \sembr{ae_2} = \sembr{ae^2_s}*\sembr{t^2_s}
    \]
    \noindent{Because} $\SAC ae_2 \Rightarrow_{ex} ae^2_s, t^2_s$
    \[
        \sembr{ae_2} = \sembr{ae^2_s}*\sembr{t^2_s}
    \]

    \noindent{From} definition of $\mathsf{pdf}$
    \[
        \sembr{ae} = \sembr{ae_1}*\sembr{ae_2}*\sembr{ae_3}
    \]
    \[
        = \sembr{ae^1_s}*\sembr{ae^2_s}*\sembr{ae_3}*\sembr{t^1_s}*\sembr{t^2_s}
    \]
    From definition of $\mathsf{pdf}$
    $
        \sembr{ae_s} = \sembr{ae^1_s}*\sembr{ae^2_s}*\sembr{ae_3}
    $
    and
    $
        \sembr{t_s} = \sembr{t^1_s}*\sembr{t^2_s}
    $. Therefore
    \[
        \sembr{ae}  = \sembr{ae_s}*\sembr{t_s}
    \]

    \noindent{Case 3:} $ae = \ndi{(ae_1~ae_2)x=ae_3}{v}{id}$ and $ID(ae_1)\notin \cS$

    \noindent{By assumption}
    \[
        \SAC ae \Rightarrow_{ex} ae_s, t_s
    \]
    By definition of $\Rightarrow_{ex}$
    \[
        \SAC (ae_1~ae_2)x=ae_3:v \Rightarrow_{ex} ae^3_s:v, t^1_s;\assume{y}{ae^1_s};t^2_s;\assume{x}{ae^2_s};t^3_s
    \]
    Then $ae_s = ae^3_s:v$, $t_s = t^1_s;\assume{y}{ae^1_s};t^2_s;\assume{x}{ae^2_s};t^3_s
$, $\SAC ae_1 \Rightarrow_{ex} ae^1_s, t^1_s$,
   $\SAC ae_2 \Rightarrow_{ex} ae^2_s, t^2_s$, $\SAC ae_3 \Rightarrow_{ex} ae^3_s, t^3_s$.

    \noindent{By induction hypothesis}
    \[
        \SAC ae_1 \Rightarrow_{ex} ae^1_s, t^1_s \implies \sembr{ae_1} = \sembr{ae^1_s}*\sembr{t^1_s}
    \]
    \noindent{Because} $\SAC ae_1 \Rightarrow_{ex} ae^1_s, t^1_s$,
    \[
       \sembr{ae_1} = \sembr{ae^1_s}*\sembr{t^1_s}
    \]

    \noindent{By induction hypothesis}
    \[
        \SAC ae_2 \Rightarrow_{ex} ae^2_s, t^2_s \implies \sembr{ae_2} = \sembr{ae^2_s}*\sembr{t^2_s}
    \]
    \noindent{Because} $\SAC ae_2 \Rightarrow_{ex} ae^2_s, t^2_s$
    \[
        \sembr{ae_2} = \sembr{ae^2_s}*\sembr{t^2_s}
    \]

    \noindent{By induction hypothesis}
    \[
        \SAC ae_3 \Rightarrow_{ex} ae^3_s, t^3_s \implies \sembr{ae_3} = \sembr{ae^3_s}*\sembr{t^3_s}
    \]
    \noindent{Because} $\SAC ae_3 \Rightarrow_{ex} ae^3_s, t^3_s$
    \[
        \sembr{ae_3} = \sembr{ae^3_s}*\sembr{t^3_s}
    \]

    \noindent{From} definition of $\mathsf{pdf}$
    \[
        \sembr{ae} = \sembr{ae_1}*\sembr{ae_2}*\sembr{ae_3}
    \]
    \[
        = \sembr{ae^1_s}*\sembr{ae^2_s}*\sembr{ae^3_s}*\sembr{t^1_s}*\sembr{t^2_s}*\sembr{t^3_s}
    \]
    From definition of $\mathsf{pdf}$
    $
        \sembr{ae_s} = \sembr{ae^3_s}
    $
    and
    $
        \sembr{t_s} = \sembr{t^1_s}*\sembr{ae^1_s}*\sembr{t^2_s}*\sembr{ae^2_s}*\sembr{t^3_s}
    $. Therefore
    \[
        \sembr{ae}  = \sembr{ae_s}*\sembr{t_s}
    \]

     \noindent{Case 4:} $ae = \mathsf{Dist}(ae_1\#id_e) = ae_2:v$ and $id_e \in \cS$

    \noindent{By assumption}
    \[
        \SAC ae \Rightarrow_{ex} ae_s, t_s
    \]
    By definition of $\Rightarrow_{ex}$
    \[
        \SAC \mathsf{Dist}(ae_1) = ae_2:v \Rightarrow_{ex} \mathsf{Dist}(ae^1_s) = ae_2:v, t^1_s
    \]
    Then $ae_s = \mathsf{Dist}(ae_1) = ae_2:v$, $t_s = t^1_s$, and $\SAC ae_1 \Rightarrow_{ex} ae^1_s, t^1_s$.

    \noindent{By induction hypothesis}
    \[
        \SAC ae_1 \Rightarrow_{ex} ae^1_s, t^1_s \implies \sembr{ae_1} = \sembr{ae^1_s}*\sembr{t^1_s}
    \]
    \noindent{Because} $\SAC ae_1 \Rightarrow_{ex} ae^1_s, t^1_s$,
    \[
       \sembr{ae_1} = \sembr{ae^1_s}*\sembr{t^1_s}
    \]

    \noindent{From} definition of $\mathsf{pdf}$, $ae_2 \Rightarrow_r e$
    \[
        \sembr{ae} = \sembr{ae_1}*\sembr{ae_2}*\mathsf{pdf}_{\mathsf{Dist}}(\cV(ae_1), e)
    \]
    \[
        = \sembr{ae^1_s}*\sembr{ae_2}*\sembr{t^1_s}*\mathsf{pdf}_{\mathsf{Dist}}(\cV(ae_1), e)
    \]
    From definition of $\mathsf{pdf}$
    $
        \sembr{ae_s} = \sembr{ae^1_s}*\sembr{ae_2}*\mathsf{pdf}_{\mathsf{Dist}}(\cV(ae_1), e)
    $
    and
    $
        \sembr{t_s} = \sembr{t^1_s}
    $. Therefore
    \[
        \sembr{ae}  = \sembr{ae_s}*\sembr{t_s}
    \]

    \noindent{Case 5:} $ae = \ndi{\mathsf{Dist}(ae_1\#id_e) = ae_2}{v}{id}$ and $id_e \notin \cS$
     
     \noindent{By assumption}
    \[
        \SAC ae \Rightarrow_{ex} ae_s, t_s
    \]
    By definition of $\Rightarrow_{ex}$
    \[
        \SAC \mathsf{Dist}(ae_1) = ae_2:v \Rightarrow_{ex} ae^2_s:v, t^1_s;\observed{ae^1_s}{e_v};t^2_s
    \]
    Then $ae_s = ae^2_s:v$, $t_s = t^1_s$, $\SAC ae_1 \Rightarrow_{ex} ae^1_s, t^1_s$,  $ae_2 \Rightarrow_r e_v$and
    $\SAC ae_2 \Rightarrow_{ex} ae^1_s, t^1_s$.

    \noindent{By induction hypothesis}
    \[
        \SAC ae_1 \Rightarrow_{ex} ae^1_s, t^1_s \implies \sembr{ae_1} = \sembr{ae^1_s}*\sembr{t^1_s}
    \]
    \noindent{Because} $\SAC ae_1 \Rightarrow_{ex} ae^1_s, t^1_s$,
    \[
       \sembr{ae_1} = \sembr{ae^1_s}*\sembr{t^1_s}
    \]
    
    \noindent{By induction hypothesis}
    \[
        \SAC ae_2 \Rightarrow_{ex} ae^2_s, t^2_s \implies \sembr{ae_2} = \sembr{ae^2_s}*\sembr{t^2_s}
    \]
    \noindent{Because} $\SAC ae_2 \Rightarrow_{ex} ae^2_s, t^2_s$,
    \[
       \sembr{ae_2} = \sembr{ae^2_s}*\sembr{t^2_s}
    \]

    \noindent{From} definition of $\mathsf{pdf}$,
    \[
        \sembr{ae} = \sembr{ae_1}*\sembr{ae_2}*\mathsf{pdf}_{\mathsf{Dist}}(\cV(ae_1), e)
    \]
    \[
        = \sembr{ae^1_s}*\sembr{ae^2_s}*\sembr{t^1_s}*\sembr{t^2_s}*\mathsf{pdf}_{\mathsf{Dist}}(\cV(ae_1), e)
    \]
    From definition of $\mathsf{pdf}$
    $
        \sembr{ae_s} = \sembr{ae^2_s}
    $,
    $
        \sembr{t_s} = \sembr{t^1_s}*\sembr{ae^1_s}*\sembr{t^2_s}*\mathsf{pdf}_{\mathsf{Dist}}(\cV(ae^1_s), e)
    $ and $\cV(ae^1_s) = \cV(ae_1)$ (Observation~\ref{ex:obs2}). Therefore
    \[
        \sembr{ae}  = \sembr{ae_s}*\sembr{t_s}
    \]
       
    \noindent{Because} we have considered all cases, by induction, for augmented expression $ae$, subproblem $\cS$, augmented subexpression $ae_s$, and a subtrace $t_s$,
    \[
        \SAC ae \Rightarrow_{ex} ae_s, t_s \implies \sembr{ae} = \sembr{ae_s}*\sembr{t_s}
    \]

\end{proof}

\begin{theorem}
    \label{appendthm:appdensity}
    Given a trace $t$ and a valid subproblem $\cS$ on $t$, 
    then for subtrace $t_s = \mathsf{ExtractTrace}(t, \cS)$, 
    \[
        \sembr{t} = \sembr{t_s}
    \]
    i.e. for the unnormalized density of $t$ and $t_s$ is equal.
\end{theorem}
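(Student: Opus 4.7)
The plan is to prove this theorem by structural induction on the trace $t$, reducing each inductive step to Lemma~\ref{lem:denexp} (which has already handled the real work at the level of augmented expressions). The base case is $t = \emptyset$: by definition of $\Rightarrow_{ex}$ we get $t_s = \emptyset$, and $\sembr{\emptyset} = 1 = \sembr{\emptyset}$.

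For the inductive step there are two cases. When $t = \assume{x}{ae}; t_1$, the extraction rule yields $t_s = t^1_s; \assume{x}{ae_s}; t^3_s$ with $\SAC ae \Rightarrow_{ex} ae_s, t^1_s$ and $\SAC t_1 \Rightarrow_{ex} t^3_s$. The density rules give $\sembr{t} = \sembr{ae} \cdot \sembr{t_1}$ and $\sembr{t_s} = \sembr{t^1_s} \cdot \sembr{ae_s} \cdot \sembr{t^3_s}$. Invoking Lemma~\ref{lem:denexp} on $ae$ yields $\sembr{ae} = \sembr{ae_s} \cdot \sembr{t^1_s}$, and the induction hypothesis on $t_1$ gives $\sembr{t_1} = \sembr{t^3_s}$; multiplying these together matches the two expressions. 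The $\observed{ae\#id}{e_v}; t_1$ case proceeds identically, except that both $\sembr{t}$ and $\sembr{t_s}$ pick up an extra factor of $\mathsf{pdf}_{\mathsf{Dist}}(\cV(ae), e_v)$, which agree because $\cV(ae) = \cV(ae_s)$ by Observation~\ref{ex:obs2}.

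The only nontrivial accounting occurs inside Lemma~\ref{lem:denexp} itself, specifically in the two cases that drive the entire correctness story: the $(ae_1~ae_2)y{=}ae_3$ application when $\mathsf{ID}(ae_1) \notin \cS$, which extracts three hoisted $\mathsf{assume}$ blocks into $t_s$, and the $\mathsf{Dist}(ae_1\#id_e) = ae_2$ case when $id_e \notin \cS$, which converts the stochastic choice into an $\mathsf{observe}$ statement. In each case the density factorization $\sembr{ae} = \sembr{ae_s} \cdot \sembr{t_s}$ holds precisely because the hoisted fragments appear in $t_s$ with exactly the factors they contributed in $ae$, and because the $\mathsf{pdf}_{\mathsf{Dist}}$ factor that $ae$ contributed is recovered verbatim from the corresponding $\mathsf{observe}$ in $t_s$ (using $\cV(ae^1_s) = \cV(ae_1)$). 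I expect this to be the main conceptual obstacle conceptually, but it is entirely encapsulated in the lemma; the theorem itself is then just a mechanical recursion over the sequence of $\mathsf{assume}$ and $\mathsf{observe}$ statements that make up $t$.
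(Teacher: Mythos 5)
Your proposal is correct and follows essentially the same route as the paper: structural induction over the sequence of $\mathsf{assume}$/$\mathsf{observe}$ statements, with each case discharged by the auxiliary lemma $\SAC ae \Rightarrow_{ex} ae_s, t_s \implies \sembr{ae} = \sembr{ae_s}*\sembr{t_s}$ over augmented expressions and the agreement $\cV(ae) = \cV(ae_s)$ for the $\mathsf{observe}$ case. Your identification of the lambda-application and $\mathsf{Dist}$-to-$\mathsf{observe}$ cases as the only places where the bookkeeping is nontrivial also matches where the paper's lemma does its real work.
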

\begin{proof}
\noindent{Proof by induction}
 
    \noindent{\bf Base Case:}
    $t = \emptyset$
   
   \noindent{By assumption}
   \[
        \SAC t \Rightarrow_{ex} t_s
    \]
    By definition of $\Rightarrow_{ex}$
    \[
        \SAC \emptyset \Rightarrow_{ex} \emptyset
    \]
    Then 
    $t_s = \emptyset$.
  
    By definition of $\mathsf{pdf}$, $\sembr{\emptyset} = 1$
    \[
        \sembr{t} = \sembr{t_s}
    \]

   \noindent{\bf Induction Case:}

   \noindent{Case 1:}
   $t = \assume{x}{ae};t_1$
 
    \noindent{By assumption}
    \[
        \SAC t \Rightarrow_{ex} t_s
    \]
    By definition of $\Rightarrow_{ex}$
    \[
        \SAC  \assume{x}{ae};t_1 \Rightarrow_{ex} t^1_s;\assume{x}{ae_s};t^2_s
    \]
    Then $\SAC ae \Rightarrow_{ex} ae_s, t^1_s$, $\SAC t_1 \Rightarrow_{ex} t^2_s$.

    \noindent{By induction hypothesis}
    \[
        \SAC t_1 \Rightarrow_{ex} t^2_s \implies \sembr{t_1} = \sembr{t^2_s}
    \]
    Because $\SAC t_1 \Rightarrow_{ex} t^2_s$,
    \[
        \sembr{t_1} = \sembr{t^2_s}
    \]

        \noindent{From} Lemma~\ref{lem:denexp} over augmented expressions
    \[
        \SAC ae \Rightarrow_{ex} ae_s, t^1_s \implies \sembr{ae} = \sembr{ae_s}*\sembr{t^1_s}
    \]
    Because $\SAC ae \Rightarrow_{ex} ae_s, t^1_s$
    \[
        \sembr{ae} = \sembr{ae_s}*\sembr{t^1_s}
    \]

    \noindent{From} definition of $\mathsf{pdf}$
    \[
        \sembr{t} = \sembr{ae}*\sembr{t_1} = \sembr{ae_s}*\sembr{t^1_s}*\sembr{t^2_s}
    \]
    Because $\sembr{t_s} = \sembr{ae_s}*\sembr{t^1_s}*\sembr{t^2_s}$
    \[
        \sembr{t} = \sembr{t_s}
    \]

    \noindent{Case 2:}
   $t = \observed{ae}{e};t_1$

    \noindent{By assumption}
    \[
        \SAC t \Rightarrow_{ex} t_s
    \]
    By definition of $\Rightarrow_{ex}$
    \[
        \SAC  \observed{ae}{e};t_1 \Rightarrow_{ex} t^1_s;\observed{ae_s}{e};t^2_s
    \]
    Then $\SAC ae \Rightarrow_{ex} ae_s, t^1_s$, $\SAC t_1 \Rightarrow_{ex} t^2_s$.

    \noindent{By induction hypothesis}
    \[
        \SAC t_1 \Rightarrow_{ex} t^2_s \implies \sembr{t_1} = \sembr{t^2_s}
    \]
    Because $\SAC t_1 \Rightarrow_{ex} t^2_s$,
    \[
        \sembr{t_1} = \sembr{t^2_s}
    \]

    \noindent{From} Lemma~\ref{lem:denexp} over augmented expressions
    \[
        \SAC ae \Rightarrow_{ex} ae_s, t^1_s \implies \sembr{ae} = \sembr{ae_s}*\sembr{t^1_s}
    \]
    Because $\SAC ae \Rightarrow_{ex} ae_s, t^1_s$
    \[
        \sembr{ae} = \sembr{ae_s}*\sembr{t^1_s}
    \]

    \noindent{From} definition of $\mathsf{pdf}$
    \[
        \sembr{t} = \sembr{ae}*\sembr{t_1}*\mathsf{pdf}_{\mathsf{Dist}}(\cV(ae), e) = \sembr{ae_s}*\sembr{t^1_s}*\sembr{t^2_s}*\mathsf{pdf}_{\mathsf{Dist}}(\cV(ae), e) 
    \]
    Because $\sembr{t_s} = \sembr{ae_s}*\sembr{t^1_s}*\sembr{t^2_s}*\mathsf{pdf}_{\mathsf{Dist}}(\cV(ae), e) $ and $\cV(ae) = \cV(ae_s)$ (Observation \ref{ex:obs2})
    \[
        \sembr{t} = \sembr{t_s}
    \]
    
    Because we have covered all cases, by induction, for any trace $t$ and subproblem $\cS$
    \[
        \SAC t \Rightarrow_{ex} t_s \implies \sembr{t} = \sembr{t_s}
    \]

    Therefore for any trace $t$ and subproblem $\cS$ 
    \[
        t_s = \mathsf{ExtractTrace}(t, \cS) \implies \sembr{t} = \sembr{t_s}
    \]
\end{proof}

\end{document}